\newtheorem{theorem}{Theorem}
\newtheorem*{thm*}{Theorem}
\newtheorem{proposition}[theorem]{Proposition}
\newtheorem*{prop*}{Proposition}
\newtheorem{lemma}[theorem]{Lemma}
\newtheorem{corollary}[theorem]{Corollary}
\theoremstyle{definition}
\newtheorem{definition}[theorem]{Definition}
\newtheorem*{ex}{Example}
\newtheorem*{rem}{Remark}
\newtheorem*{rep@theorem}{\rep@title}
\newcommand{\newreptheorem}[2]{%
	\newenvironment{rep#1}[1]{%
    \def\rep@title{#2 \ref{##1} (restated)}%
		\begin{rep@theorem}\itshape}%
		{\end{rep@theorem}}}
\newcommand{\abs}[1]{\left| #1 \right|}
\newcommand{\avg}[1]{\left\langle #1 \right\rangle}
\let\ang=\avg
\renewcommand{\AA}{\mathbb{A}}
\newcommand{\CC}{\mathbb{C}}
\newcommand{\QQ}{\mathbb{Q}}
\newcommand{\RR}{\mathbb{R}}
\newcommand{\NN}{\mathbb{N}}
\newcommand{\ZZ}{\mathbb{Z}}
\newcommand{\AAnz}{\AA\setminus\{0\}}
\newcommand{\ket}[1]{\left| #1 \right>} 
\renewcommand{\t}[1]{\ensuremath{^{\otimes #1}}}
\DeclareMathOperator{\supp}{supp}
\DeclareMathOperator{\ari}{arity}
\DeclareMathOperator{\wt}{wt}
\newcommand{\GHZ}{\mathrm{GHZ}}
\newcommand{\cA}{\mathcal{A}}
\newcommand{\cE}{\mathcal{E}}
\newcommand{\cF}{\mathcal{F}}
\newcommand{\cG}{\mathcal{G}}
\newcommand{\cH}{\mathcal{H}}
\newcommand{\cL}{\mathcal{L}}
\newcommand{\cM}{\mathcal{M}}
\newcommand{\cO}{\mathcal{O}}
\newcommand{\cS}{\mathcal{B}}
\newcommand{\cT}{\mathcal{T}}
\newcommand{\cU}{\mathcal{U}}
\DeclareMathOperator{\Holant}{Holant}
\newcommand{\hol}{\mathsf{Holant}}
\newcommand{\holp}[2][]{\hol^{ #1 }\left( #2 \right)}
\let\Holp=\holp
\newcommand{\csp}{\#\mathsf{CSP}}
\newcommand{\NCSP}{\#\mathsf{CSP}}
\newcommand{\plcsp}{\mathsf{Pl}\text{-}\#\mathsf{CSP}}
\newcommand{\plhol}{\mathsf{Pl\text{-}Holant}}
\newcommand{\plholp}[2][]{\mathsf{Pl\text{-}Holant}^{ #1 }\left( #2 \right)}
\newcommand{\sP}{\#\textsf{P}}
\newcommand{\numP}{\sP}
\setlist[description]{font=\normalfont\itshape}
\newcommand{\etal}[0]{\emph{et al.}}
\let\dl=\delta
\let\ld=\lambda
\let\om=\omega
\let\Gm=\Gamma
\let\sse=\subseteq
\newcommand{\EQ}{\mathrm{EQ}}
\newcommand{\NEQ}{\mathrm{NEQ}}
\newcommand{\ONE}{\mathrm{ONE}}
\newcommand{\smm}[1]{\left(\begin{smallmatrix} #1 \end{smallmatrix}\right)}
\newcommand{\pmm}[1]{\begin{pmatrix} #1 \end{pmatrix}}
\newcommand{\ba}{\mathbf{a}}
\newcommand{\bb}{\mathbf{b}}
\newcommand{\be}{\mathbf{e}}
\newcommand{\bx}{\mathbf{x}}
\newcommand{\by}{\mathbf{y}}
\def\vc#1#2{#1 _1\zd #1 _{#2}}
\def\zd{,\ldots,}
\newcommand{\allf}{\Upsilon}
\newcommand{\GL}{\operatorname{GL}_2(\AA)}
\newcommand{\tM}{\cM'}
\newcommand\new[1]{#1}
\tikzstyle{none}=[inner sep=0pt]
\tikzstyle{every picture}=[baseline=-0.1cm,scale=0.5]
\tikzstyle{unitary}=[rectangle,fill=white,draw=black,minimum height=14pt,minimum width=14pt,inner sep=0pt]
\tikzstyle{wideunitary}=[rectangle,fill=white,draw=black,minimum height=14pt,minimum width=24pt,inner sep=0pt]
\tikzstyle{graph}=[ellipse,fill=White,draw=Black,minimum height=0.5cm,minimum width=3cm,inner sep=0pt]
\tikzstyle{biggraph}=[ellipse,fill=White,draw=Black,minimum height=0.75cm,minimum width=6.5cm,inner sep=0pt]
\tikzstyle{scalardiamond}=[diamond,draw=black,fill=white,inner sep=1pt,minimum size=0.4cm]
\tikzstyle{dtr}=[regular polygon,regular polygon sides=3,shape border rotate=180,fill=white,draw=black,minimum size=22pt,inner sep=0pt]
\tikzstyle{utr}=[regular polygon,regular polygon sides=3,fill=white,draw=black,minimum size=22pt,inner sep=0pt]
\tikzstyle{dtrwide}=[isosceles triangle,isosceles triangle stretches,shape border rotate=270,fill=white,draw=black,minimum height=22pt,minimum width=1.5cm,inner sep=0pt]
\tikzstyle{bigcloud}=[cloud,fill=white,draw=black]
\tikzstyle{diagram}=[rectangle,fill=white,draw=black,minimum height=0.5cm,minimum width=1cm,inner sep=0pt]
\tikzstyle{map}=[trapezium,trapezium left angle=90,trapezium right angle=120,fill=White,draw=Black,inner sep=0pt, minimum height=0.5cm]
\tikzstyle{map_dagger}=[trapezium,trapezium left angle=90,trapezium right angle=60,fill=White,draw=Black,inner sep=0pt, minimum height=0.5cm]
\tikzstyle{map_transpose}=[trapezium,trapezium left angle=120,trapezium right angle=90,fill=White,draw=Black,inner sep=0pt, minimum height=0.5cm]
\tikzstyle{transposemap}=[trapezium,trapezium left angle=60,trapezium right angle=90,fill=White,draw=Black,inner sep=0pt, minimum height=0.5cm]
\tikzstyle{solidn}=[circle,fill=Black,draw=Black,line width=0.8 pt,minimum size=5pt,inner sep=0pt]
\tikzstyle{hollown}=[circle,fill=White,draw=Black,line width=0.5 pt,minimum size=5pt,inner sep=0pt]
\tikzstyle{greyn}=[circle,fill=Grey,draw=Black,line width=0.5 pt,minimum size=5pt,inner sep=0pt]
\title{A full dichotomy for Holant\textsuperscript{c}, inspired by quantum computation\footnote{This paper combines extended versions of the conference publications \cite{backens_new_2017} and \cite{backens_complete_2018} with some new results.}}
\author{Miriam Backens}
\date{}
\begin{document}

\maketitle

\begin{abstract}
 \noindent Holant problems are a family of counting problems parameterised by sets of algebraic-complex valued constraint functions, and defined on graphs.
 They arise from the theory of holographic algorithms, which was originally inspired by concepts from quantum computation.
 
 Here, we employ quantum information theory to explain existing results about holant problems in a concise way and to derive two new dichotomies: one for a new family of problems, which we call $\hol^+$, and, building on this, a full dichotomy for $\hol^c$.
 These two families of holant problems assume the availability of certain unary constraint functions -- the two pinning functions in the case of $\hol^c$, and four functions in the case of $\hol^+$ -- and allow arbitrary sets of algebraic-complex valued constraint functions otherwise.
 The dichotomy for $\hol^+$ also applies when inputs are restricted to instances defined on planar graphs.
 In proving these complexity classifications, we derive an original result about entangled quantum states.
\end{abstract}

\section{Introduction}

Quantum computation provided the inspiration for holographic algorithms \cite{valiant_holographic_2008}, which in turn inspired the holant framework for computational counting problems (first introduced in the conference version of \cite{cai_complexity_2014}).
Computational counting problems include a variety of computational problems, from combinatorial problems defined on graphs to the problems of computing partition functions in statistical physics and computing amplitudes in quantum computation.
They are being analysed in different frameworks, including that of counting constraint satisfaction problems (counting CSPs) and that of holant problems.
Computational counting problems are an area of active research, yet so far there appear to have been no attempts to apply knowledge from quantum information theory or quantum computation to their analysis.
Nevertheless, as we show in the following, quantum information theory, and particularly the theory of quantum entanglement, offer promising new avenues of research into holant problems.

A holant problem is parameterised by a set of functions $\cF$; in this paper we consider finite sets of algebraic complex-valued functions of Boolean inputs.
The restriction to finite sets follows the standard in the counting CSP community.
We use it to avoid issues around efficient computability that arise when allowing problems to be parameterised by infinite sets of functions.
In the following, the set of all algebraic complex-valued functions of Boolean inputs is denoted $\allf$.
We also write $\allf_n:=\{f\in\allf\mid\ari(f)=n\}$ for the restriction of $\allf$ to functions of arity $n$.
An instance of the problem $\hol(\cF)$ consist of a multigraph $G=(V,E)$ with vertices $V$ and edges $E$, and a map $\pi$.
This map assigns to each vertex $v\in V$ a function $\pi(v)=f_v\in\cF$.
The map also sets up a bijection between the edges incident on $v$ and the arguments of $f_v$, so the degree of $v$ must equal the arity of $f_v$.
Given the map $\pi$, any assignment $\sigma:E\to\{0,1\}$ of Boolean values to edges induces a weight
\[
 \wt(\sigma) := \prod_{v\in V} f_v(\sigma|_{E(v)}),
\]
where $\sigma|_{E(v)}$ is the restriction of $\sigma$ to the edges incident on $v$.
The desired output of the holant problem is the total weight $\sum_{\sigma:E\to\{0,1\}}\wt(\sigma)$, where the sum is over all assignments $\sigma$.
Formally, we define the problem $\hol(\cF)$ as follows.

\begin{description}[noitemsep]
 \item[Name] $\hol(\cF)$
 \item[Instance] A tuple $(G,\cF,\pi)$.
 \item[Output] $\Holant_\Omega = \sum_{\sigma: E\to\{0,1\}} \prod_{v\in V}f_v(\sigma|_{E(v)})$.
\end{description}

For example, let $N\in\NN_{>0}$, and define $\cM_N:=\{\ONE_n \mid 1\leq n\leq N\}$, where
\[
 \ONE_n(\vc{x}{n}) = \begin{cases} 1 &\text{if } \sum_{k=1}^n x_k=1 \\ 0 &\text{otherwise.} \end{cases} 
\]
Then $\hol(\cM_N)$ corresponds to counting the number of perfect matchings on graphs of maximum degree $N$.
To see this, consider some graph $G=(V,E)$ with maximum degree $N$.
The map $\pi$ must assign to each vertex a function of appropriate arity and since all functions are invariant under permutations of the arguments, this means $\pi$ is fully determined by this requirement.
Consider an assignment $\sigma: E\to\{0,1\}$, this corresponds to a subset of edges $E_\sigma:=\{e\in E\mid \sigma(e)=1\}$.
Note that $\wt(\sigma)\in\{0,1\}$ for all $\sigma$, since each weight is a product of functions taking values in the set $\{0,1\}$.
Now suppose $\wt(\sigma)=1$, then for each $v\in V$, $\sigma$ assigns the value 1 to exactly one edge in $E(v)$.
Thus, $E_\sigma$ is a perfect matching.
Conversely, if $\wt(\sigma)=0$ then there exists a vertex for which $f_v(\sigma|_{E(v)})=0$.
This implies $\sum_{e\in E(v)}\sigma(e)\neq 1$, i.e.\ either 0 or at least 2 of the edges incident on $v$ are assigned 1 by $\sigma$.
Thus, $E_\sigma$ contains either no edges incident on $v$, or at least 2 edges incident on $v$, so $E_\sigma$ is not a perfect matching.
We have shown that $\wt(\sigma)=1$ if $\sigma$ corresponds to a perfect matching and $\wt(\sigma)=0$ otherwise.
The correspondence between assignments $\sigma:E\to\{0,1\}$ and subsets $E_\sigma\sse E$ is a bijection.
Therefore, the total weight -- and hence the output of $\hol(\cF)$ -- is exactly the number of perfect matchings of $G$.

Sometimes it is interesting to restrict holant problems to instances defined on planar graphs.
In that case, the correspondence between the edges incident on a vertex $v$ and the arguments of its assigned function $f_v$ must respect the ordering of the edges in the given planar embedding of the graph and the ordering of the arguments of the function.
This means that choosing an edge to correspond to the first argument of $f_v$ fixes the bijection between edges and arguments for all the remaining edges and arguments.
Given a finite set of functions $\cF$, the problem of computing holant values on planar graphs is denoted $\plhol(\cF)$.
For example, $\plhol(\cM_N)$ is the problem of counting perfect matchings on planar graphs of maximum degree $N$.

The complexity of holant problems has not been fully classified yet, but there exist a number of classifications for subfamilies of holant problems.
These classifications restrict the family of problems in one or more of the following ways:
\begin{itemize}
 \item by considering only symmetric functions, i.e.\ functions that are invariant under any permutation of the arguments (these functions depend only on the Hamming weight of their input), or
 \item by restricting the codomain, e.g. to algebraic real numbers or even non-negative real numbers, or
 \item by considering only sets of functions that contain some specified subset of functions, such as arbitrary unary functions or the two `pinning functions' $\dl_0(x)=1-x$ and $\dl_1(x)=x$.
\end{itemize}
Known classification include a full dichotomy for holant problems where all unary functions are assumed to be available \cite{cai_dichotomy_2011}, and a classification for $\holp[c]{\cF}:=\holp{\cF\cup\{\dl_0,\dl_1\}}$ with symmetric functions \cite{cai_holant_2012}.
There is also a dichotomy for real-valued $\hol^c$, where functions need not be symmetrical but must take values in $\RR$ instead of $\CC$ \cite{cai_dichotomy_2017}.
Both existing results about $\hol^c$ are proved via dichotomies for counting CSPs with complex-valued, not necessarily symmetric functions (see Section~\ref{s:csp} for a formal definition): in the first case, a dichotomy for general $\csp$ \cite{cai_complexity_2014}, and in the second case, a dichotomy for $\csp_2^c$, a subfamily of $\csp$ in which each variable must appear an even number of times and variables can be pinned to 0 or 1 \cite{cai_dichotomy_2017}.
The only broad classifications not assuming availability of certain functions are the dichotomy for symmetric holant \cite{cai_complete_2016} and the dichotomy for non-negative real-valued holant \cite{lin_complexity_2016}.

The problem of computing amplitudes in quantum computation can also be expressed as a holant problem.
This makes certain holant problems examples of \emph{(strong) classical simulation of quantum computations}, an area of active research in the theory of quantum computation.
If a quantum computation can be simulated efficiently on a classical computer, the quantum computer yields no advantage.
If, on the other hand, the complexity of classically simulating a certain family of quantum computations can be shown to satisfy some hardness condition, this provides evidence that quantum computations are indeed more powerful than classical computers.

Indeed, while the origin of holant problems is not related to the notion of classical simulation of quantum computations, quantum computing did provide the inspiration for its origins \cite{valiant_holographic_2008,cai_complexity_2014}.
Nevertheless, so far there have been no attempts to apply knowledge from quantum information theory or quantum computation to the analysis of holant problems.
Yet, as we show in the following, quantum information theory, and particularly the theory of quantum entanglement, offer promising new avenues of research into holant problems.

The complexity of $\holp{\cF}$ may depend on \emph{decomposability} properties of the functions in $\cF$.
A function $f\in\allf_n$ is considered to be decomposable if there exists a permutation $\rho:[n]\to [n]$, an integer $k$ satisfying {$1\leq k<n$}, and functions $f_1,f_2$ such that 
\[
 f(\vc{x}{n}) = f_1(x_{\rho(1)}\zd x_{\rho(k)})f_2(x_{\rho(k+1)}\zd x_{\rho(n)}).
\]
For example, the constant function $f(x_1,x_2)=1$ is equal to $f_1(x_1)f_2(x_2)$, where $f_1(x)=f_2(x)=1$.
Not all functions are decomposable: there are no functions $g_1,g_2\in\allf_1$ such that $\EQ_2(x_1,x_2)=g_1(x_1)g_2(x_2)$.

The functions in $\allf_n$ are in bijection with the algebraic complex-valued vectors of length $2^n$ by mapping each function to a list of its values, or conversely.
Under this map, the notion of a function being non-decomposable corresponds exactly to the quantum-theoretical notion of a vector being \emph{genuinely entangled}.
We therefore draw on the large body of research about the properties of entangled vectors from quantum theory, and apply it to the complexity classification of holant problems.
In the process, we also derive a new result about entanglement previously unknown in the quantum theory literature, this is Theorem~\ref{thm:three-qubit-entanglement}.

For relating the complexity of two computational problems, the main technique used in this paper is that of \emph{gadget reductions}.
An $n$-ary gadget over some set of functions $\cF$ is a fragment of a signature grid using functions from $\cF$, which is connected to the rest of the graph by $n$ external edges.
This signature grid is assigned an effective function in $\allf_n$ by treating the external edges as variables and summing over all possible assignments of Boolean values to the internal edges.
If the function $g$ is the effective function of some gadget over $\cF$, then we say $g$ is \emph{realisable} over $\cF$.
Allowing functions realisable over $\cF$ in addition to the functions in $\cF$ does not affect the complexity of computing the holant \cite{cai_dichotomy_2011}.

Now, the problem $\hol(\cF)$ is known to be polynomial-time computable if $\cF$ contains only functions that decompose as products of unary and binary functions \cite[Theorem~2.1]{cai_dichotomy_2011}.
For $\hol(\cF)$ to be \sP-hard, $\cF$ must thus contain a function that does not decompose in this way.
Borrowing terminology from quantum theory, we say functions that do not decompose as products of unary and binary functions have \emph{multipartite entanglement}.
By applying results from quantum theory, we show how to build a gadget for a non-decomposable ternary function using an arbitrary function with multipartite entanglement and the four unary functions
\[
 \dl_0(x) = 1-x, \qquad
 \dl_1(x) = x, \qquad
 \dl_+(x) = 1, \quad\text{and}\quad
 \dl_-(x) = (-1)^x.
\]
We further apply entanglement theory to analyse a family of symmetric ternary gadgets, and show that given some non-decomposable ternary function, this family lets us realise a symmetric non-decomposable ternary function.
Furthermore, we show how to realise symmetric non-decomposable binary functions with specific properties.
Together, these gadgets allow us to classify the complexity of the problem $\Holp[+]{\cF}:=\holp{\cF\cup\{\dl_0,\dl_1,\dl_+,\dl_-\}}$.

\begin{theorem}[Informal statement of Theorem~\ref{thm:holant_plus}]
 Let $\cF\sse\allf$ be finite.
 Then $\Holp[+]{\cF}$ can be computed in polynomial time if $\cF$ satisfies one of five explicit conditions.
 In all other cases, $\Holp[+]{\cF}$ is \numP-hard.
 The same dichotomy holds when the problem is restricted to instances defined on planar graphs.
\end{theorem}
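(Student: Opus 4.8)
The plan is to prove the dichotomy in the usual two halves: exhibit polynomial-time algorithms whenever $\cF$ satisfies one of the five conditions of Theorem~\ref{thm:holant_plus}, and prove $\numP$-hardness in every remaining case by a chain of realisability (gadget) reductions that terminates at a previously known classification. Throughout I would work with the entanglement dictionary set up above: a function in $\allf_n$ is a $2^n$-vector, ``decomposable'' means ``not genuinely entangled'', and a gadget is a tensor contraction; the four available unaries $\dl_0,\dl_1,\dl_+,\dl_-$ are the $\pm 1$-eigenvectors of the Pauli $Z$ and $X$ operators, and it is their availability that powers the key constructions.

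\emph{Tractability and the planar statement.} For each of the five conditions I would give an explicit polynomial-time algorithm: evaluate the holant by factorisation for the product-type class; sum over an affine subspace (equivalently, a Gottesman--Knill-style argument) for the affine/stabiliser class; and a holographic change of basis reducing to one of the former, or to a Pfaffian/matchgate computation, for the remaining classes. Each of these algorithms runs on arbitrary multigraphs, so the tractable side of the planar statement is immediate. For the hard side of the planar statement it suffices to observe that all gadgets used in the hardness proof are planar; the only point needing care is whether a matchgate-type class (tractable on planar but not general graphs) appears among the five, and if so one checks that, in the presence of all four unaries, such a class is in fact also tractable on general graphs after a holographic transformation, so no discrepancy arises.

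\emph{Hardness.} Suppose $\cF$ satisfies none of the five conditions. First, if every $f\in\cF$ decomposes into unary and binary factors, then $\Holp[+]{\cF}$ is polynomial-time equivalent to $\Holp[+]{\cF'}$, where $\cF'$ gathers those factors; this is a Holant problem over unary and binary signatures with the pinning functions available -- a bounded-degree spin system with external fields -- whose complexity is already classified, and one verifies that failing all five conditions forces $\numP$-hardness there. Otherwise some $f\in\cF$ has multipartite entanglement, and the gadget built from Theorem~\ref{thm:three-qubit-entanglement} together with $\{\dl_0,\dl_1,\dl_+,\dl_-\}$ realises a non-decomposable ternary function $g$. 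Feeding $g$ into the family of symmetric ternary gadgets (analysed via entanglement theory) then realises a \emph{symmetric} non-decomposable ternary function $h$, a short case split on the entanglement type of $g$ ($\GHZ$-type versus $W$-type) showing this always succeeds; a further construction realises symmetric non-decomposable binary functions with the properties needed downstream. We now have, over $\cF\cup\{\dl_0,\dl_1,\dl_+,\dl_-\}$, a symmetric non-degenerate arity-$3$ signature together with the pinning functions, so the problem contains symmetric $\hol^c$ with a genuinely entangled ternary signature; invoking the known symmetric $\hol^c$ dichotomy, the only escape from $\numP$-hardness is that the realisable symmetric signatures all lie in one symmetric tractable family, and pulling that back through the reductions would place $\cF$ in one of the five conditions, contrary to assumption. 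Hence $\Holp[+]{\cF}$ is $\numP$-hard; since every gadget above is planar, the same conclusion holds for planar instances.

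\emph{Main obstacle.} The crux is the step turning an \emph{arbitrary} multipartitely entangled function, together with only the four fixed unaries, into a non-decomposable ternary function: this is exactly Theorem~\ref{thm:three-qubit-entanglement} and the gadgetry around it, and it is where genuinely new entanglement theory is required. A secondary difficulty is exhaustiveness -- confirming that the tractable cases thrown up by the binary spin-system dichotomy and by the symmetric $\hol^c$ dichotomy (including the subtle situations where the realised symmetric ternary signature is itself affine or matchgate-realisable, so that the chain must continue using the remaining functions of $\cF$) collapse to exactly the five stated conditions, with nothing left over.
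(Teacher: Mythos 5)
Your tractability side, the gadget realising a non-decomposable ternary signature via Theorem~\ref{thm:three-qubit-entanglement}, and the symmetrisation via the triangle gadget all match the paper's plan, and your identification of Theorem~\ref{thm:three-qubit-entanglement} as the genuinely new ingredient is correct. But there are two issues, one a genuine gap and one a small logical slip.

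The genuine gap is your final hammer. You propose to close the hardness argument by invoking the symmetric $\hol^c$ dichotomy (Theorem~\ref{thm:Holant-c-sym}). That dichotomy is stated for sets of \emph{symmetric} functions only, and $\cF$ is arbitrary. Applying it to $\{h\}$ (or to the set of realisable symmetric signatures) only gives hardness when $h$ itself escapes all symmetric tractable families; when $h$ does not escape — e.g.\ when $h$ is a generalised equality or affine after a holographic transformation — you would need to derive hardness from what remains of $\cF$, and your phrase ``pulling that back through the reductions would place $\cF$ in one of the five conditions'' is precisely where the proof lives and is left entirely unspecified. The paper instead uses $h=M\circ\EQ_3$ in the bipartite form and applies Theorem~\ref{thm:GHZ-state} to obtain the interreduction $\holp[+]{\cF}\equiv_T\csp\left(M^T\circ(\cF'\cup\{\EQ_2\})\cup\{M^{-1}\circ\EQ_2\}\right)$, so that the \emph{entire} transformed set $\cF'$ enters the general $\csp$ dichotomy (Theorem~\ref{thm:csp}), not just the symmetric part. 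The contrapositive analysis — showing that tractability of that $\csp$ forces $\cF$ into one of the five classes — is where the real work is, requiring the orthogonal QR decomposition (Lemma~\ref{lem:QR_decomposition}) and its corollaries (Lemmas~\ref{lem:ATA-D}, \ref{lem:ATA-X}), together with the affine group lemmas (Lemmas~\ref{lem:affine_closed}, \ref{lem:cS_cA-group}, \ref{lem:cS-cA}). In particular, the symmetric $\hol^c$ dichotomy has tractable family $B\circ\cA$ for $B\in\cS$, whereas the $\hol^+$ dichotomy has only $\cA$; the presence of all four unaries is exactly what forces $B\in\cS_\cA$ and collapses $B\circ\cA$ to $\cA$ (Lemma~\ref{lem:cS-cA}). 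Your sketch does not register this distinction at all, and it is precisely the payoff of having $\dl_+,\dl_-$ freely available.

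The logical slip: you open the hardness proof with ``First, if every $f\in\cF$ decomposes into unary and binary factors, \ldots\ failing all five conditions forces $\numP$-hardness there.'' That case is $\cF\sse\ang{\cT}$, which is the \emph{first} of the five tractable conditions. Under the standing hypothesis that $\cF$ fails all five conditions the case is vacuous, and there is no bounded-degree spin-system hardness to invoke; the only content is the following ``Otherwise'' clause, which is where the paper starts.

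Your planar observations (all gadgets planar; the matchgate family in the $\plcsp$ dichotomy ruled out by the four unaries) are essentially right, but note that the paper's argument is not that the matchgate class becomes tractable generally, but that the four unaries cannot simultaneously be mapped into the unary fragment of $\smm{1&1\\1&-1}\circ\cH$, so that case never arises — and that this also requires a planar version of the $\csp$ interreduction, Theorem~\ref{thm:GHZ-state-planar}, whose proof has its own subtleties (Lemma~\ref{lem:NCSP_to_numP}, Corollary~\ref{cor:pl-holant_binary}).
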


By combining these entanglement-based techniques with techniques from the real-valued $\hol^c$ classification in \cite{cai_complexity_2017}, we then develop a complexity classification for $\holp[c]{\cF}:=\holp{\cF\cup\{\dl_0,\dl_1\}}$.
The problem $\hol^c$ had first been described about a decade ago, with a full complexity classification remaining open until now.

\begin{theorem}[Informal statement of Theorem~\ref{thm:holant-c}]
 Let $\cF\sse\allf$ be finite.
 Then $\Holp[c]{\cF}$ can be computed in polynomial time if $\cF$ satisfies one of six explicit conditions.
 In all other cases, $\Holp[c]{\cF}$ is \numP-hard.
\end{theorem}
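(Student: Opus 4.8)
The plan is to bootstrap from the $\hol^+$ dichotomy (Theorem~\ref{thm:holant_plus}): since $\holp[c]{\cF}$ and $\Holp[+]{\cF}$ differ only by the availability of the two unary functions $\dl_+$ and $\dl_-$, the proof splits on whether both of these can be realised over $\cF\cup\{\dl_0,\dl_1\}$. If they can, then $\holp[c]{\cF}$ and $\Holp[+]{\cF}$ are polynomial-time interreducible (using that realisable functions do not affect the complexity of the holant), so Theorem~\ref{thm:holant_plus} applies verbatim: $\cF$ is tractable iff it meets one of the five $\hol^+$ conditions, and \numP-hard otherwise. All the new content --- and the extra tractable conditions --- lives in the complementary case, where $\dl_+$ or $\dl_-$ is not realisable; this is where families such as the affine functions twisted by an eighth-root diagonal matrix appear, which are tractable for $\hol^c$ precisely because $\dl_\pm$ are unavailable to destroy them.

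In that complementary case I would first pass to the unary closure, adjoining to $\cF$ every unary function realisable over $\cF\cup\{\dl_0,\dl_1\}$, and then analyse which unary vectors can occur. The aim is a rigidity statement: if one of $\dl_+,\dl_-$ is not realisable, then each $f\in\cF$, under all pinnings of all but one argument, is constrained enough that $\cF$ --- after one of a finite list of holographic transformations $T$ that carry $\{\dl_0,\dl_1\}$ back into the relevant tractable class --- lies inside the product-type functions, the affine functions, or an affine class transformed by one of the fixed matrices from the complex $\csp$ and Holant classifications. To establish this I would import the pinning and structural arguments from the real-valued $\hol^c$ classification of \cite{cai_complexity_2017}, now carried out over $\CC$ and without the symmetry assumption; when $\cF$ contains a function with multipartite entanglement, I would use the entanglement-based ternary gadget behind Theorem~\ref{thm:three-qubit-entanglement} (combined with pinning) to realise a non-decomposable ternary function and show that it either realises the missing $\dl_\pm$ after all or forces the rigid shape.

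The tractability side is then routine: product-type instances by tensor-decomposing each constraint into unary and binary factors; affine instances by Gaussian elimination over $\FF_2$ together with evaluation of the quadratic phase; and each transformed class by first applying the corresponding holographic transformation (which returns a $\hol^c$ instance over the untransformed class) and running the above. For hardness, suppose $\cF$ satisfies none of the six conditions: if $\dl_\pm$ are realisable, $\cF$ fails all five $\hol^+$ conditions and Theorem~\ref{thm:holant_plus} gives \numP-hardness; otherwise the rigidity analysis shows $\cF$ escapes every tractable class, and one extracts a hard gadget to reduce from a known \numP-hard counting problem (such as one in the relevant $\csp$ or two-state spin family).

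The step I expect to be the main obstacle is the rigidity analysis in the complementary case: proving that failing to realise $\dl_+$ or $\dl_-$ truly confines $\cF$ to exactly the listed tractable classes, with no overlooked intermediate case. Over $\CC$ and without symmetry a constraint function has far more possible shapes than in the real symmetric literature, so one must verify both that no new tractable island slips through (which would mean a missing condition) and that every $\cF$ outside the rigid shapes does realise $\dl_+$ and $\dl_-$ --- handing the problem back to the $\hol^+$ bootstrap. It is this dichotomy-within-the-dichotomy, not the algorithms or the final hardness reductions, where the delicate gadget constructions and the entanglement input concentrate.
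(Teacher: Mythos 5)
Your high-level instinct --- reuse the $\hol^+$ machinery where possible, with the extra tractable classes arising precisely because $\dl_\pm$ can be missing --- matches the spirit of the paper, but there is a specific contradiction in the complementary branch that makes the proposal fail as stated.

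You propose that when $\dl_+$ or $\dl_-$ cannot be realised over $\cF\cup\{\dl_0,\dl_1\}$, you would nonetheless ``use the entanglement-based ternary gadget behind Theorem~\ref{thm:three-qubit-entanglement} (combined with pinning) to realise a non-decomposable ternary function.'' But the gadget of Theorem~\ref{thm:three-qubit-gadget} projects the surplus qubits onto computational \emph{and Hadamard} basis states, i.e.\ onto members of $\{\dl_0,\dl_1,\dl_+,\dl_-\}$. In the branch where $\dl_\pm$ are by hypothesis unavailable, that gadget cannot be built, and indeed there are sets (e.g.\ $\{\EQ_4\}$) where no non-decomposable ternary function lies in $S(\cF\cup\{\dl_0,\dl_1\})$. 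The paper explicitly notes this and replaces the Theorem~\ref{thm:three-qubit-gadget} gadget by the pinning-and-self-loop, Hamming-distance $D_0$ machinery imported from \cite{cai_dichotomy_2017}. Because self-loops drop arity in steps of two, the outcome is sometimes a non-decomposable \emph{arity-4} generalised equality rather than a ternary function; that branch is then closed by Lemma~\ref{lem:interpolate_equality4} and Lemma~\ref{lem:generalised_equality4}, which reduce to $\csp_2^c$ and hence invoke Theorem~\ref{thm:csp_2^c}. This is exactly where the sixth tractable class $\cL$ (and the $T\circ\cA$ cases) enters --- and your proposal never mentions $\csp_2^c$, $\csp_2$, or the local affine family.

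There is also a structural difference worth noting: the paper never explicitly branches on whether $\dl_\pm$ are realisable. It shows tractability of the six classes directly, then for hardness picks a non-decomposable $f\in S(\cF\cup\{\dl_0,\dl_1\})$ of arity $\geq 3$ and runs the $D_0/D_1/D_2/D_3$ case analysis; a ternary non-decomposable function is handled by the new Lemma~\ref{lem:arity3_hardness} and Lemma~\ref{lem:case_KM} (which replace the real-valued Lemmas 5.3 and 5.5--5.7 of \cite{cai_dichotomy_2017} with complex-valued, entanglement-based arguments), and the symmetrisation gadgets of Lemmas~\ref{lem:GHZ_symmetrise}--\ref{lem:W_symmetrise-K} are used because they crucially need only copies of $f$ (and possibly one auxiliary binary function), \emph{not} $\dl_\pm$. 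In your proposal the key step --- the ``rigidity analysis'' showing that non-realisability of $\dl_\pm$ confines $\cF$ to the listed classes --- is left as a single hard open step without a concrete lemma or reduction; the paper sidesteps this by never needing a realisability/non-realisability dichotomy for $\dl_\pm$ in the first place.
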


The remainder of this paper is structured as follows.
We introduce preliminary definitions in Section~\ref{s:Holant_problem} and give an overview over known holant complexity classifications in Section~\ref{s:existing_results}.
In Section~\ref{s:quantum_states}, we introduce relevant notions and results from quantum theory, particularly about entanglement.
The complexity classification for $\hol^+$ is derived in Section~\ref{s:Holant_plus} and the complexity classification for $\hol^c$ is derived in Section~\ref{s:dichotomy}.

\section{Preliminaries}
\label{s:Holant_problem}

Holant problems are a framework for computational counting problems on graphs, introduced by Cai \etal{} \cite{cai_complexity_2014}, and based on the theory of holographic algorithms developed by Valiant \cite{valiant_holographic_2008}.

Throughout, we consider algebraic complex-valued functions with Boolean inputs.
Let $\AA$ be the set of algebraic complex numbers.
For any non-negative integer $k$, let $\allf_k$ be the set of functions $\{0,1\}^k\to\AA$, and define $\allf:=\bigcup_{k\in\NN} \allf_k$.
Let $\cF\sse\allf$ be a finite set of functions, and let $G=(V,E)$ be an undirected graph with vertices $V$ and edges $E$.
Throughout, graphs are allowed to have parallel edges and self-loops.
A \emph{signature grid} is a tuple $\Omega=(G,\cF,\pi)$ where $\pi$ is a function that assigns to each $n$-ary vertex $v\in V$ a function $f_v:\{0,1\}^n\to\AA$ in $\cF$, specifying which edge incident on $v$ corresponds to which argument of $f_v$.
The Holant for a signature grid $\Omega$ is:
\begin{equation}\label{eq:Holant_dfn}
 \Holant_\Omega = \sum_{\sigma:E\to\{0,1\}} \prod_{v\in V} f_v(\sigma|_{E(v)}),
\end{equation}
where $\sigma$ is an assignment of Boolean values to each edge and $\sigma|_{E(v)}$ is the application of $\sigma$ to the edges incident on $v$.

\begin{description}[noitemsep]
 \item[Name] $\hol(\cF)$
 \item[Instance] A signature grid $\Omega=(G,\cF,\pi)$.
 \item[Output] $\Holant_\Omega$.
\end{description}

\begin{rem}
 We restrict the definition of holant problems to finite sets of functions, as is common in the counting CSP literature and some of the holant literature.
 This is to avoid issues with the representation of the function values, and it is also relevant for some of the results in Section~\ref{s:interreducing_planar}.
\end{rem}

For any positive integer $n$, let $[n]:=\{1,2\zd n\}$.
Suppose $f\in\allf_n$ and suppose $\rho:[n]\to[n]$ is a permutation, then $f_\rho(\vc{x}{n}):=f(x_{\rho(1)}\zd x_{\rho(n)})$.

A function is called \emph{symmetric} if it depends only on the Hamming weight of the input,
{i.e.\ the number of non-zero bits in the input bit string}.
An $n$-ary symmetric function is often written as $f = [f_0,f_1,\ldots,f_n]$, with $f_k$ for $k\in [n]$ being the value $f$ takes on inputs of Hamming weight $k$.
For any positive integer $k$, define the \emph{equality function} of arity $k$ as $\EQ_k:=[1,0,\ldots,0,1]$, where there are $(k-1)$ zeroes in the expression.
Furthermore, define $\ONE_k:=[0,1,0,\ldots,0]$, where there are $(k-1)$ zeroes at the end.
Let $\dl_0:=[1,0]$ and $\dl_1:=[0,1]$, these are called the \emph{pinning functions} because they allow `pinning' inputs to 0 or 1, respectively.
Finally, define $\dl_+:=[1,1]$ (this is equal to $\EQ_1$), $\dl_-:=[1,-1]$, and $\NEQ=[0,1,0]$.
{We will also occasionally use the unary functions $\dl_i:=[1,i]$ and $\dl_{-i}=[1,-i]$, where $i$ is the imaginary unit, i.e.\ $i^2=-1$.}

An $n$-ary function is called \emph{degenerate} if it is a product of unary functions in the following sense: there exist functions $\vc{u}{n}\in\allf_1$ such that $f(\vc{x}{n}) = u_1(x_1)\ldots u_n(x_n)$.
Any function that cannot be expressed as a product of unary functions in this way is called \emph{non-degenerate}.
Suppose $f\in\allf_n$, $k$ is an integer satisfying $1\leq k< n$, and $\rho:[n]\to [n]$ is a permutation such that $f_\rho(\vc{x}{n})=g(\vc{x}{k})h(x_{k+1}\zd x_n)$ for some functions $g\in\allf_k,h\in\allf_{n-k}$, then $f$ is said to be \emph{decomposable}.
Any way of writing $f$ as a product of two or more functions with disjoint sets of arguments is called a \emph{decomposition} of $f$.
{Borrowing terminology from linear algebra (which will be justified later), we say $g$ and $h$ in the above equation are \emph{tensor factors} of $f$.}
A function that cannot be decomposed in this way is called \emph{non-decomposable}.
Any degenerate function {of arity $n\geq 2$} is decomposable, but not all decomposable functions are degenerate.
For example, $f(x_1,x_2,x_3,x_4)=\EQ_2(x_1,x_2)\NEQ(x_3,x_4)$ is decomposable but not degenerate since \new{$f$} cannot be written as \new{a product} of unary functions.

For any $f\in\allf_n$, the \emph{support} of $f$ is $\supp(f):=\{\bx\in\{0,1\}^n\mid f(\bx)\neq 0\}$.

Suppose $f\in\allf_4$, $g\in\allf_3$, and $h\in\allf_2$.
As a shorthand, let $f_{xyzw}:=f(x,y,z,w)$, $g_{xyz}:=g(x,y,z)$, and $h_{xy}:=h(x,y)$ for any $x,y,z,w\in\{0,1\}$.
We sometimes identify these functions with the following matrices of their values:
\[
 f = \begin{pmatrix} f_{0000}&f_{0001}&f_{0010}&f_{0011} \\  f_{0100}&f_{0101}&f_{0110}&f_{0111} \\  f_{1000}&f_{1001}&f_{1010}&f_{1011} \\  f_{1100}&f_{1101}&f_{1110}&f_{1111} \end{pmatrix}, \quad
 g = \begin{pmatrix}g_{000}&g_{001}&g_{010}&g_{011}\\g_{100}&g_{101}&g_{110}&g_{111}\end{pmatrix} \quad\text{and}\quad
 h = \begin{pmatrix}h_{00}&h_{01}\\h_{10}&h_{11}\end{pmatrix}.
\]
{Here, for binary functions, matrix rows are labelled by the first input and columns are labelled by the second input.
For ternary functions, rows are labelled by the first input and columns by the second and third inputs in lexicographic order.
For functions of arity four, rows are labelled by the first two inputs and columns by the last two inputs, again in lexicographic order.}

For any pair of counting problems $A$ and $B$, we say $A$ reduces to $B$ and write $A\leq_T B$ if there exists a polynomial-time Turing reduction from problem $A$ to problem $B$.
If $A\leq_T B$ and $B\leq_T A$, we say $A$ and $B$ are \emph{interreducible} and write $A\equiv_T B$.

The following result is well-known in the literature, see e.g.~\cite[p.~12]{cai_complexity_2017}.

\begin{lemma}\label{lem:scaling}
 Suppose $\cF\sse\allf$ is finite, $g\in\allf$, and $c\in\AAnz$. Then $\holp{\cF\cup\{c\cdot g\}} \equiv_T \holp{\cF\cup\{g\}}$.
\end{lemma}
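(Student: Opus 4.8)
The plan is to show mutual reducibility by exhibiting, in each direction, a polynomial-time Turing reduction that transforms an instance of one problem into an instance of the other together with a simple post-processing step that multiplies the computed holant value by an appropriate power of $c$.

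First I would prove $\holp{\cF\cup\{g\}} \leq_T \holp{\cF\cup\{c\cdot g\}}$. Given a signature grid $\Omega = (G,\cF\cup\{g\},\pi)$, let $m$ be the number of vertices of $G$ that are assigned the function $g$ by $\pi$. I replace each such assignment by $c\cdot g$, obtaining a signature grid $\Omega'$ over $\cF\cup\{c\cdot g\}$ on the same graph $G$. By the definition of the holant in Equation~\eqref{eq:Holant_dfn}, every term in the sum defining $\Holant_{\Omega'}$ is the corresponding term of $\Holant_\Omega$ multiplied by exactly $c^m$ (one factor of $c$ for each of the $m$ vertices carrying $c\cdot g$), so $\Holant_{\Omega'} = c^m \cdot \Holant_\Omega$. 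Since $c\neq 0$, the oracle call returns $\Holant_{\Omega'}$ and we recover $\Holant_\Omega = c^{-m}\Holant_{\Omega'}$ in polynomial time; note $m \leq |V|$, so $m$ is polynomially bounded and $c^m$, $c^{-m}$ are computable in polynomial time in the size of $\Omega$ (all arithmetic is over $\AA$, as elsewhere in the paper). The reduction in the other direction, $\holp{\cF\cup\{c\cdot g\}} \leq_T \holp{\cF\cup\{g\}}$, is entirely symmetric: given a grid over $\cF\cup\{c\cdot g\}$ with $m$ vertices carrying $c\cdot g$, replace those with $g$ to get a grid $\Omega''$ over $\cF\cup\{g\}$, and then $\Holant_\Omega = c^m \Holant_{\Omega''}$.

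One small technical point to address is what happens when $g$ already lies in $\cF$ (or $c\cdot g$ does), so that the sets $\cF\cup\{g\}$ and $\cF\cup\{c\cdot g\}$ are not obviously ``freshly extended''; this is harmless because the reduction only rescales those vertices whose assigned function is syntactically the designated extra function, and one may simply treat $g$ and $c\cdot g$ as distinguished labels. I should also remark that this same bookkeeping shows the statement is robust under the usual conventions for signature grids used later in the paper.

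I do not anticipate a genuine obstacle here: the lemma is essentially a restatement of the multilinearity of the holant sum in the vertex functions, and the only thing to be careful about is the counting of the multiplicity $m$ and the observation that it is polynomially bounded so that the scalar correction is efficiently computable. The main ``step'' is really just writing Equation~\eqref{eq:Holant_dfn} and factoring out the $c$'s.
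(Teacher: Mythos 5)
Your proof is correct. The paper does not prove this lemma itself — it cites it as well known from \cite{cai_complexity_2017} — but your argument (replace each vertex labelled $g$ by one labelled $c\cdot g$, observe that every term in the holant sum picks up exactly one factor of $c$ per such vertex so that $\Holant_{\Omega'} = c^m\,\Holant_{\Omega}$, and divide by $c^m\neq 0$, which is computable since $m\leq|V|$) is precisely the standard proof, and the symmetric direction is handled the same way.
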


Given a bipartite graph $G=(V,W,E)$, with vertex partitions $V$ and $W$, we can define a \emph{bipartite signature grid}.
Let $\cF$ and $\cG$ be two finite subsets of $\allf$ and let $\pi:V\cup W\to\cF\cup\cG$ be a map assigning functions to vertices, with the property that $\pi(v)\in\cF$ for all $v\in V$ and $\pi(w)\in\cG$ for all $w\in W$.
The bipartite signature grid specified in this way is denoted by the tuple $(G,\cF|\cG,\pi)$.
The corresponding \emph{bipartite holant problem} is $\holp{\cF\mid\cG}$.

The following reductions relate bipartite and non-bipartite holant problems.

\begin{proposition}[{\cite[Proposition~2]{cai_holant_2012}}]\label{prop:make_bipartite}
 For any finite $\cF\sse\allf$, we have
 \[
  \holp{\cF}\equiv_T\holp{\cF\mid\{\EQ_2\}}.
 \]
\end{proposition}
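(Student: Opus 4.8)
The plan is to prove interreducibility in both directions by exhibiting explicit gadget reductions between instances of $\holp{\cF}$ and instances of $\holp{\cF\mid\{\EQ_2\}}$.

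\textbf{The direction $\holp{\cF\mid\{\EQ_2\}}\leq_T\holp{\cF}$.} First I would observe that an instance of $\holp{\cF\mid\{\EQ_2\}}$ is, by definition, a bipartite signature grid $(G,\cF\mid\{\EQ_2\},\pi)$ where one side $W$ carries only copies of $\EQ_2$. Since $\EQ_2$ has arity $2$, every vertex $w\in W$ has degree exactly $2$; the function $\EQ_2(x,y)$ takes value $1$ when $x=y$ and $0$ otherwise, so summing over the edge variable at such a vertex simply identifies the two edges incident on $w$. Concretely, I would delete each degree-$2$ vertex $w\in W$ and replace its two incident edges by a single edge joining the two $\cF$-vertices they connected to. This yields a (non-bipartite) graph $G'$ with the same $\cF$-vertices and the same assignment of functions, and a straightforward check of the Holant sum (\ref{eq:Holant_dfn}) shows $\Holant_{(G,\cF\mid\{\EQ_2\},\pi)}=\Holant_{(G',\cF,\pi')}$: the contribution of $\EQ_2$ forces the identified variables to agree, which is exactly the effect of having a single shared edge. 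Hence this is a polynomial-time (indeed identity-output) reduction.

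\textbf{The direction $\holp{\cF}\leq_T\holp{\cF\mid\{\EQ_2\}}$.} Here I would go the other way: given an instance $(G,\cF,\pi)$ of $\holp{\cF}$ with $G=(V,E)$, I subdivide every edge $e\in E$ by inserting a new vertex $w_e$ in the middle, labelled by $\EQ_2$. This produces a bipartite graph $G''=(V,W,E'')$ with $W=\{w_e\mid e\in E\}$ and each original edge split into two. Assign the $\cF$-functions to the $V$-vertices exactly as $\pi$ did (with the obvious induced bijection between the new half-edges and the original arguments), and $\EQ_2$ to each $w_e$. Again summing over the variable on $w_e$ forces the two half-edges to carry equal values, recovering the original edge variable, so $\Holant_{(G'',\cF\mid\{\EQ_2\},\pi'')}=\Holant_{(G,\cF,\pi)}$. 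The construction is clearly polynomial-time, completing the reduction.

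\textbf{The main obstacle} is not any single step — both gadget constructions are elementary — but rather the bookkeeping needed to verify that the bijections between edges and function arguments are preserved under edge subdivision and vertex contraction, and (if one wishes the planar version, which this proposition does not assert) that planarity is maintained. The one genuine subtlety is self-loops: an edge $e$ that is a self-loop at $v\in V$ becomes, after subdivision, two parallel edges between $v$ and $w_e$, which is fine since the framework permits parallel edges; and in the reverse direction a degree-$2$ $\EQ_2$-vertex both of whose edges go to the same $\cF$-vertex contracts to a self-loop, again permitted. I would state these cases explicitly to keep the argument watertight. With that, combining the two reductions gives $\holp{\cF}\equiv_T\holp{\cF\mid\{\EQ_2\}}$.
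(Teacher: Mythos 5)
Your proof is correct and is the standard argument for this proposition; the paper itself does not prove it but cites \cite[Proposition~2]{cai_holant_2012}, and your edge-subdivision / vertex-contraction constructions are exactly what one expects that reference to do. You are also right to flag self-loops and parallel edges explicitly, since that is the one place where a careless reader might stumble.
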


\begin{proposition}[{\cite[Proposition~3]{cai_holant_2012}}]\label{prop:bipartite}
 Suppose $\cG_1,\cG_2\sse\allf$ are finite, then
 \[
  \holp{\cG_1\cup\{\EQ_2\}\mid\cG_2\cup\{\EQ_2\}} \equiv_T \holp{\cG_1\cup\cG_2}.
 \]
\end{proposition}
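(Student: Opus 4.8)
The plan is to prove the two reductions separately, in each case by a direct polynomial-time transformation of instances that preserves the Holant value up to an easily computed scalar. Throughout I will use the single fact that $\EQ_2$ acts as a ``wire'': for any function $F$, summing two of its arguments against an $\EQ_2$ vertex gives $\sum_{b_1,b_2}\EQ_2(b_1,b_2)\,F(\ldots,b_1,\ldots,b_2,\ldots)=\sum_b F(\ldots,b,\ldots,b,\ldots)$, so inserting an $\EQ_2$ vertex in the middle of an edge, or deleting a degree-$2$ $\EQ_2$ vertex and identifying its two incident edges, leaves the Holant unchanged.

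For the direction $\holp{\cG_1\cup\cG_2}\leq_T\holp{\cG_1\cup\{\EQ_2\}\mid\cG_2\cup\{\EQ_2\}}$, I would start from an arbitrary signature grid $\Omega=(G,\cG_1\cup\cG_2,\pi)$ and two-colour the vertices: put $v$ on the left side $V$ if $\pi(v)\in\cG_1$ and on the right side $W$ if $\pi(v)\in\cG_2$ (assigning arbitrarily if $\pi(v)\in\cG_1\cap\cG_2$). The graph need not be bipartite with respect to this colouring, so for every edge whose two endpoints have the same colour, and for every self-loop, subdivide it by inserting a fresh vertex of the opposite colour carrying $\EQ_2$; this is legal because $\EQ_2\in(\cG_1\cup\{\EQ_2\})\cap(\cG_2\cup\{\EQ_2\})$. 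By the wire property the Holant is unchanged, the new signature grid is bipartite over $\cG_1\cup\{\EQ_2\}\mid\cG_2\cup\{\EQ_2\}$, and it is produced in polynomial time.

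For the converse $\holp{\cG_1\cup\{\EQ_2\}\mid\cG_2\cup\{\EQ_2\}}\leq_T\holp{\cG_1\cup\cG_2}$, I would go the other way: given a bipartite signature grid, repeatedly delete vertices labelled $\EQ_2$ and identify each deleted vertex's two incident edges into one, contracting whole chains of $\EQ_2$ vertices. Again the Holant is preserved at each step. The one degenerate case is a connected component consisting solely of $\EQ_2$ vertices, i.e.\ a cycle of wires, which contributes a multiplicative factor $\EQ_2(0,0)+\EQ_2(1,1)=2$; letting $c$ be the number of such components and discarding them, the result is a signature grid over $\cG_1\cup\cG_2$ on a (possibly non-bipartite, possibly looped and multi-edged) graph whose Holant times $2^{c}$ equals the original. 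Since $\holp{\cG_1\cup\cG_2}$ is defined over arbitrary graphs this is a valid instance, the transformation is polynomial time, and one query to the oracle followed by multiplication by $2^{c}$ finishes the reduction. (Lemma~\ref{lem:scaling} could absorb the factor $2^{c}$, though here it is simply a concrete integer.)

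The content is essentially bookkeeping, so the only real care needed is in the corner cases: self-loops (subdivided or contracted exactly like ordinary edges), parallel edges, vertices whose function lies in $\cG_1\cap\cG_2$, chains of consecutive $\EQ_2$ vertices, and isolated all-$\EQ_2$ cycles. Once the wire property is stated cleanly each of these is immediate; I would also note for completeness that finiteness of $\cG_1$ and $\cG_2$ ensures $\cG_1\cup\{\EQ_2\}$, $\cG_2\cup\{\EQ_2\}$ and $\cG_1\cup\cG_2$ are finite, so all the problems involved are well-defined.
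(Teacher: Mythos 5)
Your proof is correct and is the standard argument used in the cited source: the forward direction subdivides monochromatic (and self-loop) edges with fresh $\EQ_2$ vertices of the opposite colour, and the reverse direction contracts all $\EQ_2$ vertices, with all-$\EQ_2$ cycle components peeled off as a global factor of $2^c$. The corner cases you flag (parallel edges, self-loops, $\EQ_2$ chains, functions in $\cG_1\cap\cG_2$) are exactly the right ones to check, and your treatment of each is sound.
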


A signature grid $\Omega=(\cF,G,\pi)$ is called \emph{planar} if $G$ is a plane graph and, for each $v$, the arguments of $f_v$ are ordered counterclockwise starting from an edge specified by $\pi$ \cite[Section~2.1]{cai_holographic_2017}.
We denote by $\plhol(\cF)$ the problem $\hol(\cF)$ restricted to planar signature grids.

\subsection{Signature grids in terms of vectors}
\label{s:vector_perspective}

As noted in \cite{cai_valiants_2006}, any function $f\in\allf_n$ can be considered as a vector in $\AA^{2^n}$, which is the list of values of $f$, indexed by $\{0,1\}^n$.
Let $\{\ket{\bx}\}_{\bx\in\{0,1\}^n}$ be an orthonormal basis\footnote{In using this notation for vectors, called \emph{Dirac notation} and common in quantum computation and quantum information theory, we anticipate the interpretation of the vectors associated with functions as quantum states, cf.\ Section \ref{s:quantum_states}.} for $\AA^{2^n}$.
The vector corresponding to the function $f$ is then denoted by $\ket{f} := \sum_{\bx\in\{0,1\}^n} f(\bx)\ket{\bx}$.

Denote by $\otimes$ the Kronecker product of matrices, which we usually call the \emph{tensor product}.
Based on this \new{notion of tensor product}, we define \emph{tensor powers} of a matrix $M$ as follows: $M\t{1}:=M$ and $M\t{k+1}=M\t{k}\otimes M$ for any positive integer $k$.
The operations of tensor product and tensor power can be extended to vectors by considering them as single-column matrices.
Denote by $M^T$ the transpose of the matrix $M$.

Suppose $\Omega=(G,\cF|\cG,\pi)$ is a bipartite signature grid, where $G=(V,W,E)$ has vertex partitions $V$ and $W$.
Then the holant for $\Omega$ can be written as:
\begin{equation}\label{eq:bipartite_Holant_vectors}
 \Holant_\Omega = \left(\bigotimes_{w\in W} \left(\ket{g_w}\right)^T\right) \left(\bigotimes_{v\in V} \ket{f_v}\right) = \left(\bigotimes_{v\in V} \left(\ket{f_v}\right)^T\right) \left(\bigotimes_{w\in W} \ket{g_w}\right),
\end{equation}
where the tensor products are assumed to be ordered such that, in each inner product, two components associated with the same edge meet.

\subsection{Holographic reductions}

\emph{Holographic transformations} are the origin of the name `holant problems'.
Let $\GL$ be the set of all invertible 2 by 2 matrices over $\AA$, and let $\cO:=\{M\in\GL\mid M^TM=\smm{1&0\\0&1}\}$ be the set of orthogonal matrices.

Suppose $M\in\GL$, then for any $f\in\allf_0$ let $M\circ f=f$, and for any $f\in\allf_n$ with $n>0$ let $M\circ f$ denote the function corresponding to the vector $M\t{n}\ket{f}$.
Furthermore, for any set of functions $\cF$, define $M\circ\cF := \{ M\circ f \mid f\in\cF \}$.

\begin{theorem}[Valiant's Holant Theorem \cite{valiant_holographic_2008} as stated in {\cite[Proposition~4]{cai_holant_2012}}]\label{thm:Valiant_Holant}
 For any $M\in\GL$ and any finite sets $\cF,\cG\sse\allf$,
 \[
  \Holp{\cF\mid\cG} \equiv_T \Holp{M\circ\cF\mid (M^{-1})^T\circ\cG}.
 \]
\end{theorem}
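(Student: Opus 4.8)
The plan is to prove the slightly stronger fact that a holographic transformation sets up a bijection between the instances of the two bipartite holant problems which preserves the holant value; the two Turing reductions are then immediate. So fix $M\in\GL$ and finite $\cF,\cG\sse\allf$. Given a bipartite signature grid $\Omega=(G,\cF\mid\cG,\pi)$ with $G=(V,W,E)$, I would form the grid $\Omega'=(G,M\circ\cF\mid(M^{-1})^T\circ\cG,\pi')$ on the \emph{same} underlying graph, where $\pi'$ assigns $M\circ f_v$ to each $v\in V$ and $(M^{-1})^T\circ g_w$ to each $w\in W$, keeping the edge-to-argument correspondence from $\pi$. This is well-defined because $\circ$ preserves arity, and because $M\circ\cF$ and $(M^{-1})^T\circ\cG$ are again finite subsets of $\allf$ (the entries stay in the field $\AA$). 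The crux is the identity $\Holant_\Omega=\Holant_{\Omega'}$.

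To prove it, I would start from the vector form \eqref{eq:bipartite_Holant_vectors},
\[
 \Holant_\Omega=\Bigl(\bigotimes_{w\in W}\ket{g_w}\Bigr)^T\Bigl(\bigotimes_{v\in V}\ket{f_v}\Bigr),
\]
where the tensor factors on both sides are ordered by the edges of $G$, so that the two components indexed by a fixed edge get contracted with each other; this is a plain inner product of two vectors in $\AA^{2^{|E|}}$. The key step is to insert the identity $(M^{-1})^{\otimes|E|}M^{\otimes|E|}=I$ into this inner product, thought of as one factor $M^{-1}M=I$ per edge:
\[
 \Holant_\Omega=\Bigl(\bigotimes_{w\in W}\ket{g_w}\Bigr)^T(M^{-1})^{\otimes|E|}\,M^{\otimes|E|}\Bigl(\bigotimes_{v\in V}\ket{f_v}\Bigr).
\]
Here bipartiteness is essential: each edge of $G$ has exactly one endpoint in $V$ and one in $W$, so the $|E|$ tensor slots distribute as one slot per leg across the $V$-vertices, and independently as one slot per leg across the $W$-vertices. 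Hence $M^{\otimes|E|}\bigotimes_{v\in V}\ket{f_v}=\bigotimes_{v\in V}\bigl(M^{\otimes\deg v}\ket{f_v}\bigr)=\bigotimes_{v\in V}\ket{M\circ f_v}$, and likewise, using $(AB)^T=B^TA^T$, $\bigl(\bigotimes_{w\in W}\ket{g_w}\bigr)^T(M^{-1})^{\otimes|E|}=\bigl(\bigotimes_{w\in W}((M^{-1})^T)^{\otimes\deg w}\ket{g_w}\bigr)^T=\bigl(\bigotimes_{w\in W}\ket{(M^{-1})^T\circ g_w}\bigr)^T$. Putting these together, the right-hand side is exactly $\Holant_{\Omega'}$. (Arity-$0$ vertices, if present, contribute only scalar constants and are fixed by $\circ$, matching the convention $M\circ f=f$ for $f\in\allf_0$; and since $G$ is unchanged, the construction also respects planarity, although that is not needed here.)

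Finally, the map $\Omega\mapsto\Omega'$ only relabels the vertex functions according to the fixed finite sets $M\circ\cF$ and $(M^{-1})^T\circ\cG$ (which do not depend on the instance and whose values are computed once by matrix-vector products of bounded dimension, the arities in $\cF$ being bounded), so it is computable in polynomial time, giving $\Holp{\cF\mid\cG}\leq_T\Holp{M\circ\cF\mid(M^{-1})^T\circ\cG}$. For the reverse direction I would apply the very same construction to $\Omega'$ with $M$ replaced by $M^{-1}$: one checks $M^{-1}\circ(M\circ f)=f$ and $((M^{-1})^{-1})^T\circ((M^{-1})^T\circ g)=M^T\circ((M^{-1})^T\circ g)=g$ (since $(M^T(M^{-1})^T)^{\otimes k}=((M^{-1}M)^T)^{\otimes k}=I$), so this recovers $\Omega$ and yields the opposite reduction; together these give $\equiv_T$. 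I do not expect a genuine obstacle here. The only points needing care are (i) tracking the transpose so that $M$ ends up acting on the $\cF$-side and $(M^{-1})^T$ on the $\cG$-side — which is forced by the per-edge computation $((M^{-1})^TA)^T(MB)=A^TM^{-1}MB=A^TB$ — and (ii) the bipartite bookkeeping that lets the single global operator $M^{\otimes|E|}$ act as the legwise transformation $M\circ(-)$ on every $V$-vertex at once (and $((M^{-1})^T)^{\otimes|E|}$ as $(M^{-1})^T\circ(-)$ on every $W$-vertex).
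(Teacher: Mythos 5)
The paper does not prove this statement; it quotes Valiant's Holant Theorem from the cited reference and uses it as a black box, so there is no in-paper argument to compare against. Your proof is the standard one and is correct: inserting $(M^{-1})^{\otimes|E|}M^{\otimes|E|}=I$ into the vector form \eqref{eq:bipartite_Holant_vectors}, using bipartiteness to regroup the per-edge copies of $M$ and $(M^{-1})^T$ into per-vertex powers $M^{\otimes\deg v}$ and $((M^{-1})^T)^{\otimes\deg w}$, and noting the instance map $\Omega\mapsto\Omega'$ is a value-preserving, polynomial-time-computable bijection gives both Turing reductions. You correctly place $M$ on the $\cF$-side and $(M^{-1})^T$ on the $\cG$-side so the per-edge contraction cancels, and your remarks on arity-$0$ vertices and on precomputing $M\circ\cF$ and $(M^{-1})^T\circ\cG$ for fixed finite $\cF,\cG$ close the remaining gaps.
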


\begin{corollary}[{\cite[Proposition~5]{cai_holant_2012}}]\label{cor:orthogonal-holographic}
 Let $O\in\cO$ and let $\cF\sse\allf$ be finite, then
 \[
  \Holp{\cF} \equiv_T \Holp{O\circ\cF}.
 \]
\end{corollary}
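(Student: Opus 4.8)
The plan is to pass to the bipartite setting, apply Valiant's Holant Theorem with the transformation matrix $O$ itself, and then exploit the orthogonality relation $O^TO=\smm{1&0\\0&1}$ to return to the original signature set on the ``right-hand'' side of the bipartite holant.

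First I would apply Proposition~\ref{prop:make_bipartite} to get $\holp{\cF}\equiv_T\holp{\cF\mid\{\EQ_2\}}$. Then Theorem~\ref{thm:Valiant_Holant} with $M=O$ (valid since $O\in\cO\subseteq\GL$) gives $\holp{\cF\mid\{\EQ_2\}}\equiv_T\holp{O\circ\cF\mid (O^{-1})^T\circ\{\EQ_2\}}$. Because $O\in\cO$ we have $O^{-1}=O^T$ and hence $(O^{-1})^T=O$, so this right-hand side is $\holp{O\circ\cF\mid O\circ\{\EQ_2\}}$. A second application of Proposition~\ref{prop:make_bipartite}, now to the set $O\circ\cF$, yields $\holp{O\circ\cF\mid\{\EQ_2\}}\equiv_T\holp{O\circ\cF}$, so it only remains to identify $O\circ\{\EQ_2\}$ with $\{\EQ_2\}$.

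The one genuine computation is therefore to check that $O\circ\EQ_2=\EQ_2$. Using the vector description of Section~\ref{s:vector_perspective}, $\ket{\EQ_2}=\ket{00}+\ket{11}$ is the vectorisation of the $2\times2$ identity matrix, and applying $O\t{2}$ to the vectorisation of a matrix $M$ produces the vectorisation of $OMO^T$; hence $O\t{2}\ket{\EQ_2}$ is the vectorisation of $OO^T=\smm{1&0\\0&1}$, i.e.\ $O\circ\EQ_2=\EQ_2$. (Alternatively, one can expand the four entries of $O\t{2}\ket{\EQ_2}$ directly from the definition of $\circ$ and the identity $O^TO=\smm{1&0\\0&1}$.) Chaining the interreductions $\holp{\cF}\equiv_T\holp{\cF\mid\{\EQ_2\}}\equiv_T\holp{O\circ\cF\mid\{\EQ_2\}}\equiv_T\holp{O\circ\cF}$ and invoking transitivity of $\equiv_T$ completes the proof. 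I do not expect any real obstacle: everything beyond the invariance of $\EQ_2$ under orthogonal transformations is bookkeeping with the already-established propositions and Valiant's Holant Theorem.
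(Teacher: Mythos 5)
Your proof is correct, and it is the standard argument used in the cited reference: pass to the bipartite form via Proposition~\ref{prop:make_bipartite}, apply Valiant's Holant Theorem with $M=O$, observe that $(O^{-1})^T=O$ and $O\circ\EQ_2=\EQ_2$ (your vectorisation computation $O\t{2}\ket{\EQ_2}\leftrightarrow OIO^T=I$ is right, noting that $O^TO=I$ for a square matrix also gives $OO^T=I$), and then drop the bipartition again. Since the paper simply cites this result rather than reproving it, there is nothing to compare against beyond confirming your argument matches the known one.
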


Going from a set of functions $\cF\mid\cG$ to $M\circ\cF\mid (M^{-1})^T\circ\cG$ or from $\cF$ to $O\circ\cF$ is a \emph{holographic reduction}.

\subsection{Gadgets and realisability}

A \emph{gadget} over a finite set of functions $\cF$ (also called $\cF$-gate) is a fragment of a signature grid with some `dangling' edges.
Any such gadget can be assigned an effective function $g$.

Formally, let $G=(V,E,E')$ be a graph with vertices $V$, (normal) edges $E$, and dangling edges $E'$, where $E\cap E'=\emptyset$.
Unlike a normal edge, each dangling edge has only one end incident on a vertex in $V$, the other end is dangling.
The gadget is determined by a tuple $\Gamma=(\cF,G,\pi)$, where $\pi:V\to\cF$ assigns a function to each vertex $v$ in such a way that each argument of the function corresponds to one of the edges (normal or dangling) incident on $v$.
Suppose $E'=\{\vc{e}{n}\}$, then the effective function associated with this gadget is
\[
 g_\Gamma(\vc{y}{n}) = \sum_{\sigma:E\to\{0,1\}} \prod_{v\in V} f_v(\hat{\sigma}|_{E(v)}),
\]
where $\hat{\sigma}$ is the extension of $\sigma$ to domain $E\cup E'$ which satisfies $\hat{\sigma}(e_k)=y_k$ for all $k\in[n]$, and $\hat{\sigma}|_{E(v)}$ is the restriction of $\hat{\sigma}$ to edges (both normal and dangling) which are incident on $v$.

If $g$ is the effective function of some gadget over $\cF$, then $g$ is said to be \emph{realisable over $\cF$}.
This notion is sometimes extended to say $g$ is realisable over $\cF$ if there exists a gadget over $\cF$ with effective function $c\cdot g$ for some $c\in\AAnz$.
By Lemma~\ref{lem:scaling}, the extended definition does not affect the validity of the following lemma.

\begin{lemma}[{\cite[p.~1717]{cai_dichotomy_2011}}]\label{lem:realisable}
 Suppose $\cF\sse\allf$ is finite and ${g}$ is realisable over $\cF$.
 Then
 \[
  \Holp{\cF\cup\{{g}\}} \equiv_T \Holp{\cF}.
 \]
\end{lemma}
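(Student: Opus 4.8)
The plan is to show the two reductions $\Holp{\cF}\leq_T\Holp{\cF\cup\{g\}}$ and $\Holp{\cF\cup\{g\}}\leq_T\Holp{\cF}$ separately. The first direction is immediate: every signature grid over $\cF$ is already a signature grid over the larger set $\cF\cup\{g\}$, so the identity map is a (trivial) polynomial-time reduction. The entire content of the lemma is in the reverse direction, where we must simulate occurrences of $g$ using only functions from $\cF$.

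For the reverse direction, fix a gadget $\Gamma=(\cF,G_\Gamma,\pi_\Gamma)$ over $\cF$ with $n$ dangling edges whose effective function is $g$ (or $c\cdot g$ for some $c\in\AAnz$, in the extended sense). Given an instance $\Omega=(G,\cF\cup\{g\},\pi)$ of $\Holp{\cF\cup\{g\}}$, I would construct a new signature grid $\Omega'=(G',\cF,\pi')$ by \emph{vertex substitution}: for each vertex $v$ of $G$ that is assigned the function $g$ by $\pi$, delete $v$ and splice in a fresh disjoint copy $\Gamma_v$ of the gadget $\Gamma$, identifying the $n$ dangling edges of $\Gamma_v$ with the $n$ edges of $G$ that were incident on $v$ (using $\pi$'s bijection between those edges and the arguments of $g$, matched against the ordering of the dangling edges of $\Gamma$). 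Vertices assigned functions already in $\cF$ are left untouched. Since $\cF$ is finite and each gadget copy has bounded size, $G'$ has size polynomial in $|G|$ and $\Omega'$ is computable in polynomial time. In the planar case one additionally checks that $\Gamma$ can be drawn in the plane with its dangling edges on the outer face in the prescribed cyclic order, so that the substitution preserves planarity; this is why planar realisability is defined with that embedding constraint.

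The key step is to verify $\Holant_{\Omega'}=\Holant_\Omega$ (up to the scalar $c^{\#\{v:\,\pi(v)=g\}}$, which is a nonzero constant absorbed by the Turing reduction). This is a routine but central computation: in the sum $\sum_{\sigma':E'\to\{0,1\}}\prod f_u(\sigma'|_{E(u)})$ defining $\Holant_{\Omega'}$, partition the edges $E'$ of $G'$ into the original edges of $G$ and the internal edges of the gadget copies. Fixing an assignment to the original edges, the sum over the internal edges of each gadget copy $\Gamma_v$ factors out and evaluates, by the definition of the effective function, to exactly $g(\sigma|_{E(v)})$ (or $c\cdot g(\sigma|_{E(v)})$); taking the product over all substituted vertices and then summing over assignments to the original edges reproduces $\Holant_\Omega$ term by term. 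I expect the main (minor) obstacle to be purely bookkeeping: handling the bijections between dangling edges and function arguments carefully, and dealing with the edge cases of self-loops, parallel edges, and dangling edges of $\Gamma$ that get identified with each other (which can occur if $G$ has a self-loop at a $g$-vertex). Invoking Lemma~\ref{lem:scaling} then converts the scalar-factor version into the stated interreducibility, completing the proof.
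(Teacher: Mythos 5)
The paper does not include its own proof of this lemma; it is cited directly from Cai, Lu, and Xia~\cite{cai_dichotomy_2011}. Your vertex-substitution argument is precisely the standard proof of that result, and it is correct, including the careful points you flag (the scalar factor handled via Lemma~\ref{lem:scaling}, polynomial blowup from the fixed-size gadget, and the bookkeeping around self-loops and edge identifications).
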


Following \cite{lin_complexity_2016}, we define $S(\cF) = \{ {g} \mid {g} \text{ is realisable over } \cF \}$ for any set of functions $\cF$.
Then Lemma~\ref{lem:realisable} implies that, for any finite subset $\cF'\sse S(\cF)$, we have $\Holp{\cF'} \leq_T \Holp{\cF}$.

The following lemma will be useful later.
This result is stated e.g.\ in \cite[Lemma~2.1]{cai_dichotomy_2017}, but as it is not proved there, we give a quick proof here.
Note this lemma uses the scaled definition of realisability.

\begin{lemma}\label{lem:decomposable}
 Suppose $f(\vc{x}{n+m})=g(\vc{x}{n})h(x_{n+1}\zd x_{n+m})$, where none of these functions are identically zero.
 Then $g,h\in S(\{f,\dl_0,\dl_1\})$. 
\end{lemma}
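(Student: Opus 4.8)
The plan is to realise each tensor factor by a one-vertex gadget over $\{f,\dl_0,\dl_1\}$ that ``pins away'' the other factor, producing a nonzero scalar multiple of the target; Lemma~\ref{lem:scaling} (via the scaled notion of realisability) then gives the result.

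I would first realise $g$. Since $h$ is not identically zero, fix an assignment $(b_{n+1}\zd b_{n+m})\in\{0,1\}^m$ with $c:=h(b_{n+1}\zd b_{n+m})\neq 0$. Build the gadget $\Gamma$ consisting of a single vertex carrying the function $f$, whose first $n$ edges $e_1\zd e_n$ are left dangling and whose $(n+j)$-th edge, for $j\in[m]$, is connected to a fresh degree-one vertex carrying the unary function $\dl_{b_{n+j}}$. The internal edges are exactly these $m$ edges, so an internal assignment is a tuple $(x_{n+1}\zd x_{n+m})\in\{0,1\}^m$, while the dangling edges carry $(y_1\zd y_n)$. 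Using the decomposition of $f$ and the definition of the effective function,
\[
 g_\Gamma(y_1\zd y_n) = \sum_{x_{n+1},\dots,x_{n+m}} f(y_1\zd y_n, x_{n+1}\zd x_{n+m})\prod_{j=1}^m \dl_{b_{n+j}}(x_{n+j}) = g(y_1\zd y_n)\, h(b_{n+1}\zd b_{n+m}) = c\cdot g(y_1\zd y_n),
\]
because $\dl_{b}(x)$ vanishes unless $x=b$. Since $c\in\AAnz$, this shows $g\in S(\{f,\dl_0,\dl_1\})$.

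By the symmetric construction -- using that $g$ is not identically zero to fix $(a_1\zd a_n)$ with $g(a_1\zd a_n)\neq 0$, leaving the last $m$ edges of the $f$-vertex dangling, and pinning its first $n$ edges to $a_1\zd a_n$ -- the same argument realises $g(a_1\zd a_n)\cdot h$, so $h\in S(\{f,\dl_0,\dl_1\})$ as well.

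The construction is essentially forced, and I do not expect a genuine obstacle: the only point that needs care is the effective-function computation, namely that contracting an argument of $f$ against $\dl_b$ has the same effect as fixing that argument to $b$, together with checking that the scalars $c$ and $g(a_1\zd a_n)$ are nonzero so that the scaled definition of realisability applies.
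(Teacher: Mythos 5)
Your proof is correct and matches the paper's argument exactly: pin the complementary block of arguments to a tuple where the other factor is nonzero, read off the effective function as a nonzero scalar multiple of the target, and invoke the scaled notion of realisability. You are slightly more explicit about the scalar factor $c$ (the paper's displayed equation omits it, relying on the preceding remark that realisability is taken up to scalars), but this is a presentational difference, not a substantive one.
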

\begin{proof} 
 As $g$ is not identically zero, there exists $\ba\in\{0,1\}^n$ such that $g(\ba)\neq 0$.
 But then
 \[
  h(\vc{y}{m}) = \sum_{\vc{x}{n}\in\{0,1\}} f(\vc{x}{n},\vc{y}{m}) \prod_{k\in [n]} \dl_{a_k}(x_k).
 \]
 The right-hand side is the effective function of some gadget over $\{f,\dl_0,\dl_1\}$, which consists of one copy of $f$ connected to $n$ unary functions, so $h\in S(\{f,\dl_0,\dl_1\})$.
 
 An analogous argument with the roles of $g$ and $h$ swapped shows that $g\in S(\{f,\dl_0,\dl_1\})$.
\end{proof}

When considering a bipartite Holant problem $\holp{\cF\mid\cG}$ for some finite $\cF,\cG\sse\allf$, we need to use gadgets that respect the bipartition.
Suppose $\Gm=(\cF|\cG,G,\pi)$ where $G=(V,W,E,E')$ is a bipartite graph with vertex partitions $V$ and $W$, (normal) edges $E$ and dangling edges $E'$, and suppose $\pi:V\cup W\to\cF\cup\cG$ satisfies $\pi(v)\in\cF$ for all $v\in V$ and $\pi(w)\in\cG$ for all $w\in W$.
If furthermore all dangling edges are incident on vertices from $V$, then $\Gm$ is called a \emph{left-hand side (LHS) gadget} over $\cF|\cG$.
Otherwise, if all dangling edges are incident on vertices from $W$, then $\Gm$ is called a \emph{right-hand side (RHS) gadget} over $\cF|\cG$.
The following result is a straightforward extension of Lemma~\ref{lem:realisable}.

\begin{lemma}\label{lem:realisable_planar}
 Let $\cF,\cG\sse\allf$ be two finite sets of functions.
 Suppose $f$ is an LHS gadget over $\cF|\cG$ and $g$ is a RHS gadget over $\cF|\cG$.
 Then
 \[
  \holp{\cF\cup\{f\}\mid\cG} \equiv_T \holp{\cF\mid\cG} \quad\text{and}\quad \holp{\cF\mid\cG\cup\{g\}} \equiv_T \holp{\cF\mid\cG}.
 \]
\end{lemma}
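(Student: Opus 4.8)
The plan is to reuse the gadget-substitution argument behind Lemma~\ref{lem:realisable}, checking only that the substitution respects the bipartition; I will describe the first equivalence and note that the second is symmetric. One direction is immediate: $\holp{\cF\mid\cG}\leq_T\holp{\cF\cup\{f\}\mid\cG}$, since every instance of the former is already an instance of the latter.

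For the converse, write $\Gamma=(\cF|\cG,H,\rho)$ for the LHS gadget with effective function $f$, where $H=(V_\Gamma,W_\Gamma,E_\Gamma,E'_\Gamma)$ and the dangling edges $E'_\Gamma=\{\vc{e}{n}\}$ are all incident on vertices of $V_\Gamma$. Given an instance $\Omega=(G,\cF\cup\{f\}\mid\cG,\pi)$ with $G=(V,W,E)$, I would build a new signature grid $\Omega'$ by replacing each $u\in V$ with $\pi(u)=f$ by a fresh disjoint copy of $\Gamma$, gluing the $k$-th edge at $u$ to the dangling edge $e_k$ of that copy (via the bijections that $\pi$ and $\rho$ supply between edges and function arguments). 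Because $u$ lies in $V$, its edges all lead to vertices of $W$; and because all dangling edges of $\Gamma$ attach to $V_\Gamma$-vertices, after gluing these edges run between $W$ and the inserted $V_\Gamma$-vertices. Hence $\Omega'$ is a legitimate bipartite signature grid over $\cF\mid\cG$, with left part $(V\setminus\pi^{-1}(f))$ together with all copied $V_\Gamma$-vertices, and right part $W$ together with all copied $W_\Gamma$-vertices.

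To see $\Holant_{\Omega'}=\Holant_\Omega$, I would take the defining sum over edge-assignments of $\Omega'$ and perform the summation over the internal edges of each inserted copy first, with the values on the former $u$-edges held fixed; by the definition of the effective function this inner sum equals $f$ evaluated at those values, and collecting the factors reproduces the product $\prod_{v\in V}f_v(\sigma|_{E(v)})$ for the associated assignment $\sigma$ on $G$. Since $\Gamma$ has constant size, $\Omega'$ is computable from $\Omega$ in polynomial time, so $\holp{\cF\cup\{f\}\mid\cG}\leq_T\holp{\cF\mid\cG}$. The RHS statement follows by the mirror argument: a vertex labelled $g$ lies in $W$, its edges lead to $V$, and an RHS gadget has all dangling edges on its $W$-side, so the substitution again preserves the bipartition. (If the scaled notion of realisability is used, so that the gadget's effective function is $c\cdot f$ with $c\in\AAnz$, the only change is that $\Holant_{\Omega'}=c^m\Holant_\Omega$ for $m$ the number of replaced vertices, which is absorbed as in Lemma~\ref{lem:scaling}.) The argument is pure bookkeeping; the only point needing genuine care — and hence the ``main obstacle'' here — is verifying that the substituted grid really is bipartite with the parts described, which is exactly where the LHS/RHS restriction on the gadget is used.
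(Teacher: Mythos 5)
Your proof is correct and is exactly the argument the paper has in mind: the paper itself gives no proof, merely remarking that the lemma is ``a straightforward extension of Lemma~\ref{lem:realisable}'', and your gadget-substitution plus the bipartition bookkeeping is the natural way to carry that extension out. The one point of care you flag — that the LHS/RHS restriction guarantees the substituted grid stays bipartite with the parts as described — is indeed the only place where the bipartite version differs from the non-bipartite Lemma~\ref{lem:realisable}.
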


A gadget is called \emph{planar} if it is defined by a plane graph and if the dangling edges, ordered counterclockwise corresponding to the order of the arguments of the effective function, are in the outer face in a planar embedding \cite[Section~2.4]{cai_holographic_2017}.
In reductions between planar holant problems, only planar gadgets may be used.

\subsection{Polynomial interpolation}

Finally, there is the technique of \emph{polynomial interpolation}.
Let $\cF$ be a set of functions and suppose $g$ is a function that cannot be realised over $\cF$.
If, given any signature grid over $\cF\cup\{g\}$, it is possible to set up a family of signature grids over $\cF$ such that the holant for the original problem instance can be determined efficiently from the holant values of the family by solving a system of linear equations, then $g$ is said to be \emph{interpolatable} over $\cF$.
We do not directly use polynomial interpolation here, though the technique is employed by many of the results we build upon.
A rigorous definition of polynomial interpolation can be found in \cite{cai_complexity_2014}.

{\subsection{Linear algebra lemmas for holographic transformations}
\label{s:linear-algebra}

Holographic transformations are, at their core, linear maps.
In this section, we give a few lemmas about decompositions of matrices that will significantly simplify later arguments about these transformations.
In particular, we extend the orthogonal QR decomposition from real to complex matrices and prove two further lemmas building on it.
These result are straightforward and may not be novel, but we have not been able to find a reference for them in the literature, so we provide proofs for completeness.}

Let $K := \left(\begin{smallmatrix}1&1\\i&-i\end{smallmatrix}\right)$, $X := \left( \begin{smallmatrix}0&1\\1&0\end{smallmatrix}\right)$ and $T:=\smm{1&0\\0&\exp(i\pi/4)}$ {where $i$ is the imaginary unit}; these are all elements of $\GL$.

Recall that any real square matrix $M$ can be written as $M=QR$ where $Q$ is an orthogonal matrix and $R$ is upper (or lower) triangular.
The equivalent result for complex matrices requires $Q$ to be unitary instead of orthogonal.
Nevertheless, many complex 2 by 2 matrices do admit a decomposition with a complex orthogonal matrix and an upper or lower triangular matrix.
Where this is not possible, we give an alternative decomposition using the matrix $K$ {defined above} instead of an orthogonal matrix.

\begin{lemma}[Orthogonal QR decomposition for complex matrices]\label{lem:QR_decomposition}
 Let $M$ be an invertible 2 by 2 complex matrix, write {$\cO$} for the set of all 2 by 2 complex orthogonal matrices, and let $K$ be as defined above.
 Then the following hold:
 \begin{itemize}
  \item There exists $Q\in \cO\cup\{K, KX\}$ such that $Q^{-1}M$ is upper triangular.
  \item There exists $Q\in \cO\cup\{K,KX\}$ such that $Q^{-1}M$ is lower triangular.
  \item If $Q^{-1}M$ is neither lower nor upper triangular for any orthogonal $Q$, then $M=KD$ or $M=KXD$, where $D$ is diagonal.
 \end{itemize}
\end{lemma}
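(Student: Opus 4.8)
The plan is to mimic the real QR decomposition, which at heart just says that the first column of $M$ can be rotated onto a coordinate axis by an orthogonal matrix. Over $\CC$ the obstruction is that the bilinear form $v^T v$ is no longer positive definite: a nonzero vector $v = (a,b)^T$ can be \emph{isotropic}, meaning $a^2 + b^2 = 0$, i.e.\ $b = \pm i a$. So I would split into two cases according to whether the first column of $M$ is isotropic.

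If the first column $v=(a,b)^T$ of $M$ is non-isotropic, then $\lambda := a^2+b^2 \neq 0$; pick a square root $\mu$ of $\lambda$. I claim $Q = \tfrac{1}{\mu}\smm{a & -b \\ b & a}$ is orthogonal: a direct check gives $Q^T Q = \tfrac{1}{\mu^2}\smm{a^2+b^2 & 0 \\ 0 & a^2+b^2} = I$. Then $Q^{-1} = Q^T$ (up to the fact that $Q$ orthogonal means $Q^{-1}=Q^T$), and $Q^{-1}M$ has first column $Q^T v = \tfrac{1}{\mu}(a^2+b^2, 0)^T = (\mu, 0)^T$, so $Q^{-1}M$ is upper triangular. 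For the lower-triangular version, I would instead arrange for $Q^T v$ to be a multiple of $(0,1)^T$: replace $Q$ by $QX$ (still orthogonal since $X\in\cO$), or equivalently use $Q = \tfrac1\mu\smm{-b & a \\ a & b}$. That handles both triangular conclusions whenever the relevant column of $M$ is non-isotropic. Since $M$ is invertible, at most one of its two columns is isotropic (two isotropic columns would be parallel, as the isotropic cone over $\CC$ is the union of two lines $b = \pm i a$, and if both columns lay on the \emph{same} line $M$ would be singular — if they lay on the two different lines one can still handle it, but more simply: invertibility guarantees not both columns are on the same isotropic line). Applying the non-isotropic argument to whichever column is non-isotropic, and transposing the roles, should yield both an upper- and a lower-triangular reduction by an orthogonal $Q$ \emph{unless} both columns of $M$ are isotropic and lie on the two distinct lines $b=ia$ and $b=-ia$.

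For the remaining case — the third bullet — suppose no orthogonal $Q$ makes $Q^{-1}M$ triangular; by the above this forces both columns of $M$ to be isotropic, one proportional to $(1,i)^T$ and the other to $(1,-i)^T$ (they cannot be on the same line without $M$ being singular). But $(1,i)^T$ and $(1,-i)^T$ are exactly the columns of $K = \smm{1&1\\i&-i}$. Hence each column of $M$ is a scalar multiple of a column of $K$, so $M = KD'$ where $D'$ is diagonal if the $(1,i)$-column of $M$ is its first column, and $M = KXD'$ (with $D'$ diagonal) if instead the $(1,-i)$-column comes first, since right-multiplication by $X$ swaps the columns of $K$. This is precisely $M = KD$ or $M = KXD$ with $D$ diagonal. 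Finally I should note that $K$ and $KX$ themselves furnish the $Q$ in the first two bullets in this degenerate case: $K^{-1}M = D$ (or $(KX)^{-1}M = D$) is diagonal, hence both upper and lower triangular, so $Q \in \{K, KX\}$ covers it.

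The only genuinely delicate point is the case analysis on isotropic columns and verifying that invertibility of $M$ rules out the pathological sub-case of two parallel isotropic columns; everything else is the routine check that the explicit $Q$'s are orthogonal and do the intended column reduction. I would organize the write-up as: (1) the non-isotropic-column construction of $Q$, giving upper-triangular; (2) the variant giving lower-triangular; (3) the classification of when \emph{no} orthogonal $Q$ works, landing on $M = KD$ or $KXD$; (4) the observation that $K, KX$ then serve as the required $Q$ for the first two bullets.
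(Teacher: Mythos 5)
Your overall approach is the right one and mirrors the paper's: case-split on whether the relevant column vector is \emph{isotropic} (annihilated by the complex bilinear form $v^T v$), use an explicit orthogonal $Q$ in the non-isotropic case, and recognise $K$ and $KX$ as the only remedies in the isotropic case. The upper-triangular construction with $Q=\tfrac1\mu\smm{a&-b\\b&a}$ is correct.

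The lower-triangular construction, however, is wrong. You keep working with the \emph{first} column $v=(a,b)^T$ and arrange $Q^T v\propto(0,1)^T$. But that only makes the first column of $Q^{-1}M$ equal to $(0,*)^T$, i.e.\ it zeroes the $(1,1)$ entry. A lower-triangular matrix requires the $(1,2)$ entry to vanish, which constrains the \emph{second} column $v_2=(y,w)^T$ of $M$: one needs $Q^{-1}v_2\propto(0,1)^T$, and the relevant isotropy condition is $y^2+w^2=0$, not $a^2+b^2=0$. (This is exactly what the paper does: for lower triangular it examines $y^2+w^2$, for upper triangular $x^2+z^2$.) A downstream consequence is that your claim ``applying the non-isotropic argument to whichever column is non-isotropic $\ldots$ should yield both an upper- and a lower-triangular reduction by an orthogonal $Q$'' is false: if the first column is non-isotropic but the second is isotropic, an orthogonal $Q$ gives an upper-triangular $Q^{-1}M$, but no orthogonal $Q$ gives a lower-triangular one — for that you need $K$ or $KX$. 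Your deduction in the third bullet (``by the above this forces both columns of $M$ to be isotropic'') therefore does not follow from what you have actually shown; it does become correct once the two reductions are tied to the two different columns, since then ``no orthogonal $Q$ gives upper triangular'' forces the first column isotropic, ``no orthogonal $Q$ gives lower triangular'' forces the second column isotropic, and invertibility places the two columns on the two distinct isotropic lines, giving $M=KD$ or $M=KXD$.
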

\begin{proof}
 Write $M$ as:
 \[
  M = \begin{pmatrix} x&y\\z&w\end{pmatrix}.
 \]
 We assumed $M$ was invertible, so $\det M = xw-yz\neq 0$.
 {Note that $K^{-1} = \frac{1}{2}\smm{1&-i\\1&i}$ and $(KX)^{-1} = \frac{1}{2}\smm{1&i\\1&-i}$.
 
 For a lower triangular decomposition, we want the top right element of $Q^{-1}M$ to vanish, this is $(Q^{-1})_{00} y + (Q^{-1})_{01} w$.
 The elements $y$ and $w$ cannot both be zero since $M$ is invertible.
 Suppose first $y^2+w^2=0$, i.e.\ $w=\pm iy$.
   Then
   \[
    \pmm{1&\pm i\\1&\mp i}\pmm{x&y\\z&\pm i y} = \begin{pmatrix} x\pm iz & 0 \\ x\mp iz & 2y \end{pmatrix}
   \]
   so $Q^{-1}M$ is \new{lower} triangular for some $Q\in\{K,KX\}$.
 If instead $y^2+w^2\neq 0$, then
   \[
    \frac{1}{\sqrt{y^2+w^2}}\pmm{w&-y\\y&w}\pmm{x&y\\z&w} = \frac{1}{\sqrt{y^2+w^2}}\pmm{wx-yz&0\\yx+wz&y^2+w^2},
   \]
   and $\frac{1}{\sqrt{y^2+w^2}}\smm{w&-y\\y&w}\in\cO$.

 Analogously, for an upper triangular decomposition, some $Q\in\{K,KX\}$ works if $x^2+z^2=0$, and some $Q\in\cO$ works otherwise.
 
 We have $Q\in\{K,KX\}$ for both decompositions if and only if} $x^2+z^2=0$ and $y^2+w^2=0$ simultaneously.
 Write $z=\pm ix$.
 Then, by invertibility of $M$, $w=\mp iy$.
 Thus, letting $D=\left(\begin{smallmatrix}x&0\\0&y\end{smallmatrix}\right)$:
 \[
  M = \begin{pmatrix} x&y\\\pm ix&\mp iy\end{pmatrix} = \begin{pmatrix} 1&1\\\pm i&\mp i\end{pmatrix} \begin{pmatrix} x&0\\0&y\end{pmatrix} = \begin{cases} KD &\text{if $\pm$ goes to $+$, or} \\ KXD & \text{if $\pm$ goes to $-$.} \end{cases}
 \]
 This completes the proof.
\end{proof}

\begin{lemma}\label{lem:ATA-D}
 Suppose $M$ is a 2 by 2 invertible complex matrix such that $M^TM=\smm{\ld&0\\0&\mu}$ for some $\ld,\mu\in\CC\setminus\{0\}$.
 Then there exists a 2 by 2 orthogonal matrix $Q$ and a diagonal matrix $D$ such that $M=QD$.
\end{lemma}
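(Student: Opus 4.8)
The plan is to apply the orthogonal QR decomposition of Lemma~\ref{lem:QR_decomposition} to $M$ and then use the hypothesis $M^TM=\smm{\ld&0\\0&\mu}$ to force the triangular factor to be diagonal. By Lemma~\ref{lem:QR_decomposition}, either there is an orthogonal $Q\in\cO$ with $R:=Q^{-1}M$ upper triangular, or $M=KD'$ or $M=KXD'$ for some diagonal matrix $D'$. The first step is to rule out the two exceptional cases. Using $K^TK=\smm{0&2\\2&0}$, one computes that if $M=KD'$ with $D'=\diag(p,q)$ then $M^TM=D'^T K^TK D'=\smm{0&2pq\\2pq&0}$, and since $M$ is invertible we have $p,q\neq 0$, so $M^TM$ is anti-diagonal with nonzero off-diagonal entries — contradicting $M^TM=\smm{\ld&0\\0&\mu}$. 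The case $M=KXD'$ is identical because $(KX)^T(KX)=X^TK^TKX=2X^3=2X$ as well. Hence the decomposition must use a genuinely orthogonal $Q$.

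So write $M=QR$ with $Q\in\cO$ and $R=\smm{a&b\\0&c}$. Since $Q^TQ=I$, this gives $M^TM=R^TQ^TQR=R^TR=\smm{a^2&ab\\ab&b^2+c^2}$. Comparing entrywise with $\smm{\ld&0\\0&\mu}$ yields $ab=0$. Because $M$ is invertible, $\det R=ac\neq 0$; in particular $a\neq 0$, and therefore $b=0$, so $R$ is already diagonal. Setting $D:=R$ (and keeping $Q$) completes the argument, with $\ld=a^2$ and $\mu=c^2$ as a byproduct.

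The only point requiring any care is verifying that the exceptional outputs $M=KD'$ and $M=KXD'$ of Lemma~\ref{lem:QR_decomposition} are incompatible with $M^TM$ being diagonal; everything else is a one-line matrix computation. An alternative, essentially equivalent route would be to invoke the lower-triangular version of Lemma~\ref{lem:QR_decomposition} and run the same comparison, so there is some flexibility in presentation, but I would use the upper-triangular form for concreteness.
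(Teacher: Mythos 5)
Your route is genuinely different from the paper's. The paper's proof is a direct one-sentence argument: $(M^TM)_{01}=(M^TM)_{10}=0$ says the two columns of $M$ are orthogonal under the symmetric bilinear form, and since the diagonal entries $\ld,\mu$ are nonzero, each column can be rescaled to have ``length'' one, producing the orthogonal matrix $Q$ and the diagonal $D$ directly. You instead invoke the complex orthogonal QR decomposition of Lemma~\ref{lem:QR_decomposition}, rule out the exceptional factorisations by computing $K^TK=2X$, and then force $R$ to be diagonal from $R^TR$. The QR route is longer, but it has the merit of putting Lemma~\ref{lem:QR_decomposition} to work.

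There is, however, a gap in your first step. You claim Lemma~\ref{lem:QR_decomposition} gives the dichotomy ``either some $Q\in\cO$ makes $Q^{-1}M$ upper triangular, or $M=KD'$ or $M=KXD'$.'' That is not what the lemma says: its third bullet yields the exceptional shapes only when no orthogonal $Q$ gives \emph{either} an upper \emph{or} a lower triangular factor. Ruling out $M=KD'$ and $M=KXD'$ therefore only guarantees that some orthogonal $Q$ gives an upper \emph{or} a lower triangular $R$, yet you proceed as if it were the former. The dichotomy as you state it is in fact false for general invertible $M$: take $M=\smm{1&1\\i&0}$, whose first column satisfies $x^2+z^2=0$ so no orthogonal $Q$ upper-triangularises it, while $y^2+w^2=1\neq 0$ so $M$ is not of the form $KD'$ or $KXD'$ either. (This particular $M$ fails the lemma's hypothesis since $(M^TM)_{00}=0$, which is what keeps your eventual conclusion safe.) Two clean repairs are available: observe that $(M^TM)_{00}=x^2+z^2=\ld\neq 0$ is exactly the condition under which, in the proof of Lemma~\ref{lem:QR_decomposition}, the upper-triangular factorisation uses an orthogonal $Q$; or actually carry out the lower-triangular case as well (you remark that it is ``essentially equivalent'', but since the lemma does not let you choose which case you are in, you need to cover both). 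With either fix, the rest of the computation --- $R^TR=\smm{a^2&ab\\ab&b^2+c^2}$ must be diagonal, $ac\neq 0$, hence $b=0$ --- is correct and completes the proof.
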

\begin{proof}
 {Given invertibility of $M$, the property $(M^TM)_{01} = (M^TM)_{10} = 0$ means that the columns of $M$ are orthogonal to each other (under the `real' inner product).
 Hence there must exists $Q\in\cO$ such that the columns of $M$ are scalings of the columns of $Q$, which implies $M=QD$.}
\end{proof}

Using the complex QR decomposition, we can also consider the solutions of $A^TA\doteq X$, where `$\doteq$' denotes equality up to scalar factor.

\begin{lemma}\label{lem:ATA-X}
 The solutions of:
 \begin{equation}\label{eq:ATA-X}
  A^TA \doteq X
 \end{equation}
 are exactly those matrices $A$ satisfying $A=KD$ or $A=KXD$ for some invertible diagonal matrix $D$.
\end{lemma}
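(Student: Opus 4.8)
The statement to prove is that $A^TA \doteq X$ holds if and only if $A = KD$ or $A = KXD$ for some invertible diagonal $D$. Note that since $A^TA$ is required (up to scalar) to equal $X = \smm{0&1\\1&0}$, which is invertible, $A$ itself must be invertible; so I can freely use the complex QR decomposition (Lemma~\ref{lem:QR_decomposition}). The plan is to split into the two directions, doing the easy ``if'' direction by direct computation and the ``only if'' direction by a case analysis driven by the QR decomposition.

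For the ``if'' direction: I would simply compute $K^T K$ and $(KX)^T (KX)$. We have $K^TK = \smm{1&i\\1&-i}\smm{1&1\\i&-i} = \smm{1+i^2 & 1-i^2\\1-i^2 & 1+i^2} = \smm{0&2\\2&0} = 2X$. Then for $A = KD$ with $D = \diag(d_1,d_2)$, $A^TA = D^T K^T K D = D(2X)D = 2\smm{0&d_1d_2\\d_1d_2&0} = 2d_1d_2 X \doteq X$ since $d_1 d_2 \neq 0$. The case $A = KXD$ is identical since $X^T K^T K X = X(2X)X = 2X$ (as $X^2 = I$). So both families satisfy the equation.

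For the ``only if'' direction: suppose $A^TA = cX$ for some $c \neq 0$. I claim no orthogonal $Q$ can make $Q^{-1}A$ triangular. Indeed, if $A = QR$ with $Q \in \cO$ and $R$ upper triangular, then $A^TA = R^T Q^T Q R = R^T R$ (since $Q^TQ = I$), and $R^TR$ with $R$ upper triangular is of the form $\smm{r_{11}^2 & r_{11}r_{12}\\ r_{11}r_{12} & r_{12}^2 + r_{22}^2}$, whose $(1,1)$ entry is $r_{11}^2$; for this to be a scalar multiple of $X$ we'd need $r_{11} = 0$, contradicting invertibility of $R$ (hence of $A$). The lower-triangular case is symmetric, with the $(2,2)$ entry forced to vanish. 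Therefore, by the third bullet of Lemma~\ref{lem:QR_decomposition}, $A = KD$ or $A = KXD$ with $D$ diagonal; and $D$ is invertible because $A$ is. This completes the classification.

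The main (very minor) obstacle is just making sure the QR-based argument really does rule out \emph{all} orthogonal triangularizations cleanly — i.e. handling both the upper and lower triangular cases and confirming that ``$R^TR \doteq X$ is impossible for triangular invertible $R$'' — and then invoking the third bullet of Lemma~\ref{lem:QR_decomposition} correctly. Everything else is a short direct computation. I would probably present the ``if'' direction first since it is purely computational, then the ``only if'' direction via QR.

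\begin{proof}
 Since $X$ is invertible, any $A$ satisfying \eqref{eq:ATA-X} is invertible.

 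First suppose $A = KD$ or $A = KXD$ for some invertible diagonal matrix $D = \diag(d_1,d_2)$. A direct computation gives
 \[
  K^TK = \pmm{1&i\\1&-i}\pmm{1&1\\i&-i} = \pmm{1+i^2 & 1-i^2 \\ 1-i^2 & 1+i^2} = \pmm{0&2\\2&0} = 2X,
 \]
 and since $X^2 = \smm{1&0\\0&1}$ we also have $(KX)^T(KX) = X^T(K^TK)X = X(2X)X = 2X$. Hence in either case
 \[
  A^TA = D^T(2X)D = 2\pmm{0 & d_1 d_2 \\ d_1 d_2 & 0} = 2 d_1 d_2 \, X \doteq X,
 \]
 as $d_1 d_2 \neq 0$.

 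Conversely, suppose $A^TA = cX$ for some $c\in\CC\setminus\{0\}$. We claim that $Q^{-1}A$ is neither upper nor lower triangular for any $Q\in\cO$. Indeed, if $A = QR$ with $Q\in\cO$ and $R = \smm{r_{11}&r_{12}\\0&r_{22}}$ upper triangular, then, using $Q^TQ = \smm{1&0\\0&1}$,
 \[
  A^TA = R^TQ^TQR = R^TR = \pmm{r_{11}^2 & r_{11}r_{12} \\ r_{11}r_{12} & r_{12}^2 + r_{22}^2}.
 \]
 For this to equal $cX$ we would need $r_{11}^2 = 0$, hence $r_{11} = 0$, contradicting the invertibility of $R$ (and hence of $A$). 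The lower triangular case is analogous: writing $R = \smm{r_{11}&0\\r_{21}&r_{22}}$, the $(2,2)$ entry of $R^TR$ is $r_{22}^2$, which would be forced to vanish, again a contradiction. Thus no orthogonal $Q$ triangularizes $A$, so by the third bullet of Lemma~\ref{lem:QR_decomposition} we have $A = KD$ or $A = KXD$ for some diagonal $D$; and $D$ is invertible because $A$ is.
\end{proof}
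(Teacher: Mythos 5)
Your proof is correct and follows essentially the same route as the paper: the forward direction by direct computation of $K^TK = 2X$ (and $(KX)^T(KX)=2X$), and the converse by showing that $A^TA \doteq X$ rules out any orthogonal QR factorization of $A$, then invoking the third bullet of Lemma~\ref{lem:QR_decomposition}. The only cosmetic difference is that you derive the contradiction from $r_{11}=0$ contradicting invertibility of $R$ directly, whereas the paper observes that $R_{00}=0$ would force $A^TA$ itself to be singular; both are sound.
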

\begin{proof}
 First, we check that matrices of the form $KD$ or $KXD$ for some invertible diagonal matrix $D$ satisfy \eqref{eq:ATA-X}.
 Indeed:
 \begin{equation}
  (KD)^TKD = D^T K^T K D = D \begin{pmatrix}1&i\\1&-i\end{pmatrix} \begin{pmatrix}1&1\\i&-i\end{pmatrix} D = 2 D X D = 2 xy X \doteq X,
 \end{equation}
 where $D = \left(\begin{smallmatrix}x&0\\0&y\end{smallmatrix}\right)$ with $x,y\in\CC\setminus\{0\}$.
 Similarly:
 \begin{equation}
  (KXD)^TKXD = D^T X^T K^T KXD = 2DX^3D = 2DXD = 2xyX \doteq X.
 \end{equation}
 This completes the first part of the proof.

 It remains to be shown that these are the only solutions of $A^TA\doteq X$.
 Assume, for the purposes of deriving a contradiction, that there is a solution that has an orthogonal QR decomposition.
 In particular, suppose $A=QR$ for some upper triangular matrix $R$.
 Then, by orthogonality of $Q$:
 \begin{equation}
  A^T A = R^TQ^TQR = R^T R = \begin{pmatrix} R_{00} & 0 \\ R_{01} & R_{11} \end{pmatrix} \begin{pmatrix} R_{00} & R_{01} \\ 0 & R_{11} \end{pmatrix} = \begin{pmatrix} R_{00}^2 & R_{00}R_{01} \\ R_{00}R_{01} & R_{01}^2+R_{11}^2 \end{pmatrix}.
 \end{equation}
 The only way for the top left component of this matrix to be zero, as required, is if $R_{00}$ is zero. 
 Yet, in that case, the top right and bottom left components of $A^T A$ are zero too, hence $A^T A$ cannot be invertible.
 That is a contradiction because any non-zero scalar multiple of $X$ is invertible.
 
 A similar argument applies if $R$ is lower triangular.
 
 Thus, all solutions of \eqref{eq:ATA-X} have to fall into the third case of Lemma \ref{lem:QR_decomposition}: i.e.\ all solutions must be of the form $A=KD$ or $A=KXD$.
\end{proof}

For any $M\in\GL$, denote by $f_M(x,y):=M_{xy}$ the binary function corresponding to this matrix.
{The following is straightforward, but we give a proof for completeness.}

\begin{lemma}\label{lem:hc_gadget}
 Suppose $M\in\GL$ and $g\in\allf$, then $M\circ g, M^T\circ g\in S(\{g,f_M\})$.
\end{lemma}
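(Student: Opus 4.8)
The plan is to realise $M\circ g$ (and, by a mirror-image gadget, $M^T\circ g$) as the effective function of an explicit gadget over $\{g,f_M\}$. Write $n:=\ari(g)$. If $n=0$ the claim is immediate, since $M\circ g=M^T\circ g=g$ by the definition of the action of $M$ on $\allf_0$, and $g$ is trivially realisable over $\{g\}$; so assume $n\geq 1$.

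First I would build the gadget $\Gamma$: a single vertex carrying $g$, and for each $k\in[n]$ a fresh vertex carrying $f_M$, joined to the $k$-th argument of $g$ by a normal edge, with its remaining argument left as a dangling edge $e_k$. This is a legitimate gadget, since $g$ has degree $n$ and each $f_M$-vertex has degree $2$, matching the respective arities. Writing $x_k$ for the internal edge at the $k$-th copy of $f_M$ and $y_k$ for the dangling edge, the definition of the effective function gives
\[
 g_\Gamma(y_1\zd y_n) = \sum_{\bx\in\{0,1\}^n} g(\bx) \prod_{k\in[n]} f_M(y_k,x_k) = \sum_{\bx\in\{0,1\}^n} g(\bx) \prod_{k\in[n]} M_{y_k x_k}
\]
when the dangling edge is routed to the first argument of each $f_M$, and the same expression with $M_{x_k y_k}$ in place of $M_{y_k x_k}$ when it is routed to the second argument.

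Next I would translate this into the vector picture. In the first routing, $\sum_{x_k} M_{y_k x_k}(\,\cdot\,)$ is precisely the action of $M$ on the $k$-th tensor factor of $\ket{g}$, so $\ket{g_\Gamma}=M^{\otimes n}\ket{g}$, which by definition is $\ket{M\circ g}$. In the second routing the same computation gives $\sum_{x_k} M_{x_k y_k}(\,\cdot\,) = \sum_{x_k}(M^T)_{y_k x_k}(\,\cdot\,)$, hence $\ket{g_\Gamma}=(M^T)^{\otimes n}\ket{g}=\ket{M^T\circ g}$. In both cases the effective function is produced exactly, with no spurious scalar, so $M\circ g,\,M^T\circ g\in S(\{g,f_M\})$ as required. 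The only point that needs care is the index bookkeeping --- that contracting an index of $\ket{g}$ against the row index of $f_M$ applies $M$ while contracting against the column index applies $M^T$ --- which follows at once from $\ket{f_M}=\sum_{x,y}M_{xy}\ket{xy}$ (with $x$ the first argument) together with $\ket{M\circ g}=M^{\otimes n}\ket{g}$; beyond pinning this convention down there is no real obstacle, consistent with the lemma being ``straightforward''.
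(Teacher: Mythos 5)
Your proof is correct and follows essentially the same approach as the paper's: connect one vertex carrying $g$ to $n$ copies of $f_M$, and observe that routing the dangling edge to the first or second argument of $f_M$ yields $M\circ g$ or $M^T\circ g$ respectively, the latter because $f_M(y,x)=(M^T)_{xy}$. The paper skips the arity-$0$ aside and simply writes out the summation identity rather than dwelling on the vector picture, but the underlying gadget and index bookkeeping are identical.
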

\begin{proof}
 Let $n=\ari(g)$.
 We have
 \begin{align*}
  (M\circ g)(\vc{x}{n}) 
  &= \sum_{\vc{y}{n}\in\{0,1\}} \left(\prod_{j=1}^{n}M_{x_j y_j}\right) g(\vc{y}{n}) \\
  &= \sum_{\vc{y}{n}\in\{0,1\}} \left(\prod_{j=1}^{n} f_M(x_j,y_j)\right) g(\vc{y}{n}),
 \end{align*}
 so $M\circ g\in S(\{g,f_M\})$.
 Furthermore, for any $M\in\GL$, $f_{M^T}(x,y)=f_M(y,x)$; therefore
 \[
  (\new{M^T}\circ g)(\vc{x}{n}) = \sum_{\vc{y}{n}\in\{0,1\}} \left(\prod_{j=1}^{n} f_M(y_j,x_j)\right) g(\vc{y}{n})
 \]
 and $M^T\circ g\in S(\{g,f_M\})$.
\end{proof}

\section{Known results about holant problems}
\label{s:existing_results}

We now introduce the existing families of holant problems and their associated dichotomy results.
Gadget constructions (which are at the heart of many reductions) are easier the more functions are known to be available.
As a result, several families of holant problems have been defined, in which certain sets of functions are freely available,
{i.e.\ are always added to the set of functions parameterising the holant problem.
More formally, suppose $\cG\sse\allf$ is a finite set of functions and denote by $\hol^\cG$ the holant problem where functions in $\cG$ are freely available, then $\hol^\cG(\cF) := \hol(\cF\cup \cG)$ for any set of functions $\cF$.
Effectively, the problem $\hol^\cG$ restricts analysis to cases where the set of constraint functions contains $\cG$.}

\subsection{Conservative holant}
\label{s:Holant_star}

Write $\cU:=\allf_1$ for compatibility with earlier notation.
We will not use the notation $\hol^*$ \cite{cai_complexity_2014,cai_dichotomy_2011} here as we do not define holant problems for infinite sets of constraint functions.
Instead, as is common in the counting CSP literature, we refer to holant problems where arbitrary finite subsets of $\cU$ are freely available as `conservative'.

We begin with some definitions. 
Given a bit string $\bx$, let $\bar{\bx}$ be its bit-wise complement and let $\abs{\bx}$ denote its Hamming weight.
Denote by $\avg{\cF}$ the closure of a set of functions $\cF$ under tensor products.
Furthermore, define:
\begin{itemize}
 \item the set of all unary and binary functions
  \[
   \cT := \allf_1\cup\allf_2, 
  \]
 \item the set of functions which are non-zero on at most two \new{complementary} inputs
  \[
   \cE := \{f\in\allf \mid \exists \ba\in\{0,1\}^{\ari(f)} \text{ such that } f(\bx)=0 \text{ if } \bx\notin\{\ba,\bar{\ba}\} \},
  \]
 \item the set of functions which are non-zero only on inputs of Hamming weight at most 1
  \[
   \cM := \{f\in\allf \mid f(\bx)=0 \text{ if } \abs{\bx}> 1\}.
  \]
\end{itemize}
{Note that $\cU\sse\cT$, $\cU\sse\cE$ and $\cU\sse\cM$.}
The following result has been adapted to our notation.

\begin{theorem}[{\cite[Theorem~2.2]{cai_dichotomy_2011}}]\label{thm:Holant-star}
 Suppose $\cF$ is a finite subset of $\allf$.
 If
 \begin{itemize}
  \item $\cF\subseteq\avg{\cT}$, or
  \item there exists $O\in\cO$ such that $\cF\subseteq\avg{O\circ\mathcal{E}}$, or
  \item $\cF\subseteq\avg{K\circ\mathcal{E}}=\avg{KX\circ\mathcal{E}}$, or
  \item $\cF\subseteq\avg{K\circ\mathcal{M}}$ or $\cF\subseteq\avg{KX\circ\mathcal{M}}$,
 \end{itemize}
 then, for any finite subset $\new{\cU'}\sse\cU$, the problem $\hol(\cF,\new{\cU'})$ is polynomial-time computable.
 Otherwise, there exists a finite subset $\new{\cU'}\sse\cU$ such that $\hol(\cF,\new{\cU'})$ is \numP-hard.
 The dichotomy is still true even if the inputs are restricted to planar graphs.
\end{theorem}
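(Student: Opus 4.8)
The plan is to prove the two directions of the dichotomy separately, using the holographic-transformation and gadget machinery set up above; in the hardness direction one may as well take $\cU' = \cU$, and for tractability prove the statement for all finite $\cU'\sse\cU$ at once.

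For the tractable direction, I would show that each of the four conditions forces a polynomial-time algorithm. If $\cF\sse\avg{\cT}$, first decompose every signature into its unary and binary tensor factors; every signature grid over $\cF$ then becomes one built entirely from unary and binary functions, so each non-dangling vertex has degree two. Such a grid decomposes into disjoint paths and cycles of binary transfer matrices, capped by unary vectors, and its holant is therefore a product of matrix powers, traces and inner products, computable in polynomial time. For the three remaining classes I would pass to the bipartite form $\holp{\cF\cup\cU'}\equiv_T\holp{\cF\cup\cU'\mid\{\EQ_2\}}$ (Proposition~\ref{prop:make_bipartite}) and apply a holographic transformation (Theorem~\ref{thm:Valiant_Holant}). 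An orthogonal $O\in\cO$ can be removed outright, since $O\circ\cU=\cU$ and $O$ fixes $\EQ_2$ up to a scalar (Corollary~\ref{cor:orthogonal-holographic}); a transformation by $K^{-1}$ or $(KX)^{-1}$ likewise fixes $\cU$ but sends the $\EQ_2$ connector to a $\NEQ$ connector, because $K^TK$ and $(KX)^T(KX)$ are scalar multiples of $X$. This reduces tractability of the $\cE$-based classes to holant over the affine-type (generalised equality and disequality) signatures $\avg{\cE}$ with an equality-or-disequality connector, which is tractable by Gaussian elimination over $\FF_2$ together with a weighted count over the resulting affine solution space, and tractability of the $\cM$-based classes to holant over the matching-type signatures $\avg{\cM}$ with such a connector, which is polynomial-time computable by a holographic algorithm that exploits cancellations among the complex phases introduced by $K$.

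For the hardness direction, suppose $\cF$ meets none of the four conditions. Since $\avg{\cT}$ contains every function of arity at most two and is closed under tensor products, some $f\in\cF$ is non-decomposable of arity $n\ge 3$, hence also non-degenerate. The first step is an \emph{arity reduction}: connecting a well-chosen unary function to one argument of $f$ --- a gadget over $\cF\cup\cU$ --- yields a non-decomposable function of arity $n-1$, unless $f$ already has the rigid form characteristic of one of the tractable classes; iterating, I would realise over $\cF\cup\cU$ a non-decomposable \emph{ternary} signature $g$. The second step analyses the $2\times 4$ signature matrix of $g$: connecting unary functions to one or two of its arguments realises a family of binary signatures over $\cF\cup\cU$, and if any of them has a non-singular $2\times 2$ matrix $H$ that, modulo left and right multiplication by invertible diagonal matrices (which the available unaries supply) and the swap $X$, is not equal to $I$, to $X$, or to a diagonal matrix, then $\holp{\{f_H\}\cup\cU}$ is \numP-hard. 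This base case reduces to the complexity of weighted Boolean $\csp$ --- equivalently, counting weighted homomorphisms to a two-vertex target --- where hardness follows from the $\csp$ dichotomy \cite{cai_complexity_2014}; together with Lemma~\ref{lem:realisable} it makes $\holp{\cF\cup\cU'}$ \numP-hard for a suitable finite $\cU'\sse\cU$. Conversely, if no hard binary signature of this kind is realisable, a case analysis --- using the QR-type decompositions of Lemmas~\ref{lem:QR_decomposition}, \ref{lem:ATA-D} and \ref{lem:ATA-X} to pin down the subgroups of $\GL$ stabilising $\cE$ and $\cM$ --- forces $g$, and then $f$, to lie in $\avg{\cT}$, $\avg{O\circ\cE}$, $\avg{K\circ\cE}$ or $\avg{K\circ\cM}$ (or a $KX$ variant), contradicting the choice of $\cF$. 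For the planar statement no new idea is needed: all gadgets used are planar, the transformations $O$, $K$ and $KX$ act vertex-by-vertex and so preserve planarity, and the binary base case is \numP-hard already on planar instances (via a planar pairing gadget); no extra tractable case arises because the only \emph{planar-only} tractable family, namely signatures holographically equivalent to matchgates, is already captured by the $\avg{K\circ\cM}$ and $\avg{KX\circ\cM}$ cases.

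The main difficulty is the second hardness step: showing that the signatures surviving the gadget reductions --- those from which no hard binary signature can be realised --- are \emph{exactly} the members of the four tractable classes. This is the real content of the theorem and requires simultaneously controlling how tensor decompositions interact with holographic transformations, what the stabiliser subgroups of $\cE$ and $\cM$ inside $\GL$ look like, and how the various binarisations of a ternary signature constrain its $2\times 4$ matrix; the linear-algebra lemmas of Section~\ref{s:linear-algebra} are designed precisely for this bookkeeping.
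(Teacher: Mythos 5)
The paper does not actually prove this statement: it is Theorem 2.2 of \cite{cai_dichotomy_2011}, cited with only an intuitive sketch of why the tractable families yield polynomial-time algorithms, and a pointer to \cite{backens_holant_2018} for the completeness of the classification. So any blind proof attempt is necessarily doing more work than the surrounding text. That said, your attempt has substantive problems beyond being an over-proof.

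The central error is in the hardness direction. You claim that once a ``hard'' binary signature $f_H$ is realised, $\holp{\{f_H\}\cup\cU}$ is \sP-hard. This is false for \emph{any} binary $f_H$: the set $\{f_H\}\cup\cU$ sits inside $\cT=\allf_1\cup\allf_2\subseteq\ang{\cT}$, which is exactly the first polynomial-time case of the theorem you are trying to prove. No binary signature, however ``non-diagonalisable'', can be hard when paired only with unaries; holant grids over unary and binary functions decompose into paths and cycles and are always tractable. Hardness has to originate from a surviving function of arity at least three. The actual argument (in \cite{cai_dichotomy_2011}, and mirrored by the $\hol^+$ and $\hol^c$ developments in this paper) realises a non-decomposable \emph{ternary} signature, symmetrises it (Lemmas~\ref{lem:GHZ_symmetrise}, \ref{lem:W_symmetrise}), and then invokes the dichotomies for $\Holp{\{g\}\mid\{\EQ_3\}}$ or $\Holp{\{g\}\mid\{\ONE_3\}}$ (Theorems~\ref{thm:GHZ-state} and \ref{thm:W-state}), or a direct reduction to $\csp$, to get \sP-hardness. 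Your route of ``binarise, then reduce to weighted two-spin $\csp$'' cannot close because you never secure an equality signature of arity $\ge 3$ to simulate variable replication; Proposition~\ref{prop:CSP_holant} needs $\EQ_3$, not just an invertible binary connector.

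A secondary issue is the planar claim. You assert that the matchgate-based tractable class is ``already captured by the $\avg{K\circ\cM}$ and $\avg{KX\circ\cM}$ cases.'' These are different classes: $\cM$ is the set of functions supported on Hamming weight at most one, while $\cH$ is defined by the matchgate identities and, even in low arity, characterised by a \emph{parity} condition on the support (Lemma~\ref{lem:unary_matchgate}), not a weight bound. The correct reason no matchgate-only tractable case appears here is different: the only unary matchgate functions are scalings of $\dl_0$ and $\dl_1$, so $\cU$ (or any sufficiently rich finite $\cU'$) cannot fit inside any $M\circ\cH$, hence the planar-$\csp$ exception of Theorem~\ref{thm:planar_csp} is vacuous when arbitrary unaries are freely available. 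This is exactly how the planar case of Theorem~\ref{thm:holant_plus} is handled, and your argument would need to replicate that reasoning rather than conflate $\cH$ with $\cM$.

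Finally, the algorithm sketch for the $\cM$-based tractable classes (``exploits cancellations among the complex phases introduced by $K$'') is too vague to carry weight; the paper's own intuition, that all LHS gadgets over $\cM\mid\{\NEQ\}$ stay in $\ang{\cM}$ so RHS vertices can be eliminated one-by-one, is both sharper and closer to the actual algorithm.
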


{To get an intuition for the polynomial-time computable cases, first note that every tensor product can be thought of as a gadget over its factors, and that tensor closure commutes with holographic transformations.
Furthermore, if a signature grid is not connected, its holant is just the product of the holant values of the individual connected components.

Now, in the first polynomial-time computable case of Theorem~\ref{thm:Holant-star}, $\cF\sse\ang{\cT}$, the signature grid can be transformed to one in which every vertex has degree at most 2 by replacing every function with a disconnected gadget over unary and binary functions.
Then each connected component is a path or cycle, and its holant value can be computed by matrix multiplication.

In the second polynomial-time computable case, $\cF\subseteq\avg{O\circ\mathcal{E}}$, again replace decomposable functions by disconnected gadgets.
Suppose $O$ is the identity matrix, then an assignment of a Boolean value to one edge contributes to at most one non--zero-weight assignment of values to all edges in the same connected component.
Thus the holant value for any connected graph is a sum of at most two terms, which can be computed efficiently, and hence the overall value can be found.
If $O$ is not the identity, apply Corollary~\ref{cor:orthogonal-holographic} to reduce to the identity case.

For the third polynomial-time computable case, $\cF\subseteq\avg{K\circ\mathcal{E}}$, note that given any finite $\cG\sse\allf$,
\[
 \hol\left(K\circ\cG\right) \equiv_T \hol(K\circ\cG\mid\{\EQ_2\}) \equiv_T \hol(\cG\mid\{\NEQ\}) \leq_T \hol(\cG\cup\{\NEQ\}),
\]
where the first equivalence is Proposition~\ref{prop:make_bipartite}, the second is Theorem~\ref{thm:Valiant_Holant}, and the final reduction is because dropping the restriction to bipartite signature grids cannot make the problem easier.
Hence, since $\NEQ\in\cE$, this case reduces to the second one.

For the final polynomial-time computable case, $\cF\subseteq\avg{K\circ\mathcal{M}}$ or $\cF\subseteq\avg{KX\circ\mathcal{M}}$, use the first two steps of the above reduction.
It can be shown that all LHS gadgets over $\cM|\{\NEQ\}$ are in $\ang{\cM}$ (cf.\ e.g.\ \cite[Lemma~43]{backens_holant_2018}).
Thus, vertices from the right-hand side partition can be removed one-by-one, updating the signature grid in the process, until it decomposes into a discrete graph whose holant value can be computed straightforwardly.

The polynomial-time computable sets of Theorem~\ref{thm:Holant-star} will also be related to quantum entanglement in Section~\ref{s:existing_quantum}.
If $\cF$ is not one of the exceptional sets defined above, then the closure of $\cF\cup\cU$ under taking gadgets contains all functions~\cite[Theorem~67]{backens_holant_2018}.
}

\subsection{Counting constraint satisfaction problems}
\label{s:csp}

The family of complex-weighted Boolean $\csp$ (counting constraint satisfaction problems) is defined as follows.
Let $\cF\sse\allf$ be a finite set of functions and let $V$ be a finite set of variables.
A \emph{constraint} $c$ over $\cF$ is a tuple consisting of a function $f_c\in\cF$ and a \emph{scope}, which is a tuple of $\ari(f)$ (not necessarily distinct) elements of $V$.
If $C$ is a set of constraints, then any assignment $\bx:V\to\{0,1\}$ of values to the variables induces a \emph{weight} $\wt_{(V,C)}(\bx):=\prod_{c\in C}f_c(\bx|_c)$, where $\bx|_c$ denotes the restriction of $\bx$ to the scope of $c$.

\begin{description}[noitemsep]
 \item[Name] $\csp(\cF)$
 \item[Instance] A tuple $(V,C)$, where $V$ is a finite set of variables and $C$ is a finite set of constraints over $\cF$.
 \item[Output] The value $Z_{(V,C)} = \sum_{\bx:V\to\{0,1\}} \prod_{c\in C} f_c(\bx|_c)$.
\end{description}

Counting constraint satisfaction problems are closely related to holant problems.
{In particular, a counting constraint satisfaction problem can be thought of as a bipartite holant problem with the constraints as one part and the variables as the other part.
Each vertex corresponding to a variable is assigned the function $\EQ_k$, with $k$ the total number of times that variable appears across the scopes of all constraints.
The straightforward formalisation of this idea would require parameterising the holant problem with an infinite set of functions, the set containing all equality functions of any arity.
Yet the bipartite structure is not necessary: if two variable vertices are adjacent to each other, they can be merged into one, and if two constraint vertices are adjacent to each other, a new variable vertex assigned $\EQ_2$ can be introduced between them.
Now, $S(\{\EQ_3\})$ is exactly the set of all equality functions, hence:}

\begin{proposition}[{\cite[Proposition~1]{cai_holant_2012}}]\label{prop:CSP_holant}
 $\csp(\cF) \equiv_T \Holp{\cF\cup\{ \EQ_3 \}}$.
\end{proposition}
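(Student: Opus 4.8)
The plan is to establish the two reductions $\csp(\cF)\leq_T\holp{\cF\cup\{\EQ_3\}}$ and $\holp{\cF\cup\{\EQ_3\}}\leq_T\csp(\cF)$ separately. Both rest on the observation, already made informally before the statement, that a $\csp(\cF)$ instance is essentially a bipartite signature grid whose ``variable vertices'' carry equality functions. The one auxiliary fact I will need is that $\EQ_k\in S(\{\EQ_3\})$ for every $k\geq1$. This follows by a short induction on $k$: the function $\EQ_1=[1,1]$ arises from $\EQ_3$ by attaching a self-loop, since $\sum_{a}\EQ_3(x,a,a)=1$ for both $x\in\{0,1\}$; and from $\EQ_k$ one obtains $\EQ_{k+1}$ by gluing one argument of a fresh copy of $\EQ_3$ to one argument of $\EQ_k$. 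Together with $\EQ_3\in S(\{\EQ_3\})$ this covers all $k\geq1$, each by a gadget of size $O(k)$. (By contrast $\EQ_3\notin S(\{\EQ_2\})$: a parity count on half-edges shows no gadget over $\{\EQ_2\}$ can have three dangling edges, so the choice of ternary equality here is essential.)

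For the first reduction, take a $\csp(\cF)$ instance $(V,C)$. After deleting any variable that occurs in no scope and recording a factor of $2$ for each such variable (it contributes $\sum_x 1=2$ to the output), build a bipartite multigraph $G$: a vertex $u_c$ for each $c\in C$ carrying $f_c$, a vertex $w_v$ for each remaining variable $v$ carrying $\EQ_{k_v}$ where $k_v\geq1$ is the total number of occurrences of $v$ across all scopes, and an edge joining the $j$-th argument slot of $u_c$ to $w_v$ whenever $v$ is the $j$-th entry of the scope of $c$. In the signature grid so defined, the factor at $w_v$ equals $1$ exactly when $\sigma$ assigns a common value $x_v$ to all edges incident on $w_v$, and $0$ otherwise; hence the holant (equation~\eqref{eq:Holant_dfn}) collapses to $\sum_{\bx:V\to\{0,1\}}\prod_{c}f_c(\bx|_c)=Z_{(V,C)}$. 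Finally, substitute for each $\EQ_{k_v}$-vertex a gadget over $\{\EQ_3\}$ realising $\EQ_{k_v}$, which exists by the auxiliary fact: this yields a single instance of $\holp{\cF\cup\{\EQ_3\}}$ with the same holant, as in Lemma~\ref{lem:realisable}. So $Z_{(V,C)}$ is computable with one oracle query.

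For the second reduction, take a signature grid $\Omega=(G,\cF\cup\{\EQ_3\},\pi)$ with $G=(V,E)$ and introduce one $\csp$ variable per edge. For every vertex $v$ with $f_v\in\cF$ add a constraint with function $f_v$ whose scope lists the edges incident on $v$ -- with multiplicity, and in the order fixed by $\pi$. For every vertex $v$ with $f_v=\EQ_3$ add no constraint, but instead identify the three edge-variables incident on $v$ (these need not be distinct when $v$ carries a self-loop); carrying out all such identifications by union--find and renaming scopes accordingly produces a $\csp(\cF)$ instance $(V',C')$. Since $\prod_{v:\,f_v=\EQ_3}\EQ_3(\sigma|_{E(v)})$ is $1$ precisely when $\sigma$ is constant on every identification class and $0$ otherwise, we get $\Holant_\Omega=\sum_{\sigma\text{ constant on classes}}\prod_{v:\,f_v\in\cF}f_v(\sigma|_{E(v)})=Z_{(V',C')}$, so again one oracle query suffices.

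No single step here is deep; the content is the bookkeeping. I expect the main things to watch are the degenerate configurations -- self-loops, which fill two argument slots of a single vertex; parallel edges, which correspond to a variable repeated in a scope; and isolated components or unused variables -- together with checking that each construction is computable in polynomial time and issues only a single oracle call, so that it is a bona fide Turing reduction. The realisability claim for general $\EQ_k$ over $\{\EQ_3\}$, dispatched by the induction above, is the only place where anything beyond transcription is required.
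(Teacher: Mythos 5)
Your proposal is correct and is essentially a careful formalisation of the approach the paper sketches informally just before the proposition (and of the cited Proposition~1 of \cite{cai_holant_2012}): represent a $\csp(\cF)$ instance as a bipartite signature grid with $\EQ_k$ on variable vertices, realise each $\EQ_k$ by a gadget over $\{\EQ_3\}$, and conversely contract $\EQ_3$-vertices into variable identifications. You are appropriately careful about the degenerate configurations (unused variables via a multiplicative factor of $2$, self-loops, repeated scope entries), and your side remark that $\EQ_3\notin S(\{\EQ_2\})$ by a parity argument is a nice observation the paper does not spell out.
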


{This proposition shows that any counting constraint satisfaction problem can be expressed as a holant problem.
On the other hand, some holant problems can only be expressed as counting constraint satisfaction problems with the additional restriction that every variable must appear exactly twice across the scopes of all the constraints.
This means the holant framework is more general than counting constraint satisfaction problems.
For example, as is well known, the problem of counting matchings on a graph can be expressed in the holant framework but not in the standard $\csp$ framework as defined above (cf.~\cite[pp.~23:3--4]{backens_holant_2018}).}

The dichotomies for $\csp$ and its variants feature families of tractable functions which do not appear in Theorem~\ref{thm:Holant-star}.

\begin{definition}\label{dfn:affine_function}
 A function $f:\{0,1\}^n\to\AA$ for some non-negative integer $n$ is called \emph{affine} if it has the form:
 \begin{equation}
  f(\bx) = c i^{l(\bx)} (-1)^{q(\bx)} \chi_{A\bx=\bb}(\bx),
 \end{equation}
 where $c\in\AA$, $i^2=-1$, $l:\{0,1\}^n\to\{0,1\}$ is a linear function, $q:\{0,1\}^n\to\{0,1\}$ is a quadratic function, $A$ is an $m$ by $n$ matrix with Boolean entries for some $0\leq m\leq n$, $\bb\in\{0,1\}^m$, and $\chi$ is an indicator function which takes value 1 on inputs satisfying $A\bx=\bb$, and 0 otherwise.
 
 The set of all affine functions is denoted by $\cA$.
\end{definition}

The support of the function $\chi_{A\bx=\bb}$ is an affine subspace of $\{0,1\}^n$, hence the name `affine functions'.
There are different definitions of this family in different parts of the literature, but they are equivalent.
For the reader familiar with quantum information theory, the affine functions correspond -- up to scaling -- to stabiliser states (cf.\ Section \ref{s:existing_quantum} and also independently \cite{cai_clifford_2018}).

\begin{lemma}[{\cite[Lemma~3.1]{cai_clifford_2018}}]\label{lem:cai_clifford}
 If $f(\vc{x}{n}), g(\vc{y}{m})\in\cA$, then so are
 \begin{enumerate}
  \item $(f\otimes g)(\vc{x}{n},\vc{y}{m}) = f(\vc{x}{n}) g(\vc{y}{m})$,
  \item $f(x_{\rho(1)}\zd x_{\rho(n)})$ for any permutation {$\rho:[n]\to [n]$},
  \item $f^{x_j=x_\ell}(\vc{x}{j-1},x_{j+1}\zd x_n) = f(\vc{x}{j-1},x_\ell,x_{j+1}\zd x_n)$, setting the variable $x_j$ to be equal to $x_\ell$, and
  \item $f^{x_j=*}(\vc{x}{j-1},x_{j+1}\zd x_n) = \sum_{x_j\in\{0,1\}} f(\vc{x}{n})$.
 \end{enumerate}
\end{lemma}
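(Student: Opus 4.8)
The plan is to track the five ingredients $c,l,q,A,\bb$ of Definition~\ref{dfn:affine_function} through each of the four operations and check that the output again has this shape. Write $S:=\{\bx\mid A\bx=\bb\}$ for the affine subspace carrying the support, so that $f = c\cdot i^{l}(-1)^{q}\chi_S$. Two elementary facts about the data do most of the bookkeeping: first, intersecting an affine subspace of $\{0,1\}^n$ with a coordinate hyperplane $\{x_j=x_\ell\}$ (or $\{x_j=0\}$, $\{x_j=1\}$) and then deleting the coordinate $x_j$ again yields an affine subspace (linear algebra over $\FF_2$); second, substituting $x_j\mapsto x_\ell$ or $x_j\mapsto 0,1$ into a linear form yields a linear form and into a polynomial of degree at most two yields a polynomial of degree at most two. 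Granting these, operations~1--3 are routine: for $f\otimes g$ take $A$ block-diagonal, $\bb$ the concatenation, $l=l_f+l_g$, $q=q_f+q_g$, $c=c_fc_g$; for a permutation $\rho$ permute the columns of $A$ and the variables of $l,q$; and for the identification $x_j=x_\ell$ use the two facts, folding any $x_\ell^2=x_\ell$ terms produced in $q$ back into $q$ (they are still of degree at most two).

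The substance is operation~4, summing out $x_j$. I would fix the other variables $\bx'$, split the linear system as $A'\bx'+a_jx_j=\bb$ (with $a_j$ the $j$th column of $A$) and the phase as $l(\bx)=l'(\bx')+\alpha x_j$ and $q(\bx)=q'(\bx')+x_j\,r(\bx')+\beta x_j$ with $\alpha,\beta\in\{0,1\}$ and $r$ linear, and distinguish two regimes. If $a_j\neq 0$, then for each admissible $\bx'$ exactly one value of $x_j$ survives, and on the projected support it is an affine function $x_j=t(\bx')$ of $\bx'$; substituting $t$ back into $l$ and $q$ leaves a linear form and a degree-at-most-two polynomial, so the result is affine. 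If $a_j=0$, then both values of $x_j$ contribute whenever $A'\bx'=\bb$, giving
\[
 f^{x_j=*}(\bx') = c\, i^{l'(\bx')}(-1)^{q'(\bx')}\chi_{A'\bx'=\bb}(\bx')\left(1 + i^{\alpha}(-1)^{r(\bx')+\beta}\right).
\]
When $\alpha=0$ the bracket is $1+(-1)^{r(\bx')+\beta}$, which equals $2$ on the new linear constraint $r(\bx')=\beta$ and $0$ off it, so the effect is to rescale by $2$ and shrink the affine subspace by one equation. When $\alpha=1$ the bracket is $1\pm i$, and using $1-i=-i(1+i)$ it can be written uniformly as $(1+i)(-i)^{r(\bx')+\beta}$, i.e.\ the constant $1+i$ times a phase that is $i$ raised to a linear function; after the bookkeeping below this is again of the required shape.

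The hard part, and the only place real care is needed, is exactly this phase bookkeeping. Because Definition~\ref{dfn:affine_function} writes the fourth-root-of-unity part as $i^{l(\bx)}$ with $l$ valued in $\{0,1\}$ --- unlike the $\pm 1$ part, it is \emph{not} stable under adding $2$ to the exponent --- the rewrites above (the identity $(-i)^t=i^{3t}$ in the $\alpha=1$ case, and, more generally, any step that adds linear $i$-exponents together, whether across tensor factors, after an identification, or after back-substituting $t(\bx')$ in the sum-out) can produce exponents that must be reduced modulo $4$ and then re-split into an $\{0,1\}$-valued $i$-part plus a $\pm 1$ factor absorbed into $q$. The cleanest way to discharge this is to prove first an equivalent normal form in which the whole phase is a single $i^{Q(\bx)}$ with $Q:\{0,1\}^n\to\ZZ_4$ of the form $Q(\bx)=\sum_k a_k x_k + 2\sum_{k<m} b_{km} x_k x_m \pmod 4$, verify it is equivalent to Definition~\ref{dfn:affine_function}, and then run all four operations (in particular $\sum_{x_j} i^{Q}$ for operation~4) in that form, where the arithmetic closes mod $4$ with no case split on overflow. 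A conceptual alternative, anticipating Section~\ref{s:existing_quantum}, is to note that $\cA$ is, up to scaling, the set of functions whose associated vector is a stabiliser state, and that each of the four operations is a contraction against a stabiliser tensor (namely $\ket{00}\bra 0 + \ket{11}\bra 1$ for identification and $\bra 0 + \bra 1$ for summation), so the lemma follows from the standard closure of the stabiliser formalism under such contractions.
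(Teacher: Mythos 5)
The paper does not supply its own proof of Lemma~\ref{lem:cai_clifford}: the lemma is quoted from \cite[Lemma~3.1]{cai_clifford_2018} and used as a black box (for instance inside the proof of Lemma~\ref{lem:affine_closed}). There is therefore nothing here to compare your argument against, so I will assess it on its own terms.

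Your plan is correct, and you have put your finger on the one genuinely delicate point: the factor $i^{l(\bx)}$ in Definition~\ref{dfn:affine_function} is written with $l$ constrained to be $\{0,1\}$-valued, so it is \emph{not} stable under the integer additions of exponents that the four operations naturally produce --- in the tensor product ($l_f+l_g$ can equal $2$), after back-substituting $x_j=t(\bx')$ when the $j$-th column of $A$ is nonzero ($l'+\alpha t$ can equal $2$), and in the $a_j=0,\alpha=1$ branch of the sum-out (where $(-i)^{r+\beta}=i^{3(r+\beta)}$ pushes the exponent out of $\{0,1\}$). In each case the repair is the same: reduce mod~$4$ and re-split, using $u+v=(u\oplus v)+2(u\wedge v)$ over the integers, and observe that the carry term $u\wedge v$ is a product of two affine functions and hence has degree at most two, so it folds into $q$. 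Both of the shortcuts you propose are legitimate and avoid this case-by-case re-splitting: passing to the $\ZZ_4$-valued normal form $Q(\bx)=\sum_k a_kx_k+2\sum_{k<m}b_{km}x_kx_m\pmod 4$ closes all the arithmetic modulo $4$, and the stabiliser-state reading of $\cA$ (cf.\ Section~\ref{s:existing_quantum} and \cite{dehaene_clifford_2003,cai_clifford_2018}) turns each operation into a contraction against a stabiliser tensor. If you choose the normal-form route, the only remaining obligation is to verify the two-way equivalence with Definition~\ref{dfn:affine_function}, which reduces to the mod-$4$ identity $\sum_k\ell_kx_k\equiv\bigl(\bigoplus_k\ell_kx_k\bigr)+2\cdot(\text{a degree-}\!\leq\!2\text{ polynomial})$; once that is established, all four steps go through with no overflow cases at all.
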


\begin{definition}[\new{adapted from \cite[Definition~10]{cai_dichotomy_2017}}]
 A function $f:\{0,1\}^n\to\AA$ for some non-negative integer $n$ is called \emph{local affine} if it satisfies $\left( \bigotimes_{j=1}^{n} T^{a_j} \right) \ket{f} \in \cA$ for any $\ba\in\supp(f)$, where
 \[
  T^1 = T = \begin{pmatrix}1&0\\0&e^{i\pi/4}\end{pmatrix} \qquad\text{and}\qquad T^0 = \begin{pmatrix}1&0\\0&1\end{pmatrix}.
 \]
 The set of all local affine functions is denoted $\cL$.
\end{definition}

Both $\cA$ and $\cL$ are closed under tensor products, i.e.\ $\ang{\cA}=\cA$ and $\ang{\cL}=\cL$.

\begin{theorem}[{\cite[Theorem~3.1]{cai_complexity_2014}}]\label{thm:csp}
 Suppose $\cF\sse\allf$ is finite.
 If $\cF\subseteq\mathcal{A}$ or $\cF\subseteq\avg{\mathcal{E}}$, then $\csp(\cF)$ is computable in polynomial time. Otherwise, $\csp(\cF)$ is \sP-hard.
\end{theorem}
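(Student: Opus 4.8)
The plan is to prove tractability directly for the two listed classes and \sP-hardness for every other finite $\cF\sse\allf$. In the affine case $\cF\subseteq\cA$ I would evaluate a $\csp$ instance $(V,C)$ symbolically. By Lemma~\ref{lem:cai_clifford} the affine functions are closed under tensor product, variable identification, and summation over a variable, so the product $\prod_{c\in C}f_c$, viewed as a function on $\{0,1\}^V$, is again, up to a computable scalar, of the form $i^{l(\bx)}(-1)^{q(\bx)}\chi_{A\bx=\bb}(\bx)$ with $A,\bb,l,q$ obtainable in polynomial time (Gaussian elimination for $A,\bb$). Then $Z_{(V,C)}$ is that scalar times the twisted character sum $\sum_{\bx\,:\,A\bx=\bb} i^{l(\bx)}(-1)^{q(\bx)}$; parametrising the affine solution space and diagonalising the quadratic form over $\FF_2$ reduces this to a product of one-variable Gauss sums, each taking one of finitely many fixed values, hence computable in polynomial time.

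In the product-type case $\cF\subseteq\avg{\cE}$ I would first replace every function by a disconnected gadget over its $\cE$-tensor-factors; this leaves $Z_{(V,C)}$ unchanged and makes every constraint have arity at most two. A binary function supported on two complementary inputs either forces its two arguments equal or forces them complementary on its support. Form the graph on $V$ whose edges carry these binary constraints, each imposing equality or complementarity of its endpoints; within each connected component the value of one variable determines all the others, so the component contributes a sum of at most two products of fixed entries (and contributes zero if the complement labels are inconsistent around a cycle). Thus $Z_{(V,C)}$ factorises over connected components into sums of at most two terms and is computable in polynomial time.

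For hardness, suppose $\cF\not\subseteq\cA$ and $\cF\not\subseteq\avg{\cE}$. The first step is a \emph{pinning lemma}: $\csp(\cF)\equiv_T\csp(\cF\cup\{\dl_0,\dl_1\})$. The substantive direction simulates a pinned variable using only functions from $\cF$ by polynomial interpolation: attach to the variable to be pinned a recursively constructed gadget indexed by an integer $s$, and recover the pinned value by solving a Vandermonde-type linear system from the holant values obtained at many $s$. Because the weights are complex, the ratios (the relevant \emph{eigenvalues}) governing the interpolation may coincide or be roots of unity, so one must either exhibit a suitable interpolating gadget or argue that $\cF$ already lies in one of the tractable classes; controlling this case split is, in my estimation, the main technical obstacle. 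Once pinning is available, Proposition~\ref{prop:CSP_holant} and Lemma~\ref{lem:realisable} also supply all equality functions.

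With $\dl_0,\dl_1$ and all equalities in hand, it remains to convert the failures $\cF\not\subseteq\cA$ and $\cF\not\subseteq\avg{\cE}$ into a concrete hard gadget. Since both affineness and product-type membership survive restricting arguments to constants (for affineness this follows from Lemma~\ref{lem:cai_clifford} together with $\dl_0,\dl_1\in\cA$, and $\dl_0,\dl_1\in\cE$ handles the other class), a global failure must already appear on a low-arity restriction: pinning all but two arguments of an offending function in all possible ways, one produces a binary function $h$ that is simultaneously non-affine and not of product type. The proof then closes by invoking the dichotomy for a single binary complex-valued constraint — such an $h$ makes $\csp(\{h\})$ \sP-hard — which is itself established by a holographic transformation of $h$ into a normal form followed by interpolation down to a \sP-hard counting problem such as counting independent sets or a permanent evaluation. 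The two places where complex-weight cancellation must be tamed, the pinning lemma and this low-arity extraction, are exactly what makes the complex dichotomy harder than its non-negative predecessor; a comparable-difficulty alternative to the last two steps is to interpolate all unary functions and appeal to the conservative-holant dichotomy (Theorem~\ref{thm:Holant-star}).
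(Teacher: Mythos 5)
The paper does not prove this theorem; it simply cites \cite[Theorem~3.1]{cai_complexity_2014}, so there is no in-paper proof to compare against. Evaluating your proposal on its own merits: the two tractability arguments are correct and follow the standard treatment, and you are right that the pinning lemma is a key step in the hardness direction (though it is in fact already established in the literature, as the paper itself cites it as Lemma~\ref{lem:csp^c}).

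The hardness argument has a genuine gap, however, at the ``low-arity extraction'' step. You claim that if $\cF\not\subseteq\cA$ and $\cF\not\subseteq\ang{\cE}$, then pinning all but two arguments of some offending function must yield a binary $h$ that is simultaneously non-affine and not of product type. This is false. Take $\cF=\{\ONE_3\}$, where $\ONE_3=[0,1,0,0]$. This function is not affine (its support $\{001,010,100\}$ is not an affine subspace of $\FF_2^3$) and not in $\ang{\cE}$ (it is non-decomposable and supported on three non-complementary strings, so it lies in neither $\cE$ nor any tensor product of $\cE$-functions). Yet every binary pinning of $\ONE_3$ is, up to permutation of arguments, either $\NEQ=[0,1,0]$ or $\dl_0\otimes\dl_0=[1,0,0]$, each of which is both affine and product type. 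So no hard binary constraint is produced, even though $\csp(\{\ONE_3\})$ is \sP-hard. The fact that affineness and product-type membership are preserved under pinning (the forward direction) does not give the converse you rely on: a function can have all restrictions inside a tractable class without being in it. The actual hardness proof in Cai--Lu--Xia does not reduce to a single binary constraint; it combines interpolation of all equalities, a careful arity-reduction scheme, and separate case analyses for the failures of affineness and of product type, and in the $W$-type bottleneck (which $\ONE_3$ exemplifies) the hardness comes from a reduction to a matching-like counting problem rather than from a binary graph-homomorphism dichotomy. Your alternative suggestion of interpolating all unaries and invoking the conservative holant dichotomy (Theorem~\ref{thm:Holant-star}) is a cleaner way to close the argument, but you would still need to show that the unaries can be interpolated from $\cF\cup\{\dl_0,\dl_1,\EQ_k\}$ without circularity, which is itself nontrivial over $\CC$.
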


{Unlike a holant problem, the complexity of a counting CSP does not change if the pinning functions $\dl_0$ and $\dl_1$ are added to the set of constraint functions.

\begin{lemma}[{Pinning lemma \cite[Lemma~8]{dyer_complexity_2009}}]\label{lem:csp^c}
 For any finite $\cF\sse\allf$,
 \[
  \csp^c(\cF):=\csp(\cF\cup\{\dl_0,\dl_1\})\equiv_T\csp(\cF).
 \]
\end{lemma}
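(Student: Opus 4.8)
The plan: the direction $\csp(\cF)\leq_T\csp^c(\cF)$ is immediate, since every instance of $\csp(\cF)$ is already an instance of $\csp^c(\cF)=\csp(\cF\cup\{\dl_0,\dl_1\})$, obtained by simply not using any pinning constraints. For the substantive direction $\csp^c(\cF)\leq_T\csp(\cF)$ I would use two reductions, both exploiting features peculiar to the $\csp$ framework: variables may be shared freely across constraint scopes, so identifying several variables into one is always a legal operation, and the pinning functions are \emph{affine} in their Boolean argument, namely $\dl_0(x)=1-x$ and $\dl_1(x)=x$.

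\emph{Step 1 ($\csp^c(\cF)\leq_T\csp(\cF\cup\{\dl_1\})$).} Take an instance of $\csp^c(\cF)$; if some variable is pinned to both $0$ and $1$ the answer is $0$, and repeated pins to the same value are redundant. Identify all $\dl_0$-pinned variables into one variable $u$ and all $\dl_1$-pinned variables into one variable $w$; if either set was empty, add a fresh isolated variable carrying the corresponding pin, which changes nothing. The value of the instance is now $\sum_{\bx}\bigl(\prod_c f_c(\bx)\bigr)(1-x_u)\,x_w$, and since $(1-x_u)x_w = x_w - x_ux_w$ this equals the value of the instance with the single pin $\dl_1(w)$ minus the value of the instance with the pins $\dl_1(u)$ and $\dl_1(w)$ (in the latter, further identify $u$ and $w$). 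Both of these are instances of $\csp(\cF\cup\{\dl_1\})$, so two oracle calls recover the original value.

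\emph{Step 2 ($\csp(\cF\cup\{\dl_1\})\leq_T\csp(\cF)$).} After identifying all $\dl_1$-pinned variables, an instance amounts to: for a designated variable $w$ in a $\csp(\cF)$-instance, compute $G_1:=\sum_{\bx:\,x_w=1}\prod_c f_c(\bx)$. Writing $G_0:=\sum_{\bx:\,x_w=0}\prod_c f_c(\bx)$, the oracle on the instance itself returns $G_0+G_1$, so it suffices to obtain one further linearly independent equation. For this I would attach at $w$ a unary gadget over $\cF$ — a single constraint $f\in\cF$ with some subset $S$ of its argument positions wired to $w$ and the remaining arguments taken to be fresh variables that get summed out — with effective unary $[u_S(0),u_S(1)]$, where $u_S(b)=\sum_{\by} f(b\text{ on }S,\ \by\text{ elsewhere})$. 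If $u_S(0)\neq u_S(1)$ for some such $f$ and $S$, the resulting instance (which is over $\cF$) has value $u_S(0)G_0+u_S(1)G_1$, and together with $G_0+G_1$ this is a $2\times2$ system with non-vanishing determinant, yielding $G_1$. A short character computation shows that $u_S(0)-u_S(1)$ equals, up to a power of two, the sum of the Fourier coefficients $\hat f(\mathbf{t})$ over odd-Hamming-weight $\mathbf{t}$ supported inside $S$; letting $S$ range over sets of increasing size, the condition $u_S(0)=u_S(1)$ for all $S$ and all $f\in\cF$ forces every $f\in\cF$ to have no odd-weight Fourier coefficients, i.e.\ to be invariant under the global bit-flip $\bx\mapsto\bar\bx$. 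In that remaining case $\prod_c f_c$ is itself bit-flip invariant, hence $G_0=G_1=\tfrac12(G_0+G_1)$ and a single oracle call finishes.

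The main obstacle is exactly this final dichotomy in Step 2: one has to be certain that whenever $\cF$ realises no ``unbalanced'' unary gadget, the easy symmetric escape $G_0=G_1$ is available, and it is the Fourier-coefficient identity that guarantees this. Everything else — the bookkeeping around identifying variables and the expansion of the affine pinning factors — is routine within the $\csp$ formalism, and no gadget machinery beyond single-constraint wirings is needed.
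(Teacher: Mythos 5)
The paper does not include its own proof of this lemma --- it defers to \cite[Lemma~8]{dyer_complexity_2009} --- so there is no in-paper argument to compare against. Your proof is correct. In Step~1, identifying all $\dl_0$-pinned variables into $u$, all $\dl_1$-pinned into $w$, and expanding $(1-x_u)x_w = x_w - x_u x_w$ is valid, and the further identification of $u$ with $w$ in the subtrahend is legitimate because both sides of that identity are supported only on assignments giving those variables the value~$1$, so the fusion preserves the sum. In Step~2 the gadget placing $\abs{S}$ copies of $w$ in a single scope is a legal $\csp$ construction (scopes may repeat variables), the Fourier identity $u_S(0)-u_S(1)=2^{n-\abs{S}+1}\sum_{T\sse S,\,\abs{T}\text{ odd}}\hat f(T)$ checks out, and the induction over odd $\abs{S}$ does force bit-flip invariance of every $f\in\cF$ in the ``all gadgets balanced'' case, whence $\bx\mapsto\bar\bx$ is an exact weight-preserving bijection between the $x_w=0$ and $x_w=1$ halves of the sum, giving $G_0=G_1$. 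Isolating $\dl_1$ first is what makes that escape close: with both $\dl_0$- and $\dl_1$-pins present, bit-flip invariance only pairs complementary pin patterns with one another and leaves the pinned value undetermined, so the two-step decomposition is doing real work. The dichotomy you use (a realisable biased unary gadget versus global bit-flip symmetry) is the standard structural idea behind pinning lemmas for weighted $\csp$, but your Fourier-analytic phrasing is manifestly field-agnostic, which is convenient here since the paper takes weights in $\AA$ rather than in the non-negative reals where the cited reference works.
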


The dichotomy of Theorem~\ref{thm:csp} also holds for a variant counting constraint satisfaction problem called \#\textsf{R$_3$-CSP} with a restriction on the number of times each variable may appear \cite[Theorem~3.2]{cai_complexity_2014}.

\begin{description}[noitemsep]
 \item[Name] \#\textsf{R$_3$-CSP}$(\cF)$
 \item[Instance] A tuple $(V,C)$, where $V$ is a finite set of variables and $C$ is a finite set of constraints over $\cF$ such that each variable appears at most three times across all scopes.
 \item[Output] The value $Z_{(V,C)} = \sum_{\bx:V\to\{0,1\}} \prod_{c\in C} f_c(\bx|_c)$.
\end{description}

For any finite $\cF\sse\allf$, the problem \#\textsf{R$_3$-CSP}$(\cF)$ is equivalent to the bipartite holant problem $\Holp{\cF\mid\{\EQ_1,\EQ_2,\EQ_3\}}$ \cite[Section~2]{cai_complexity_2014}.
Its} dichotomy follows immediately from that for $\csp$ if $\cF$ contains the binary equality function (or indeed any equality function of arity at least 2), but \new{this interreduction} is non-trivial if $\cF$ does not contain any equality function of arity at least 2.

We will also consider a variant of counting CSPs in which each variable has to appear an even number of times.

\begin{description}[noitemsep]
 \item[Name] $\csp_2(\cF)$
 \item[Instance] A tuple $(V,C)$, where $V$ is a finite set of variables and $C$ is a finite set of constraints over $\cF$ such that each variable appears an even number of times across all scopes.
 \item[Output] The value $Z_{(V,C)} = \sum_{\bx:V\to\{0,1\}} \prod_{c\in C} f_c(\bx|_c)$.
\end{description}

Based on the above definition, we also define $\csp_2^c(\cF):=\csp_2(\cF\cup\{\dl_0,\dl_1\})$.
The dichotomy for this problem differs from that for plain $\csp$.

\begin{theorem}[{\cite[Theorem 4.1]{cai_dichotomy_2017}}]\label{thm:csp_2^c}
 Suppose $\cF\sse\allf$ is finite.
 A $\csp_2^c(\cF)$ problem has a polynomial time algorithm if one of the following holds:
  \begin{itemize}
   \item $\cF\subseteq\avg{\cE}$,
   \item $\cF\subseteq\cA$,
   \item $\cF\subseteq T\circ\cA$, or
   \item $\cF\subseteq\cL$.
  \end{itemize}
 Otherwise, it is \sP-hard.
\end{theorem}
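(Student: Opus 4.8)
The statement is a dichotomy, so the plan is to establish the tractable and the hard direction separately. Throughout I would pass to the holant reformulation $\csp_2(\cF') \equiv_T \holp{\cF' \mid \{\EQ_{2k}\mid k\geq 1\}}$ with $\cF' = \cF\cup\{\dl_0,\dl_1\}$ (the set of even equalities being finitisable to $\{\EQ_2,\EQ_4\}$, since all higher even equalities are realisable from these); this lets me reason with gadget reductions (Lemma~\ref{lem:realisable_planar}) and holographic transformations (Theorem~\ref{thm:Valiant_Holant}), keeping in mind that a legal $\csp_2$ gadget is one whose variable-side vertices all carry even-arity equalities, so every internal connection merges two legs or produces an even fan-out.

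For tractability I would treat the four classes in turn. The case $\cF\subseteq\ang{\cE}$ goes exactly as the corresponding case of Theorem~\ref{thm:Holant-star}: replacing decomposable signatures by disconnected gadgets leaves each connected component with only $O(1)$ nonzero-weight assignments, so the holant is a product of short sums. For $\cF\subseteq\cA$, the holant is a Gauss sum over an affine subspace of $\FF_2$-valued assignments; by Lemma~\ref{lem:cai_clifford} the affine property is preserved under merging and contracting variables, so the sum collapses to a single quadratic exponential sum evaluable in polynomial time. For $\cF\subseteq T\circ\cA$, I would apply the holographic transformation by $T^{-1}$ on the constraint side (Theorem~\ref{thm:Valiant_Holant}): this carries $\cF$ into $\cA$ and carries each $\EQ_{2k}$ to $T\circ\EQ_{2k}=[1,0,\dots,0,i^{k}]\in\cA$, so the transformed instance is a $\holp{\cA\mid\cA}$ problem, tractable by the previous case; the evenness of the arities is precisely what keeps the transformed equalities affine. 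Finally $\cL$ is the genuinely new tractable class: a local affine signature agrees with an affine one up to a support-branch--dependent diagonal $T$-twist on each leg, and because each variable occurs an even number of times these twists can be paired off consistently along every variable, so after a bounded case analysis over support branches (set up using the pinning functions) the computation again reduces to affine sums.

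For hardness I would argue contrapositively and in stages. If $\cF$ contains a signature lying outside $\ang{\cE}\cup\cA$, then ordinary $\csp$ with that signature is $\sP$-hard by Theorem~\ref{thm:csp}, so it would suffice to simulate enough of a $\csp$ instance inside $\csp_2^c$; the obstruction is that the unary equality $\dl_+=\EQ_1$ needed to raise an odd-degree variable to even degree is \emph{not} freely available, and must be replaced by an even-degree gadget (or an interpolating family) realisable over $\cF\cup\{\dl_0,\dl_1\}$. The classes $T\circ\cA$ and $\cL$ are exactly those for which no such replacement exists, which is why they survive as tractable. The core of the argument is therefore a detailed case analysis: assuming $\cF\not\subseteq\ang{\cE},\cA,T\circ\cA,\cL$, one pins, takes self-loops, duplicates signatures and interpolates to realise a concrete $\sP$-hard signature --- typically a binary signature not of the tractable forms (from which $\csp_2^c$-hardness is already established) or a ternary signature triggering $\csp$-hardness after a benign degree-parity fix --- after first putting the relevant matrices into the $K$, $T$, $X$ normal forms of Section~\ref{s:linear-algebra} and normalising by an orthogonal holographic transformation (Corollary~\ref{cor:orthogonal-holographic}).

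The main obstacle is unambiguously the hardness direction, and within it two points require real care: verifying that the four tractable classes are \emph{jointly maximal} --- that every $\cF$ outside their union genuinely admits a hardness gadget --- and tracking exactly which gadgets the even-degree restriction permits, so that no illegal construction is used. The clean route I would ultimately follow is to bootstrap: first establish (or invoke) the pinning-free $\csp_2$ dichotomy together with the $\holp{\cdot\mid\{\EQ_{2k}\}}$ gadget calculus, then add $\dl_0,\dl_1$ and check at each step that the tractable side is unchanged while every previously $\sP$-hard case stays hard. The tractability of $\cL$ also needs genuine care, as its definition is support-dependent and its algorithm does not come from a single holographic transformation.
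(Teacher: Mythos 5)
This theorem is imported verbatim from Cai, Lu, and Xia \cite{cai_dichotomy_2017} and is not proved in this paper, so there is no in-paper argument to compare against; I will assess your sketch on its own terms. The reformulation as a bipartite holant, the finitisation of the even equalities to $\{\EQ_2,\EQ_4\}$, and the tractability arguments for $\ang{\cE}$, $\cA$, and $T\circ\cA$ are all sound, and the observation that $T\circ\EQ_{2k}=[1,0,\dots,0,i^k]\in\cA$ is exactly the reason the parity restriction lets $T\circ\cA$ survive while it fails for $\csp^c$.

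Two genuine gaps remain. For $\cL$-tractability, the step ``these twists can be paired off consistently along every variable'' does not obviously yield an algorithm: the $T^{a}$-twist in the definition of $\cL$ is support-branch dependent, hence assignment dependent, whereas a holographic transformation must be fixed once and for all; different constraints incident on a shared variable may demand different branches, so there is no single twist to pair off; and even if there were, $2m$ copies of $T^{a}$ on a degree-$2m$ variable compose to $T^{2ma}$, a nontrivial diagonal rather than the identity. The actual argument in the source is a recursive branching reduction to affine sums, and your one-sentence sketch does not reconstruct it. For the hardness direction --- which, together with establishing that the four tractable classes are jointly maximal, is the bulk of the theorem --- you correctly identify it as the main obstacle and then describe only the toolkit (pinning, self-loops, duplication, interpolation, orthogonal normalisation) without carrying out the case analysis. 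As written this is a proof plan, not a proof.
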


{For any finite $\cF\sse\allf$, we have
 \[
  \csp(\cF\cup\{\dl_+\})\leq_T\csp_2(\cF\cup\{\dl_+\}),
 \]
 where, for each variable $y$ that appears an odd number of times in the original instance, we add a new constraint $(\dl_+,(y))$ to make it appear an even number of times instead.
 Yet if no non-zero scaling of $\dl_+$ is present or realisable (via gadgets or other methods), then there is no general reduction from $\csp(\cF)$ to $\csp_2(\cF)$.
 This lack of a general reduction can be seen for example by noting the differences between Theorem~\ref{thm:csp_2^c} (which classifies the complexity of $\csp_2^c$) and Theorem~\ref{thm:csp} (which by Lemma~\ref{lem:csp^c} also applies to $\csp^c$).}

By analogy with planar holant problems, we also define planar counting CSPs.

\begin{definition}\label{dfn:planar_CSP}
 $\plcsp(\cF):=\plhol\left(\cF\cup\{ \EQ_3 \}\right)$, i.e.\ $\plcsp(\cF)$ is the restriction of $\csp(\cF)$ to planar instances of the corresponding holant problem according to Proposition~\ref{prop:CSP_holant}.
\end{definition}

The complexity dichotomy for $\plcsp$ involves an additional tractable family as compared to the general case.
This new family is called \emph{matchgate functions} and consists of those functions which correspond to computing a weighted sum of perfect matchings.
We denote the set of all matchgate functions by\footnote{Note that we use a different symbol than \cite{cai_holographic_2017} for the set of matchgate functions, to avoid clashing with other established notation.} $\cH\sse\allf$.
As the rigorous definition of this set is somewhat intricate and not required for our work we do not reproduce it here; the interested reader can find it in \cite[pp.~STOC17-65f]{cai_holographic_2017}.
Indeed, the only property of $\cH$ which we will require is the following lemma, which is adapted to avoid having to define concepts not used in this paper.

\begin{lemma}[{\cite[first part of Lemma~2.29]{cai_holographic_2017}}]\label{lem:unary_matchgate}
 If $f\in\allf$ has arity $\leq 3$, then $f\in\cH$ if and only if one of the following parity conditions is satisfied:
 \begin{itemize}
  \item $f(\bx)=0$ whenever $\abs{\bx}$ is even, or
  \item $f(\bx)=0$ whenever $\abs{\bx}$ is odd.
 \end{itemize}
\end{lemma}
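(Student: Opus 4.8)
The statement is a characterisation, so I would prove the two implications separately, reintroducing just enough of the matchgate formalism for the argument to go through.

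\emph{Necessity of the parity condition.} Up to a nonzero scalar and up to the convention linking arguments to vertices, every function in $\cH$ is the standard signature of a planar matchgate: a planar weighted graph $G$ with a tuple of external vertices on the outer face, whose value at $\bx$ is the weighted sum of perfect matchings of the graph obtained from $G$ by deleting the external vertices indexed by $\{\,i : x_i = 1\,\}$. A graph with an odd number of vertices has no perfect matching, so this value vanishes unless $\abs{\bx}$ has the same parity as the total number of vertices of $G$. Hence every $f\in\cH$ is zero on all even-weight inputs, or zero on all odd-weight inputs. This direction needs no bound on the arity.

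\emph{Sufficiency when $\ari(f)\le 3$.} Here I would exhibit explicit small planar matchgates, using three standard closure properties of $\cH$: closure under tensor products (disjoint union of matchgates), under composition with matchgate gadgets (gluing matchgates along shared edges), and under nonzero rescaling of edge weights. A single weighted edge with no external vertices realises any scalar; an external vertex joined to an internal vertex by a weighted edge, and an external vertex on its own, realise $\dl_0$ and $\dl_1$ up to scalar; for $\ari(f)=2$, a single weighted edge between the two external vertices realises the diagonal binary functions (up to scalar), and a single internal vertex joined to both external vertices by weighted edges realises the antidiagonal ones $\smm{0&a\\b&0}$ -- in particular $\NEQ\in\cH$ -- with the degenerate cases absorbed into tensor products of unary signatures. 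For $\ari(f)=3$ with $f$ vanishing on odd-weight inputs, $f$ is determined by $(f_{000},f_{011},f_{101},f_{110})$, and I would realise it on external vertices $1,2,3$ and one internal vertex $4$: the weights of the edges $\{1,4\},\{2,4\},\{3,4\}$ reproduce $f_{011},f_{101},f_{110}$ respectively, and one further edge among $\{1,2,3\}$ can be weighted to reproduce $f_{000}$ whenever at least one of $f_{011},f_{101},f_{110}$ is nonzero; in the remaining case $f=f_{000}\cdot\dl_0\t{3}$, a tensor product. All of these graphs are planar with the external vertices on the outer face in the prescribed cyclic order. Finally, since $3$ is odd, composing $\NEQ$ onto every edge flips all inputs and interchanges the two parity classes in arity $3$, so the construction above also covers ternary functions vanishing on even-weight inputs; arities $\le 2$ have already been handled in both parity classes.

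\emph{Main obstacle.} The delicate point is the $\ari(f)=3$ case of sufficiency: one must check that the four-vertex gate realises an \emph{arbitrary} parity-respecting ternary function, dispose of the degenerate subcase, and confirm that the planar embedding is compatible with the cyclic-order convention linking external vertices to arguments. There is also a presentational hurdle -- since this paper deliberately suppresses the definition of $\cH$, a truly self-contained proof has to recall standard matchgate notions (standard signatures, closure under composition, the parity lemma). An alternative that avoids the explicit gadgets is to invoke the known characterisation of matchgate signatures as exactly those satisfying the parity condition together with the matchgate (Grassmann--Pl\"ucker) identities, and to note that the smallest nontrivial such identities arise only at arity $4$; for arity $\le 3$ the parity condition alone therefore suffices.
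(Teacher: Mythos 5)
The paper does not prove this lemma -- it is imported directly (with notation adapted) from Cai and Fu, and the surrounding text deliberately declines even to reproduce the definition of $\cH$. So there is no in-paper argument to compare yours against; what you have written is a reconstruction of the standard matchgate argument, and on its own terms it is essentially correct. Necessity is the classical parity lemma: each signature entry is a weighted perfect-matching count over a graph whose vertex set has a fixed parity, so the support of $f$ lies entirely in one parity class regardless of arity. Your four-vertex star for the ternary case checks out: with spoke edges $\{1,4\},\{2,4\},\{3,4\}$ one gets $f_{011}=w_{14}$, $f_{101}=w_{24}$, $f_{110}=w_{34}$, one boundary edge $\{i,j\}$ then contributes $f_{000}=w_{ij}\,w_{k4}$, which can be solved for $w_{ij}$ provided the corresponding weight-$2$ value $f$ is nonzero, and if all three weight-$2$ values vanish then $f=f_{000}\cdot\dl_0\t{3}$ is a tensor of unary matchgate signatures; composing $\NEQ$ (the two-edge path $1$--$3$--$2$ through one internal vertex) onto each leg flips parity, and arities $\leq 2$ reduce to single-edge or path gadgets as you describe. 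Your closing remark is in fact the cleanest route and the one I would lead with: the known characterisation of matchgate signatures is parity together with the Grassmann--Pl\"ucker (matchgate) identities, and the smallest nontrivial such identity involves four external nodes, so at arity $\leq 3$ parity alone is both necessary and sufficient. Either version of your argument would serve as a correct proof of the cited fact.
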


For $f\in\allf_1$, this means $f$ is a matchgate function if and only if $f=c\cdot\delta_0$ or $f=c\cdot\delta_1$ for some {$c\in\AA$}.

We can now state the dichotomy for $\plcsp$.

\begin{theorem}[{\cite[Theorem~6.1$'$]{cai_holographic_2017}}]\label{thm:planar_csp}
 Let $\cF$ be any finite set of complex-valued functions in Boolean variables.
 Then $\plcsp(\cF)$ is \sP-hard unless $\cF\sse\cA$, $\cF\sse\ang{\cE}$, or $\cF\sse\smm{1&1\\1&-1}\circ\cH$, in which case the problem is computable in polynomial time.
\end{theorem}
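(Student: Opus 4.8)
The plan is to derive this from the non-planar \#CSP dichotomy (Theorem~\ref{thm:csp}), which already handles everything except the single extra tractable family that planarity introduces, namely the holographic image $\smm{1&1\\1&-1}\circ\cH$ of the matchgate signatures. Throughout one works inside planar Holant via $\plcsp(\cF)=\plhol(\cF\cup\{\EQ_3\})$, so that holographic reductions and planar gadgets are available.

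\textbf{Tractability.} There are three cases. If $\cF\sse\cA$, the problem counts weighted affine (stabiliser) configurations, which is polynomial-time by Gaussian elimination over $\FF_2$ together with a Gauss-sum evaluation of the linear and quadratic phases; restricting to planar instances does not affect this. If $\cF\sse\ang{\cE}$, replace each tensor-decomposable constraint by the disjoint union of its $\cE$-factors; every remaining constraint then forces the bits in its scope onto a single complementary pair of assignments, so each connected component of the incidence structure admits at most two satisfying assignments, and summing their weights is immediate. For $\cF\sse\smm{1&1\\1&-1}\circ\cH$, set $H:=\smm{1&1\\1&-1}$ and note $H^TH=2I$, so $H^{-1}=\tfrac12 H$ is a scalar times an orthogonal matrix; since holographic transformations act locally on edges and preserve planarity, the planar analogues of Corollary~\ref{cor:orthogonal-holographic} and Theorem~\ref{thm:Valiant_Holant} give, using Lemma~\ref{lem:scaling},
\[
 \plcsp(\cF)=\plhol(\cF\cup\{\EQ_3\})\equiv_T\plhol\bigl((H^{-1}\circ\cF)\cup\{H^{-1}\circ\EQ_3\}\bigr).
\]
Here $H^{-1}\circ\cF\sse\cH$ up to scaling, while $H^{-1}\circ\EQ_3$ equals, up to scaling, the arity-$3$ signature supported on inputs of even Hamming weight, which lies in $\cH$ by Lemma~\ref{lem:unary_matchgate}. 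Hence the transformed problem is a planar Holant over matchgate signatures, computable in polynomial time by Valiant's matchgate theory via the FKT (Fisher--Kasteleyn--Temperley) algorithm for weighted perfect matchings on planar graphs.

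\textbf{Hardness.} Suppose now that $\cF$ lies in none of the three families. Since $\cF\not\sse\cA$ and $\cF\not\sse\ang{\cE}$, Theorem~\ref{thm:csp} already makes the non-planar problem $\csp(\cF)$ \sP-hard, so it suffices to show that planarity gives no relief. The target is a planar \emph{crossover gadget} over $\cF\cup\{\EQ_3\}$: a planar $4$-ary gadget whose signature is $\EQ_2\otimes\EQ_2$ with the two pairs of legs interleaved, which simulates a wire crossing and thereby turns any non-planar instance of $\csp(\cF)$ into an instance of $\plcsp(\cF)$ of equal value. The point of the matchgate family is that such a crossover need not always be realisable: the signatures realisable by planar gadgets obey matchgate (Grassmann--Pl\"ucker-type) identities, so whenever no crossover construction succeeds, the signatures of $\cF$ must satisfy those identities after transformation by $H$, i.e.\ $\cF\sse H\circ\cH$ --- contrary to hypothesis. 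Concretely, I would first use gadgets (supplemented by auxiliary signatures obtained via interpolation and the pinning lemma) to reduce to a single binary or ternary signature that is neither in $\ang{\cE}$ nor affine, then carry out a case analysis in which each case either exhibits an explicit planar crossover, or a planar gadget for a signature already known to yield \sP-hardness, or else verifies the matchgate identities and places $\cF$ in $H\circ\cH$.

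\textbf{Main obstacle.} The delicate part is exactly the hardness case analysis at the points where the naive crossover fails: there one must invoke --- and, for a self-contained treatment, reprove --- the structure theory of matchgate signatures (the characterisation by parity together with the useful Grassmann--Pl\"ucker identities) to conclude that failure of every planar crossover construction forces $\cF$ into $H\circ\cH$. Managing the gadget and interpolation calculus needed to realise enough auxiliary signatures while remaining planar is where essentially all the work lies; the tractability arguments and the reduction to Theorem~\ref{thm:csp} are comparatively routine.
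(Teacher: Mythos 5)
The paper does not prove this theorem: it is cited verbatim from Cai and Fu's paper \cite{cai_holographic_2017} and used as a black box, so there is no internal proof to compare your attempt against. Treating your proposal on its own merits, the tractability half is essentially correct and standard. For $\cF\sse\cA$ or $\cF\sse\ang{\cE}$ you could simply quote Theorem~\ref{thm:csp} after noting that restricting to planar instances cannot increase the complexity; your direct arguments are fine but redundant. For $\cF\sse\smm{1&1\\1&-1}\circ\cH$, the holographic transfer by $H$ (using $H^{-1}=\tfrac12H$ and $H^{-1}\circ\EQ_3\doteq[1,0,1,0]\in\cH$) to a planar matchgate Holant solvable by FKT is the right route.

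The hardness half contains a genuine gap, and one of its supporting claims is false as stated. You assert that ``the signatures realisable by planar gadgets obey matchgate (Grassmann--Pl\"ucker-type) identities.'' That is not so in general: $\EQ_3$ itself is realisable by a trivial planar gadget over $\{\EQ_3\}$ and violates the parity condition of Lemma~\ref{lem:unary_matchgate}, so it is not a matchgate signature. The matchgate identities characterise exactly the signature sets that \emph{are} matchgates; they are not an automatic property of planar realisability. Consequently, the inference ``no crossover gadget exists, therefore $\cF$ satisfies the identities, therefore $\cF\sse H\circ\cH$'' does not follow from any structural fact you have to hand --- it \emph{is} the content of the planar $\csp$ dichotomy and cannot be obtained by appealing to it. Your ``main obstacle'' paragraph is honest that all the work remains, but it understates the size of the gap: that remaining work is essentially the entirety of the argument in \cite{cai_holographic_2017} (a long, multi-stage case analysis combining holographic normal forms, planar interpolation, and matchgate structure theory), not a tidying-up of a case analysis that is already in view. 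As a blind proof of the theorem, what you have is a plausible reading of the statement plus a correct tractability argument; the hardness direction is not proved and the one load-bearing claim you do make in its support is incorrect.
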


\subsection{Partial results for \textsf{Holant}\texorpdfstring{\textsuperscript{c}}{\textasciicircum c} and \textsf{Holant}}
\label{s:Holant_c}

$\hol^c$ is the holant problem in which the unary functions pinning edges to 0 or 1 are freely available \cite{cai_complexity_2014,cai_holant_2012}, i.e.\ $\Holp[c]{\cF} := \Holp{\cF\cup\{\delta_0,\delta_1\}}$ for any finite $\cF\sse\allf$.
We will give a full dichotomy for this problem in Section~\ref{s:dichotomy}, building on the following dichotomies for symmetric functions and for real-valued functions.

\begin{theorem}[{\cite[Theorem~6]{cai_holant_2012}}]\label{thm:Holant-c-sym}
 Let $\cF\sse\allf$ be a finite set of symmetric functions.
 $\Holp[c]{\cF}$ is \sP-hard unless $\cF$ satisfies one of the following conditions, in which case it is polynomial-time computable:
 \begin{itemize}
  \item $\cF\subseteq\avg{\cT}$, or
  \item there exists $O\in\cO$ such that $\cF\subseteq\avg{O\circ\mathcal{E}}$, or
  \item $\cF\subseteq\avg{K\circ\mathcal{E}}=\avg{KX\circ\mathcal{E}}$, or
  \item $\cF\subseteq\avg{K\circ\mathcal{M}}$ or $\cF\subseteq\avg{KX\circ\mathcal{M}}$, or
  \item there exists $B\in\cS$ such that $\cF\subseteq B\circ\cA$, where:
   \begin{equation}\label{eq:cS_definition}
    \cS = \left\{ M \,\middle|\, M^T\circ \{ \EQ_2, \delta_0, \delta_1 \} \subseteq \mathcal{A} \right\}.
   \end{equation}
 \end{itemize}
\end{theorem}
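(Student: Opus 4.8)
The plan is to prove the two halves separately: that each of the five conditions yields a polynomial-time algorithm, and that every other symmetric $\cF$ makes $\holp[c]{\cF}$ \numP-hard.

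For tractability, the first four classes are exactly the tractable classes of the Holant$^*$ dichotomy (Theorem~\ref{thm:Holant-star}), so it is enough to check that each of them already contains $\delta_0$ and $\delta_1$; then $\cF\cup\{\delta_0,\delta_1\}$ stays in the same class as $\cF$ and $\holp[c]{\cF}=\holp{\cF\cup\{\delta_0,\delta_1\}}$ is tractable by Theorem~\ref{thm:Holant-star}. This is immediate: $\delta_0,\delta_1$ are unary, $O^{-1}\circ u$ and $K^{-1}\circ u$ are again unary for any unary $u$, and every unary function lies in $\cT$, in $\cE$, and in $\cM$, so $\delta_0,\delta_1\in\cT\sse\ang{\cT}$, $\delta_0,\delta_1\in O\circ\cE\sse\ang{O\circ\cE}$, and similarly for $K\circ\cE$, $K\circ\cM$, $KX\circ\cE$, $KX\circ\cM$. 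For the fifth class, $\cF\sse B\circ\cA$ with $B\in\cS$, I would first move the pinnings to the right using Proposition~\ref{prop:make_bipartite} (an LHS pinning joined through an $\EQ_2$ acts as an RHS pinning), obtaining $\holp[c]{\cF}\equiv_T\holp{\cF\mid\{\EQ_2,\delta_0,\delta_1\}}$; then apply Valiant's Holant theorem (Theorem~\ref{thm:Valiant_Holant}) with $M=B^{-1}$, which turns the left side into $B^{-1}\circ\cF\sse B^{-1}\circ B\circ\cA=\cA$ and the right side into $B^T\circ\{\EQ_2,\delta_0,\delta_1\}\sse\cA$ by the definition of $\cS$; finally adjoin $\EQ_2\in\cA$ to the left, merge the two sides with Proposition~\ref{prop:bipartite}, and observe that the resulting problem over a finite $\cA'\sse\cA$ satisfies $\holp{\cA'}\leq_T\holp{\cA'\cup\{\EQ_3\}}\equiv_T\csp(\cA')$, which is tractable by Theorem~\ref{thm:csp}.

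For hardness, assume $\cF$ is symmetric and in none of the five classes. First reduce to the case that $\cF$ contains a function of arity at least $3$, since otherwise $\cF\sse\cT\sse\ang{\cT}$. The crux is then to show that gadgets over $\cF\cup\{\delta_0,\delta_1\}$ together with polynomial interpolation always realise one of: (a) an equality function $\EQ_3$, or more generally $B\circ\EQ_3$ for some $B\in\cS$; or (b) every unary function. In case (a), running the standard Holant-to-\#CSP reduction (Proposition~\ref{prop:CSP_holant}) and the pinning lemma (Lemma~\ref{lem:csp^c}) in the holographic frame $B^{-1}$ — which is legitimate precisely because $\cS$ consists of the transformations under which $\{\EQ_2,\delta_0,\delta_1\}$ stays affine — reduces $\holp[c]{\cF}$ to a counting CSP over $B^{-1}\circ\cF$ with affine constraints adjoined, so Theorem~\ref{thm:csp} forces \numP-hardness unless $B^{-1}\circ\cF$ is affine or lies in $\ang{\cE}$; once the structure of $\cS$ is unwound, these possibilities land among conditions 2, 3 and 5 of the theorem, all of which are excluded. (With $B$ the identity this is just $\holp[c]{\cF}\equiv_T\csp(\cF)$, hard unless $\cF\sse\cA$ or $\cF\sse\ang{\cE}$.) In case (b), $\holp{\cF\cup\cU'}\leq_T\holp[c]{\cF}$ by Lemma~\ref{lem:realisable}, where $\cU'\sse\cU$ is the finite set delivered by the hardness clause of Theorem~\ref{thm:Holant-star} (applicable since $\cF$ fails conditions 1--4), so $\holp[c]{\cF}$ is \numP-hard.

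The main obstacle will be the claim underpinning the hardness direction: that every symmetric $\cF$ outside all five classes admits construction (a) or (b). I expect this to demand a careful classification of what symmetric functions can build. Pinning arguments to $0$ or $1$ turns $[f_0,\ldots,f_n]$ into sub-signatures $[f_j,\ldots,f_{j+k}]$, and self-connecting or chaining copies produces gadgets whose effective matrices are amenable to the linear-algebra lemmas of Section~\ref{s:linear-algebra}. For a symmetric binary gadget $[a,b,c]$ the decisive datum is the spectrum of $\smm{a&b\\b&c}$: two distinct nonzero eigenvalues whose ratio is not a root of unity permit interpolation of a dense family of unary functions, hence all of $\cU$ and case (b); a vanishing eigenvalue or a root-of-unity ratio instead forces the function towards $\cE$, towards a degenerate product form, or towards $\cA$ up to a transformation from $\cS$. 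A parallel analysis of symmetric ternary functions $[a,b,c,d]$ controls when case (a) applies. Checking that this dichotomy-within-the-dichotomy is exhaustive — that the interpolation and gadget constructions really succeed outside the exceptional functions, and that those exceptional functions are exactly the ones caught by conditions 1--5 — is the substantive part of the argument.
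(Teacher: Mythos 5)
This theorem is recalled in the paper as known background, cited from \cite{cai_holant_2012} without proof, so there is no internal argument to compare against; I compare your sketch to the cited proof, which the present paper summarises through the auxiliary statements Theorems~\ref{thm:GHZ-state} and~\ref{thm:W-state}. Your tractability half is sound: the first four classes absorb $\delta_0,\delta_1$ and reduce to Theorem~\ref{thm:Holant-star}, and the fifth reduces to an affine $\csp$ via a holographic transformation.

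The hardness half has a genuine gap: the dichotomy you propose --- always realise $B\circ\EQ_3$ for some $B\in\cS$, or else interpolate all of $\cU$ --- omits the $W$-type branch. After pinning down to a non-degenerate symmetric ternary function $f$, that function is SLOCC-equivalent to either $\EQ_3$ (GHZ type) or $\ONE_3$ ($W$ type), and in the $W$ case both of your options fail: there is no $M\in\GL$ with $f=M\circ\EQ_3$, so case~(a) is unavailable, and pinning a $W$-type function need not yield enough to interpolate $\cU$. For example, pinning $\ONE_3$ gives only the binaries $[0,1,0]$ and $[1,0,0]$, whose eigenvalue ratios are $-1$ and $0$ --- roots of unity or degenerate, so by your own spectral criterion they do not interpolate all of $\cU$ --- yet $\holp[c]{\{\ONE_3\}}$ is \sP-hard. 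The cited proof handles this branch with Theorem~\ref{thm:W-state}, a separate hardness criterion for bipartite Holant problems whose ternary signature has $W$ type, making no appeal to $\csp$ at all; any reconstruction must contain an analogous branch. As a secondary point, in case~(a) the restriction $B\in\cS$ is pointed the wrong way: one realises $M\circ\EQ_3$ for whatever $M\in\GL$ witnesses GHZ type, reduces to $\csp$ via Theorem~\ref{thm:GHZ-state}, and then argues that tractability of the resulting $\csp$ forces $\cF$ into one of the exceptional classes, with $M\in\cS$ emerging as the condition characterising the fifth class; restricting to $B\in\cS$ from the outset would only ever land you in a tractable sub-case and cannot prove hardness.
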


Note that the first four polynomial-time computable cases are exactly the ones appearing in Theorem~\ref{thm:Holant-star}.

The preceding results all apply to algebraic complex-valued functions, but the following theorem is restricted to algebraic real-valued functions.

\begin{theorem}[{\cite[Theorem 5.1]{cai_dichotomy_2017}}]
\label{thm:real-valued_Holant-c}
 Let $\cF$ be a set of algebraic real-valued functions.
 Then $\Holp[c]{\cF}$ is \sP-hard unless $\cF$ is a tractable family for conservative holant or for $\csp_2^c$.
\end{theorem}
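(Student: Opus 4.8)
The statement is a dichotomy, so I would split it into a tractability half and a hardness half. \textbf{Tractability} follows from the cited classifications through two short reductions. If $\cF$ is tractable for conservative holant, then since $\{\delta_0,\delta_1\}$ is a finite subset of $\cU$, Theorem~\ref{thm:Holant-star} applied with $\cU'=\{\delta_0,\delta_1\}$ gives that $\holp{\cF\cup\{\delta_0,\delta_1\}}=\holp[c]{\cF}$ is polynomial-time. If $\cF$ is tractable for $\csp_2^c$, observe that every signature grid is in particular a $\csp_2$ instance in which each edge--variable occurs exactly twice, so $\holp[c]{\cF}=\holp{\cF\cup\{\delta_0,\delta_1\}}\leq_T\csp_2(\cF\cup\{\delta_0,\delta_1\})=\csp_2^c(\cF)$, and Theorem~\ref{thm:csp_2^c} closes the case. (Since $\avg{\cE}$ appears in both tractable lists, the genuinely new families on the $\csp_2^c$ side are $\cA$, $T\circ\cA$ and $\cL$.)

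\textbf{Hardness.} Suppose $\cF$ is real-valued and lies in neither tractable family. First I would normalise: by Proposition~\ref{prop:make_bipartite}, $\holp[c]{\cF}\equiv_T\holp{\cF\cup\{\delta_0,\delta_1\}\mid\{\EQ_2\}}$, and by Theorem~\ref{thm:Valiant_Holant} one may freely apply any \emph{real} orthogonal holographic transformation, which keeps all functions real and fixes $\EQ_2$ on the right. Because $\cF$ is not tractable for conservative holant it is in particular not contained in $\avg{\cT}$, so some $f\in\cF$ is not a product of unary and binary functions; using $\delta_0,\delta_1$, Lemma~\ref{lem:decomposable} and pinning of surplus arguments, one then realises over $\cF\cup\{\delta_0,\delta_1\}$ a non-decomposable \emph{ternary} real function (the standard reduction to the ternary case, which needs a little bookkeeping). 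The heart of the proof is then a trichotomy according to which tractable class is missed. If the obstruction is to conservative holant, the plan is to combine entanglement-based ternary and binary gadgets built from the non-decomposable function and the two pinning functions (in the spirit of the $\hol^+$ analysis of Section~\ref{s:Holant_plus}) with polynomial interpolation so as to realise or interpolate \emph{all} of $\cU$; then $\holp[c]{\cF}$ dominates a conservative holant problem that is \sP-hard by Theorem~\ref{thm:Holant-star}. If the obstruction is to $\csp_2^c$, the plan is instead to realise every even-arity equality $\EQ_{2k}$ over $\cF\cup\{\delta_0,\delta_1\}$, which gives $\csp_2^c(\cF)\leq_T\holp[c]{\cF}$, and then $\holp[c]{\cF}$ is \sP-hard by Theorem~\ref{thm:csp_2^c}. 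I expect the cleanest packaging is the single claim that \emph{for real $\cF$, either $\cF$ is tractable for conservative holant or $\holp[c]{\cF}\equiv_T\csp_2^c(\cF)$}, from which the dichotomy is immediate.

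\textbf{Where the difficulty lies.} The reality hypothesis is exactly what keeps the normal-form case analysis finite: over $\RR$ the genuinely complex exceptional families $K\circ\cE$ and $K\circ\cM$ of Theorem~\ref{thm:Holant-star} do not arise, and $T\circ\cA$ interacts with $\cA$ and $\cL$ in a controlled way, so the holographic transformations one must track reduce to real orthogonal ones up to a small discrete set. The real obstacle is the gadget-and-interpolation step: unlike $\holp[+]{\cF}$, one is \emph{not} handed $\delta_+$ or $\delta_-$ here, so every auxiliary unary function has to be distilled from the multipartite entanglement already present in $\cF$, and the delicate point is to prove that whenever $\cF$ escapes all of $\cA$, $T\circ\cA$ and $\cL$ one can always produce a unary function generic enough to interpolate the rest of $\cU$ (or enough extra structure to simulate $\csp_2^c$) without accidentally landing back inside one of the stabiliser-type tractable families.
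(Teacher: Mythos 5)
The paper does not actually prove this statement; it cites Theorem~5.1 of \cite{cai_dichotomy_2017}. However, the paper's own Theorem~\ref{thm:holant-c} generalises it to complex values and, as the author notes, its proof ``follows the one in [Theorem~5.1 of \cite{cai_dichotomy_2017}] fairly closely,'' so I compare your proposal against that.

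Your tractability half is sound and agrees with the paper: Theorem~\ref{thm:Holant-star} with $\cU'=\{\dl_0,\dl_1\}$ handles the conservative-holant families, and the observation that a signature grid is a $\csp_2$ instance in which every variable (edge) occurs exactly twice handles the $\csp_2^c$ families.

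The hardness half has a genuine gap. You assert that from some $f\in\cF\setminus\ang{\cT}$, pinning with $\dl_0,\dl_1$ (plus Lemma~\ref{lem:decomposable}) realises a non-decomposable \emph{ternary} function, calling this ``the standard reduction to the ternary case.'' With only the two pinning functions and self-loops (each of which lowers arity by $2$) this is false: a function of arity $4$ supported on $\{0000,1111\}$ yields degenerate ternary functions under every single pinning, so no non-decomposable ternary function lies in $S(\{f,\dl_0,\dl_1\})$. This is precisely the phenomenon the Hamming-distance parameter $D_0$ in the actual proof exists to handle: when $D_0\geq 4$ is even, the proof lands on an arity-$4$ generalised equality and invokes the $\csp_2^c$ dichotomy via Lemmas~\ref{lem:interpolate_equality4} and~\ref{lem:generalised_equality4}, \emph{not} any ternary dichotomy. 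It is exactly the absence of $\dl_\pm$ that separates $\hol^c$ from $\hol^+$: Theorem~\ref{thm:three-qubit-gadget} requires all four of $\dl_0,\dl_1,\dl_+,\dl_-$, and with computational-basis projectors alone one cannot always produce genuine three-qubit entanglement by local projections.

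Your proposed endgame also diverges in a way that seems hard to repair. You want to realise or interpolate all of $\cU$ (to invoke conservative-holant hardness) or all even-arity equalities (to invoke $\csp_2^c$ hardness), depending on which tractable family is missed, and you conjecture the clean packaging that either $\cF$ is tractable for conservative holant or $\holp[c]{\cF}\equiv_T\csp_2^c(\cF)$. The actual proof establishes neither of these: it runs an induction on Hamming-distance parameters $D_0,D_1,D_2,D_3$, and each leaf of the case analysis invokes one of several distinct hardness sources depending on what was realised there — a non-decomposable ternary function fed into Lemma~\ref{lem:arity3_hardness} (which itself routes through $\csp$ via Theorem~\ref{thm:GHZ-state}, or through Theorem~\ref{thm:W-state}), or an arity-$4$ generalised equality fed into $\csp_2^c$. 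No attempt is ever made to distill the whole of $\cU$, and the hard branches do not all produce $\EQ_4$, so the $\equiv_T\csp_2^c$ packaging is not delivered.
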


In the case of $\hol$ with no \new{freely available functions}, there exists a dichotomy for complex-valued symmetric functions \cite[Theorem~31]{cai_complete_2016} and a dichotomy for (not necessarily symmetric) functions taking non-negative real values \cite[Theorem~19]{lin_complexity_2016}.
We will not explore those results in any detail here.

\subsection{Results about ternary symmetric functions}
\label{s:results_ternary_symmetric}

The computational complexity of problems of the form $\Holp{\{[y_0,y_1,y_2]\} \mid \{[x_0,x_1,x_2,x_3]\}}$, where $[y_0,y_1,y_2]\in\allf_2$ and $[x_0,x_1,x_2,x_3]\in\allf_3$, has been fully determined.
These are holant problems on bipartite graphs where one partition only contains vertices of degree~2, the other partition only contains vertices of degree~3, and all vertices of the same arity are assigned the same symmetric function.

If $[x_0,x_1,x_2,x_3]$ is degenerate, the problem is tractable by the first case of Theorem \ref{thm:Holant-star}.
If $[x_0,x_1,x_2,x_3]$ is non-degenerate, it can always be mapped to either $[1,0,0,1]$ or $[1,1,0,0]$ by a holographic transformation \cite[Section~3]{cai_holant_2012}, cf.\ also Section~\ref{s:quantum_states} below.
By Theorem \ref{thm:Valiant_Holant}, it thus suffices to consider the cases $\{[y_0,y_1,y_2]\}\mid\{[1,0,0,1]\}$ and $\{[y_0,y_1,y_2]\}\mid\{[1,1,0,0]\}$.

There are also some holographic transformations which leave the function $[1,0,0,1]$ invariant.
{In particular, if} $M=\smm{1&0\\0&\om}$, where $\om^3=1$, i.e.\ $\om$ is a third root of unity, {then $M\circ\EQ_3=\EQ_3$ \cite[Section~4]{cai_holant_2012}}.
By applying Theorem~\ref{thm:Valiant_Holant} with this $M$,
\begin{equation}\label{eq:normalisation}
 \Holp{\{[y_0,y_1,y_2]\}\mid\{[1,0,0,1]\}} \equiv_T \Holp{\{[y_0, \omega y_1, \omega^2 y_2]\}\mid\{[1,0,0,1]\}}.
\end{equation}
This relationship can be used to reduce the number of symmetric binary functions needing to be considered in this section.
Following \cite[Section~4]{cai_holant_2012}, a symmetric binary function $[y_0,y_1,y_2]$ is called \emph{$\omega$-normalised}\footnote{We use the term $\omega$-normalisation to distinguish it from other notions of normalisation, e.g.\ ones relating to the norm of the vector associated with a function.} if
\begin{itemize}
 \item $y_0=0$, or
 \item there does not exist a primitive $(3t)$-th root of unity $\lambda$, where the greatest common divisor $\operatorname{gcd}(t,3)=1$, such that $y_2=\lambda y_0$.
\end{itemize}
Similarly, a unary function $[a,b]$ is called $\omega$-normalised if
\begin{itemize}
 \item $a=0$, or
 \item there does not exist a primitive $(3t)$-th root of unity $\lambda$, where $\operatorname{gcd}(t,3)=1$, such that $b=\lambda a$.
\end{itemize}
If a binary function is not $\omega$-normalised, it can be made so through application of a holographic transformation of the form given in \eqref{eq:normalisation}.
Unary functions will only be required when the binary function has the form $[0,y_1,0]$; in that case the binary function is automatically $\omega$-normalised, and it remains so under a holographic transformation that $\omega$-normalises the unary function.

These definitions allow a complexity classification of all holant problems on bipartite signature grids where there is a ternary equality function on one partition and a non-degenerate symmetric binary function on the other partition.

\begin{theorem}[{\cite[Theorem~5]{cai_holant_2012}}]\label{thm:GHZ-state}
 Let $\mathcal{G}_1,\mathcal{G}_2\sse\allf$ be finite and let $[y_0,y_1,y_2]\in\allf_2$ be an $\omega$-normalised and non-degenerate function.
 In the case of $y_0=y_2=0$, further assume that $\mathcal{G}_1$ contains a unary function $[a,b]$ which is $\omega$-normalised and satisfies $ab\neq 0$.
 Then:
 \[
  \Holp{\{[y_0,y_1,y_2]\}\cup\mathcal{G}_1 \mid \{[1,0,0,1]\}\cup\mathcal{G}_2} \equiv_T \csp(\{[y_0,y_1,y_2]\}\cup\mathcal{G}_1\cup\mathcal{G}_2).
 \]
\end{theorem}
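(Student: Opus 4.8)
The plan is to reduce the asserted equivalence to a single gadget-construction problem and then settle that by a case analysis on the matrix of $[y_0,y_1,y_2]$.

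\emph{Reduction.} Write $\cF:=\{[y_0,y_1,y_2]\}\cup\cG_1$ and $\cG:=\{[1,0,0,1]\}\cup\cG_2$. Since $[1,0,0,1]=\EQ_3$, Proposition~\ref{prop:CSP_holant} gives $\csp(\{[y_0,y_1,y_2]\}\cup\cG_1\cup\cG_2)\equiv_T\Holp{\cF\cup\cG}$, so it is enough to prove $\Holp{\cF\mid\cG}\equiv_T\Holp{\cF\cup\cG}$. I claim this follows as soon as $\EQ_2$ is realisable as an LHS gadget over $\cF\mid\cG$. First, such a gadget also yields a RHS $\EQ_2$-gadget: take two copies of it and use them to link the two non-first legs of one $\EQ_3$-vertex to the two non-first legs of a second $\EQ_3$-vertex; the effective function on the two remaining legs is a nonzero multiple of $\EQ_2$, and those dangling edges lie on $\cG$-vertices. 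Now, with $\EQ_2$ realisable on both sides, Lemma~\ref{lem:realisable_planar} applied twice gives $\Holp{\cF\mid\cG}\equiv_T\Holp{\cF\cup\{\EQ_2\}\mid\cG\cup\{\EQ_2\}}$, and Proposition~\ref{prop:bipartite} (taking its two function sets to be $\cF$ and $\cG$) dissolves the bipartition to yield $\Holp{\cF\cup\cG}$. Chaining these equivalences proves the theorem, both directions at once.

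\emph{Realising $\EQ_2$ on the left.} Identify $[y_0,y_1,y_2]$ with its symmetric $2\times2$ matrix $M$; by non-degeneracy, $M$ is invertible. Gluing copies of $M$ to copies of $\EQ_3$ (each $\EQ_3$-vertex forcing its three incident labels to coincide) produces a family of binary LHS gadgets whose effective functions are controlled by the linear algebra of $M$ and its entrywise powers $M^{\circ k}$ (for a $\{0,1\}$-valued $M$ these are just the ordinary powers $M^j$). Depending on the orbit of $[y_0,y_1,y_2]$ under the holographic transformations that fix $\EQ_3$, either this family already contains a nonzero scalar multiple of $\EQ_2$ (in the simplest instance $[y_0,y_1,y_2]=[c,0,c]$ it does so trivially, and Lemma~\ref{lem:scaling} finishes), or a nonzero scalar multiple of $\EQ_2$ is recovered by polynomial interpolation over the family, computing the holant of any instance over $\cF\cup\{\EQ_2\}$ from finitely many instances over $\cF\mid\cG$. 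The $\omega$-normalisation hypothesis is exactly what makes this dichotomy hold: it rules out the finitely many exceptional orbits in which the constructions collapse because $[y_0,y_1,y_2]$ is too tightly tied to the symmetries of the GHZ state $\EQ_3$. The boundary case $y_0=y_2=0$, i.e.\ $[y_0,y_1,y_2]\propto\NEQ$, needs separate treatment: the matrix $X$ of $\NEQ$ satisfies $X\circ\EQ_3=\EQ_3$, and the gadgets obtainable from $\NEQ$ and $\EQ_3$ alone are too restricted to produce $\EQ_1$- or $\EQ_2$-type functions. Here one first applies a holographic transformation $\diag(1,\om)$ with $\om^3=1$ --- which fixes $\EQ_3$ and fixes $\NEQ$ up to a scalar --- to $\omega$-normalise the assumed unary $[a,b]\in\cG_1$ with $ab\neq0$; then $\NEQ$, this $\omega$-normalised unary, and $\EQ_3$ together suffice to realise $\EQ_2$ (indeed $\delta_0$ and $\delta_1$) as LHS gadgets.

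\emph{Main obstacle.} The weight of the proof is entirely in the second part: classifying the binary LHS gadgets realisable from a non-degenerate symmetric $M$ together with $\EQ_3$, and verifying that $\omega$-normalisation (for $[y_0,y_1,y_2]$, and for the auxiliary unary in the $\NEQ$ case) is precisely the hypothesis under which either a nonzero multiple of $\EQ_2$ is directly realisable or the interpolation step succeeds. The reduction in the first part is routine bookkeeping with Propositions~\ref{prop:CSP_holant} and \ref{prop:bipartite} and Lemmas~\ref{lem:scaling} and \ref{lem:realisable_planar}.
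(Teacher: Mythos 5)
The paper does not prove this theorem; it imports it verbatim from \cite{cai_holant_2012}, so there is no in-paper proof to compare against. The planar analogue, Theorem~\ref{thm:GHZ-state-planar}, \emph{is} proved in this paper and by a genuinely different route from yours: the hard direction splits on whether $\plholp{\{[y_0,y_1,y_2]\}\mid\{\EQ_3\}}$ is already \numP-hard. When it is, the reduction $\plcsp(\cdots)\leq_T\plholp{\cdots}$ comes essentially for free from Lemma~\ref{lem:NCSP_to_numP}, since $\mathsf{COUNT}\in\numP$ and every \numP{} problem reduces to a \numP-hard one; when it is not, Corollary~\ref{cor:pl-holant_binary} forces $[y_0,y_1,y_2]$ into a handful of algebraically special shapes that can be handled directly. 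This ``generic hardness'' trick sidesteps nearly all of the interpolation analysis. Your interpolation-first plan is in the spirit of the original \cite{cai_holant_2012} argument rather than what is done here.

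On the merits of your proposal: the top-level bookkeeping is sound. An LHS $\EQ_2$ does give an RHS $\EQ_2$ via the two-$\EQ_3$ gadget ($\sum_{b,c}\EQ_3(a,b,c)\EQ_3(d,b,c)=\EQ_2(a,d)$, with $a,d$ dangling from $\cG$-vertices), and Lemma~\ref{lem:realisable_planar} together with Proposition~\ref{prop:bipartite} then dissolve the bipartition. The genuine gap is that the heart of the theorem --- that under $\omega$-normalisation a nonzero multiple of $\EQ_2$ can be realised or interpolated on the left from $[y_0,y_1,y_2]$ and $\EQ_3$ --- is only asserted, not proved, and you say so yourself. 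As written, there is no proof. Two smaller issues: the gadget family you describe is too vague to check, and the parenthetical claim that for a $\{0,1\}$-valued $M$ the entrywise powers $M^{\circ k}$ coincide with the ordinary powers $M^j$ is false (try $M=\smm{0&1\\1&0}$: $M^{\circ k}=M$ for all $k\geq 1$ while $M^2=I$). To complete this you would have to carry out the interpolation properly: identify the transfer matrix whose iterates the gadget family realises, compute its eigenvalues in terms of $y_0,y_1,y_2$, and show that the $\omega$-normalisation hypothesis excludes exactly the degenerate eigenvalue ratios (roots of unity of the relevant order) that would make the interpolation system singular --- with separate cases for $y_0y_2\neq 0$, exactly one of $y_0,y_2$ zero, and $y_0=y_2=0$, the last being where the auxiliary unary enters.
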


\begin{theorem}[{\cite[Theorem~4]{cai_holant_2012}}]\label{thm:W-state}
 $\Holp{\{[y_0,y_1,y_2]\}\mid\{[x_0,x_1,x_2,x_3]\}}$ is \sP-hard unless the functions $[y_0,y_1,y_2]$ and $[x_0,x_1,x_2,x_3]$ satisfy one of the following conditions, in which case the problem is polynomial-time computable:
 \begin{itemize}
  \item $[x_0,x_1,x_2,x_3]$ is degenerate, or
  \item there is $M\in\GL$ such that:
   \begin{itemize}
    \item $[x_0,x_1,x_2,x_3]=M\circ[1,0,0,1]$ and $M^T\circ[y_0,y_1,y_2]$ is in $\mathcal{A}\cup\avg{\cE}$,
    \item $[x_0,x_1,x_2,x_3]=M\circ[1,1,0,0]$ and $[y_0,y_1,y_2]=(M^{-1})^T\circ[0,a,b]$ for some $a,b\in\AA$.
   \end{itemize}
 \end{itemize}
\end{theorem}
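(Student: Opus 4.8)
The plan is to reduce to the classifications already available, following the structure outlined in Section~\ref{s:results_ternary_symmetric}. Write $x := [x_0,x_1,x_2,x_3]$ and $y := [y_0,y_1,y_2]$. If $x$ is degenerate it is a product of three unary functions, so $\{x,y\}\sse\avg\cT$ and $\Holp{\{y\}\mid\{x\}}\leq_T\Holp{\{y,x\}}$ is polynomial-time computable by the first case of Theorem~\ref{thm:Holant-star}, which is the first tractable case of the statement. Otherwise $x$ is non-degenerate, and as recalled in Section~\ref{s:results_ternary_symmetric} there is $M\in\GL$ with $x = M\circ[1,0,0,1]$ or $x = M\circ[1,1,0,0]$. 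Applying Valiant's Holant Theorem (Theorem~\ref{thm:Valiant_Holant}) with the transformation $M^T$ — which sends the right-hand signature back to $[1,0,0,1]$, respectively $[1,1,0,0]$ — reduces the problem to $\Holp{\{M^T\circ y\}\mid\{[1,0,0,1]\}}$ or $\Holp{\{M^T\circ y\}\mid\{[1,1,0,0]\}}$. Writing $z := M^T\circ y$, it suffices to classify these two bipartite problems; if $z$ is degenerate, splitting it into its unary tensor factors turns the instance into one using only unary functions together with $\EQ_3$ (respectively $[1,1,0,0]$), which is polynomial-time computable, so from now on I assume $z$ is non-degenerate.

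For the $[1,0,0,1]=\EQ_3$ case, I would first $\omega$-normalise $z$ by absorbing a diagonal matrix $\smm{1&0\\0&\om}$ with $\om^3=1$ into $M$; this fixes $\EQ_3$ by equation~\eqref{eq:normalisation}, and a short argument (binary affine functions are automatically $\omega$-normalised, and $\avg\cE$ is stable under such diagonal scalings) shows that the condition $M^T\circ y\in\cA\cup\avg\cE$ of the statement is insensitive to this. If $z$ is a nonzero multiple of $\NEQ$, i.e. $z = [0,y_1,0]$, then $\Holp{\{\NEQ\}\mid\{\EQ_3\}}$ is polynomial-time computable directly, since a nonzero-weight assignment is precisely a proper $2$-colouring of the graph whose vertices are the $\EQ_3$-vertices and whose edges are the $\NEQ$-vertices. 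In the remaining cases Theorem~\ref{thm:GHZ-state} applies with $\cG_1 = \cG_2 = \emptyset$, giving $\Holp{\{z\}\mid\{\EQ_3\}}\equiv_T\csp(\{z\})$, and Theorem~\ref{thm:csp} then makes the problem polynomial-time computable when $z\in\cA\cup\avg\cE$ and \sP-hard otherwise — exactly the dichotomy of the statement for this branch.

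For the $[1,1,0,0]$ case, note that every instance has left-hand vertices of degree $2$ and right-hand vertices of degree $3$, so $2|L| = 3|R|$ and hence $|R| < |L|$ on any nonempty graph. If $z_0 = 0$ — equivalently $z = [0,a,b]$, equivalently $y = (M^{-1})^T\circ[0,a,b]$, which is the tractable condition of the statement — then any nonzero-weight assignment has its set $S$ of $1$-edges meeting every right-hand vertex in at most one edge (because of $[1,1,0,0]$) and every left-hand vertex in at least one edge (because $z$ vanishes on the all-zero input), forcing $|L|\le|S|\le|R|$, which is impossible on a nonempty graph; so the holant is identically $0$ there and the problem is trivially polynomial-time computable. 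If instead $z_0\neq 0$, I would scale so that $z_0 = 1$ and construct $\cF$-gates over $\{z,[1,1,0,0]\}$, respecting the bipartition, that realise a function to which the \sP-hardness of the $\EQ_3$ branch or of Theorem~\ref{thm:csp} can be applied, thereby establishing \sP-hardness.

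The main obstacle is the \sP-hardness direction in the $[1,1,0,0]$ case: one needs gadget constructions over $\{z,[1,1,0,0]\}$ that work uniformly for every non-degenerate $z$ with $z_0\neq 0$, while peeling off the various boundary configurations (several entries of $z$ vanishing, $z$ proportional to a function built from roots of unity, and so on) for separate ad hoc treatment. By contrast, all the tractable cases are either essentially trivial (the $z_0 = 0$ branch above, together with the degenerate sub-cases) or reductions to $\csp$ via Theorems~\ref{thm:GHZ-state} and~\ref{thm:csp}, the only real subtlety there being the $\omega$-normalisation bookkeeping.
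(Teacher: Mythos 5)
This statement is cited from \cite{cai_holant_2012}, and the paper does not prove it---the surrounding text only notes that two of the original cases have been merged via a bit flip---so there is no internal proof to compare against; I am evaluating your proposal on its own terms. Much of your tractability analysis is sound and uses the right tools: the reduction to $\EQ_3$ or $[1,1,0,0]$ via Theorem~\ref{thm:Valiant_Holant}, the observation that binary affine functions are automatically $\omega$-normalised (so $\omega$-normalisation commutes with the stated tractability condition), the careful separation of the $z\doteq\NEQ$ sub-case before invoking Theorem~\ref{thm:GHZ-state} with $\cG_1=\cG_2=\emptyset$, the subsequent appeal to Theorem~\ref{thm:csp}, and the clean edge-counting argument that the holant vanishes identically in the $[1,1,0,0]$ branch when $z_0=0$.

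The genuine gap, which you yourself flag, is that you never actually establish \sP-hardness for the $W$-type branch, i.e.\ for $\Holp{\{z\}\mid\{[1,1,0,0]\}}$ with $z$ non-degenerate and $z_0\neq 0$. This is the real content of the theorem---it is precisely where the $W$ entanglement class forces hardness---and in \cite{cai_holant_2012} it requires a substantial gadget and interpolation argument. Writing ``I would construct $\cF$-gates over $\{z,[1,1,0,0]\}$\ldots'' is a statement of intent, not a proof; as it stands, the hardness half of the classification is unproved. A secondary issue you should also resolve: your tractable set for the $[1,1,0,0]$ branch is ``$z$ degenerate or $z_0 = 0$'', but you only check that the \emph{second} disjunct matches the stated condition $[y_0,y_1,y_2]=(M^{-1})^T\circ[0,a,b]$. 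These two conditions do not obviously agree when $z$ is degenerate with $z_0\neq 0$. Concretely, take $x=[1,1,0,0]$ and $y=[1,1,1]$: the matrices $M$ with $x=M\circ[1,1,0,0]$ are exactly those of the form $\smm{a&b\\0&a+3b}$ with $a(a+3b)\neq 0$, and for every such $M$ the first entry of $M^T\circ[1,1,1]$ is $a^2\neq 0$, so the stated condition fails even though your argument (correctly) shows this instance is tractable. You need to reconcile the tractable set you actually derive with the one in the statement before claiming the dichotomy; at present the two do not visibly coincide.
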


Here, we have combined the last two cases of \cite[Theorem~4]{cai_holant_2012} into one case, since one can be mapped to the other by a bit flip: a holographic transformation using the matrix $X$.

We will also use a result about planar holant problems involving the ternary equality function.
As the notation used in the original statement of this theorem differs significantly from the notation used in this paper, we first state the original theorem and then prove a corollary which translates the theorem into our notation.

\begin{theorem}[{\cite[Theorem~7]{kowalczyk_holant_2016}}]\label{thm:kowalczyk-cai}
 Let $a,b\in\AA$ and define $X:=ab$, $Y:=a^3+b^3$.
 The problem $\plhol\left(\{[a,1,b]\}\mid \{\EQ_3\}\right)$ is \numP-hard for all $a,b\in\AA$ except in the following cases, for which the problem is polynomial-time computable:
 \begin{enumerate}
  \item $X=1$
  \item $X=Y=0$
  \item $X=-1$ and $Y\in\{0,\pm 2i\}$, or
  \item $4X^3=Y^2$.
 \end{enumerate}
\end{theorem}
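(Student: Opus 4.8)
The plan is to prove the two directions separately, following the template of earlier Holant dichotomies. A useful first observation: since every $\EQ_3$-vertex has degree three and every $[a,1,b]$-vertex has degree two, contracting the degree-two vertices turns an instance of $\plhol(\{[a,1,b]\}\mid\{\EQ_3\})$ into precisely the partition function $Z_M$ of a spin system with symmetric edge-weight matrix $M=\smm{a&1\\1&b}$ evaluated on a planar cubic (multi)graph. So what is wanted is a complexity classification of $Z_M$ on planar cubic graphs, and the usual toolkit---holographic transformations, gadgets, polynomial interpolation---is available. Note also that the non-planar version $\hol(\{[a,1,b]\}\mid\{\EQ_3\})$ is already classified by Theorem~\ref{thm:W-state}: it is tractable exactly when $[a,1,b]\in\cA\cup\ang{\cE}$, so the new content is the additional tractable cases that planarity buys (matchgates) together with the harder task of proving hardness using only planar gadgets.

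For tractability I would check each of the four regimes. The condition $X=ab=1$ is exactly $\det M=0$, so $[a,1,b]$ is degenerate, hence in $\ang{\cE}$; as $\plhol(\{[a,1,b]\}\mid\{\EQ_3\})\leq_T\plcsp(\{[a,1,b]\})$, Theorem~\ref{thm:planar_csp} gives tractability (equivalently, the grid then decomposes into scalars). The condition $X=Y=0$ forces $a=b=0$, i.e.\ $[a,1,b]=\NEQ\in\cA$, and Theorem~\ref{thm:planar_csp} applies again. The regimes $X=-1,\ Y\in\{0,\pm2i\}$ and $4X^3=Y^2$ are the matchgate cases: here I would exhibit $M'\in\GL$---for example $\smm{1&1\\1&-1}$ when $a=b$, so that $\EQ_3$ and $[a,1,b]$ both land in $\smm{1&1\\1&-1}\circ\cH$, or a transformation $\smm{1&0\\0&\om}$ with $\om^3=1$ in certain subcases, so that $[a,1,b]$ lands in $\cA$ while $\EQ_3$ is fixed---matchgate signatures of arity $\leq 3$ being recognised via Lemma~\ref{lem:unary_matchgate}, after which the holant is computable by the planar perfect-matching algorithm or by Theorem~\ref{thm:planar_csp}. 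The identities $X=-1$, $Y\in\{0,\pm2i\}$ and $4X^3=Y^2$ should come out as exactly the existence conditions for such an $M'$, which I would confirm by computing the effect of the admissible transformations on $X$ and $Y$.

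For hardness, the workhorse is the planar binary gadget formed by a path of $k$ copies of $[a,1,b]$, which realises the binary signature attached to $M^k$. Diagonalising $M$, the ratio $\lambda_1/\lambda_2$ of its eigenvalues governs a polynomial-interpolation argument: whenever this ratio is neither $0$ nor a root of unity, interpolation frees up every binary signature diagonal in the eigenbasis of $M$, and in particular some $N$ for which $\plhol(\{N\}\mid\{\EQ_3\})$---i.e.\ $Z_N$ on planar cubic graphs---is already known to be \numP-hard (for instance, after a further holographic transformation, counting independent sets on planar cubic graphs, whose hardness is covered by the hardness side of Theorem~\ref{thm:planar_csp}). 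This disposes of all $a,b$ except those for which the recursion collapses.

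That collapse is the step I expect to be the main obstacle: when $M$ is a scalar multiple of the identity (equivalently $a=b$), or when $\lambda_1/\lambda_2$ is a primitive $d$-th root of unity---so that $M^d$ is scalar and $\{M^k\}$ is finite modulo scalars---series interpolation yields nothing new, and one must instead build bespoke planar gadgets of arity three or more (for instance attaching a copy of $[a,1,b]$ to each leg of $\EQ_3$ to obtain the ternary signature $(a,1)^{\otimes 3}+(1,b)^{\otimes 3}$, or chaining several such stars), analyse the resulting symmetric signatures, and reduce from an established \numP-hard planar counting problem whenever these escape the tractable families. The delicate part is proving that the loci $X=-1,\ Y\in\{0,\pm2i\}$ and $4X^3=Y^2$ are exactly---neither larger nor smaller---the exceptional parameters for which every such gadget-plus-interpolation construction remains within a tractable family; this needs careful bookkeeping of how the invariants $X=ab$ and $Y=a^3+b^3$ transform under holographic maps and gadget composition, using only planar gadgets throughout.
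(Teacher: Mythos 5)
This theorem is imported: the paper states it as a citation to Kowalczyk and Cai and contains no proof of its own, so there is no in-paper argument to compare your sketch against. On its own terms, your proposal does identify the correct macro-structure---contracting the degree-two vertices to view the problem as a planar spin system on cubic graphs with edge-interaction matrix $M=\smm{a&1\\1&b}$, reducing the exceptional loci to tractable families of the planar $\#$CSP dichotomy, and using path gadgets realising powers $M^k$ together with interpolation to cover generic $a,b$. But what you have written is a roadmap rather than a proof. The hardness direction for parameters where interpolation collapses (eigenvalue ratio of $M$ a root of unity, or $M$ scalar) is flagged by you as ``the main obstacle'' and then left open; that is exactly where the cited proof does the bulk of its work, constructing and analysing specific planar recursive gadgets for each residual parameter regime. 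The proposal names the gap but does not close it, and it is not a minor corner case: it is most of the theorem.

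One concrete slip is also worth correcting. You group $X=-1,\ Y\in\{0,\pm 2i\}$ and $4X^3=Y^2$ together under the heading ``the matchgate cases''. These land in distinct tractable families. As the paper's own Corollary~\ref{cor:pl-holant_binary} works out, $X=-1$ with $Y\in\{0,\pm 2i\}$ is the affine regime: a diagonal holographic transformation $\smm{1&0\\0&\ld}$ with $\ld^3=1$ sends $[a,1,b]$ into $\cA$ while fixing $\EQ_3$, so tractability already holds non-planarly. By contrast, $4X^3=Y^2$ is equivalent to $(a^3-b^3)^2=0$, i.e.\ $a^3=b^3$, and is the genuinely new case that planarity buys via a Hadamard transform into matchgate signatures. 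Keeping these apart matters for the tractability side (the two families have different algorithms and only partially overlap), and matters more for the hardness side, where showing these loci are \emph{exactly} the exceptions requires ruling out each tractable family separately.
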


\begin{corollary}\label{cor:pl-holant_binary}
 Suppose $g\in\allf_2$ is symmetric.
 The problem $\plhol\left(\{g\}\mid \{\EQ_3\}\right)$ is \numP-hard except in the following cases, for which the problem is polynomial-time computable:
 \begin{enumerate}
  \item\label{c:gen_eq} $g\in\ang{\cE}$,
  \item\label{c:affine} $\smm{1&0\\0&\ld}\circ g\in\cA$ for some $\ld\in\AA$ such that $\ld^3=1$, or
  \item\label{c:matchgate} $g=c\cdot [a,1,b]$, where $a,b,c\in\AAnz$ and $a^3=b^3$.
 \end{enumerate}
\end{corollary}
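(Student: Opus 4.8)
The plan is to reduce the classification to Theorem~\ref{thm:kowalczyk-cai} by first normalising $g=[y_0,y_1,y_2]$, and then to check that the tractability criteria of that theorem match conditions~\ref{c:gen_eq}--\ref{c:matchgate}, all of which are invariant under scaling $g$ by a nonzero constant. First I would dispose of the case $y_1=0$: here $g$ vanishes off the complementary pair of inputs $\{00,11\}$, so $g\in\cE\sse\ang{\cE}$; since also $\EQ_3\in\cE$, the problem $\plhol(\{g\}\mid\{\EQ_3\})$ is a special case of $\plhol(\{g,\EQ_3\})$ with $\{g,\EQ_3\}\sse\ang{\cE}$, and hence polynomial-time computable by the second tractable case of Theorem~\ref{thm:Holant-star} (taking $O$ to be the identity). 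This agrees with the corollary, since condition~\ref{c:gen_eq} always holds when $y_1=0$.

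For the case $y_1\neq 0$, I would apply Lemma~\ref{lem:scaling} — which holds verbatim for planar bipartite signature grids, since rescaling a function only multiplies every holant value by a known power of the scalar — to replace $g$ by $[a,1,b]$ with $a:=y_0/y_1$ and $b:=y_2/y_1$, so that $\plhol(\{g\}\mid\{\EQ_3\})\equiv_T\plhol(\{[a,1,b]\}\mid\{\EQ_3\})$ and Theorem~\ref{thm:kowalczyk-cai} applies directly with $X=ab$ and $Y=a^3+b^3$. Two of its four tractability conditions simplify pleasantly: condition~(4), $4X^3=Y^2$, rearranges to $(a^3-b^3)^2=0$, i.e.\ $a^3=b^3$; and condition~(3), $X=-1$ with $Y\in\{0,\pm 2i\}$, becomes, after substituting $b=-1/a$, the requirement that $a^6\in\{1,-1\}$, i.e.\ $ab=-1$ together with $a^{12}=1$.

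The core of the argument is then to check, for $g\sim[a,1,b]$, that the union of conditions~\ref{c:gen_eq}--\ref{c:matchgate} equals the union of Theorem~\ref{thm:kowalczyk-cai}(1)--(4); I would match them as follows. (i)~The function $[a,1,b]$ lies in $\ang{\cE}$ iff it equals $\NEQ$ (so $a=b=0$, matching condition~(2), since then $X=Y=0$) or it is degenerate (so $ab=1$, matching condition~(1)). (ii)~After rescaling, condition~\ref{c:matchgate} reads ``$a,b\neq 0$ and $a^3=b^3$'', which together with the sub-case $a=b=0$ of~\ref{c:gen_eq} is exactly condition~(4). (iii)~Since $\smm{1&0\\0&\ld}\circ[a,1,b]=[a,\ld,\ld^2 b]$, I would show that condition~\ref{c:affine} holds iff $a=b=0$ or ($a^{12}=1$ and $b=\pm a^5$); in the latter case $ab=\pm a^6\in\{1,-1\}$, so~\ref{c:affine} reduces to condition~(1) when $ab=1$ and to condition~(3) when $ab=-1$ (whence $a^{12}=1$), while conversely conditions~(2) and~(3) both imply~\ref{c:affine}. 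A short computation then shows that both unions equal ``$ab=1$, or $a^3=b^3$, or ($ab=-1$ and $a^{12}=1$)''. By Theorem~\ref{thm:kowalczyk-cai}, the problem is therefore polynomial-time computable when one of~\ref{c:gen_eq}--\ref{c:matchgate} holds and \numP-hard otherwise; combined with the $y_1=0$ case, this proves the corollary.

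The hard part will be step~(iii), namely identifying exactly which symmetric binary functions are affine. I expect to argue that the support of a non-degenerate symmetric binary affine function with nonzero middle entry must be all of $\{0,1\}^2$, and then, after normalising the middle entry to $1$, that a bounded case analysis on the linear and quadratic exponents in the definition of affine functions pins the first entry down to a fourth root of unity and the ratio of the last entry to the first to $\{1,-1\}$. Some additional care is needed in tracking the (numerous) overlaps between the cases and the interaction of the diagonal twist $\smm{1&0\\0&\ld}$ with the normalisation $y_1\mapsto 1$; the remaining ingredients — the two reductions and the algebraic simplifications of conditions~(3) and~(4) — are routine.
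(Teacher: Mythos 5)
Your proposal is correct, and it takes a genuinely different route from the paper. Both proofs funnel through Theorem~\ref{thm:kowalczyk-cai}, but the paper proves tractability of Cases~\ref{c:gen_eq} and~\ref{c:affine} by $\omega$-normalising $g$, dropping planarity and the bipartition, and invoking the $\csp$ dichotomy of Theorem~\ref{thm:csp}; only Case~\ref{c:matchgate} is verified against Theorem~\ref{thm:kowalczyk-cai}. For hardness, the paper proves just one implication (each of conditions~(1)--(4) of Theorem~\ref{thm:kowalczyk-cai} implies one of Cases~\ref{c:gen_eq}--\ref{c:matchgate}), and within condition~(3) it carries out a fairly long explicit subcase computation, producing a specific $\ld$ with $\ld^3=1$ for each of the twelve values of $a$ with $a^3-a^{-3}\in\{0,\pm 2i\}$. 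Your approach instead establishes a genuine biconditional, ``Cases~\ref{c:gen_eq}--\ref{c:matchgate} hold iff one of conditions~(1)--(4) holds,'' reducing both tractability and hardness to a single application of Theorem~\ref{thm:kowalczyk-cai} plus a clean algebraic identity: both unions coincide with $ab=1\lor a^3=b^3\lor(ab=-1\land a^{12}=1)$. The price you pay is that you must characterise precisely which symmetric binary functions satisfy Case~\ref{c:affine} (the full-support affine $[p,q,r]$ with $pqr\neq 0$ are, up to scaling, $[1,i^\ell(-1)^\kappa,(-1)^{\kappa'}]$ for $\ell,\kappa,\kappa'\in\{0,1\}$; twisting by $\smm{1&0\\0&\ld}$ with $\ld^3=1$ then forces $\ld=a^4$, $a^{12}=1$, $b=\pm a^5$, as you state), whereas the paper sidesteps this by outsourcing Cases~\ref{c:gen_eq} and~\ref{c:affine} to the $\csp$ dichotomy. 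Your version trades that external appeal for an explicit but bounded case analysis and, in return, avoids the paper's subcase explosion inside condition~(3); in fact the paper has a small slip there (with $a=e^{i\pi/6}$, $\ld=e^{2i\pi/3}$ one gets $e^{i\pi/6}\cdot[1,i,1]$ rather than $e^{i\pi/6}\cdot[1,-1,1]$, still affine but not the value claimed), which your more uniform computation would not have had occasion to make.
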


\begin{rem}
 Note that the three exceptional cases of Corollary~\ref{cor:pl-holant_binary} overlap, e.g.\ $g=\EQ_2$ satisfies both Case~\ref{c:gen_eq} and Case~\ref{c:affine}, and $g=[1,1,1]$ satisfies all three cases.
 
 We also include binary functions of the form $[a,0,b]$ in the corollary for completeness. 
\end{rem}

\begin{proof}[Proof of Corollary~\ref{cor:pl-holant_binary}.]
 First, we show that the problem is tractable in the exceptional cases.
 Let $M=\smm{1&0\\0&\ld}$ with $\ld^3=1$ be such that $g':=M\circ g$ is $\om$-normalised.
 Then
 \begin{align*}
  \plhol\left(\{g\}\mid \{\EQ_3\}\right)
  &\equiv_T \plhol\left(\{g'\}\mid \{\EQ_3\}\right) \\
  &\leq_T \holp{\{g', \EQ_3\}} \\
  &\leq_T \csp(\{g'\}),
 \end{align*}
 where the first step is by Theorem~\ref{thm:Valiant_Holant}, the second step is by forgetting about the bipartition and the planarity constraint, and the third step is by Proposition~\ref{prop:CSP_holant}.
 Hence if $\csp(\{g'\})$ is polynomial-time computable, then so is $\plhol\left(\{g\}\mid \{\EQ_3\}\right)$.
 By Theorem~\ref{thm:csp}, $\csp(\{g'\})$ is polynomial-time computable if $g'\in\cA$ or $g'\in\ang{\cE}$.
 Since $M$ is diagonal, $g'\in\ang{\cE}$ is equivalent to $g\in\ang{\cE}$, so we have tractability for Case~\ref{c:gen_eq}.
 Furthermore, $g'\in\cA$ is equivalent to $\smm{1&0\\0&\ld}\circ g\in\cA$ for some $\ld^3=1$, which establishes tractability for Case~\ref{c:affine}.
 
 This leaves Case~\ref{c:matchgate}.
 Assume $g=c\cdot [a,1,b]$ for some $a,b,c\in\AAnz$ with $a^3=b^3$, and define $g'':=[a,1,b]$.
 The functions $g$ and $g''$ differ only by a non-zero factor, so by a straightforward extension of Lemma~\ref{lem:scaling} to bipartite signature grids we have
 \begin{equation}\label{eq:scaling}
  \plhol\left(\{g\}\mid \{\EQ_3\}\right) \equiv_T \plhol\left(\{g''\}\mid \{\EQ_3\}\right).
 \end{equation}
 Now, $a^3=b^3$ implies
 \[
  0 = (a^3-b^3)^2 = (a^3+b^3)^2 - 4(ab)^3,
 \]
 hence $g''$ satisfies condition~4 of Theorem~\ref{thm:kowalczyk-cai}.
 This establishes tractability for Case~\ref{c:matchgate}.
 
 It remains to prove the hardness part of the theorem.
 If $g=[a,0,b]$ for some $a,b\in\AA$, then $g\in\ang{\cE}$ and $\plhol\left(\{g\}\mid \{\EQ_3\}\right)$ is polynomial-time computable by the above arguments.
 Thus, from now on, we may assume $g=c\cdot [a,1,b]$, where $a,b,c\in\AA$ and $c\neq 0$.
 
 Let $g'':=[a,1,b]$ as before, then again \eqref{eq:scaling} holds.
 We will show that if $g''$ satisfies one of the tractability conditions of Theorem~\ref{thm:kowalczyk-cai}, then $g$ satisfies one of Cases~\ref{c:gen_eq}--\ref{c:matchgate}. 
 Consider each tractability condition of Theorem~\ref{thm:kowalczyk-cai} in turn.
 \begin{enumerate}
  \item $X=1$ is equivalent to $ab=1$.
   This is exactly the condition for the function $[a,1,b]$ to be degenerate.
   But then $g=c\cdot [a,1,b]$ is degenerate as well, so $g\in\ang{\cE}$ and $g$ satisfies Case~\ref{c:gen_eq}.
  \item $X=Y=0$ is equivalent to $ab=0$ and $a^3+b^3=0$.
   Together, the two equalities imply that $a=b=0$.
   Thus $g''=\NEQ$, which implies $g=c\cdot\NEQ\in\ang{\cE}$, so $g$ satisfies Case~\ref{c:gen_eq}.
  \item $X=-1$ implies that $a,b\neq 0$.
   We can therefore rewrite $X=ab=-1$ to $b=-a^{-1}$, which in turn implies $Y=a^3-a^{-3}$.
   This case therefore reduces to $a^3-a^{-3}\in\{0,\pm 2i\}$.
   We distinguish subcases.
   \begin{itemize}
    \item Suppose $a^3-a^{-3}=0$.
     Then $a^6=1$, i.e.\ $a=e^{ik\pi/3}$ for some $k\in\{0,1,2,3,4,5\}$.
      \begin{itemize}
       \item If $k=0$, let $\ld=1$, then $\ld^3=1$ and  $\smm{1&0\\0&\ld}\circ g=c\cdot[1,1,-1]\in\cA$.
       \item If $k=1$, let $\ld=e^{4i\pi/3}$, then $\ld^3=1$ and
        \[
         \smm{1&0\\0&\ld}\circ g = c\cdot [e^{i\pi/3}, \ld, -\ld^2 e^{-i\pi/3}] = c\cdot [e^{i\pi/3}, e^{4i\pi/3}, -e^{7i\pi/3}] = e^{4i\pi/3}c\cdot [-1,1,1]
        \]
        so $\smm{1&0\\0&\ld}\circ g\in\cA$.
       \item If $k=2$, let $\ld=e^{2i\pi/3}$, then $\ld^3=1$ and
        \[
         \smm{1&0\\0&\ld}\circ g = c\cdot [e^{2i\pi/3}, \ld, -\ld^2 e^{-2i\pi/3}] = c\cdot [e^{2i\pi/3}, e^{2i\pi/3}, -e^{2i\pi/3}] = e^{2i\pi/3}c\cdot [1,1,-1]
        \]
        so $\smm{1&0\\0&\ld}\circ g\in\cA$.
      \end{itemize}
      The remaining subcases are similar.
    \item Suppose $a^3-a^{-3}= 2i$.
     Then $a^6-1=2ia^3$, or equivalently $a^6 - 2ia^3 -1=(a^3 - i)^2 = 0$.
     Thus $a = e^{(4k+1)i\pi/6}$ for some $k\in\{0,1,2\}$.
      \begin{itemize}
       \item If $k=0$, let $\ld=e^{2i\pi/3}$, then $\ld^3=1$ and
        \[
         \smm{1&0\\0&\ld}\circ g
         = c\cdot [e^{i\pi/6}, \ld, -\ld^2 e^{-i\pi/6}]
         = c\cdot [e^{i\pi/6}, e^{2i\pi/3}, -e^{7i\pi/6}]
         = e^{i\pi/6}c\cdot [1,-1,1]
        \]
        so $\smm{1&0\\0&\ld}\circ g\in\cA$.
       \item If $k=1$, let $\ld=e^{4i\pi/3}$, then $\ld^3=1$ and
        \[
         \smm{1&0\\0&\ld}\circ g
         = c\cdot [e^{5i\pi/6}, \ld, -\ld^2 e^{-5i\pi/6}]
         = c\cdot [e^{5i\pi/6}, e^{4i\pi/3}, -e^{11i\pi/6}]
         = e^{5i\pi/6}c\cdot [1,-1,1]
        \]
        so $\smm{1&0\\0&\ld}\circ g\in\cA$.
       \item If $k=2$, let $\ld=1$, then $\ld^3=1$ and
        \[
         \smm{1&0\\0&\ld}\circ g
         = c\cdot [e^{9i\pi/6}, \ld, -\ld^2 e^{-9i\pi/6}]
         = c\cdot [e^{3i\pi/2}, 1, -e^{-3i\pi/2}]
         = c\cdot [-i,1,-i]
        \]
        so $\smm{1&0\\0&\ld}\circ g\in\cA$.
     \end{itemize}
    \item Suppose $a^3-a^{-3}= -2i$.
     This subcase is analogous to the subcase $a^3-a^{-3}= 2i$.
   \end{itemize}
   In each subcase, we were able to find $\ld\in\AA$ such that $\ld^3=1$ and $\smm{1&0\\0&\ld}\circ g\in\cA$, i.e.\ the function $g$ satisfies Case~\ref{c:affine}.
  \item $4X^3=Y^2$ implies $4(ab)^3=(a^3+b^3)^2$, which is equivalent to $(a^3-b^3)^2=0$.
   Hence this condition immediately implies that $g$ satisfies Case~\ref{c:matchgate}.
 \end{enumerate}
 This completes the case distinction.
 
 We have shown that if $g''=[a,1,b]$ satisfies any of the tractability conditions of Theorem~\ref{thm:kowalczyk-cai}, then $g=c\cdot g''$ satisfies one of Cases~\ref{c:gen_eq}--\ref{c:matchgate}.
 Hence, conversely, if $g=c\cdot [a,1,b]$ and $g$ does not satisfy any of Cases~\ref{c:gen_eq}--\ref{c:matchgate}, then $g''=[a,1,b]$ does not satisfy any of the tractability conditions of Theorem~\ref{thm:kowalczyk-cai}.
 In that case, Theorem~\ref{thm:kowalczyk-cai} implies that $\plhol\left(\{g''\}\mid \{\EQ_3\}\right)$ is \numP-hard.
 Thus, by \eqref{eq:scaling}, $\plhol\left(\{g\}\mid \{\EQ_3\}\right)$ is \numP-hard.
\end{proof}

\subsection{Results about functions of arity 4}

Besides the above results about ternary functions, we will also make use of the following result about realising or interpolating the arity-4 equality function from a more general function of arity 4.

\begin{lemma}[{\cite[Part of Lemma~2.40]{cai_holographic_2017}}]\label{lem:interpolate_equality4}
 Suppose $\cF\sse\allf$ is finite and contains a function $f$ of arity 4 with matrix
 \[
  \begin{pmatrix} a&0&0&b \\ 0&0&0&0 \\ 0&0&0&0 \\ c&0&0&d \end{pmatrix}
 \]
 where $\smm{a&b\\c&d}$ has full rank.
 Then $\mathsf{Pl\text{-}Holant}(\{\EQ_4\}\cup\cF) \leq_T \mathsf{Pl\text{-}Holant}(\cF)$.
\end{lemma}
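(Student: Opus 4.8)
The plan is to realise $\EQ_4$ from $f$ by \emph{polynomial interpolation}, using a family of path gadgets built from copies of $f$. Write $h:=\smm{a&b\\c&d}$, which is invertible by hypothesis. The shape of the matrix of $f$ shows that $\supp(f)\sse\{(y,y,z,z)\mid y,z\in\{0,1\}\}$ and that $f(y,y,z,z)=h_{yz}$ there; the function $\EQ_4$ has exactly the same support shape, with $h$ replaced by the identity $I$. For $s\geq 1$ let $N_s$ be the gadget consisting of $s$ copies of $f$ joined in series — the last two dangling edges of the $i$-th copy identified with the first two dangling edges of the $(i{+}1)$-th copy, and the first two edges of copy~$1$ together with the last two of copy~$s$ left dangling. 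This is a planar $\cF$-gadget (the joining is unambiguous since $f$ is invariant under $x_1\leftrightarrow x_2$ and under $x_3\leftrightarrow x_4$), and an easy induction shows it computes the function $F_s$ with $\supp(F_s)\sse\{(y,y,z,z)\}$ and $F_s(y,y,z,z)=(h^s)_{yz}$. Morally ``$N_0$'' is $\EQ_4$ itself, the function we may not use directly.

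Suppose first that $h^t=\gamma I$ for some $t\geq 1$; then $\gamma\in\AAnz$ by invertibility of $h$, and $F_t=\gamma\cdot\EQ_4$, so the planar $\cF$-gadget $N_t$ realises $\EQ_4$ up to a nonzero scalar. Combined with Lemma~\ref{lem:scaling} and the planar analogue of Lemma~\ref{lem:realisable}, this already yields $\plhol(\{\EQ_4\}\cup\cF)\equiv_T\plhol(\cF)$. (This case occurs exactly when $h$ is diagonalisable and the ratio of its eigenvalues is a root of unity, in particular when $h$ is a scalar matrix.)

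In the remaining case we interpolate. Let $\Omega$ be a $\plhol(\{\EQ_4\}\cup\cF)$ instance with $m$ vertices assigned $\EQ_4$. For any $4$-ary $G$ with $\supp(G)\sse\{(y,y,z,z)\}$, replacing every $\EQ_4$-vertex of $\Omega$ by $G$ produces a holant of the form $\sum_{\vec\iota}C_{\vec\iota}\prod_{j=1}^m G_{\iota_j}$, where $\vec\iota$ records, for each of the $m$ vertices, which of the four support points it uses, $G_{\iota}$ is the corresponding entry of $G$'s ``binary part'', and the $C_{\vec\iota}$ depend only on $\Omega$. Taking $G=\EQ_4$ recovers $\Holant_\Omega$ (binary part $I=h^0$), and taking $G=F_s$ (i.e.\ each $\EQ_4$ replaced by the planar $\cF$-gadget $N_s$) gives $\sum_{\vec\iota}C_{\vec\iota}\prod_j(h^s)_{\iota_j}$. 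Substituting a spectral decomposition $h=PDP^{-1}$: if $h$ is diagonalisable then $D=\diag(\lambda_1,\lambda_2)$ with $\lambda_1,\lambda_2\in\AAnz$ and, since we are not in the previous case, $\lambda_1/\lambda_2$ is not a root of unity; expanding $(h^s)_{yz}=\sum_p P_{yp}\lambda_p^s(P^{-1})_{pz}$ through the products shows the modified holant equals $\sum_{k=0}^m A_k(\lambda_1^k\lambda_2^{m-k})^s$ for constants $A_k$, with $\Holant_\Omega=\sum_{k=0}^m A_k$ (the ``$s=0$'' specialisation). The $m+1$ bases $\lambda_1^k\lambda_2^{m-k}$ are pairwise distinct and nonzero, so querying the $\plhol(\cF)$ oracle at $s=1\zd m+1$ and solving the resulting Vandermonde system recovers the $A_k$, hence $\Holant_\Omega$. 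If $h$ is not diagonalisable then $\lambda_1=\lambda_2=\lambda\in\AAnz$ and $h=\lambda(I+N)$ with $N^2=0$, so $h^s=\lambda^s(I+sN)$ and the same bookkeeping gives a modified holant of the form $\lambda^{sm}\sum_{k=0}^m B_k s^k$ with $\Holant_\Omega=B_0$; dividing the oracle values at $s=1\zd m+1$ by $\lambda^{sm}\neq 0$ and interpolating the polynomial recovers $B_0$. Since $\cF$ is fixed the spectral data of $h$ is known in advance, $m$ is linear in $|\Omega|$, and each $N_s$ has at most $m+1$ copies of $f$, so this is a polynomial-time Turing reduction.

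The step requiring care is the interpolation bookkeeping: checking that the modified holant is genuinely a linear combination of the functions $s\mapsto s^k\mu^s$ with only the claimed terms (at most $m+1$ exponentials in the diagonalisable case; a single exponential times a degree-$\leq m$ polynomial otherwise), and — the crux — that the quantity we actually want, $\Holant_\Omega$, is the ``value at $s=0$'' of this combination, i.e.\ the sum of the pure-exponential coefficients (resp.\ the constant term after factoring out $\lambda^{sm}$). The remaining ingredients — planarity of $N_s$ and the absence of an edge-ordering obstruction when substituting it for the totally symmetric $\EQ_4$ — are routine.
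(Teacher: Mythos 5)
The paper states this lemma as an imported result from \cite{cai_holographic_2017} and gives no proof of its own, so there is nothing in the text to compare against directly. Your proof is correct and is the standard recursive-gadget interpolation argument for lemmas of this type: the chain gadget $N_s$ has binary core $h^s$; when some power of $h$ is a nonzero scalar matrix you realise $\EQ_4$ directly up to scaling (Lemma~\ref{lem:scaling}), and otherwise the $m+1$ exponential bases $\lambda_1^k\lambda_2^{m-k}$ are pairwise distinct and nonzero (resp.\ you get a degree-$\le m$ polynomial in $s$ after extracting $\lambda^{sm}$ in the non-diagonalisable case), so the Vandermonde (resp.\ polynomial-interpolation) system is solvable and its solution's sum $\sum_k A_k$ (resp.\ constant term $B_0$) is exactly the $s=0$ evaluation $\Holant_\Omega$. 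The point you flag as needing care — that the modified holant really is a linear combination of the claimed exponentials with the original holant as the $s=0$ specialisation — follows cleanly from $(h^0)_{yz}=I_{yz}$ being the binary core of $\EQ_4$, and you have stated it correctly.
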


{The lemma can of course also be used in the non-planar setting.}

Functions in $\cE$ are often called `generalised equality functions'.
Recall from Section~\ref{s:csp} that $\csp_2$ is a counting CSP in which each variable has to appear an even number of times.

\begin{lemma}[{\cite[Lemma~5.2]{cai_dichotomy_2017}}]\label{lem:generalised_equality4}
 Suppose $\cF\sse\allf$ is finite and contains a generalised equality function $f$ of arity 4, then
 \[
  \Holp{\cF}\equiv_T\csp_2(\cF).
 \]
\end{lemma}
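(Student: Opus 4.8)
The plan is to prove the two halves of $\equiv_T$ separately. The inclusion $\Holp{\cF}\leq_T\csp_2(\cF)$ holds for every finite $\cF$ and uses nothing about the generalised equality: given a signature grid $(G,\cF,\pi)$ with $G=(V,E)$, form the $\csp_2(\cF)$ instance whose variables are the edges of $G$ and whose constraints are the vertices, the constraint at $v$ being $f_v$ with scope the list of edges incident on $v$ in the order prescribed by $\pi$. Each edge then occurs in exactly two constraint scopes (twice in one scope, for a self-loop), hence each variable occurs an even number of times, and the induced weighted sum is exactly $\Holant$.

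For the converse $\csp_2(\cF)\leq_T\Holp{\cF}$, I would factor it as $\csp_2(\cF)\leq_T\Holp{\cF\cup\{\EQ_4\}}\leq_T\Holp{\cF}$. The first reduction is the $\csp_2$-analogue of Proposition~\ref{prop:CSP_holant}: encode a $\csp_2(\cF)$ instance as a bipartite signature grid in which the vertex representing a variable occurring $2k$ times carries $\EQ_{2k}$; a self-loop on a single copy of $\EQ_4$ realises $\EQ_2$, and chaining $k-1$ copies of $\EQ_4$ realises $\EQ_{2k}$ for $k\geq 2$, so each of these equalities lies in $S(\{\EQ_4\})$ and Lemma~\ref{lem:realisable} (applied finitely often, since only finitely many arities occur) gives the reduction. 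The substance is in the second reduction; as $\cF\subseteq\cF\cup\{\EQ_4\}$, what remains is to show $\EQ_4$ is realisable or interpolatable over $\cF$, and here I would use the generalised equality $f\in\cF$. Write $f=\alpha\ket{\ba}+\beta\ket{\bar\ba}$ with $\ba\in\{0,1\}^4$ and $\alpha,\beta\in\AAnz$, and split into cases by $w:=\wt(\ba)$, using freely that permuting the arguments of $f$ preserves realisability and that interchanging $\ba$ with $\bar\ba$ replaces $w$ by $4-w$:
\begin{itemize}
 \item $w\in\{0,4\}$: after possibly interchanging $\ba$ and $\bar\ba$ we have $\supp(f)=\{0000,1111\}$, so the matrix of values of $f$ is $\smm{\alpha&0&0&0\\0&0&0&0\\0&0&0&0\\0&0&0&\beta}$, which has the shape in Lemma~\ref{lem:interpolate_equality4} with $\smm{\alpha&0\\0&\beta}$ of full rank; hence $\Holp{\{\EQ_4\}\cup\cF}\leq_T\Holp{\cF}$.
 \item $w=2$: after permuting arguments assume $\ba=1100$. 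The gadget built from two copies of $f$, with argument $3$ of the first identified with argument $1$ of the second and argument $4$ of the first with argument $2$ of the second, has effective function $\alpha\beta\cdot\EQ_4$: by complementarity of the two support points, whenever one copy is in configuration $\ba$ the identified edges force the other into configuration $\bar\ba$, and conversely. Thus $\EQ_4\in S(\{f\})$ up to the nonzero scalar $\alpha\beta$, and Lemma~\ref{lem:realisable} applies.
 \item $w\in\{1,3\}$: after permuting and possibly interchanging $\ba\leftrightarrow\bar\ba$ assume $\ba=1000$. A self-loop on $f$ identifying arguments $3$ and $4$ realises $g:=\alpha\ket{10}+\beta\ket{01}$; connecting the first argument of $g$ to the first argument of $f$ then realises $\alpha^2\ket{0000}+\beta^2\ket{1111}$, whose matrix of values again has the form required by Lemma~\ref{lem:interpolate_equality4}, so $\Holp{\{\EQ_4\}\cup\cF}\leq_T\Holp{\cF}$.
\end{itemize}

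In all cases $\Holp{\cF\cup\{\EQ_4\}}\leq_T\Holp{\cF}$, which together with the first reduction gives $\csp_2(\cF)\leq_T\Holp{\cF}$ and hence $\Holp{\cF}\equiv_T\csp_2(\cF)$. I expect the obstacle to be organisational rather than conceptual: the delicate points will be tracking argument orders through the gadget identifications so that the claimed effective functions actually arise, the case $w\in\{1,3\}$ where no single gadget yields $\EQ_4$ outright and one must first realise an auxiliary binary function to repair the defective coordinate before invoking the interpolation lemma, and confirming that the $\csp_2$-to-bipartite-signature-grid encoding needs none of the vertex-merging adjustments used in the analogous argument for plain $\csp$.
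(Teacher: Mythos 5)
The paper does not prove this lemma --- it is stated as a citation of Cai, Lu and Xia, with only the remark that their `$\hol^c$' is presumably a typo --- so there is no internal proof to compare against; evaluating your argument on its own terms, it is correct.

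The forward direction $\Holp{\cF}\leq_T\csp_2(\cF)$ is exactly right: viewing edges as variables and vertices as constraints, every edge lies in exactly two constraint occurrences (twice within a single scope if it is a self-loop), so the parity condition of $\csp_2$ is automatic and the weighted sum equals $\Holant_\Omega$. For the converse, factoring through $\Holp{\cF\cup\{\EQ_4\}}$ and then reducing $\EQ_4$ away using the generalised equality $f$ is the natural route, and your three gadget cases all check out. Writing $f=\alpha\ket{\ba}+\beta\ket{\bar{\ba}}$ with $\alpha,\beta\in\AAnz$: when $\wt(\ba)\in\{0,4\}$ the matrix of $f$ is already in the form required by Lemma~\ref{lem:interpolate_equality4} with corner matrix $\smm{\alpha&0\\0&\beta}$ of full rank; when $\wt(\ba)=2$ the two-copy chain has support exactly $\{0000,1111\}$ with value $\alpha\beta$ at each, so its effective function is $\alpha\beta\cdot\EQ_4$; and when $\wt(\ba)\in\{1,3\}$ the self-loop on arguments $3,4$ yields $g=\alpha\ket{10}+\beta\ket{01}$, after which contracting $g$'s first leg against $f$'s first leg gives $\alpha^2\ket{0000}+\beta^2\ket{1111}$, again of the required shape with full-rank corner matrix $\smm{\alpha^2&0\\0&\beta^2}$.

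Two points are worth tightening. Your appeal to Lemma~\ref{lem:realisable} ``applied finitely often'' to supply the equalities $\EQ_{2k}$ in the $\csp_2$-to-holant encoding is slightly off, because the required arities grow with the instance and so cannot be adjoined to the fixed finite signature once and for all; what the argument actually needs, and what you clearly intend, is a direct per-instance substitution of each variable vertex by a chain of $\EQ_4$s (with a self-loop for $\EQ_2$), which is a polynomial-size construction and should be phrased as such. And the hypothesis $\alpha,\beta\in\AAnz$ is essential to every one of your cases and should be stated explicitly, since the paper's definition of ``generalised equality function'' (membership in $\cE$) technically admits functions with $\beta=0$ or even the zero function; it is, however, exactly the case relevant to every use of this lemma in the paper.
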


\begin{rem}
 Note that the statement of Lemma~5.2 in \cite{cai_dichotomy_2017} has `$\hol^c$' instead of plain `$\hol$', but the proof does not use pinning functions, so this is presumably a typo.
\end{rem}

\section{The quantum state perspective}
\label{s:quantum_states}

In Section \ref{s:vector_perspective}, we introduced the idea of considering functions as complex vectors.
This perspective is not only useful for proving Valiant's Holant Theorem (which is at the heart of the theory of holant problems), it also gives a connection to the theory of quantum computation.

In quantum computation and quantum information theory, the basic system of interest is a \emph{qubit} (quantum bit), which takes the place of the usual bit in standard computer science.
The state of a qubit is described by a vector\footnote{Strictly speaking, vectors only describe \emph{pure} quantum states: there are also \emph{mixed} states, which need to be described differently; but we do not consider those here.} in the two-dimensional complex Hilbert space $\CC^2$.
State spaces compose by tensor product, i.e.\ the state of $n$ qubits is described by a vector in $\left(\CC^2\right)\t{n}$, which is isomorphic to $\CC^{2^n}$.
Here, $\otimes$ denotes the tensor product of Hilbert spaces and tensor powers are defined analogously to tensor powers of matrices in Section~\ref{s:vector_perspective}.
Thus, the vector associated with an $n$-ary function can be considered to be a quantum state of $n$ qubits.
The vectors describing quantum states are usually required to have norm 1, but for the methods used here, multiplication by a non-zero complex number does not make a difference, so we can work with states having arbitrary norms.

Let $\{\ket{0},\ket{1}\}$ be an orthonormal basis for $\CC^2$.
This is usually called the \emph{computational basis}.
The induced basis on $\left(\CC^2\right)\t{n}$ is labelled by $\{\ket{\bx}\}_{\bx\in\{0,1\}^n}$ as a short-hand, e.g.\ we write $\ket{00\ldots 0}$ instead of $\ket{0}\otimes\ket{0}\otimes\ldots\otimes\ket{0}$.
This is exactly the same as the basis introduced in Section \ref{s:vector_perspective}.

Holographic transformations also have a natural interpretation in quantum information theory: going from an $n$-qubit state $\ket{f}$ to $M\t{n}\ket{f}$, where $M$ is some invertible 2 by 2 matrix, is a `stochastic local operation with classical communication' (SLOCC) \cite{bennett_exact_2000,dur_three_2000}.
These are physical operations that can be applied locally (without needing access to more than one qubit at a time) using classical (i.e.\ non-quantum) communication between the sites where the different qubits are held, and which succeed with non-zero probability.
{If two quantum states are equivalent under SLOCC, they can be used for the same quantum information tasks, albeit potentially with different probabilities of success.
Two $n$-qubit states $\ket{\psi}$ and $\ket{\phi}$ are equivalent under SLOCC if and only if there exist invertible complex 2 by 2 matrices $M_1,M_2,\ldots, M_n$ such that $\ket{\psi} = (M_1\otimes M_2\otimes \ldots \otimes M_n) \ket{\phi}$ \cite[Section~II.A]{dur_three_2000}.
In particular, SLOCC operations do not need to be symmetric under permutations.}

From now on, we will sometimes mix standard holant terminology (or notation) and quantum terminology (or notation).

\subsection{Entanglement and its classification}
\label{s:entanglement}

One major difference between quantum theory and preceding theories of physics (known as `classical physics') is the possibility of \emph{entanglement} in states of multiple systems.

\begin{definition}\label{def:entanglement}
 A state of multiple systems is \emph{entangled} if it cannot be written as a tensor product of states of individual systems.
\end{definition}

\begin{ex}
 In the case of two qubits,
 \begin{equation}
  \ket{00}+\ket{01}+\ket{10}+\ket{11}
 \end{equation}
 is a product state -- it can be written as $(\ket{0}+\ket{1})\otimes(\ket{0}+\ket{1})$.
 On the other hand, consider the state
 \begin{equation}
  \ket{00}+\ket{11}.
 \end{equation}
 It is impossible to find single-qubit states $\ket{f},\ket{g}\in\CC^2$ such that $\ket{f}\otimes\ket{g} = \ket{00}+\ket{11}$.
 {In function notation, this can be seen by noting that if $h(x,y)=f(x)g(y)$, then 
 \[
  h(0,0)h(1,1)-h(0,1)h(1,0) = f(0)g(0)f(1)g(1) - f(0)g(1)f(1)g(0) = 0,
 \]
 whereas for $h(x,y)=\EQ_2(x,y)$, we have $h(0,0)h(1,1)-h(0,1)h(1,0) = 1$.}
 Thus, $\ket{00}+\ket{11}$ is entangled.
\end{ex}

Where a state involves more than two systems, it is possible for some of the systems to be entangled with each other and for other systems to be in a product state with respect to the former.
We sometimes use the term \emph{genuinely entangled state} to refer to a state in which no subsystem is in a product state with the others.
The term \emph{multipartite entanglement} refers to entangled states in which more than two qubits are mutually entangled.

{Under the bijection between functions in vectors described in Section~\ref{s:vector_perspective}, a state vector is entangled if and only if the corresponding function is non-degenerate.
A state is genuinely entangled if and only if the corresponding function is non-decomposable.
Finally, a state has multipartite entanglement if and only if the corresponding function has a non-decomposable factor of arity at least three.
In other words, entangled states correspond to functions in $\allf\setminus\ang{\cU}$ and multipartite entangled states correspond to functions in $\allf\setminus\ang{\cT}$.}

Entanglement is an important resource in quantum computation, where it has been shown that quantum speedups are impossible without the presence of unboundedly growing amounts of entanglement \cite{jozsa_role_2003}.
Similarly, it is a resource in quantum information theory \cite{nielsen_quantum_2010}, featuring in protocols such as quantum teleportation \cite{bennett_teleporting_1993} and quantum key distribution \cite{ekert_quantum_1991}.
Many quantum information protocols have the property that two quantum states can be used to perform the same task if one can be transformed into the other by SLOCC, motivating the following equivalence relation.

\begin{definition}\label{def:SLOCC-equivalence}
 Two $n$-qubit states are \emph{equivalent under SLOCC} if one can be transformed into the other using SLOCC.
 More formally: suppose $\ket{f},\ket{g}\in(\CC^2)\t{n}$ are two $n$-qubit states. Then $\ket{f} \sim_{SLOCC} \ket{g}$ if and only if there exist invertible complex 2 by 2 matrices $M_1,M_2,\ldots, M_n$ such that $\left(M_1\otimes M_2\otimes \ldots \otimes M_n\right) \ket{f} = \ket{g}$.
\end{definition}
The equivalence classes of this relation are called \emph{entanglement classes} or \emph{SLOCC classes}.
{This definition is justified because SLOCC does not affect the decomposition of a state into tensor factors.
To see this, suppose $\ket{f}, \ket{g}$ are two $n$-qubit states satisfying Definition~\ref{def:SLOCC-equivalence}.
Furthermore, suppose $\ket{f}$ decomposes as $\ket{f_1^k}\otimes\ket{f_{k+1}^n}$ for some $k$ with $1\leq k<n$, where $\ket{f_1^k}$ is a state on the first $k$ qubits and $\ket{f_{k+1}^n}$ is a state on the last $(n-k)$ qubits.
Then
\begin{align*}
 \ket{g}
 &= \left(M_1\otimes M_2\otimes \ldots \otimes M_n\right) \ket{f} \\
 &= \left(M_1\otimes M_2\otimes \ldots \otimes M_n\right) \left(\ket{f_1^k}\otimes\ket{f_{k+1}^n}\right) \\
 &= \left((M_1\otimes\ldots\otimes M_k)\ket{f_1^k}\right) \otimes \left((M_{k+1}\otimes\ldots\otimes M_n)\ket{f_{k+1}^n}\right),
\end{align*}
so $\ket{g}$ decomposes in the same way as $\ket{f}$.
Since the matrices $M_1,M_2,\ldots, M_n$ are invertible, the converse also holds.
Hence $\ket{g}$ can be decomposed as a tensor product according to some partition if and only if $\ket{f}$ can be decomposed according to the same partition.
Therefore SLOCC does not affect entanglement.
Due to the correspondences between vectors and functions outlined in the first part of the section, the same holds for holographic transformations.}

For two qubits, there is only one class of entangled states: all entangled two-qubit states are equivalent to $\ket{00}+\ket{11}$ (the vector corresponding to $\EQ_2$) under SLOCC.
For three qubits, there are two classes of genuinely entangled states, the GHZ class and the $W$ class \cite{dur_three_2000}.
The former contains states that are equivalent under SLOCC to the GHZ state:
\begin{equation}
 \ket{\GHZ} := \frac{1}{\sqrt{2}}(\ket{000}+\ket{111}),
\end{equation}
the latter those equivalent to the $W$ state:
\begin{equation}
 \ket{W} := \frac{1}{\sqrt{3}}(\ket{001}+\ket{010}+\ket{100}).
\end{equation}
Note that, up to scalar factor, $\ket{\GHZ}$ is the vector corresponding to $\EQ_3$ and $\ket{W}$ is the vector corresponding to $\ONE_3$.

We say that a function has GHZ type if it is equivalent to the GHZ state under local holographic transformations and that a function has $W$ type if it is equivalent to the GHZ state under local holographic transformations.
Local holographic transformations include invertible scaling, so non-zero scalar factors do not affect the entanglement classification.
In the holant literature, GHZ-type functions have been called the \emph{generic case} and $W$-type functions have been called the \emph{double-root case}, cf.\ \cite[Section~3]{cai_holant_2012}.

The two types of genuinely entangled ternary functions can be distinguished as follows.

\begin{lemma}[\cite{li_simple_2006}]\label{lem:li}
 Let $f$ be a ternary function and write $f_{klm}:=f(k,l,m)$ for all $k,\ell,m \in\{0,1\}$.
 Then $f$ has GHZ type if the following polynomial in the function values is non-zero:
 \begin{equation}\label{eq:GHZ_polynomial}
  (f_{000}f_{111} - f_{010}f_{101} + f_{001}f_{110} - f_{011}f_{100})^2 - 4(f_{010}f_{100}-f_{000}f_{110})(f_{011}f_{101}-f_{001}f_{111}).
 \end{equation}
 The function $f$ has $W$ type if the polynomial \eqref{eq:GHZ_polynomial} is zero, and furthermore each of the following three expressions is satisfied:
 \begin{align}
  (f_{000}f_{011}\neq f_{001}f_{010}) &\vee (f_{101}f_{110}\neq f_{100}f_{111}) \label{eq:W1} \\
  (f_{001}f_{100}\neq f_{000}f_{101}) &\vee (f_{011}f_{110}\neq f_{010}f_{111}) \label{eq:W2} \\
  (f_{011}f_{101}\neq f_{001}f_{111}) &\vee (f_{010}f_{100}\neq f_{000}f_{110}). \label{eq:W3}
 \end{align}
 If the polynomial \eqref{eq:GHZ_polynomial} is zero and at least one of the three expressions evaluates to false, $f$ is decomposable.
 In fact, for any decomposable $f$, at least two of the expressions are false.
\end{lemma}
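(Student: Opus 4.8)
The plan is to work in the quantum-state picture and encode $f$ by its three \emph{matrix pencils}. For $j\in\{1,2,3\}$ let $A^{(j)}_0,A^{(j)}_1$ be the two $2\times 2$ matrices obtained by slicing the $2\times2\times2$ tensor of values of $f$ over its $j$-th argument (the remaining two arguments indexing rows and columns), and set $P_j(x,y):=\det\!\bigl(xA^{(j)}_0+yA^{(j)}_1\bigr)$, a binary quadratic form. Two elementary bookkeeping facts drive everything. First, \eqref{eq:GHZ_polynomial} equals the discriminant of $P_j$ for each $j$ — equivalently it is Cayley's hyperdeterminant of the $2\times2\times2$ tensor, which is symmetric under permuting the three slots; I would check this as a short symbolic identity for $j=3$ (where it comes out exactly, after expanding $\det(xA^{(3)}_0+yA^{(3)}_1)$) and note $j=1,2$ are the same identity after relabelling the arguments. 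Second, for each $j$ the clause ``\eqref{eq:Wj} is false'' is exactly ``the $x^2$- and $y^2$-coefficients of $P_j$ both vanish'', i.e.\ $\det A^{(j)}_0=\det A^{(j)}_1=0$; this is immediate by matching the two disjuncts of \eqref{eq:W1}, \eqref{eq:W2}, \eqref{eq:W3} against the two outer coefficients of the corresponding pencil.

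The GHZ part is then quick. If \eqref{eq:GHZ_polynomial}$\neq0$, then $\operatorname{disc}(P_j)\neq0$ for all $j$, so no $P_j$ is identically zero. But a decomposable ternary function cannot have this property: if argument $i$ factors off $f$ then for every $j\neq i$ the matrix $xA^{(j)}_0+yA^{(j)}_1$ is a rank-$\le 1$ outer product for all $x,y$, forcing $P_j\equiv0$ for (at least) two values of $j$. Hence $f$ is genuinely entangled, so by the D\"ur--Vidal--Cirac classification \cite{dur_three_2000} it is of GHZ or $W$ type. Now \eqref{eq:GHZ_polynomial} is a relative invariant of the local $\mathrm{GL}_2$-action (a local operation multiplies each $P_j$, hence its discriminant, by a nonzero scalar, and an operation on slot $j$ applies an invertible linear substitution to $(x,y)$, rescaling the discriminant by a nonzero square), so it vanishes on the entire $W$-class because it vanishes on $\ket{W}$ (direct check). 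Therefore $f$ cannot be $W$ type, so it is GHZ type.

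For the $W$/decomposable part, assume \eqref{eq:GHZ_polynomial}$=0$. If $f$ is decomposable, say argument $i$ factors off, then (as above) $P_j\equiv0$ for the two indices $j\neq i$, so in particular $\det A^{(j)}_0=\det A^{(j)}_1=0$ there, i.e.\ \eqref{eq:Wj} is false for those two $j$; this already yields the final sentence of the Lemma. Conversely, suppose some clause fails, say \eqref{eq:W1}. Then $\det A^{(1)}_0=\det A^{(1)}_1=0$, so $P_1(x,y)=b\,xy$ for a scalar $b$, and $0=\eqref{eq:GHZ_polynomial}=\operatorname{disc}(P_1)=b^2$, hence $P_1\equiv0$. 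I would now invoke the elementary fact that an identically-singular pencil of two $2\times2$ complex matrices has both members singular and, after a basis change on the row space and on the column space, either has a common zero row, a common zero column, or contains the zero matrix; translating through the slicing, each of these says one of the three arguments of $f$ can be separated off, so $f$ is decomposable. (The cases \eqref{eq:W2}, \eqref{eq:W3} are identical after a permutation of slots.) Combining the two directions: when \eqref{eq:GHZ_polynomial}$=0$, $f$ is decomposable iff at least one of \eqref{eq:W1}--\eqref{eq:W3} is false; so if all three hold, $f$ is non-decomposable, hence genuinely entangled, hence (by \cite{dur_three_2000} together with \eqref{eq:GHZ_polynomial}$=0$ excluding GHZ type, exactly as in the previous paragraph) of $W$ type.

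I expect the genuine work to be confined to the first paragraph: the discriminant identity and the translation of the $\vee$-clauses into vanishing leading coefficients, plus keeping the three-fold $S_3$-symmetry straight; the elementary singular-pencil classification and the appeal to \cite{dur_three_2000} are then routine. If one prefers to avoid citing \cite{dur_three_2000} for the ``$W$ type'' conclusion, it can be replaced by a normal-form argument: when $P_1$ has a double root, a local operation on slot $1$ moves the root to $[1:0]$, which makes the corresponding slice singular; reduce it to $\smm{1&0\\0&0}$ by local operations on slots $2,3$, and a short case analysis on the remaining slice either exhibits a decomposition or brings $f$ to $\ket{000}+\ket{101}+\ket{110}$, a bit-flip of $\ket{W}$.
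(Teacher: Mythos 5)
The paper does not prove this lemma: it is stated with the citation \cite{li_simple_2006} and used as a black box, so there is no in-paper proof to compare yours against. Your hyperdeterminant/matrix-pencil argument is correct and is essentially the standard route (and the one implicit in the cited reference). The key identifications are right: for each $j$, clause $j$ being false is precisely $\det A^{(j)}_0=\det A^{(j)}_1=0$; the polynomial \eqref{eq:GHZ_polynomial} is the discriminant of $P_3$ by direct expansion and, as Cayley's hyperdeterminant, is $S_3$-symmetric so equals $\operatorname{disc}(P_j)$ for all $j$; and a local $\mathrm{GL}_2$-action rescales each $\operatorname{disc}(P_j)$ by a nonzero square, making \eqref{eq:GHZ_polynomial} a relative SLOCC invariant. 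Two small points you defer should be stated explicitly to make the argument airtight. First, to conclude ``$W$ type'' from genuine entanglement together with \eqref{eq:GHZ_polynomial}$=0$, you need not only that the invariant vanishes on $\ket{W}$ but also that it is nonzero on $\ket{\GHZ}$ (a one-line check: it evaluates to $1$); you only verify the former and gesture at the latter with ``exactly as in the previous paragraph''. Second, the classification of identically singular $2\times 2$ pencils (both slices rank $\le 1$ with a common image, a common kernel, or one slice zero, each translating to a tensor factor of $f$) is true but is the load-bearing step in the converse direction; it deserves a short explicit lemma rather than an appeal to ``elementary fact''. Neither point is a gap in the sense of something that would fail, and your alternative normal-form argument in the last paragraph is a reasonable way to avoid citing \cite{dur_three_2000} outright.
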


The above lemma can be specialised to symmetric functions as follows.

\begin{lemma}\label{lem:li_symmetric}
 Let $f$ be a ternary symmetric function and write $f=[f_0,f_1,f_2,f_3]$.
 Then $f$ has GHZ type if the following polynomial in the function values is non-zero:
 \begin{equation}\label{eq:GHZ_polynomial_symmetric}
  (f_0 f_3 - f_1 f_2)^2 - 4(f_1^2 - f_0 f_2)(f_2^2 - f_1 f_3) \neq 0.
 \end{equation}
 The function $f$ has $W$ type if the above polynomial is zero and furthermore
 \[
  (f_1^2 \neq f_0 f_2) \vee (f_2^2 \neq f_1 f_3).
 \]
 If the polynomial in \eqref{eq:GHZ_polynomial_symmetric} is zero and the above expression evaluates to false, $f$ is decomposable.
\end{lemma}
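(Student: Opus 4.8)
The plan is to obtain Lemma~\ref{lem:li_symmetric} by direct substitution of the symmetric form of $f$ into the polynomial and the disjunctions of Lemma~\ref{lem:li}, and to check that everything collapses to the claimed expressions. There is no substantial obstacle here beyond careful bookkeeping; the one point requiring a moment's thought is the decomposability clause, discussed below.

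First I would record the values of a symmetric ternary function in terms of Hamming weight: $f_{000}=f_0$, $f_{001}=f_{010}=f_{100}=f_1$, $f_{011}=f_{101}=f_{110}=f_2$, and $f_{111}=f_3$. Substituting into the alternating sum appearing in \eqref{eq:GHZ_polynomial} gives $f_{000}f_{111}-f_{010}f_{101}+f_{001}f_{110}-f_{011}f_{100} = f_0f_3 - f_1f_2 + f_1f_2 - f_1f_2 = f_0f_3-f_1f_2$, while the two bracketed factors become $f_{010}f_{100}-f_{000}f_{110} = f_1^2-f_0f_2$ and $f_{011}f_{101}-f_{001}f_{111} = f_2^2-f_1f_3$. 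Hence the polynomial \eqref{eq:GHZ_polynomial} equals $(f_0f_3-f_1f_2)^2 - 4(f_1^2-f_0f_2)(f_2^2-f_1f_3)$, which is exactly the left-hand side of \eqref{eq:GHZ_polynomial_symmetric}. By the first part of Lemma~\ref{lem:li}, this establishes the GHZ-type criterion.

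Next I would treat the three disjunctions \eqref{eq:W1}, \eqref{eq:W2}, \eqref{eq:W3}. Under the same substitution, each of the six products appearing across them is one of $f_0f_2$, $f_1^2$, $f_2^2$, $f_1f_3$, and a direct check shows that each of the three disjunctions becomes $(f_1^2\neq f_0f_2)\vee(f_2^2\neq f_1f_3)$. Thus all three collapse to the single expression displayed in the statement. Combining this with the first part: by Lemma~\ref{lem:li}, $f$ has $W$ type precisely when \eqref{eq:GHZ_polynomial_symmetric} fails (i.e.\ the polynomial is zero) and $(f_1^2\neq f_0f_2)\vee(f_2^2\neq f_1f_3)$ holds, as claimed.

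Finally, for the decomposability clause, suppose the polynomial in \eqref{eq:GHZ_polynomial_symmetric} is zero and the expression $(f_1^2\neq f_0f_2)\vee(f_2^2\neq f_1f_3)$ evaluates to false. Then $f_1^2=f_0f_2$ and $f_2^2=f_1f_3$, which by the computation above makes all three of \eqref{eq:W1}--\eqref{eq:W3} false; in particular at least one of them is false, so the last part of Lemma~\ref{lem:li} gives that $f$ is decomposable. The only subtlety worth flagging is exactly this: in the symmetric setting the single ``$W$-condition'' being false forces all three original expressions to be false at once, which is more than enough to trigger the decomposability conclusion of Lemma~\ref{lem:li}.
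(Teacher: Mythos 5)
Your proposal is correct and takes exactly the approach the paper intends: the paper presents Lemma~\ref{lem:li_symmetric} with no explicit proof, introducing it only as a specialisation of Lemma~\ref{lem:li} to the symmetric case, and your direct substitution of $f_{000}=f_0$, $f_{001}=f_{010}=f_{100}=f_1$, $f_{011}=f_{101}=f_{110}=f_2$, $f_{111}=f_3$ into \eqref{eq:GHZ_polynomial} and \eqref{eq:W1}--\eqref{eq:W3} is precisely that specialisation. Your remark on the decomposability clause --- that in the symmetric case the single disjunction failing makes all three of \eqref{eq:W1}--\eqref{eq:W3} fail simultaneously, hence a fortiori at least one fails, which is what Lemma~\ref{lem:li} requires --- correctly identifies and resolves the one point where the argument needs care.
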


For joint states of more than three qubits, there are infinitely many SLOCC classes.
It is possible to partition these into families which share similar properties.
Yet, so far, there is no consensus on how to partition the classes: there are different schemes for partitioning even the four-qubit entanglement classes, yielding different families \cite{verstraete_four_2002,lamata_inductive_2007,backens_inductive_2016}.

A \emph{generalised GHZ state} is a vector corresponding (up to scalar factor) to $\EQ_k$ for some integer $k\geq 3$.
A \emph{generalised $W$ state} is a vector corresponding (up to scalar factor) to $\ONE_k$ for some integer $k\geq 3$.

\subsection{The existing results in the quantum picture}
\label{s:existing_quantum}

{Using the correspondence between vectors and functions,} several of the existing dichotomies have straightforward descriptions in the quantum picture.
The tractable cases of Theorem~\ref{thm:Holant-star} can be described as follows:

\begin{itemize}
 {
 \item The case $\cF\subseteq\avg{\cT}$ corresponds to vectors with no multipartite entanglement.
 Unbounded multipartite entanglement is needed for quantum computation to offer any advantage over non-quantum computation \cite{jozsa_role_2003}, so it makes sense that its absence would lead to a holant problem that is polynomial-time computable.
 \item In the cases $\cF\subseteq\avg{O\circ\cE}$ or $\cF\subseteq\avg{K\circ\cE}$, assuming $\cF\nsubseteq\avg{\cT}$, there is GHZ-type multipartite entanglement.
 To see this, note first that if $f\in\ang{\cE}\setminus\avg{\cT}$ is non-decomposable, then $f$ must have arity at least 3 and be non-zero on exactly two \new{complementary} inputs.
 Suppose $f$ has arity $n$, and let $\ba\in\new{\{0,1\}^n}$ be such that $f(\ba)\neq 0$. \new{Without loss of generality, assume $a_1=0$ (if $a_1=1$, replace $\ba$ by $\bar{\ba}$)}.
 Then
 \[
  \ket{f} = \left(\pmm{f(\ba)&0\\0&f(\bar{\ba})} \otimes \bigotimes_{k=2}^n X^{a_n}\right) \ket{\EQ_n},
 \]
 where $X^0$ is the identity matrix; hence $\ket{f}\sim_{SLOCC}\ket{\EQ_n}$.
 Further holographic transformations do not affect the equivalence under SLOCC.
 As formally shown in \cite[Lemma~46]{backens_holant_2018}, the sets $\avg{O\circ\cE}$ and $\avg{K\circ\cE}$ are already closed under taking gadgets, so every non-decomposable function of arity $n$ is SLOCC-equivalent to $\EQ_n$.
 This means it is impossible to realise $W$-type multipartite entanglement \new{from $\cF$} via gadgets, which again indicates these cases are insufficient to describe full quantum computation.
 \item Finally, in the case $\cF\subseteq\avg{K\circ\cM}$ or $\cF\subseteq\avg{KX\circ\cM}$, again assuming $\cF\nsubseteq\avg{\cT}$,  there is $W$-type multipartite entanglement.
 To see this, note first that if $f\in\ang{\cM}\setminus\ang{\cT}$ is non-decomposable, then $f$ must have arity at least 3.
 Suppose $n:=\ari(f)\geq 3$ and there exists an index $k\in[n]$ such that $f$ is 0 on the bit string that has a 1 only on position $k$ and zeroes elsewhere.
 Then by the definition of $\cM$, the function $f$ is 0 whenever input $k$ is non-zero; hence $f$ can be decomposed as $f(\vc{x}{n}) = \dl_0(x_k) f'(\vc{x}{k-1},x_{k+1}\zd x_{n})$.
 Thus, any non-decomposable $f\in\ang{\cM}$ has support on all bit strings of Hamming weight exactly 1.
 Now suppose $f\in\cM$ is a non-decomposable function of arity $n$, then if $f(0\ldots 0)=0$, we have
 \[
  \ket{f} = \left(\bigotimes_{k=1}^n \pmm{1&0\\0&f(\be_k)}\right) \ket{\ONE_n},
 \]
 where $\be_k$ for $1\leq k\leq n$ is the $n$-bit string that contains a single 1 in position $k$ and zeroes elsewhere.
 If $f(0\ldots 0)\neq 0$, then
 \[
  \ket{f} = f(0\ldots 0) \left(\bigotimes_{k=1}^n \pmm{1&\frac{1}{n}\\0&\frac{f(\be_k)}{f(0\ldots 0)}}\right) \ket{\ONE_n}.
 \]
 In both cases $\ket{f}\sim_{SLOCC}\ket{\ONE_n}$, i.e.\ any non-decomposable function in $\cM$ has $W$-type entanglement.
 Further holographic transformations do not affect SLOCC-equivalence.
 Again, it was shown in \cite[Lemma~46]{backens_holant_2018} that  $\ang{K\circ\cM}$ and $\ang{KX\circ\cM}$ are closed under taking gadgets, so every non-decomposable function of arity $n$ they contain is SLOCC-equivalent to $\ONE_n$. In particular, it is impossible to realise GHZ-type multipartite entanglement.
 }
\end{itemize}

The family of affine functions (cf.\ Definition~\ref{dfn:affine_function}) also has a natural description in quantum information theory: the quantum states corresponding to affine functions are known as \emph{stabiliser states} \cite{dehaene_clifford_2003}.
These states and the associated operations play an important role in the context of quantum error-correcting codes \cite{gottesman_heisenberg_1998} and are thus at the core of most attempts to build large-scale quantum computers \cite{devitt_quantum_2013}.
The fragment of quantum theory consisting of stabiliser states and operations that preserve the set of stabiliser states can be efficiently simulated on a classical computer \cite{gottesman_heisenberg_1998}; this result is known as the Gottesman-Knill theorem.
The connection between affine functions and stabiliser quantum mechanics has recently been independently noted and explored in \cite{cai_clifford_2018}.

The examples in this section show that holant problems and quantum information theory are linked not only by quantum algorithms being an inspiration for holographic ones: instead, many of the known tractable set of functions of various holant problems correspond to state sets that are of independent interest in quantum computation and quantum information theory.
The one exception are local affine functions, which seem not to have been described in the quantum literature, possibly because this set does not contain any interesting unitary operations.

The restriction to algebraic numbers is not a problem from the quantum perspective, not even when considering the question of universal quantum computation: there exist (approximately) universal sets of quantum operations where each operation can be described using algebraic complex coefficients.
One such example is the Clifford+T gate set \cite{boykin_universal_1999,giles_exact_2013}, which is generated by the operators
\[
 \pmm{1&0\\0&e^{i\pi/4}}, \qquad
 \frac{1}{\sqrt{2}}\pmm{1&1\\1&-1}, \quad\text{and}\quad
 \pmm{1&0&0&0\\0&1&0&0\\0&0&0&1\\0&0&1&0}.
\]

{\subsection{Affine functions, holographic transformations, and entanglement}

The lemmas in this section, like those in Section~\ref{s:linear-algebra} are straightforward.
They will be useful in the complexity classification proofs later.

The first results is well known in quantum theory.
It is also} closely related to the theory of functional clones \cite{bulatov_expressibility_2013,bulatov_functional_2017} and holant clones \cite{backens_holant_2018}, though we will not introduce the full formalisms of those frameworks here.
{Instead,} we give a proof in the language of gadgets.

\begin{lemma}\label{lem:affine_closed}
 The set of affine functions is closed under taking gadgets, i.e.\ $S(\cA)=\cA$.
\end{lemma}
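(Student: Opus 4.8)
The claim is that $S(\cA) = \cA$, i.e.\ any gadget built entirely from affine functions has an affine effective function. Since $\cA \sse S(\cA)$ is immediate (a single vertex is a gadget), the content is the inclusion $S(\cA) \sse \cA$. The plan is to unfold the definition of the effective function of a gadget and show it is obtained from the constituent affine functions by exactly the operations listed in Lemma~\ref{lem:cai_clifford}, namely tensor products, permutations of arguments, identification of arguments, and summing out an argument. Each of these preserves $\cA$, so the effective function lies in $\cA$.

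More concretely: let $\Gamma = (\cA, G, \pi)$ be a gadget with $G = (V, E, E')$, where $V = \{v_1, \ldots, v_r\}$, the normal edges are $E$, and the dangling edges are $E' = \{e_1, \ldots, e_n\}$. First take the tensor product $\bigotimes_{j=1}^r f_{v_j}$ of all the vertex functions; by part~1 of Lemma~\ref{lem:cai_clifford}, this is affine (inductively, using closure under binary tensor products and that $\ang{\cA} = \cA$, which the excerpt records). This big function has one argument per edge-endpoint in the gadget. Each normal edge $e \in E$ has two endpoints, contributing two arguments; identifying these two arguments for every normal edge (part~3 of Lemma~\ref{lem:cai_clifford}, applied once per normal edge) and then summing each of the resulting identified variables over $\{0,1\}$ (part~4 of Lemma~\ref{lem:cai_clifford}, applied once per normal edge) realises the internal sum $\sum_{\sigma: E \to \{0,1\}}$ in the definition of $g_\Gamma$. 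What remains are exactly the arguments corresponding to the dangling edges $e_1, \ldots, e_n$; a final permutation (part~2 of Lemma~\ref{lem:cai_clifford}) puts them in the order prescribed by $\pi$. Since each step stays within $\cA$, we conclude $g_\Gamma \in \cA$, hence $S(\cA) \sse \cA$.

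The only subtlety — and the one step worth spelling out carefully — is bookkeeping the correspondence between the combinatorial structure of the gadget (vertices, their incident edges, dangling versus normal edges, self-loops) and the sequence of symbolic operations on functions. Self-loops require identifying two arguments of the \emph{same} vertex function before summing, which is still covered by parts~3 and~4 of Lemma~\ref{lem:cai_clifford}. One should also note that the scaled notion of realisability does not cause problems: by Lemma~\ref{lem:scaling} and the fact that $\cA$ is closed under multiplication by nonzero constants (the constant $c$ in Definition~\ref{dfn:affine_function} is arbitrary in $\AA$), replacing $g_\Gamma$ by $c \cdot g_\Gamma$ keeps it in $\cA$. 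I expect the main obstacle to be purely expository: stating the translation between gadget combinatorics and the four operations precisely enough to be convincing without drowning in notation.
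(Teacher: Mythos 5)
Your proposal is correct and takes essentially the same approach as the paper: both arguments reduce the closure of $\cA$ under taking gadgets to the four closure operations in Lemma~\ref{lem:cai_clifford} (tensor product, permutation, identification of arguments, summing out a variable). The paper presents this as an induction on the number of normal edges, cutting one edge at a time, whereas you process all normal edges in one pass after forming the full tensor product of vertex functions — a cosmetic difference, not a different route.
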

\begin{proof}
 Suppose $G=(V,E,E')$ is a graph with vertices $V$, (normal) edges $E$, and dangling edges $E'=\{\vc{e}{n}\}$, where $E\cap E'=\emptyset$.
 Let $\Gamma=(\cA,G,\pi)$ be a gadget with effective function
 \[
  g_\Gamma(\vc{y}{n}) = \sum_{\sigma:E\to\{0,1\}} \prod_{v\in V} f_v(\hat{\sigma}|_{E(v)}),
 \]
 where $\hat{\sigma}$ is the extension of $\sigma$ to domain $E\cup E'$ which satisfies $\hat{\sigma}(e_k)=y_k$ for all $k\in[n]$, and $\hat{\sigma}|_{E(v)}$ is the restriction of $\hat{\sigma}$ to edges (both normal and dangling) which are incident on $v$.
 
 We prove $g_\Gamma\in\cA$ by induction on the number of normal edges $m:=\abs{E}$.
 
 The base case $m=0$ implies that all edges are dangling and $g_\Gamma = \bigotimes_{v\in V} f_v$.
 Then by associativity of $\otimes$ and by repeated application of Lemma~\ref{lem:cai_clifford}~(1), we have $g_\Gamma\in\cA$.
 
 For the inductive step, assume the desired property holds if there are $m$ normal edges.
 Consider a gadget $\Gamma=(\cA,G,\pi)$ with $m+1$ normal edges and $n$ dangling edges.
 Pick some $e=\{u,v\}\in E$ and `cut it', i.e.\ replace it by two dangling edges $e_{n+1},e_{n+2}$, where $e_{n+1}$ is incident on $u$ and $e_{n+2}$ is incident on $v$.
 Let $\bar{E}=E\setminus\{e\}$ and let $E''=E'\cup\{e_{n+1},e_{n+2}\}$.
 The resulting graph is $G'=(V,\bar{E},E'')$.
 Since $G'$ has the same vertices as $G$ and each vertex has the same degree in both graphs, $\Gamma'=(\cA,G',\pi)$ is a valid gadget, where $\pi$ is the same map as before.
 Then
 \[
  g_{\Gamma'}(\vc{y}{n+2}) = \sum_{\sigma:\bar{E}\to\{0,1\}} \prod_{v\in V} f_v(\hat{\sigma}'|_{E(v)}),
 \]
 where $\hat{\sigma}'$ is the extension of $\sigma$ to domain $\bar{E}\cup E''$ which satisfies $\hat{\sigma}'(e_k)=y_k$ for all $k\in[n+2]$.
 Now $\Gamma'$ is a gadget with $m$ normal edges, so by the inductive hypothesis, $g_{\Gamma'}(\vc{y}{n+2})\in\cA$.
 
 But $g_\Gamma(\vc{y}{n}) = \sum_{y_{n+1}\in\{0,1\}} g_{\Gamma'}(\vc{y}{n},y_{n+1},y_{n+1})$, i.e.\ $g_\Gamma = (g_{\Gamma'}^{y_{n+2}=y_{n+1}})^{y_{n+1}=*}$.
 Thus, by Lemma~\ref{lem:cai_clifford}~(3) and (4), we have $g_\Gamma\in\cA$.
\end{proof}

\begin{lemma}\label{lem:cS_cA-group}
 The set $\cS_\cA := \{L\in\GL\mid f_L\in\cA\}$ is a group under matrix multiplication.
\end{lemma}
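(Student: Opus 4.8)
The plan is to verify directly that $\cS_\cA$ is a nonempty subset of $\GL$ that is closed under matrix multiplication and under inversion, which makes it a subgroup. Several ingredients are immediate from Definition~\ref{dfn:affine_function}. First, $f_I = \EQ_2$ is the indicator of the subspace $\{x = y\}$, so $I \in \cS_\cA$ and $\cS_\cA \neq \emptyset$. Likewise $f_X = \NEQ$ is affine, and for $Z := \smm{1&0\\0&-1}$ the function $f_Z(x,y) = (-1)^x \EQ_2(x,y)$ is affine, so $X, Z \in \cS_\cA$; these two matrices will be needed for the inverse step. I will also record two stability properties of $\cS_\cA$: it is closed under multiplication by a nonzero scalar, since $f_{cL} = c\cdot f_L$ and rescaling an affine function by a constant in $\AAnz$ keeps it affine; and it is closed under transpose, since $f_{L^T}(x,y) = f_L(y,x)$ is obtained from $f_L$ by permuting its two arguments, hence lies in $\cA$ by Lemma~\ref{lem:cai_clifford}~(2).

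Closure under multiplication is where the gadget machinery is used, but it is short. For $L, M \in \cS_\cA$ we have $LM \in \GL$ and, for all $x, z \in \{0,1\}$, $f_{LM}(x,z) = (LM)_{xz} = \sum_{y \in \{0,1\}} f_L(x,y)\,f_M(y,z)$. The right-hand side is the effective function of the two-vertex gadget obtained by joining one copy of $f_L$ and one copy of $f_M$ along a single internal edge (the variable $y$), leaving the variables $x$ and $z$ as dangling edges. Since both vertices carry functions from $\cA$, this is a gadget over $\cA$, and Lemma~\ref{lem:affine_closed} ($S(\cA) = \cA$) gives $f_{LM} \in \cA$; hence $LM \in \cS_\cA$.

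The main obstacle is closure under inversion, for which the key trick is to route through the adjugate. A direct $2\times 2$ computation shows $\operatorname{adj}(L) = ZX\,L^T\,XZ$, where $\operatorname{adj}(L) = \smm{L_{11}&-L_{01}\\-L_{10}&L_{00}}$; together with $L^{-1} = (\det L)^{-1}\operatorname{adj}(L)$ this gives $L^{-1} = (\det L)^{-1}\,ZX\,L^T\,XZ$. For $L \in \cS_\cA$ we have $L^T \in \cS_\cA$ by the transpose stability above, and since $\cS_\cA$ contains $X$ and $Z$ and is closed under products, the matrix $ZX L^T XZ$ lies in $\cS_\cA$; multiplying by the nonzero scalar $(\det L)^{-1}$ preserves membership, so $L^{-1} \in \cS_\cA$, which completes the proof. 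Should the adjugate identity look unmotivated, an alternative for this last step is to note that there are only finitely many binary affine functions up to a scalar factor, so the image of the monoid $\cS_\cA$ under the projection $\GL \to \operatorname{PGL}_2(\AA)$ is a finite cancellative monoid and therefore a group; one then pulls this back to $\cS_\cA$ using scalar-stability.
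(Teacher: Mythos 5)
Your proof is correct, and it takes a genuinely different route for the hard step, which is closure under inversion. The paper proves that step by a direct parametrisation: it writes $f_A(x,y) = ci^{\ell(x,y)}(-1)^{q(x,y)}\chi(x,y)$, reduces $\ell$ and $q$ to normal form ($\ell$ linear with no constant, $q$ a single cross term), and then splits into cases according to the support indicator $\chi\in\{[1,1,1],\,[1,0,1],\,[0,1,0]\}$, computing the inverse explicitly in each case. You avoid the case analysis entirely by first recording three stability facts -- closure under nonzero scalar multiplication, closure under transpose (via Lemma~\ref{lem:cai_clifford}~(2)), and $X,Z\in\cS_\cA$ -- and then reducing inversion to products already known to stay in $\cS_\cA$ via the adjugate identity $L^{-1} = (\det L)^{-1}\,ZX\,L^T\,XZ$, which one can verify by a one-line $2\times 2$ computation. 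Both arguments use $S(\cA)=\cA$ (Lemma~\ref{lem:affine_closed}) in the same way for closure under products; the divergence is entirely in the inverse step. What the paper's approach buys is an explicit description of the matrices in $\cS_\cA$ (useful to a reader who wants to know what they look like); what yours buys is brevity and the absence of normal forms and case splits, at the cost of pulling the adjugate identity out of a hat (which you acknowledge and hedge with the finite-image-in-$\operatorname{PGL}_2$ fallback, itself valid since there are finitely many binary affine functions up to scalar and a finite cancellative monoid is a group).
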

\begin{proof}
 {Closure under matrix multiplication follows directly from Lemma~\ref{lem:affine_closed}.
 The identity matrix corresponds to $\EQ_2$, which is affine, so $\cS_\cA$ contains the identity.
 For closure under inverse, note that by Definition~\ref{dfn:affine_function} any matrix $A\in\cS_\cA$ corresponds to a function
 \[
  f_A(x,y) = ci^{\ell(x,y)}(-1)^{q(x,y)}\chi(x,y),
 \]
 where $c\in\AAnz$ by invertibility of $A$, $\ell,q:\{0,1\}^2\to\{0,1\}$ with $\ell$ being linear and $q$ being quadratic, and $\chi\in\{[1,1,1],\,[1,0,1],\,[0,1,0]\}$ is the indicator function for an affine support.
 (With support on some other affine subspace of $\{0,1\}^2$, $A$ could not be invertible.)
 Constant terms in $\ell$ or $q$ can be absorbed into $c$, and for Boolean variables, $x^2=x$, so non--cross-terms from $q$ can be absorbed into $\ell$.
 Thus, without loss of generality, $\ell(x,y)=\ld x + \mu y$ and $q(x,y)= \kappa xy$ for some $\ld,\mu\in\{0,1,2,3\}$ and $\kappa\in\{0,1\}$.
 Let $P:=\smm{1&0\\0&i}$.
 
 Now if $\chi=[1,1,1]$, then
   \[
    A = c\pmm{1 & i^{\mu} \\ i^{\ld} & (-1)^\kappa i^{\ld+\mu}} = c P^\ld \pmm{1&1\\1&(-1)^\kappa} P^\mu.
   \]
 For $A$ to be invertible, $\kappa$ must be 1.
 But then $A^{-1} = c^{-1} P^{4-\mu} \smm{1&1\\1&-1} P^{4-\ld} \in \cS_\cA$.
 On the other hand, if $\chi=[1,0,1]$ or $\chi=[0,1,0]$, then
   \[
    A = c\pmm{1 & 0 \\ 0 & (-1)^\kappa i^{\ld+\mu}} \qquad\text{or}\qquad
    A = c\pmm{0 & i^{\mu} \\ i^{\ld} & 0}.
   \]
 Again, in both cases $A^{-1}$ has the same form as $A$, so $A^{-1}\in\cS_\cA$.
 Hence $\cS_\cA$ is a group.}
\end{proof}

Up to scaling, the unary elements of $\cA$ are
 \[
  \dl_0:=[1,0],\quad \dl_1:=[0,1],\quad \dl_+:=[1,1],\quad \dl_-:=[1,-1],\quad \dl_i:=[1,i],\quad\text{and}\quad \dl_{-i}:=[1,-i].
 \]
We say the pairs $\{\dl_0,\dl_1\}$, $\{\dl_+,\dl_-\}$, and $\{\dl_i,\dl_{-i}\}$ are orthogonal pairs\footnote{This is because the two corresponding vectors are orthogonal \new{under the complex inner product}.}, any other pair of distinct unary functions $u,v\in\cA$ is called non-orthogonal.
{It is straightforward to see that if $u,u^\perp$ are an orthogonal pair and $v\in\cA$ is a unary function that is not a scaling of $u$ or $u^\perp$, then $v(x)=\alpha\cdot u(x) + \beta\cdot u^\perp(x)$ where $\alpha^4=\beta^4$.}

\begin{lemma}\label{lem:cS-cA}
 Suppose $M\in\cS=\left\{L\in\GL \,\middle|\, L^T\circ\{\EQ_2,\dl_0,\dl_1\}\sse\cA \right\}$ and $M^T\circ\dl_+,M^T\circ\dl_-\in\cA$.
 Then $M\in\cS_\cA=\{L\in\GL\mid f_L\in\cA\}$.
\end{lemma}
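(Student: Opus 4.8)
The plan is to normalise $M$ using the group $\cS_\cA$ (Lemma~\ref{lem:cS_cA-group}) and then carry out a short finite computation. The key observation is that, for any $R\in\cS_\cA$, replacing $M$ by $MR$ changes neither the hypotheses nor the conclusion. For the conclusion: $f_{MR}(x,z)=\sum_y f_M(x,y)f_R(y,z)$ is the effective function of the gadget joining $f_M$ and $f_R$ along one edge, so by Lemma~\ref{lem:affine_closed} together with $R,R^{-1}\in\cS_\cA$ we get $f_{MR}\in\cA\iff f_M\in\cA$. For the hypotheses: $(MR)^T\circ h=R^T\circ(M^T\circ h)$ for every unary function $h$ (and similarly with $\EQ_2$ in place of $h$), where $R^T\in\cS_\cA$ since $\cA$ is permutation-closed, so again Lemma~\ref{lem:affine_closed} shows that $M^T\circ h\in\cA$ is equivalent to $(MR)^T\circ h\in\cA$. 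Now $M^T\circ\dl_0=[M_{00},M_{01}]$ is a non-zero affine unary function, and since $\cS_\cA$ contains $\smm{1&1\\1&-1}$, $\smm{1&0\\0&i}$ and all non-zero scalar matrices, one checks directly that a suitable $R\in\cS_\cA$, applied on the right, brings the first row of $M$ to $[1,0]$. Hence I may assume $M=\smm{1&0\\c&d}$, with $d\neq 0$ by invertibility.

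With $M$ in this form the remaining hypotheses say that $M^T\circ\dl_1=[c,d]$, $M^T\circ\dl_+=[1+c,d]$ and $M^T\circ\dl_-=[1-c,-d]$ all lie in $\cA$. Since the unary affine functions are, up to scaling, exactly $\dl_0,\dl_1,\dl_\pm,\dl_{\pm i}$, a unary function $[p,q]$ lies in $\cA$ iff $pq=0$ or $p^4=q^4$; as $d\neq 0$, the three conditions become: $c=0$ or $c^4=d^4$; $c=-1$ or $(1+c)^4=d^4$; and $c=1$ or $(1-c)^4=d^4$. I claim these force $c=0$. Suppose not, so $c^4=d^4$. If $c=1$ then the second condition forces $(1+c)^4=d^4$, i.e.\ $16=1$; if $c=-1$ the third forces $(1-c)^4=d^4$, i.e.\ $16=1$; both are absurd. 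Otherwise $1\pm c\neq 0$, so $(1+c)^4=(1-c)^4=c^4$, whence $\bigl(\tfrac{1+c}{1-c}\bigr)^4=1$; this ratio cannot be $-1$, and the remaining values $1,i,-i$ give $c=0,i,-i$ respectively, so $c\in\{i,-i\}$ as $c=0$ is excluded --- but then $c^4=1$ while $(1+c)^4=(1\pm i)^4=-4$, a contradiction. Hence $c=0$, so $M=\smm{1&0\\0&d}$, and then $M^T\circ\dl_+=[1,d]\in\cA$ with $d\neq 0$ forces $d^4=1$. Therefore $f_M$ is a diagonal affine function, i.e.\ $M\in\cS_\cA$, and undoing the normalisation (the original $M$ equals this one right-multiplied by an element of $\cS_\cA$) gives $M\in\cS_\cA$ in general.

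The one genuine piece of work is the finite computation showing $c=0$: it is elementary but the exceptional values $c=\pm 1$ and $c=\pm i$ have to be handled individually. I would also note that only the $\dl_0,\dl_1$ part of the hypothesis $M\in\cS$ gets used --- the condition $M^T\circ\EQ_2\in\cA$ is preserved by the normalisation but never invoked --- and that one could instead skip the normalisation and case-analyse the two affine unaries $M^T\circ\dl_0$ and $M^T\circ\dl_1$ directly, using that a non-orthogonal affine unary has the form $\alpha u+\beta u^\perp$ with $\alpha^4=\beta^4$; that route is also routine but involves more cases.
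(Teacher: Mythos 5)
Your proof is correct, and it takes a somewhat different route from the paper's. The paper parameterises $M^T\circ\dl_0=\ld u$ and $M^T\circ\dl_1=\mu v$ with $u,v$ among the six affine unaries, then case-splits on whether $u$ and $v$ form an orthogonal pair: the orthogonal case gives $M^T=M'\smm{\ld&0\\0&\mu}$ with $M'$ drawn from a fixed list of affine matrices, and the $\dl_\pm$ conditions force $\ld^4=\mu^4$; the non-orthogonal case yields a direct contradiction. You instead exploit Lemma~\ref{lem:cS_cA-group} more aggressively, right-multiplying by $\cS_\cA$ to normalise $M$ so that $M^T\circ\dl_0=\dl_0$, which reduces the whole problem to a two-parameter polynomial computation in $c,d$. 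Your observation that the hypotheses and conclusion are both invariant under this normalisation is the key step, and you justify it carefully (including the $\EQ_2$ part, even though neither proof ends up using it). The trade-off is roughly this: the paper's orthogonal/non-orthogonal split keeps the argument structural and avoids reducing to coordinates, at the cost of two genuinely different cases; your normalisation unifies things into a single $2\times 2$ matrix with one free row, at the cost of a short case analysis over $c\in\{0,\pm 1,\pm i\}$ inside one linear computation. Both are elementary finite checks of comparable length. Your closing remark that only the $\dl_0,\dl_1$ part of $M\in\cS$ is ever used matches what happens in the paper's proof as well, and is worth keeping.
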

\begin{proof}
 {The property $M\in\cS$ implies there are $u,v\in\{\dl_0,\dl_1,\dl_+,\dl_-,\dl_i,\dl_{-i}\}$ and $\ld,\mu\in\AAnz$ such that $M^T\circ\dl_0=\ld\cdot u$ and $M^T\circ\dl_1=\mu\cdot v$.
 But $\dl_\pm(x)=\dl_0(x)\pm\dl_1(x)$, so by linearity $(M^T\circ\dl_\pm)(x) = \ld\cdot u(x) \pm\mu\cdot v(x)$.
 
 First, suppose $u$ and $v$ are orthogonal. 
 Then $M^T = M'\smm{\ld&0\\0&\mu}$ where
 \[
  M' \in \left\{ \pmm{1&0\\0&1}, \pmm{0&1\\1&0}, \pmm{1&1\\1&-1}, \pmm{1&1\\-1&1}, \pmm{1&1\\i&-i} , \pmm{1&1\\-i&i} \right\},
 \]
 since this set contains all matrices that map $\dl_0$ and $\dl_1$ to a pair of orthogonal functions in $\cA$, including permutations.
 It is straightforward to check that if $u$ and $v$ are orthogonal, then $M^T\circ\dl_\pm\in\cA$ if and only if $\ld^4=\mu^4$.
 But then $M^T$ is a product of two matrices corresponding to functions in $\cA$, so $M\in\cS_\cA$ by Lemma~\ref{lem:affine_closed}.
 
 Now suppose $u$ and $v$ are not orthogonal; denote by $u^\perp$ the function which forms an orthogonal pair with $u$.
 Then there exist $\alpha,\beta\in\AAnz$ with $\alpha^4=\beta^4$ such that $v(x) = \alpha\cdot u(x) + \beta\cdot u^\perp(x)$.
 Thus
 \[
  (M^T\circ\dl_\pm)(x) = (\ld\pm\mu\alpha)\cdot u(x) \pm\mu\beta\cdot u^\perp(x).
 \]
 These two functions are in $\cA$ if and only if both $(\ld+\mu\alpha)^4=\mu^4\beta^4$ and $(\ld-\mu\alpha)^4=\mu^4\beta^4$.
 That means
 \[
  (\ld+\mu\alpha)^4 = (\ld-\mu\alpha)^4
  \quad\Longleftrightarrow\quad \ld\mu\alpha(\ld^2 + \mu^2\alpha^2) = 0
  \quad\Longleftrightarrow\quad \ld = \pm i \mu\alpha
 \]
 Then $\mu^4\beta^4 = (1+i)^4\mu^4\alpha^4$, so since all of the numbers are non-zero, we have $\beta^4 = -4\alpha^4$.
 This contradicts the assumption $\alpha^4=\beta^4$.
 Therefore this case cannot happen and we always have $M\in\cS_\cA$ by the previous case.}
\end{proof}

{The following makes more precise some of the arguments about entanglement types in Section~\ref{s:existing_quantum} in the context of ternary functions, and extends the argument to functions in $\cA$.}

\begin{lemma}\label{lem:family-types}
 Suppose $f\in\ang{\cE}$ is a non-decomposable ternary function, then $f$ has GHZ type.
 Similarly, suppose $g\in\ang{\cM}$ is a non-decomposable ternary function, then $g$ has $W$ type.
 Finally, suppose $h\in\cA$ is a non-decomposable ternary function, then $h$ has GHZ type.
\end{lemma}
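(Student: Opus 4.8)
The plan is to prove the three claims in turn, reducing the first two to arguments already carried out in Section~\ref{s:existing_quantum} and handling the affine case by a structural analysis of the support together with Lemma~\ref{lem:li}.

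For the first claim, I would observe that $\ang{\cE}$ is the closure of $\cE$ under tensor products, so a non-decomposable function of it must, up to a non-zero scalar, be a single element of $\cE$ (a nontrivial tensor product of functions of positive arity is decomposable, and the zero function is decomposable), hence lies in $\cE$ since $\cE$ is closed under scaling. A non-decomposable ternary $f\in\cE$ cannot be supported on a single input (then it is a product of pinning functions, hence decomposable) or on none, so it is non-zero on exactly two complementary inputs $\ba,\bar\ba$; the explicit SLOCC transformation written down in Section~\ref{s:existing_quantum} then gives $\ket f\sim_{SLOCC}\ket{\EQ_3}$, and since $\EQ_3$ is a scaling of the GHZ state, $f$ has GHZ type. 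The second claim is analogous: a non-decomposable ternary $g\in\ang{\cM}$ lies in $\cM$, it must be non-zero on every weight-one string (otherwise it factors off a $\delta_0$, as shown in Section~\ref{s:existing_quantum}), and the transformation from that section yields $\ket g\sim_{SLOCC}\ket{\ONE_3}$, so $g$ has $W$ type.

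For the third claim, let $h\in\cA$ be non-decomposable and ternary. By Lemma~\ref{lem:li}, $h$ has GHZ type or $W$ type, so it is enough to show the polynomial \eqref{eq:GHZ_polynomial} in the values of $h$ is non-zero; I would do this by a case analysis on the support of $h$, which is an affine subspace of $\{0,1\}^3$. Throughout I may permute or bit-flip the three variables and rescale $h$ — none of this affects membership in $\cA$, decomposability, or the entanglement type — and in the full-support case the same holds for local diagonal transformations $\diag(1,i^k)$. If $\abs{\supp(h)}=1$ then $h$ is degenerate, contradicting non-decomposability. If $\abs{\supp(h)}=2$, the two support points either agree in some coordinate, so that $h$ factors off a pinning function (contradiction), or differ in all three, so that $h\in\cE$ and the first claim applies. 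If $\abs{\supp(h)}=4$, the support is a coset of a $2$-dimensional subspace $\{\bx : \mathbf{w}\cdot\bx = e\}$: for $\wt(\mathbf{w})=1$ the function $h$ again factors off a pinning function; for $\wt(\mathbf{w})=2$, writing $h$ in terms of its affine phase reduces \eqref{eq:GHZ_polynomial} (after relabelling) to $(h_{000}h_{111}-h_{001}h_{110})^2$, which vanishes exactly when $h$ is decomposable; and for $\wt(\mathbf{w})=3$ the polynomial simplifies to four times the product of the four non-zero values of $h$, which is non-zero. Finally, if $\abs{\supp(h)}=8$, then after absorbing local phases $h$ is, up to scaling, a graph state on at most three qubits; if the graph has at most one edge $h$ factors, and if it is a path or a triangle then \eqref{eq:GHZ_polynomial} evaluates to $\pm 16$. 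In every case in which $h$ is non-decomposable, \eqref{eq:GHZ_polynomial} is non-zero, so $h$ has GHZ type by Lemma~\ref{lem:li}.

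The routine parts are the tensor factorisations and the evaluations of \eqref{eq:GHZ_polynomial}; the point to be careful about is that the vanishing of \eqref{eq:GHZ_polynomial} does \emph{not} by itself force decomposability — it is also consistent with $W$ type — so in each of the cases marked ``contradiction'' I genuinely have to exhibit an explicit tensor decomposition of $h$, not merely note that the polynomial vanishes. I expect the only slightly fiddly case to be $\abs{\supp(h)}=4$ with $\wt(\mathbf{w})=2$, where one has to track the quadratic part of the affine phase to determine which such functions are non-decomposable and then verify that those have non-vanishing \eqref{eq:GHZ_polynomial}; all the remaining cases are short.
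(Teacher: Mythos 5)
Your proof is correct, but in two of the three parts it takes a genuinely different route from the paper's, so let me compare. For the $\ang{\cE}$ case both arguments are essentially the same (you are slightly more explicit about why a non-decomposable member of $\ang{\cE}$ is, up to a non-zero scalar, an element of $\cE$). For the $\ang{\cM}$ case, the paper computes the polynomial \eqref{eq:GHZ_polynomial}, finds it is zero, and then invokes Lemma~\ref{lem:li} together with non-decomposability to force $W$ type; you instead reuse the explicit SLOCC transformation to $\ket{\ONE_3}$ already written out in Section~\ref{s:existing_quantum}, which is a bit more direct. For the $\cA$ case, the paper first applies Montanaro's Hadamard-normalisation to reduce to a full-support affine phase and then analyses the quadratic form $(-1)^{\gamma_{12}x_1x_2+\gamma_{13}x_1x_3+\gamma_{23}x_2x_3}$; you instead stratify by $\abs{\supp(h)}\in\{1,2,4,8\}$ (using that the support is an affine subspace), fold the size-$2$ case into the $\cE$ claim, dispatch the $\wt(\mathbf{w})=1$ sub-case of size $4$ and the $\leq 1$-edge sub-case of size $8$ by exhibiting explicit tensor factorisations, and compute \eqref{eq:GHZ_polynomial} in the remainder. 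What this buys you is a self-contained argument that avoids the external citation at the cost of a few extra small cases; what the paper's route buys is a single parametrised computation. The full-support analysis in both proofs is the same calculation.

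Two remarks on details you flagged as worrying. First, you are right to insist that vanishing of \eqref{eq:GHZ_polynomial} must be supplemented by an explicit decomposition rather than taken as evidence of decomposability, since $W$ type also makes the polynomial vanish; your plan handles this correctly. Second, the sub-case you expect to be fiddly ($\abs{\supp(h)}=4$, $\wt(\mathbf{w})=2$) is actually more benign than you anticipate: substituting the four zero values into \eqref{eq:GHZ_polynomial} (invariant under permutation of arguments, being the Cayley hyperdeterminant) already collapses it to $(h_{000}h_{111}-h_{001}h_{110})^2$ without any reference to the quadratic part of the phase, and since the four values on the support are of the form $c\,i^{l(\bx)}(-1)^{q(\bx)}\neq 0$, the vanishing of that $2\times 2$ minor is equivalent to the explicit factorisation $h(x_1,x_2,x_3)=h_1(x_1,x_2)\,h_2(x_3)$. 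So no tracking of the quadratic part is needed there.
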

\begin{proof}
 Suppose $f\in\ang{\cE}$ is a non-decomposable ternary function.
 Non-decomposability implies $f\in\cE$, so there exists $\ba\in\{0,1\}^3$ such that $f(\bx)=0$ unless $\bx\in\{\ba,\bar{\ba}\}$.
 Since $f$ is non-decomposable, $f(\ba)$ and $f(\bar{\ba})$ must both be non-zero.
 We can thus find matrices $A,B,C\in\{I,X\}$ such that $(A\otimes B\otimes C)\ket{f}= f_{\ba}\ket{000}+f_{\bar{\ba}}\ket{111}$, which is clearly a GHZ-type state.
 But this is an SLOCC operation, which does not affect the entanglement class, so $f$ has GHZ type.
 
 Now suppose $g\in\ang{\cM}$ is a non-decomposable ternary function.
 Non-decomposability implies $g\in\cM$, hence $g(\bx)=0$ whenever $\abs{\bx}>1$.
 The polynomial in \eqref{eq:GHZ_polynomial} becomes
 \begin{multline*}
  (g_{000}g_{111} - g_{010}g_{101} + g_{001}g_{110} - g_{011}g_{100})^2 - 4(g_{010}g_{100}-g_{000}g_{110})(g_{011}g_{101}-g_{001}g_{111}) \\
  = (0 - 0 + 0 - 0)^2 - 4(g_{010}g_{100}-0)(0-0)
  = 0.
 \end{multline*}
 Yet $g$ is non-decomposable by assumption, therefore Lemma~\ref{lem:li} implies that $g$ must have $W$~type.
 
 Finally, suppose $h\in\cA$ is a non-decomposable ternary function.
 Then by Definition~\ref{dfn:affine_function}, $h(\bx) = c i^{l(\bx)} (-1)^{q(\bx)} \chi_{A\bx=\bb}(\bx)$ where $c\in\AA\setminus\{0\}$ is a constant, $l$ is a linear Boolean function, $q$ is a quadratic Boolean function, and $\chi$ is a 0-1 valued indicator function for an affine subspace of $\{0,1\}^3$.
 
 By~\cite{montanaro_hadamard_2006}, for any function $h'\in\allf_n$ there exist matrices $\vc{M}{n}\in\{I,H\}$, where $H=\smm{1&1\\1&-1}$ is the Hadamard matrix, such that $(M_1\otimes\ldots\otimes M_n)\ket{h'}$ is everywhere non-zero.
 Both $I$ and $H$ correspond to affine functions, so by Lemmas~\ref{lem:hc_gadget} and~\ref{lem:affine_closed}, if $h'\in\cA$ then the function corresponding to $(M_1\otimes\ldots\otimes M_n)\ket{h'}$ is also in $\cA$.
 Hence, since SLOCC operations do not affect the entanglement class, we may assume without loss of generality that $h$ has full support by replacing it with the function transformed according to \cite{montanaro_hadamard_2006} if necessary.
 Then $\chi$ is the constant-1 function and can be ignored.
 {Now, a SLOCC transformation by $P := \smm{1&0\\0&i}$ on argument $x_k$ contributes a factor $i^{x_k}$ to the overall function.
 Thus, by such transformations, we can make $l$ trivial and remove all terms of the form $x_k^2$ from $q$ without changing the entanglement.
 It thus suffices to consider the function $h'(x_1,x_2,x_3) = (-1)^{\gamma_{12} x_1 x_2 + \gamma_{13} x_1 x_3 + \gamma_{23} x_2 x_3}$, where $\gamma_{12},\gamma_{13},\gamma_{23}\in\{0,1\}$.
 Then the first term of the polynomial in \eqref{eq:GHZ_polynomial} becomes
 \begin{multline*}
  (h_{000}h_{111} - h_{010}h_{101} + h_{001}h_{110} - h_{011}h_{100})^2  \\
  = c^4 \Big((-1)^{\gamma_{12}+\gamma_{13}+\gamma_{23}} - (-1)^{\gamma_{13}} + (-1)^{\gamma_{12}} - (-1)^{\gamma_{23}}\Big)^2
 \end{multline*}
 The second term becomes
 \begin{align*}
  -4(h_{010}h_{100}-h_{000}h_{110})(h_{011}h_{101} &-h_{001}h_{111}) \\
  &=- 4c^4 \Big(1-(-1)^{\gamma_{12}}\Big) \Big((-1)^{\gamma_{13}+\gamma_{23}}-(-1)^{\gamma_{12}+\gamma_{13}+\gamma_{23}}\Big) \\
  &=- 4c^4 (-1)^{\gamma_{13}+\gamma_{23}} \Big(1-(-1)^{\gamma_{12}}\Big)^2
 \end{align*}
 Thus, if $\gamma_{12}=0$, the polynomial in \eqref{eq:GHZ_polynomial} is equal to
 \[
  c^4 \Big((-1)^{\gamma_{13}+\gamma_{23}} - (-1)^{\gamma_{13}} + 1 - (-1)^{\gamma_{23}}\Big)^2,
 \]
 which is $16c^4$ if $\gamma_{13}=\gamma_{23}=1$, and 0 otherwise.
 If $\gamma_{12}=1$, the polynomial becomes
 \[
  c^4 \Big(-(-1)^{\gamma_{13}+\gamma_{23}} - (-1)^{\gamma_{13}} - 1 - (-1)^{\gamma_{23}}\Big)^2 - 16 c^4 (-1)^{\gamma_{13}+\gamma_{23}},
 \]
 which is 0 if $\gamma_{13}=\gamma_{23}=0$, and non-zero otherwise.
 Hence the function has GHZ-type if and only if $\gamma_{12}+\gamma_{13}+\gamma_{23}\geq 2$.
 
 It remains to see what happens if $\gamma_{12}+\gamma_{13}+\gamma_{23} < 2$.
 Now, the condition \eqref{eq:W1}, $(h_{000}h_{011}\neq h_{001}h_{010}) \vee (h_{101}h_{110}\neq h_{100}h_{111})$, becomes
 \[
  \Big( (-1)^{\gamma_{23}} \neq 1 \Big)
  \vee \Big((-1)^{\gamma_{12}+\gamma_{13}} \neq (-1)^{\gamma_{12}+\gamma_{13}+\gamma_{23}} \Big),
 \]
 which reduces to the single inequality $\gamma_{23}\neq 0$.
 Similarly, \eqref{eq:W2} becomes $\gamma_{13}\neq 0$ and \eqref{eq:W3} becomes $\gamma_{12}\neq 0$.
 Hence $h$ either has GHZ type or it decomposes.} 
 Thus, any non-decomposable ternary function in $\cA$ has GHZ type.
\end{proof}

\section{\textsf{Holant}\texorpdfstring{\textsuperscript{+}}{\textasciicircum +}}
\label{s:Holant_plus}

Before deriving the dichotomy for $\hol^c$, we consider a new family of holant problems, called $\hol^+$, which fits between conservative holant problems and $\hol^c$: 
It has four freely available functions, which are all unary and include the pinning functions.
Using results from quantum information theory, these four functions can be shown to be sufficient for constructing the gadgets required to apply the dichotomies in Section \ref{s:results_ternary_symmetric}.
Formally, for any finite $\cF\sse\allf$:
\begin{equation}
 \Holp[+]{\cF} := \Holp{\cF\cup\{\dl_0,\dl_1,\dl_+,\dl_-\}}.
\end{equation}

Note that the vectors $\ket{0}$ and $\ket{1}$ corresponding to $\dl_0$ and $\dl_1$ are orthogonal to each other.
{Similarly,} the vectors $\ket{+}$ and $\ket{-}$ corresponding, {up to scalar factor,} to $\dl_+$ and $\dl_-$ are orthogonal to each other.
In quantum theory, the set $\{\ket{+},\ket{-}\}$ is known as the \emph{Hadamard basis} of $\CC^2$, since these vectors are related to the computational basis vectors by a Hadamard transformation: $\{\ket{+},\ket{-}\}\doteq H\circ\{\ket{0},\ket{1}\}$, where $H = \frac{1}{\sqrt{2}}\left(\begin{smallmatrix}1&1\\1&-1\end{smallmatrix}\right)$.
{Hence $\ket{+}=\frac{1}{\sqrt{2}}\left(\ket{0}+\ket{1}\right)$ and $\ket{-}=\frac{1}{\sqrt{2}}\left(\ket{0}-\ket{1}\right)$, i.e.\ the Hadamard basis vectors differ from the vectors corresponding to $\delta_+$ and $\delta_-$ by a factor of $\frac{1}{\sqrt{2}}$, which does not affect any of the following arguments.

In the next subsection, we first state and extend a result from quantum theory that is used in the later proofs.
Specifically, we prove that if $\cF\nsubseteq\ang{\cT}$, then  $S(\cF\cup\{\dl_0,\dl_1,\dl_+,\dl_-\})$ contains a non-decomposable ternary function.
It is vital for the proof to have all four unary functions available, e.g.\ if $\cF=\{\EQ_4\}$, it would be impossible to produce a non-decomposable ternary function using only pinning.

In Section~\ref{s:symmetrising_ternary}, we furthermore show that, under some mild assumptions on $\cF$, the set $S(\cF\cup\{\dl_0,\new{\dl}_1,\dl_+,\dl_-\})$ actually contains a \emph{symmetric} non-decomposable ternary function.
In Section~\ref{s:binary}, we exhibit gadget constructions for certain binary functions and show that the assumptions of the previous section are satisfied if $\cF$ is not one of the exceptional cases of Theorem~\ref{thm:Holant-star}.
All the gadgets in these subsections are planar.
To ensure that the full complexity classification works for planar holant problems, we next give a reduction between planar holant problems and planar counting CSPs in Section~\ref{s:interreducing_planar}.
This is based on results sketched in the literature, but to our knowledge the full proof has not been written out before.
Finally, in Section~\ref{s:hardness}, we combine all of the parts to prove the complexity classification for $\hol^+$, which holds even when restricted to the planar case.
}

\subsection{Why these free functions?}

The definition of $\hol^+$ is motivated by the following results from quantum theory.
We first state the results in quantum terminology and translate them into holant terminology at the end of the section.

\begin{theorem}[{\cite[Lemma on p.~296]{popescu_generic_1992},\cite{gachechiladze_addendum_2016}}]\label{thm:popescu-rohrlich}
 Let $\ket{\Psi}$ be an $n$-system genuinely entangled quantum state. For any two of the $n$ systems, there exists a projection, onto a tensor product of states of the other $(n-2)$ systems, that leaves the two systems in an entangled state.
\end{theorem}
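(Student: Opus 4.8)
The plan is to induct on the number of systems $n$, at each step projecting one of the non-distinguished systems onto a carefully chosen one-system state so that genuine entanglement survives on the remaining systems. (It suffices to treat the systems as qubits, as in the rest of this paper; the general case is analogous, with the pencil argument below replaced by the structure theory of linear spaces of bounded-rank matrices.) The base case $n=2$ is immediate: a genuinely entangled two-system state is entangled, and the empty projection does nothing. For the inductive step, fix $n\geq 3$, assume the statement for $(n-1)$-system states, let $\ket{\Psi}$ be genuinely entangled on systems $1,\ldots,n$, and relabel so that the two distinguished systems are $1$ and $2$. Write $\ket{\Psi}=\ket{0}_n\otimes\ket{\Psi_0}+\ket{1}_n\otimes\ket{\Psi_1}$ with $\ket{\Psi_0},\ket{\Psi_1}$ states on systems $1,\ldots,n-1$; if these two were linearly dependent then system $n$ would factor out of $\ket{\Psi}$, contradicting genuine entanglement, so they are linearly independent, and hence $\ket{\Psi_{\mu,\nu}}:=\mu\ket{\Psi_0}+\nu\ket{\Psi_1}$ -- which is obtained from $\ket{\Psi}$ by applying $\mu\bra{0}+\nu\bra{1}$ to system $n$ -- is nonzero for all $(\mu,\nu)\neq(0,0)$. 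If some $(\mu,\nu)$ makes $\ket{\Psi_{\mu,\nu}}$ genuinely entangled on systems $1,\ldots,n-1$, the inductive hypothesis (applied with distinguished systems $1,2$) yields a product projection on systems $3,\ldots,n-1$ leaving $1,2$ entangled; prefixing it with the system-$n$ projection given by $\mu\bra{0}+\nu\bra{1}$ gives the desired product projection on systems $3,\ldots,n$.

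The heart of the matter is therefore: some projection of system $n$ preserves genuine entanglement. Suppose not, so that for every $(\mu,\nu)\neq(0,0)$ the state $\ket{\Psi_{\mu,\nu}}$ factorises across some bipartition of $\{1,\ldots,n-1\}$. Fix a bipartition $S\mid T$ and let $M_0,M_1$ be the matrices representing $\ket{\Psi_0},\ket{\Psi_1}$ with rows indexed by the configurations of the systems in $S$ and columns by the configurations of the systems in $T$; then $\ket{\Psi_{\mu,\nu}}$ factorises across $S\mid T$ exactly when every $2\times 2$ minor of $\mu M_0+\nu M_1$ vanishes. Each such minor is a homogeneous quadratic in $(\mu,\nu)$, hence either vanishes identically or has at most two zero directions $(\mu:\nu)$; so for each bipartition either $\mu M_0+\nu M_1$ has rank at most $1$ for all $(\mu,\nu)$, or only finitely many directions $(\mu:\nu)$ make $\ket{\Psi_{\mu,\nu}}$ factorise across it. Since there are finitely many bipartitions but infinitely many directions, the assumption forces some bipartition $S\mid T$ for which $\mu M_0+\nu M_1$ has rank at most $1$ for all $(\mu,\nu)$.

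Now analyse this pencil. Taking $(\mu,\nu)=(1,0)$ and $(0,1)$ shows $M_0,M_1$ both have rank at most $1$, and neither is zero ($M_0=0$ would force $\ket{\Psi_0}=0$, against linear independence), so $M_0=u_0v_0^{T}$ and $M_1=u_1v_1^{T}$ with all four vectors nonzero. If $\{u_0,u_1\}$ and $\{v_0,v_1\}$ were both linearly independent, then for $\mu,\nu$ both nonzero the map $w\mapsto\mu(v_0^{T}w)u_0+\nu(v_1^{T}w)u_1$ would be surjective onto $\spans\{u_0,u_1\}$ and so of rank $2$, a contradiction. Hence either $u_1$ is a scalar multiple of $u_0$, in which case the matrix representing $\ket{\Psi}$ across the cut $S\mid(T\cup\{n\})$ -- namely $M_0$ and $M_1$ placed side by side -- has every column proportional to $u_0$ and so has rank at most $1$; or $v_1$ is a scalar multiple of $v_0$, in which case the matrix representing $\ket{\Psi}$ across the cut $(S\cup\{n\})\mid T$ -- namely $M_0$ stacked above $M_1$ -- has rank at most $1$. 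In either case $\ket{\Psi}$ factorises across a bipartition of $\{1,\ldots,n\}$, contradicting its genuine entanglement. This proves the claim and completes the induction.

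The step I expect to be the main obstacle is exactly this last claim. A careless projection of a non-distinguished system can destroy all entanglement -- for instance, projecting a GHZ-type state onto a computational-basis vector -- so one really needs the counting argument above: each bipartition contributes only finitely many bad directions unless it produces a uniformly rank-$\leq 1$ pencil, and ruling out that pencil for a genuinely entangled $\ket{\Psi}$ is the delicate point. (This rank-deficient pencil is also where care is needed in the original argument of \cite{popescu_generic_1992}; cf.\ \cite{gachechiladze_addendum_2016}.)
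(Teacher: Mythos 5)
The paper does not prove Theorem~\ref{thm:popescu-rohrlich}: it cites the statement from Popescu--Rohrlich and Gachechiladze--G\"uhne and immediately passes to the strengthening Corollary~\ref{cor:popescu-rohrlich_restricted}, so there is no in-paper proof against which to compare. Taken on its own terms, your argument is correct. The induction reduces the theorem to showing that some single-qubit projection $\mu\bra{0}+\nu\bra{1}$ of a non-distinguished system preserves genuine entanglement of the remaining $(n-1)$ systems, and your counting argument for this is sound: for each of the finitely many bipartitions $S\mid T$ of those systems, the set of projective directions $(\mu:\nu)$ at which $\mu\ket{\Psi_0}+\nu\ket{\Psi_1}$ factorises across that cut is cut out by homogeneous quadratics (the $2\times 2$ minors of $\mu M_0+\nu M_1$), hence is either finite or all of $\mathbb{P}^1$; a finite union of finite sets cannot cover $\mathbb{P}^1(\CC)$; and the remaining possibility --- a pencil of rank at most~$1$ --- forces $M_0$ and $M_1$ to share a common row span or column span, so that $\ket{\Psi}$ itself factors across $S\mid(T\cup\{n\})$ or $(S\cup\{n\})\mid T$, contradicting genuine entanglement. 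You are right to flag the rank-deficient pencil as the delicate point; that is precisely where the original 1992 argument needed repair.

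One caveat worth recording: your argument shows only that a good direction $(\mu:\nu)$ \emph{exists}, with no control over where in $\mathbb{P}^1$ it lies. It therefore establishes Theorem~\ref{thm:popescu-rohrlich} but not Corollary~\ref{cor:popescu-rohrlich_restricted}, which restricts the projectors to computational and Hadamard basis states and is the form actually invoked in Section~\ref{s:Holant_plus}. That refinement is the content of \cite{gachechiladze_addendum_2016} and does not follow from the pencil/counting argument alone.
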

Here, `projection' means a (partial) inner product between $\ket{\Psi}$ and the tensor product of single-system states.
{The $n$ systems do not have to be qubits, but for the purposes of the following arguments, it suffices to think of them as $n$ qubits.}
The original proof of this statement in \cite{popescu_generic_1992} was flawed but it was recently corrected \cite{gachechiladze_addendum_2016}.
The following corollary is not stated explicitly in either paper, but can be seen to hold by inspecting the proof in \cite{gachechiladze_addendum_2016}.

\begin{corollary}\label{cor:popescu-rohrlich_restricted}
 {Let $\ket{\Psi}$ be an $n$-qubit genuinely entangled quantum state. For any two of the $n$ qubits, there exists a projection, onto a tensor product of computational and Hadamard basis states of the other $(n-2)$ qubits, that leaves the remaining two qubits in an entangled state.}
\end{corollary}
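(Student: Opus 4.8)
The plan is to turn the statement into a polynomial non-vanishing claim, feed Theorem~\ref{thm:popescu-rohrlich} in as a black box, and finish with an elementary combinatorial argument. Without loss of generality take the two distinguished qubits to be qubits $1$ and $2$, and write
\[
 \ket{\Psi} = \sum_{i,j\in\{0,1\}} \ket{i}_1\,\ket{j}_2 \otimes \ket{\chi_{ij}},
\]
where each $\ket{\chi_{ij}}$ is an (unnormalised) state of the remaining $n-2$ qubits. Projecting qubits $3,\ldots,n$ onto a product state $\bigotimes_{\ell=3}^n \ket{v_\ell}$, parametrised by $\langle v_\ell| = \alpha_\ell \langle 0| + \beta_\ell \langle 1|$, leaves qubits $1$ and $2$ in the two-qubit state with coefficient matrix $C=(c_{ij})_{i,j}$, where $c_{ij} = \big(\bigotimes_{\ell=3}^n \langle v_\ell|\big)\ket{\chi_{ij}}$. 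Each $c_{ij}$ is multilinear in the pairs $(\alpha_\ell,\beta_\ell)$, so $P := \det C = c_{00}c_{11}-c_{01}c_{10}$ is a polynomial in $\alpha_3,\beta_3,\ldots,\alpha_n,\beta_n$ that is homogeneous of degree $2$ in each pair $(\alpha_\ell,\beta_\ell)$ separately; the projected state is entangled exactly when $P\neq 0$. Projecting onto computational or Hadamard basis states amounts to choosing each $(\alpha_\ell,\beta_\ell)$ from the four-element set $S = \{(1,0),(0,1),(1,1),(1,-1)\}$ (up to scaling), whose elements are pairwise distinct in $\mathbb{P}^1$.

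The next observation is that $P$ is not the zero polynomial: Theorem~\ref{thm:popescu-rohrlich} already supplies \emph{some} product projection of the other $n-2$ qubits under which qubits $1$ and $2$ remain entangled, so $P$ takes a non-zero value at the corresponding parameters. (One could instead argue that $P\equiv 0$ would force $\ket{\Psi}$ to factor across qubit $1$ or qubit $2$, using the structure of the variety of rank-$\leq 1$ two-by-two matrices, contradicting genuine entanglement; but quoting Theorem~\ref{thm:popescu-rohrlich} is cleaner.)

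It then remains to prove the combinatorial claim: a non-zero polynomial that is homogeneous of degree $2$ in each of $m := n-2$ variable pairs cannot vanish identically on $S^{m}$. I would prove this by induction on $m$. For $m=1$ a non-zero binary quadratic form has at most two zeros in $\mathbb{P}^1$, hence is non-zero at one of the four points of $S$. For $m\geq 2$, write $P = A_0\alpha_n^2 + A_1\alpha_n\beta_n + A_2\beta_n^2$ with $A_0,A_1,A_2$ polynomials in the first $m-1$ pairs; since $P\not\equiv 0$ some $A_s\not\equiv 0$, so by the inductive hypothesis $A_s$ is non-zero at a grid point $\mathbf{g}\in S^{m-1}$, and then $P(\mathbf{g};\alpha_n,\beta_n)$ is a non-zero binary quadratic form, hence non-zero at one of the four points of $S$. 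Applied to the $P$ above, this yields a choice of each $\ket{v_\ell}\in\{\ket{0},\ket{1},\ket{+},\ket{-}\}$ with $P\neq 0$, which is the sought projection. The one place that genuinely needs care is this last step -- checking that ``degree $2$ per pair'' against ``four available basis vectors per qubit'' really suffices, and keeping the multidegree bookkeeping straight; the rest is routine.
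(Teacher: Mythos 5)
Your proof is correct, and it takes a genuinely different route from the paper. The paper treats this corollary as something that ``can be seen to hold by inspecting the proof in \cite{gachechiladze_addendum_2016}'', i.e.\ it defers to the internals of the Gachechiladze--G\"{u}hne argument, and it remarks that having the two bases linked by the Hadamard transformation is essential to \emph{that} proof. You instead treat Theorem~\ref{thm:popescu-rohrlich} purely as a black box: you encode the conclusion as non-vanishing of the determinant polynomial $P$, which (since each $c_{ij}$ is exactly multilinear in the pairs) is homogeneous of degree~$2$ in every pair $(\alpha_\ell,\beta_\ell)$; you get $P\not\equiv 0$ from the theorem; and you finish with a Schwartz--Zippel--type induction over the grid $S^{n-2}$. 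This buys two things the paper's approach does not: it is self-contained (no need to consult and re-verify the cited proof), and it is more general --- a non-zero binary quadratic form has at most two projective roots, so \emph{any} three pairwise non-proportional unary states would do, meaning the Hadamard relationship between the two bases is a feature of the Gachechiladze--G\"{u}hne proof technique rather than of the statement. The only point that needed care, and which you handle correctly, is that the coefficients $A_s$ in the inductive step remain homogeneous of degree~$2$ in the remaining pairs, so the inductive hypothesis applies; this follows because $P$ is \emph{exactly} (not merely at most) degree~$2$ per pair, as you observe.
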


In other words, Theorem \ref{thm:popescu-rohrlich} holds when the systems are restricted to qubits and the projectors are restricted to products of computational and Hadamard basis states.
Here, it is crucial to have projectors taken from two bases that are linked by the Hadamard transformation: the proof applies only in that case.
{Intuitively, the corollary states that if $n$ parties share a genuinely entangled $n$-qubit state, then this can be converted into an entangled 2-qubit state shared by two of the parties using local projections\footnote{In the real world, these projections correspond to post-selected measurements, so without post-selection the protocol may fail in some runs.}.

In holant terminology, the corollary corresponds to the following proposition about producing binary non-decomposable functions from a higher-arity non-decomposable function via gadgets with unary functions.}

\begin{proposition}[Restatement of Corollary~\ref{cor:popescu-rohrlich_restricted}]\label{prop:popescu-rohrlich_gadget}
 Let $f$ be a non-decomposable function of arity {$n\geq 2$}.
 Suppose $j,k\in [n]$ with $j<k$.
 Then there exist $u_m\in\{\dl_0,\dl_1,\dl_+,\dl_-\}$ for all $m\in [n]\setminus\{j,k\}$ such that the following binary function is non-decomposable:
 \[
  g(x_j,x_k) = \sum_{x_s\in\{0,1\}\text{ for } s\in [n]\setminus\{j,k\}} f(\vc{x}{n}) \prod_{m\in [n]\setminus\{j,k\}} u_m(x_m).
 \]
 This $g$ is the effective function of the gadget in Figure~\ref{fig:pr}a.
\end{proposition}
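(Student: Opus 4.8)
The plan is to read this proposition off directly from Corollary~\ref{cor:popescu-rohrlich_restricted}, using the dictionary between functions and qubit states recalled in Section~\ref{s:entanglement}: a non-decomposable function corresponds to a genuinely entangled state, and plugging a unary function into an argument of a gadget corresponds to a partial inner product against a single-qubit state. First I would dispose of the degenerate case $n=2$, where $[n]\setminus\{j,k\}=\emptyset$, the sum defining $g$ is empty, and $g=f$ is non-decomposable by hypothesis.

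For $n\geq 3$: since $f$ is non-decomposable, $\ket{f}\in(\CC^2)\t{n}$ is a genuinely entangled $n$-qubit state. Applying Corollary~\ref{cor:popescu-rohrlich_restricted} to the pair of qubits labelled $j$ and $k$ yields, for each $m\in[n]\setminus\{j,k\}$, a computational- or Hadamard-basis state $\ket{b_m}$ such that $\bigl(\bigotimes_{m\neq j,k}\bra{b_m}\bigr)\ket{f}$ is an entangled state of qubits $j$ and $k$. Each such $\ket{b_m}$ is, up to a nonzero scalar $c_m$, the vector of one of $\dl_0,\dl_1,\dl_+,\dl_-$ (indeed $\ket{0}=\ket{\dl_0}$, $\ket{1}=\ket{\dl_1}$, and $\ket{\pm}=\tfrac1{\sqrt2}\ket{\dl_\pm}$); take $u_m$ to be that unary function. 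Because all these single-qubit vectors have real entries, the bra $\bra{b_m}$ has the same components as the ket, so
\[
 \Bigl(\bigotimes_{m\neq j,k}\bra{b_m}\Bigr)\ket{f} = \Bigl(\prod_{m\neq j,k} c_m\Bigr)\ket{g}, \qquad
 g(x_j,x_k) = \sum_{\substack{x_s\in\{0,1\}\\ s\in[n]\setminus\{j,k\}}} f(\vc{x}{n}) \prod_{m\in[n]\setminus\{j,k\}} u_m(x_m).
\]
Since $\prod_m c_m\neq 0$, the vector $\ket{g}$ is still entangled, hence $g$ is non-decomposable; and the right-hand side above is manifestly the effective function of the gadget consisting of one copy of $f$ with the unary functions $u_m$ attached to the arguments indexed by $[n]\setminus\{j,k\}$, which is the gadget of Figure~\ref{fig:pr}a.

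The only point needing care is the translation between a quantum projection and a holant gadget: a projection is a partial inner product against a product of fixed single-qubit states, whereas a gadget sums an extra copy of $f$ against unary functions on its internal edges. These coincide precisely because the relevant single-qubit states $\ket{0},\ket{1},\ket{+},\ket{-}$ have real coefficients, so the complex conjugation in the bra is invisible, and because non-decomposability is insensitive to the nonzero scalars $c_m$ (in particular to the $\tfrac1{\sqrt2}$ normalisation of the Hadamard basis, which can also be absorbed using Lemma~\ref{lem:scaling} if preferred). Beyond this bookkeeping there is no substantial obstacle — all the mathematical content sits in Corollary~\ref{cor:popescu-rohrlich_restricted}.
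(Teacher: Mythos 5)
Your proposal is correct and takes exactly the route the paper takes: the proposition is obtained by translating Corollary~\ref{cor:popescu-rohrlich_restricted} through the function--state dictionary, and the paper itself gives no separate proof, introducing the proposition as a mere ``restatement'' of the quantum corollary. You spell out the two bookkeeping points that the paper leaves implicit — that the bra-versus-ket conjugation is invisible because $\ket{0},\ket{1},\ket{+},\ket{-}$ have real entries, and that the $\tfrac{1}{\sqrt{2}}$ normalisation of the Hadamard basis only contributes a nonzero scalar, which cannot affect decomposability — and you handle the empty $n=2$ case explicitly; none of this introduces a gap.
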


\begin{rem}
 Proposition~\ref{prop:popescu-rohrlich_gadget} can be considered an alternative definition of what it means to be non-decomposable.
 To see this, suppose the function $f\in\allf_n$ is decomposable, i.e.\ there exists $1\leq k<n$ and functions $f_1,f_2$ such that:
 \[
  f(\vc{x}{n}) = f_1(x_{\rho(1)}\zd x_{\rho(k)})f_2(x_{\rho(k+1)}\zd x_{\rho(n)}).
 \]
 Choose one argument from each partition, say $x_{\rho(1)}$ and $x_{\rho(n)}$; the argument is analogous for any other choice.
 Then for all $u_m\in\{\dl_0,\dl_1,\dl_+,\dl_-\}$ we have
 \begin{align*}
  g(x_{\rho(1)},x_{\rho(n)})
  &= \sum_{x_s\in\{0,1\}\text{ for } s\in [n]\setminus\{\rho(1),\rho(n)\}} f(\vc{x}{n}) \prod_{m\in [n]\setminus\{\rho(1),\rho(n)\}} u_m(x_m) \\
  &= \left( \sum_{x_{\rho(2)}\zd x_{\rho(k)}\in\{0,1\}} f_1(x_{\rho(1)}\zd x_{\rho(k)}) \prod_{m=2}^k u_m(x_{\rho(m)}) \right) \\
  &\qquad\times \left( \sum_{x_{\rho(k+1)}\zd x_{\rho(n-1)}\in\{0,1\}} f_2(x_{\rho(k+1)}\zd x_{\rho(n)}) \prod_{m=k+1}^{n-1} u_m(x_{\rho(m)}) \right)
 \end{align*}
 which is clearly decomposable.
 Thus if the conclusion of Proposition~\ref{prop:popescu-rohrlich_gadget} holds for some $f$, then $f$ must be non-decomposable.
\end{rem}

We extend this proposition as follows.
Note that all gadgets are planar.

\begin{figure}
 \centering
 (a) \begin{tikzpicture}[scale=1.25]
	\begin{pgfonlayer}{nodelayer}
		\node [style=none] (0) at (1.75, 1.5) {};
		\node [style=none] (1) at (-1.75, 1.5) {};
		\node [style=greyn] (2) at (0, -1) {};
		\node [style=none] (3) at (0, -1.5) {$f$};
		\node [style=none] (5) at (-3.5, 0.5) {$\ldots$};
		\node [style=none] (6) at (-4.25, 1) {$u_1$};
		\node [style=greyn] (7) at (-4.25, 0.5) {};
		\node [style=greyn] (8) at (-2.75, 0.5) {};
		\node [style=none] (9) at (-2.75, 1) {$u_{j-1}$};
		\node [style=none] (10) at (0, 0.5) {$\ldots$};
		\node [style=none] (11) at (-0.75, 1) {$u_{j+1}$};
		\node [style=greyn] (12) at (-0.75, 0.5) {};
		\node [style=greyn] (13) at (0.75, 0.5) {};
		\node [style=none] (14) at (0.75, 1) {$u_{k-1}$};
		\node [style=none] (15) at (3.5, 0.5) {$\ldots$};
		\node [style=none] (16) at (2.75, 1) {$u_{k+1}$};
		\node [style=greyn] (17) at (2.75, 0.5) {};
		\node [style=greyn] (18) at (4.25, 0.5) {};
		\node [style=none] (19) at (4.25, 1) {$u_n$};
	\end{pgfonlayer}
	\begin{pgfonlayer}{edgelayer}
		\draw [bend right=15] (7) to (2);
		\draw [bend right=15] (8) to (2);
		\draw [bend right] (1.center) to (2);
		\draw [bend right=15] (12) to (2);
		\draw [bend right=15] (2) to (13);
		\draw [bend left] (0.center) to (2);
		\draw [bend right=15] (2) to (17);
		\draw [bend right=15] (2) to (18);
	\end{pgfonlayer}
\end{tikzpicture} \hfill (b) \begin{tikzpicture}[scale=1.25]
	\begin{pgfonlayer}{nodelayer}
		\node [style=none] (0) at (0, 1.5) {};
		\node [style=none] (1) at (-3.25, 1.5) {};
		\node [style=greyn] (2) at (0, -1) {};
		\node [style=none] (3) at (0, -1.5) {$f$};
		\node [style=none] (5) at (-5, 0.5) {$\ldots$};
		\node [style=none] (6) at (-5.75, 1) {$u_1$};
		\node [style=greyn] (7) at (-5.75, 0.5) {};
		\node [style=greyn] (8) at (-4.25, 0.5) {};
		\node [style=none] (9) at (-4.25, 1) {$u_{j-1}$};
		\node [style=none] (10) at (-1.5, 0.5) {$\ldots$};
		\node [style=none] (11) at (-2.25, 1) {$u_{j+1}$};
		\node [style=greyn] (12) at (-2.25, 0.5) {};
		\node [style=greyn] (13) at (-0.75, 0.5) {};
		\node [style=none] (14) at (-0.75, 1) {$u_{k-1}$};
		\node [style=none] (15) at (1.5, 0.5) {$\ldots$};
		\node [style=none] (16) at (0.75, 1) {$u_{k+1}$};
		\node [style=greyn] (17) at (0.75, 0.5) {};
		\node [style=greyn] (18) at (5.75, 0.5) {};
		\node [style=none] (19) at (5.75, 1) {$u_n$};
		\node [style=none] (20) at (3.25, 1.5) {};
		\node [style=greyn] (21) at (2.25, 0.5) {};
		\node [style=none] (22) at (2.25, 1) {$u_{\ell-1}$};
		\node [style=none] (23) at (5, 0.5) {$\ldots$};
		\node [style=none] (24) at (4.25, 1) {$u_{\ell+1}$};
		\node [style=greyn] (25) at (4.25, 0.5) {};
	\end{pgfonlayer}
	\begin{pgfonlayer}{edgelayer}
		\draw [bend right=15] (7) to (2);
		\draw [bend right=15] (8) to (2);
		\draw [in=165, out=-90] (1.center) to (2);
		\draw (12) to (2);
		\draw (2) to (13);
		\draw (0.center) to (2);
		\draw (2) to (17);
		\draw [bend right=15] (2) to (18);
		\draw (2) to (21);
		\draw [in=-90, out=15] (2) to (20.center);
		\draw [bend right=15] (2) to (25);
	\end{pgfonlayer}
\end{tikzpicture}
 \caption{(a) The gadget from Proposition~\ref{prop:popescu-rohrlich_gadget} and (b) the gadget from Theorem~\ref{thm:three-qubit-gadget}.}
 \label{fig:pr}
\end{figure}
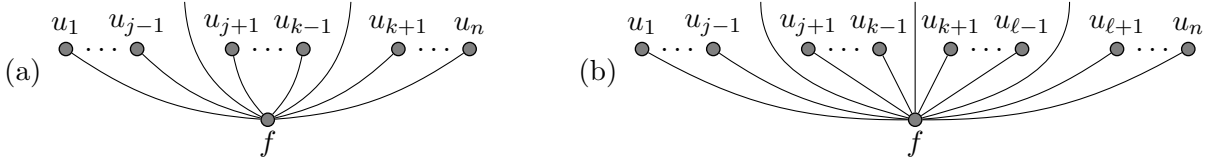

\begin{theorem}\label{thm:three-qubit-gadget}
 Let $f$ be a non-decomposable function of arity $n\geq 3$.
 Then there exist $j,k,\ell\in [n]$ with $j<k<\ell$, and $u_m\in\{\dl_0,\dl_1,\dl_+,\dl_-\}$ for all $m\in [n]\setminus\{j,k,\ell\}$, such that the following ternary function is non-decomposable:
 \[
  g(x_j,x_k,x_\ell) = \sum_{x_s\in\{0,1\}\text{ for } s\in [n]\setminus\{j,k,\ell\}} f(\vc{x}{n}) \prod_{m\in [n]\setminus\{j,k,\ell\}} u_m(x_m).
 \]
 This $g$ is the effective function of the gadget in Figure~\ref{fig:pr}b.
\end{theorem}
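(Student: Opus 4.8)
The plan is to work throughout with functions (equivalently, $n$-qubit states) and to bootstrap the ternary statement from the binary one, which is already available as Proposition~\ref{prop:popescu-rohrlich_gadget}. First I would record a reduction principle. Capping $k$ of the legs of $f$ with functions from $\{\dl_0,\dl_1,\dl_+,\dl_-\}$ produces precisely gadgets of the shape in Figure~\ref{fig:pr}b, and by Lemma~\ref{lem:decomposable} any tensor factor of such a capped function is, up to a non-zero scalar, again obtained from $f$ by capping (more) legs with functions from $\{\dl_0,\dl_1\}$. Let $\mathcal{C}$ be the set of all functions obtainable from $f$ in this way; then $f\in\mathcal{C}$, every member of $\mathcal{C}$ has arity at most $n$, and by the reduction principle it suffices to exhibit a non-decomposable ternary member of $\mathcal{C}$: tracing back the cappings then shows the witnessing gadget has the required planar form, and $g$ is exactly its effective function.

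Next I would set up a minimality argument. The set of non-decomposable members of $\mathcal{C}$ of arity at least $3$ is non-empty, as it contains $f$; pick $h\in\mathcal{C}$ in this set of minimal arity $t$, so $3\le t\le n$, and suppose for contradiction that $t\ge 4$. Apply Proposition~\ref{prop:popescu-rohrlich_gadget} to $h$: there are two legs $\{p,q\}$ and functions from $\{\dl_0,\dl_1,\dl_+,\dl_-\}$ capping the other $t-2$ legs so that the resulting binary function $g_2(x_p,x_q)$ is non-decomposable. Performing these $t-2$ cappings one at a time yields a chain $h=h_0,h_1,\dots,h_{t-2}=g_2$ of members of $\mathcal{C}$ of strictly decreasing arity $t,t-1,\dots,2$. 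Now $h_{t-3}$ is ternary and lies in $\mathcal{C}$, so by minimality of $t$ it is decomposable; since a single further capping turns it into the non-decomposable binary function $g_2$ on $\{p,q\}$, ruling out the other possible tensor decompositions (each of which would leave a degenerate binary function) forces $h_{t-3}=w(x_a)\,g_2'(x_p,x_q)$ with $g_2'$ non-decomposable binary. Propagating this upward along the chain, and using repeatedly that any non-decomposable factor of arity at least $3$ appearing at some $h_i$ with $i\ge 1$ would have arity at least $3$ and strictly less than $t$, hence (via Lemma~\ref{lem:decomposable}) contradict minimality, shows that at each stage $i\ge 1$ the legs $p$ and $q$ already split off as a non-decomposable binary factor of $h_i$. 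In particular, capping any single one of the $t-2$ spectator legs of $h$ (by the function Proposition~\ref{prop:popescu-rohrlich_gadget} prescribed to it) decouples the pair $\{p,q\}$.

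The remaining task — and, I expect, the main obstacle — is to turn this rigidity into a contradiction with $h$ itself being non-decomposable; this is where the genuinely new entanglement-theoretic content enters. The plan for this step is to pass to the Schmidt decomposition of $h$ across the bipartition $\{p,q\}$ versus the rest (with linearly independent Schmidt vectors on each side), and likewise across the bipartitions obtained by replacing $p$ or $q$ by a spectator leg, applying Proposition~\ref{prop:popescu-rohrlich_gadget} afresh to each such pair — its statement only guarantees \emph{one} good capping per pair, so this freedom is essential. Each "capping decouples $\{p,q\}$" statement forces the Schmidt vectors on the rest side, after a suitable partial product projection, to collapse onto a single line; comparing these collapses over the overlapping bipartitions is incompatible with the Schmidt rank of $h$ being at least $2$ across every cut, which is exactly non-decomposability of $h$. (The small-arity configurations, where the available spectator legs are too few for the comparison to bite directly, are handled by additionally invoking Proposition~\ref{prop:popescu-rohrlich_gadget} on the pairs $\{p,a\}$ and $\{q,a\}$.) Once $t\ge 4$ is excluded we conclude $t=3$, and unwinding the nested cappings exhibits the gadget of Figure~\ref{fig:pr}b.
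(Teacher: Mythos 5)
Your setup is exactly right and matches the paper's: pass to gadgets of the form in Figure~\ref{fig:pr}b, reduce via Lemma~\ref{lem:decomposable}, run a minimality (equivalently, induction-on-arity) argument, and, for a putative minimal non-decomposable $h$ of arity $t\ge4$ in $\mathcal{C}$, note that every once-capped $h_{j,u}$ must lie in $\ang{\cT}$ or you are done. Your ``rigidity'' step --- that, for the cappings supplied by Proposition~\ref{prop:popescu-rohrlich_gadget} applied to $\{p,q\}$, capping any single spectator leg already splits $\{p,q\}$ off as a non-decomposable binary factor --- is also correct, and is essentially the first half of the paper's argument (where it is phrased as $h_{n,u_n}=b(x_1,x_2)\,h(x_3,\dots,x_{n-1})$ and $h_{1,v_1}=b'(x_2,x_3)\,h'(x_4,\dots,x_n)$).

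The genuine gap is the final contradiction. You flag it yourself as ``the main obstacle,'' but the Schmidt-decomposition plan you sketch --- apply Proposition~\ref{prop:popescu-rohrlich_gadget} independently to the pairs $\{p,q\}$, $\{p,a\}$, $\{q,a\}$, argue that each capping collapses the Schmidt vectors on the complementary side onto a line, and ``compare these collapses over overlapping bipartitions'' --- is an assertion, not an argument. The cappings obtained from distinct applications of Proposition~\ref{prop:popescu-rohrlich_gadget} to different pairs are a priori unrelated, the Schmidt decompositions across different cuts are not easily comparable, and you explicitly set aside ``small-arity configurations.'' Nothing in the sketch forces a contradiction. The paper's proof replaces this step with a concrete and short computation: apply Proposition~\ref{prop:popescu-rohrlich_gadget} to just two overlapping pairs, $\{1,2\}$ (yielding cappings $u_3,\dots,u_n$) and $\{2,3\}$ (yielding $v_1,v_4,\dots,v_n$), and then cap $f$ simultaneously at $x_1$ by $v_1$ and at $x_n$ by $u_n$ --- mixing one capping from each application. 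Computing the resulting $f'(x_2,\dots,x_{n-1})$ by summing $x_n$ first gives $f' = v'(x_2)\,h(x_3,\dots,x_{n-1})$, placing $x_2$ and $x_3$ in different tensor factors; summing $x_1$ first gives $f' = b'(x_2,x_3)\,h''(x_4,\dots,x_{n-1})$ with $b'$ non-decomposable. Equating the two (and noting $v',h,h''\not\equiv0$) makes $b'$ degenerate, a contradiction. If you want to salvage your approach, this two-orderings trick on a single doubly-capped function is the ingredient to aim for; the Schmidt-rank language, as written, does not get you there.
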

{\begin{proof}
 The result is proved by induction on $n$.
 If $n=3$, $f$ itself is the desired non-decomposable ternary function; this is the base case.
 Now suppose the result holds for all $n$ satisfying $3\leq n\leq N$.

 We prove the result for $n=N+1$ by contradiction, i.e.\ we begin by assuming that for some non-decomposable function $f$ of arity $n=N+1$ there does not exist any choice $j,k,\ell\in [n]$ with $j<k<\ell$, and $u_m\in\{\dl_0,\dl_1,\dl_+,\dl_-\}$ for all $m\in [n]\setminus\{j,k,\ell\}$ such that the function $g$ defined in the theorem statement is non-decomposable.
 Note the function $f$ cannot be identically zero since such functions are trivially decomposable.
 
 First, consider the family of gadgets that arise by composing one input of $f$ with one of the allowed unary functions:
 \[
  h_{j,u}(\vc{x}{j-1},x_{j+1}\zd x_{n}) := \sum_{x_j\in\{0,1\}} f(\vc{x}{n}) u(x_j)
 \]
 where $j\in [n]$ and $u\in\{\dl_0,\dl_1,\dl_+,\dl_-\}$.
 \new{Note that
 \begin{align*}
  f(\vc{x}{n}) &= h_{j,\dl_0}(\vc{x}{j-1},x_{j+1}\zd x_{n})\dl_0(x_j) + h_{j,\dl_1}(\vc{x}{j-1},x_{j+1}\zd x_{n})\dl_1(x_j) \\
  &= h_{j,\dl_+}(\vc{x}{j-1},x_{j+1}\zd x_{n})\dl_+(x_j) + h_{j,\dl_-}(\vc{x}{j-1},x_{j+1}\zd x_{n})\dl_-(x_j)
 \end{align*}
 for any $j\in[n]$.
 Hence, if $h_{j,u}$ was identically zero for some $j$ and $u$, then $f$ would be decomposable.
 But we assumed $f$ was non-decomposable, therefore the functions $h_{j,u}$ cannot be identically zero.}
 
 Furthermore, if one of the functions $h_{j,u}$ has a non-decomposable tensor factor of arity at least $3$, then we can remove the other tensor factors by Lemma~\ref{lem:decomposable}, replace $f$ with the resulting function of arity between 3 and $N$ (inclusive), and be done by the inductive hypothesis.
 Thus, for all $j$ and $u$, we must have $h_{j,u}\in\ang{\cT}$, i.e.\ $h_{j,u}$ decomposes as a tensor product of unary and binary functions.
 
 Now by Proposition~\ref{prop:popescu-rohrlich_gadget}, we can find $u_m\in\{\dl_0,\dl_1,\dl_+,\dl_-\}$ for all $m\in [n]\setminus\{1,2\}$ such that
 \[
  b(x_1,x_2) := \sum_{x_s\in\{0,1\}\text{ for } s\in [n]\setminus\{1,2\}} f(\vc{x}{n}) \prod_{m\in [n]\setminus\{1,2\}} u_m(x_m)
 \]
 is non-decomposable.
 Since $b$ arises from $h_{n,u_n}$ by contraction with unary functions, the arguments $x_1$ and $x_2$ must appear in the same tensor factor of $h_{n,u_n}$.
 Yet $h_{n,u_n}\in\ang{\cT}$ by the argument of the previous paragraph, so we must have $h_{n,u_n}(\vc{x}{n-1}) = b(x_1,x_2) h(x_3\zd x_{n-1})$ for some $h\in\ang{\cT}$, which is not identically zero.
 Similarly, by Proposition~\ref{prop:popescu-rohrlich_gadget}, we can find $v_m\in\{\dl_0,\dl_1,\dl_+,\dl_-\}$ for all $m\in [n]\setminus\{2,3\}$ such that
 \[
  b'(x_2,x_3) := \sum_{x_s\in\{0,1\}\text{ for } s\in [n]\setminus\{2,3\}} f(\vc{x}{n}) \prod_{m\in [n]\setminus\{2,3\}} v_m(x_m)
 \]
 is non-decomposable.
 Then, analogous to the above, $h_{1,v_1}(x_2\zd x_n) = b'(x_2,x_3) h'(x_4\zd x_n)$ for some $h'\in\ang{\cT}$, which is not identically zero.
 
 Now consider the gadget
 \[
  f'(x_2\zd x_{n-1}) := \sum_{x_1,x_n\in\{0,1\}} f(\vc{x}{n}) v_1(x_1) u_n(x_n).
 \]
 If we perform the sum over $x_n$ first, we find
 \begin{equation}\label{eq:f-prime1}
  f'(x_2\zd x_{n-1}) = \sum_{x_1\in\{0,1\}} h_{n,u_n}(\vc{x}{n-1}) v_1(x_1) = \sum_{x_1\in\{0,1\}} b(x_1,x_2) h(x_3\zd x_{n-1}) v_1(x_1),
 \end{equation}
 which is not identically zero since $h$ is not, and $v'(x_2) := \sum_{x_1\in\{0,1\}} b(x_1,x_2) v_1(x_1)$ being identically zero would imply $b$ is decomposable.
 By inspection, the arguments $x_2$ and $x_3$ appear in different tensor factors of $f'$.
 
 If, on the other hand, we perform the sum over $x_1$ first, we find
 \begin{equation}\label{eq:f-prime2}
  f'(x_2\zd x_{n-1}) = \sum_{x_n\in\{0,1\}} h_{1,v_1}(x_2\zd x_n) u_n(x_n) = \sum_{x_n\in\{0,1\}} b'(x_2,x_3) h'(x_4\zd x_n) u_n(x_n).
 \end{equation}
 This could be identically zero if $h'(x_4\zd x_n) = h''(x_4\zd x_{n-1}) u_n^{\perp}(x_n)$ for some $h''$, where the function $u_n^\perp$ satisfies $\sum_{x_n\in\{0,1\}}u_n^{\perp}(x_n) u_n(x_n)=0$.
 Yet from \eqref{eq:f-prime1} we deduced that $f'$ is not identically zero, so this cannot happen.
 Then, inspection of \eqref{eq:f-prime2} shows that the arguments $x_2$ and $x_3$ appear in the same non-decomposable tensor factor of $f'$.
 This contradicts the finding from \eqref{eq:f-prime1} that they appear in different tensor factors.
 
 Hence the assumption must have been wrong and we have $h_{1,v_1}\notin\ang{\cT}$ or $h_{n,u_n}\notin\ang{\cT}$.
 Thus, by Lemma~\ref{lem:decomposable} and the induction hypothesis, we can realise the desired non-decomposable ternary function.
\end{proof}

In quantum terminology, this corresponds to the following theorem.}

\begin{theorem}[Restatement of Theorem~\ref{thm:three-qubit-gadget}]\label{thm:three-qubit-entanglement}
 Let $\ket{\Psi}$ be an $n$-qubit genuinely entangled state with $n\geq 3$. There exists some choice of three of the $n$ qubits and a projection of the other $(n-3)$ qubits onto a tensor product of computational and Hadamard basis states that leaves the three qubits in a genuinely entangled state.
\end{theorem}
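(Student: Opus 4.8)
The plan is to obtain Theorem~\ref{thm:three-qubit-entanglement} as a direct translation of Theorem~\ref{thm:three-qubit-gadget} through the dictionary between functions and quantum states established in Sections~\ref{s:vector_perspective} and~\ref{s:quantum_states}. Recall that an $n$-qubit state $\ket{\Psi}$ corresponds to the function $\Psi\in\allf_n$ with $\ket{\Psi}=\sum_{\bx}\Psi(\bx)\ket{\bx}$, that $\ket{\Psi}$ is genuinely entangled precisely when $\Psi$ is non-decomposable, and that neither property depends on an overall non-zero scalar. So, given a genuinely entangled $n$-qubit state $\ket{\Psi}$ with $n\geq 3$, the first step is simply to pass to the associated non-decomposable function $f:=\Psi$ of arity $n\geq 3$.

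Next I would apply Theorem~\ref{thm:three-qubit-gadget} to $f$, obtaining indices $j<k<\ell$ and unary functions $u_m\in\{\dl_0,\dl_1,\dl_+,\dl_-\}$ for $m\in[n]\setminus\{j,k,\ell\}$ such that the contracted ternary function $g$ in its statement is non-decomposable. The key observation for the translation is that this contraction is exactly a partial inner product: $\ket{g} = \left(\bigotimes_{m\in[n]\setminus\{j,k,\ell\}}\bra{u_m}\right)\ket{f}$, where $\bra{u_m}$ acts on qubit $m$ and the three tensor factors at positions $j,k,\ell$ are untouched. Moreover $\dl_0,\dl_1$ correspond to the computational basis states $\ket{0},\ket{1}$, while $\dl_+,\dl_-$ correspond, up to the harmless factor $\tfrac{1}{\sqrt2}$, to the Hadamard basis states $\ket{+},\ket{-}$ (cf.\ the discussion opening Section~\ref{s:Holant_plus}); such scalar factors change nothing here, by Lemma~\ref{lem:scaling} on the holant side and because genuine entanglement is scale-invariant. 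Hence the projection of the other $(n-3)$ qubits onto $\bigotimes_{m}\ket{u_m}$ leaves qubits $j,k,\ell$ in the state $\ket{g}$, and since $g$ is non-decomposable, $\ket{g}$ is genuinely entangled — which is exactly the conclusion sought.

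The substantive work has already been carried out in the inductive proof of Theorem~\ref{thm:three-qubit-gadget}; this statement is its mirror image, so there is no genuine obstacle beyond bookkeeping. The one point requiring care is verifying that the gadget contraction in Theorem~\ref{thm:three-qubit-gadget} — summing over Boolean assignments to the internal edges weighted by $\prod_m u_m(x_m)$ — coincides with the quantum partial inner product against $\bigotimes_m\ket{u_m}$, and that normalisation constants may safely be dropped throughout. Both are immediate from the definitions of $\ket{f}$ and of the effective function of a gadget, so the proof is essentially a paragraph of unpacking notation.
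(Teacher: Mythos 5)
Your proposal is correct and takes the same approach as the paper: the paper states Theorem~\ref{thm:three-qubit-entanglement} explicitly as a restatement of Theorem~\ref{thm:three-qubit-gadget} with no separate proof, the translation between gadget contractions and partial inner products being exactly the intended (and routine) unpacking. Your note that the scalars $\frac{1}{\sqrt{2}}$ and complex conjugation are harmless here is a sensible piece of care, and since all of $\dl_0,\dl_1,\dl_+,\dl_-$ are real-valued, the bilinear holant pairing and the sesquilinear Dirac bra-ket coincide on them without further comment.
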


This result, which was not previously known in the quantum information theory literature, is stronger than Corollary~\ref{cor:popescu-rohrlich_restricted} in that we construct entangled three-qubit states rather than two-qubit ones.
On the other hand, our result may not hold for arbitrary choices of three qubits: all we show is that there exists some choice of three qubits for which it does hold.

The original proof of this theorem in an earlier version of this paper was long and involved; this new shorter proof was suggested by Gachechiladze and G\"{u}hne \cite{gachechiladze_personal_2017}.

\subsection{Symmetrising ternary functions}
\label{s:symmetrising_ternary}

The dichotomies given in Section \ref{s:results_ternary_symmetric} apply to symmetric ternary non-decomposable functions.
The functions constructed according to Theorem \ref{thm:three-qubit-gadget} are ternary and non-decomposable, but they are not generally symmetric.
Yet, these general ternary non-decomposable functions can be used to realise symmetric ones, possibly with the help of an additional binary non-decomposable function.
We prove this by distinguishing cases according to whether the ternary non-decomposable function constructed using Theorem \ref{thm:three-qubit-gadget} is in the GHZ or the $W$ entanglement class (cf.\ Section~\ref{s:entanglement}).

Consider a function $f\in\allf_3$ which is in the GHZ class.
By definition, there exist matrices $A,B,C\in\GL$ such that $\ket{f} = (A\otimes B\otimes C)\ket{\GHZ}$, i.e.
\[
 f(x_1,x_2,x_3) = \sum_{y_1,y_2,y_3\in\{0,1\}} A_{x_1, y_1} B_{x_2, y_2} C_{x_3, y_3} \EQ_3(y_1,y_2,y_3).
\]
We can thus draw $f$ as the `virtual gadget' shown in Figure \ref{fig:virtual_gadget}.
The `boxes' denoting the matrices are non-symmetric to indicate that $A,B,C$ are not in general symmetric.
The white dot is assigned $\EQ_3$.
This notation is not meant to imply that the binary functions associated with $A,B,C$ or the ternary equality function are available on their own.
Instead, thinking of the function as such a composite will simply make future arguments more straightforward.

Similarly, if $f$ is in the $W$ class then there exist matrices $A,B,C\in\GL$ such that $\ket{f} = (A\otimes B\otimes C)\ket{W}$, or equivalently,
\[
 f(x_1,x_2,x_3) = \sum_{y_1,y_2,y_3\in\{0,1\}} A_{x_1, y_1} B_{x_2, y_2} C_{x_3, y_3} \ONE_3(y_1,y_2,y_3).
\]
In this case, $f$ can again be represented as a virtual gadget, but now the white dot in Figure~\ref{fig:virtual_gadget} is assigned the function $\ONE_3$.

In both the GHZ and the $W$ case, three vertices assigned $f$ can be connected to form the rotationally symmetric gadget shown in Figure \ref{fig:symmetrising_GHZ}a.
In fact, since $f$ is a function over the Boolean domain, the effective function  $g$ of that gadget is fully symmetric: its value depends only on the Hamming weight of the inputs.
On the other hand, $g$ may be decomposable (in fact, any symmetric decomposable function must be degenerate) and it may be the all-zero function.
For a general non-symmetric $f$ there are three such symmetric gadgets that can be constructed by `rotating' $f$, i.e.\ by replacing $f(x_1,x_2,x_3)$ with $f(x_2,x_3,x_1)$ or $f(x_3,x_1,x_2)$.
Rotating $f$ in this way does not affect the planarity of the gadget.
The idea leads to the following lemmas.

\begin{figure}
 \centering
 \begin{tikzpicture}
	\begin{pgfonlayer}{nodelayer}
		\node [style=hollown] (0) at (0, -1) {};
		\node [style=none] (1) at (-2, 1.25) {};
		\node [style=map] (2) at (-2, 0.25) {$C$};
		\node [style=map] (3) at (0, 0.25) {$B$};
		\node [style=map] (4) at (2, 0.25) {$A$};
		\node [style=none] (5) at (0, 1.25) {};
		\node [style=none] (6) at (2, 1.25) {};
	\end{pgfonlayer}
	\begin{pgfonlayer}{edgelayer}
		\draw (0) to (3);
		\draw (1.center) to (2);
		\draw [bend right, looseness=1.00] (2) to (0);
		\draw [bend right, looseness=1.00] (0) to (4);
		\draw (4) to (6.center);
		\draw (3) to (5.center);
	\end{pgfonlayer}
\end{tikzpicture}
 \caption{A `virtual gadget' for a non-decomposable ternary function. The white vertex represents either $\EQ_3$ or $\ONE_3$ and the boxes represent (not necessarily symmetric) binary functions corresponding to the matrices $A$, $B$, and $C$, respectively.}
 \label{fig:virtual_gadget}
\end{figure}
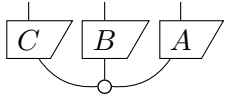

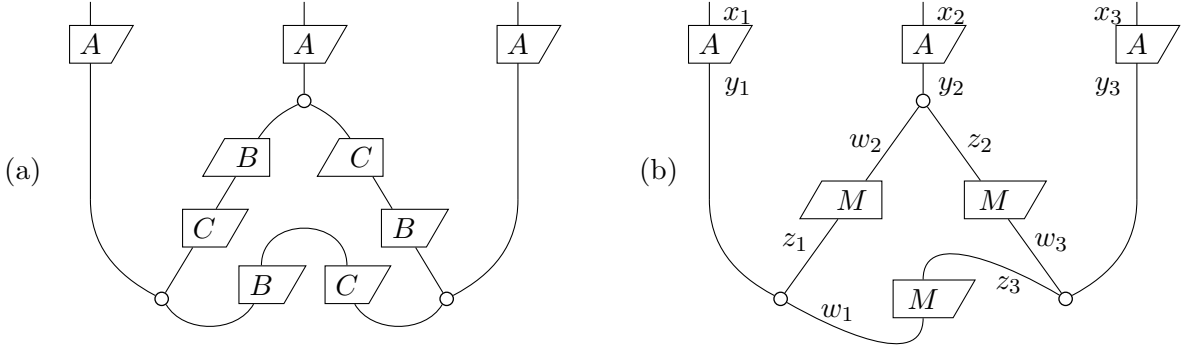
\begin{figure}
 \centering
 (a) \; \begin{tikzpicture}[scale=1.5]
	\begin{pgfonlayer}{nodelayer}
		\node [style=hollown] (0) at (0, 1.25) {};
		\node [style=hollown] (1) at (-2.5, -2.25) {};
		\node [style=hollown] (2) at (2.5, -2.25) {};
		\node [style=none] (3) at (0, 3) {};
		\node [style=none] (4) at (-3.75, -0.5) {};
		\node [style=none] (5) at (3.75, -0.5) {};
		\node [style=none] (6) at (-3.75, 3) {};
		\node [style=none] (7) at (3.75, 3) {};
		\node [style=map] (8) at (-3.75, 2.25) {$A$};
		\node [style=map] (9) at (-1.75, -1) {$C$};
		\node [style=map] (10) at (-0.75, -2) {$B$};
		\node [style=map] (11) at (0.75, -2) {$C$};
		\node [style=map] (12) at (1.75, -1) {$B$};
		\node [style=map] (13) at (3.75, 2.25) {$A$};
		\node [style=map] (14) at (0, 2.25) {$A$};
		\node [style=transposemap] (15) at (-1, 0.25) {$B$};
		\node [style=transposemap] (16) at (1, 0.25) {$C$};
	\end{pgfonlayer}
	\begin{pgfonlayer}{edgelayer}
		\draw [in=-90, out=152, looseness=1.00] (1) to (4.center);
		\draw [in=-90, out=28, looseness=1.00] (2) to (5.center);
		\draw (3.center) to (0);
		\draw (6.center) to (4.center);
		\draw (7.center) to (5.center);
		\draw (1) to (9);
		\draw (2) to (12);
		\draw (9) to (15);
		\draw [bend left=15, looseness=1.00] (15) to (0);
		\draw [bend left=15, looseness=1.00] (0) to (16);
		\draw (16) to (12);
		\draw [bend right=60, looseness=1.00] (1) to (10);
		\draw [bend left=60, looseness=1.00] (2) to (11);
		\draw [bend left=90, looseness=1.50] (10) to (11);
	\end{pgfonlayer}
\end{tikzpicture} \qquad
 (b) \begin{tikzpicture}[scale=1.5]
	\begin{pgfonlayer}{nodelayer}
		\node [style=hollown] (0) at (0, 1.25) {};
		\node [style=hollown] (1) at (-2.5, -2.25) {};
		\node [style=hollown] (2) at (2.5, -2.25) {};
		\node [style=none] (3) at (0, 3) {};
		\node [style=none] (4) at (-3.75, -0.5) {};
		\node [style=none] (5) at (3.75, -0.5) {};
		\node [style=none] (6) at (-3.75, 3) {};
		\node [style=none] (7) at (3.75, 3) {};
		\node [style=map] (8) at (-3.75, 2.25) {$A$};
		\node [style=transposemap] (9) at (-1.25, -0.5) {$M$};
		\node [style=map] (11) at (0, -2.25) {$M$};
		\node [style=map] (13) at (3.75, 2.25) {$A$};
		\node [style=map] (14) at (0, 2.25) {$A$};
		\node [style=map] (16) at (1.25, -0.5) {$M$};
		\node [style=none] (17) at (-3.25, 2.75) {$x_1$};
		\node [style=none] (18) at (-3.25, 1.5) {$y_1$};
		\node [style=none] (19) at (-2.25, -1.25) {$z_1$};
		\node [style=none] (20) at (-1.5, -2.5) {$w_1$};
		\node [style=none] (21) at (1.5, -2) {$z_3$};
		\node [style=none] (22) at (2.25, -1.25) {$w_3$};
		\node [style=none] (24) at (3.25, 1.5) {$y_3$};
		\node [style=none] (25) at (3.25, 2.75) {$x_3$};
		\node [style=none] (26) at (0.5, 1.5) {$y_2$};
		\node [style=none] (27) at (0.5, 2.75) {$x_2$};
		\node [style=none] (28) at (1, 0.5) {$z_2$};
		\node [style=none] (29) at (-1, 0.5) {$w_2$};
	\end{pgfonlayer}
	\begin{pgfonlayer}{edgelayer}
		\draw [in=-90, out=152] (1) to (4.center);
		\draw [in=-90, out=28] (2) to (5.center);
		\draw (3.center) to (0);
		\draw (6.center) to (4.center);
		\draw (7.center) to (5.center);
		\draw (1) to (9);
		\draw (0) to (16);
		\draw [in=90, out=149] (2) to (11);
		\draw (9) to (0);
		\draw (16) to (2);
		\draw [in=330, out=-90] (11) to (1);
	\end{pgfonlayer}
\end{tikzpicture}
 \caption{(a) A symmetric gadget constructed from three copies of the ternary function from Figure~\ref{fig:virtual_gadget}. (b) A simplified version of the same gadget, where $M:=C^TB$. The variable names next to the edges are those used in \eqref{eq:symmetrising}.}
 \label{fig:symmetrising_GHZ}
\end{figure}

\begin{lemma}\label{lem:GHZ_symmetrise}
 Suppose $f\in\allf_3$ has GHZ type, i.e.\ $\ket{f}=(A\otimes B\otimes C)\ket{\GHZ}$ for some matrices $A,B,C\in\GL$.
 Then there exists a non-decomposable symmetric ternary function $g\in S(\{f\})$, which is furthermore realisable by a planar gadget.
\end{lemma}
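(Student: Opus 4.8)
The plan is to analyse the effective function of the gadget in Figure~\ref{fig:symmetrising_GHZ}. With $M := C^{T}B$ and using $\ket{\GHZ}\doteq\ket{\EQ_{3}}$, part~(b) of that figure is a triangle of three copies of $\EQ_{3}$ joined by copies of $M$, each external leg passing through a copy of $A$; reading it off gives
\[
 g(x_{1},x_{2},x_{3}) = \sum_{y_{1},y_{2},y_{3}\in\{0,1\}} A_{x_{1}y_{1}}A_{x_{2}y_{2}}A_{x_{3}y_{3}}\, M_{y_{1}y_{2}}M_{y_{2}y_{3}}M_{y_{3}y_{1}},
\]
so $\ket{g} = (A\otimes A\otimes A)\ket{T_{M}}$ with $\ket{T_{M}} := \sum_{\by} M_{y_{1}y_{2}}M_{y_{2}y_{3}}M_{y_{3}y_{1}}\ket{\by}$. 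This is manifestly invariant under cyclic permutation of $(x_{1},x_{2},x_{3})$, and since the cyclic-group orbits on $\{0,1\}^{3}$ are exactly the Hamming-weight classes, $g$ is fully symmetric. The same gadget applied to the two cyclic rotations of $f$ (whose matrix triples are $(C,A,B)$ and $(B,C,A)$ rather than $(A,B,C)$) produces symmetric ternary functions obtained from the above by replacing $M$ with $B^{T}A$ and with $A^{T}C$ respectively; all of these gadgets are planar.

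Next I would reduce non-decomposability of $g$ to a property of $M$. Because $A\in\GL$ and local invertible transformations preserve the tensor-decomposition structure of a function (Section~\ref{s:entanglement}), $\ket{g}$ is non-degenerate iff $\ket{T_{M}}$ is, and since $g$ is symmetric it is non-decomposable iff it is non-degenerate iff $T_{M}$ is non-degenerate. As a symmetric function $T_{M} = [\,M_{00}^{3},\, M_{00}M_{01}M_{10},\, M_{01}M_{10}M_{11},\, M_{11}^{3}\,] =: [t_{0},t_{1},t_{2},t_{3}]$. A short case distinction using $\det M\neq 0$ together with Lemma~\ref{lem:li_symmetric} shows: if $M_{00}\neq 0$ (the case $M_{11}\neq 0$ is symmetric), then either $M_{01}M_{10}=0$, so $T_{M}$ equals $[M_{00}^{3},0,0,M_{11}^{3}]$ (with $M_{11}\neq 0$, else $\det M = 0$) or $[M_{00}^{3},M_{00}M_{01}M_{10},0,0]$, both non-degenerate; or $M_{01}M_{10}\neq 0$, in which case the equation $t_{0}t_{2}=t_{1}^{2}$ appearing in the degeneracy criterion would force $\det M = 0$. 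Hence $T_{M}$ is non-degenerate unless $M_{00}=M_{11}=0$, i.e.\ unless $M$ is anti-diagonal (in which case $T_{M}$ vanishes identically). So the gadget for a given rotation yields a non-decomposable symmetric ternary function in $S(\{f\})$ exactly when the corresponding matrix, one of $C^{T}B$, $B^{T}A$, $A^{T}C$, is not anti-diagonal; if one of the three is not anti-diagonal we are done.

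It remains to treat the case where all three of $C^{T}B$, $B^{T}A$, $A^{T}C$ are anti-diagonal, and here I would show that $f$ is itself already symmetric. Eliminating $A$ and $C$: from $A^{T}C$ and $C^{T}B$ anti-diagonal one gets $(A^{-1}B)^{T} = N_{1}^{T}N_{2}^{-1}$, and from $B^{T}A$ anti-diagonal one gets $(A^{-1}B)^{T} = B^{T}B\,(N_{3}^{T})^{-1}$, with $N_{1},N_{2},N_{3}$ invertible anti-diagonal; equating and using that the transpose and inverse of an invertible anti-diagonal matrix are anti-diagonal, while a product of three anti-diagonal matrices is anti-diagonal, shows $B^{T}B$ is anti-diagonal, hence, being symmetric, $B^{T}B \doteq X$. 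By the cyclic symmetry $A\to B\to C\to A$ of the hypothesis, also $A^{T}A\doteq X$ and $C^{T}C\doteq X$, so by Lemma~\ref{lem:ATA-X} each of $A,B,C$ has the form $KX^{\varepsilon}D$ with $\varepsilon\in\{0,1\}$ and $D$ invertible diagonal; write $A=KX^{a}D_{A}$, $B=KX^{b}D_{B}$, $C=KX^{c}D_{C}$. Since $K^{T}K = 2X$, we get $C^{T}B = 2D_{C}X^{b+c+1}D_{B}$, which is anti-diagonal iff $b=c$, and similarly the remaining two conditions amount to $a=b$ and $a=c$; so all three hold iff $a=b=c$. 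Pulling the $D_{\bullet}$ and the resulting common bit flip $X^{\otimes 3}$ through $\EQ_{3}$ gives $\ket{f} = K^{\otimes 3}(\alpha\ket{000}+\beta\ket{111})$ with $\alpha,\beta\neq 0$, i.e.\ $f(x_{1},x_{2},x_{3}) = \alpha\, i^{x_{1}+x_{2}+x_{3}} + \beta\, (-i)^{x_{1}+x_{2}+x_{3}}$, which depends only on the Hamming weight. Thus $f$ is symmetric, and since it has GHZ type it is non-decomposable, so $g:=f$ — realised by the single-vertex, trivially planar gadget — is the function sought.

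The main obstacle is the third paragraph: showing that failure of all three symmetrising gadgets forces $A$, $B$, $C$ into the shape $KX^{\varepsilon}D$ with a common parity, which is where Lemma~\ref{lem:ATA-X} and the algebra of anti-diagonal $2\times 2$ matrices carry the argument. The first two paragraphs are routine once the gadget's effective function has been written down.
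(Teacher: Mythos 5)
Your proposal is correct, and it tracks the paper's proof closely in the first two steps (the same triangle gadget, the same reduction to $T_M$ via $g=A\circ T_M$, and the same conclusion that $T_M$ is degenerate iff $C^TB$ is anti-diagonal). Two places diverge. For the non-degeneracy of $T_M=[a^3,abc,bcd,d^3]$, you argue directly from the $t_1^2=t_0t_2$ part of Lemma~\ref{lem:li_symmetric}'s degeneracy criterion with a short case split on $a$, $d$, $bc$; the paper instead computes the full polynomial $P_{h'}=a^2d^2(ad+3bc)(ad-bc)^3$ from \eqref{eq:GHZ_polynomial_symmetric} and distinguishes GHZ type versus $W$ type versus degenerate. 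Both routes give ``degenerate iff $a=d=0$''; the paper's version also records the entanglement type, though that extra information is not needed for this lemma. The more substantive difference is in the final step, where all three rotated gadgets fail. You derive $B^TB=N_1^TN_2^{-1}N_3^T\doteq X$ (and by cyclic symmetry $A^TA\doteq X$, $C^TC\doteq X$), invoke Lemma~\ref{lem:ATA-X} to write each of $A,B,C$ as $KX^{\varepsilon}D$, then compute $C^TB=2D_CX^{b+c+1}D_B$ to force a common parity bit, concluding $\ket{f}=K^{\otimes3}(\alpha\ket{000}+\beta\ket{111})$. The paper avoids characterising $A,B,C$ individually: it rearranges $B^TA=XD_1$, $C^TB=XD_2$, $A^TC=XD_3$ to get $B=AD_B$ and $C=AD_C$ with $D_B,D_C$ diagonal, giving $\ket{f}=A^{\otimes3}(\beta_0\gamma_0\ket{000}+\beta_1\gamma_1\ket{111})$ directly, without needing Lemma~\ref{lem:ATA-X} at all. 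Your route is valid but does a bit more work; the paper's is shorter and more self-contained, reserving Lemma~\ref{lem:ATA-X} for where it is actually unavoidable (the hardness part of Theorem~\ref{thm:holant_plus}).
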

\begin{proof}
 The function $f$ can be represented by a virtual gadget as in Figure~\ref{fig:virtual_gadget}, where the white vertex is assigned $\EQ_3$ and the matrices $A$, $B$, and $C$ are those appearing in the statement of the lemma.
 We will realise a symmetric ternary function by using the triangle gadget in Figure~\ref{fig:symmetrising_GHZ}a.
 Note that there are three different planar versions of this gadget by cyclically permuting the inputs of $f$ (and thus the roles of $A$, $B$, and $C$) in the gadget.
 The desired result will follow from arguing that either at least one of these three gadgets is non-decomposable and therefore yields the desired function $g$, or $f$ is already symmetric.
 In the latter case, we simply take $g:=f$.

 Consider the gadget in Figure~\ref{fig:symmetrising_GHZ}a.
 To simplify the argument, we will not parameterise each of the three matrices $A$, $B$, and $C$ individually; instead we let
 \begin{equation}\label{eq:M}
  M := C^T B = \begin{pmatrix}a&b\\c&d\end{pmatrix},
 \end{equation}
 yielding the gadget in Figure~\ref{fig:symmetrising_GHZ}b.
 The effective function of that gadget, which we will denote $h(x_1,x_2,x_3)$ is equal to
 \begin{multline}
  \sum A_{x_1,y_1} A_{x_2,y_2} A_{x_3,y_3} \EQ_3(y_1,z_1,w_1) \EQ_3(y_2,z_2,w_2) \EQ_3(y_3,z_3,w_3) M_{z_1,w_2} M_{z_2,w_3} M_{z_3,w_1} \\
  = \sum_{y_1,y_2,y_3\in\{0,1\}} A_{x_1,y_1} A_{x_2,y_2} A_{x_3,y_3} M_{y_1,y_2} M_{y_2,y_3} M_{y_3,y_1}, \label{eq:symmetrising}
 \end{multline}
 where the first sum is over all $y_1,y_2,y_3,z_1,z_2,z_3,w_1,w_2,w_3\in\{0,1\}$.
 Let the function $h'$ be such that $h=A\circ h'$, then, by invertibility of $A$,
 \[
  h'(y_1,y_2,y_3) = M_{y_1,y_2} M_{y_2,y_3} M_{y_3,y_1}.
 \]
 Plugging in the values from \eqref{eq:M}, we find that $h' = [a^3, abc, bcd, d^3]$.
 Since $h$ and $h'$ are connected by a holographic transformation, they must be in the same entanglement class: in particular, $h$ is non-decomposable if and only if $h'$ is non-decomposable.
 Hence we may work with $h'$ instead of $h$ for the remainder of the proof.
 
 Recall from Section~\ref{s:entanglement} that a symmetric ternary function is non-decomposable if and only if it has either GHZ type or $W$ type.
 By Lemma~\ref{lem:li_symmetric}, $h'$ has GHZ type if and only if:
 \[
  P_{h'}:=a^2 d^2 (ad+3bc)(ad-bc)^3\neq 0.
 \]
 It has $W$ type if $P_{h'} = 0$ and furthermore
 \begin{equation}\label{eq:W-conditions-h'}
  \left( (abc)^2 \neq a^3 bcd \right) \vee \left( (bcd)^2 \neq abcd^3 \right).
 \end{equation}
 If neither of these conditions is satisfied, $h'$ is decomposable (and in fact degenerate).
 
 Now, as $M$ is invertible, we have $ad-bc\neq 0$.
 Thus, $h'$ fails to have GHZ type only if at least one of $a$, $d$, or $(ad+3bc)$ is zero.
 We consider the cases individually.
 \begin{itemize}
  \item Suppose $a=0$.
   Then $P_{h'}=0$ and \eqref{eq:W-conditions-h'} becomes $(0\neq 0)\vee(bcd\neq 0)$.
   Hence $h'$ has $W$ type if $bcd\neq 0$ and is degenerate otherwise.
   Since $M$ is invertible, $a=0$ implies $bc\neq 0$.
   Thus, $h'$ is degenerate (in fact, it is identically zero) if $a=d=0$ and it has $W$ type otherwise.
  \item Suppose $a\neq 0$ and $d=0$.
   Then $P_{h'}=0$ and \eqref{eq:W-conditions-h'} becomes $(abc\neq 0)\vee (0\neq 0)$.
   Hence $h'$ has $W$ type if $abc\neq 0$. 
   But invertibility of $M$ together with $d=0$ implies that $bc\neq 0$, so in this case $a,b,c$ are all guaranteed to be non-zero.
   Therefore $h'$ always has $W$ type.
  \item Suppose $a,d\neq 0$ and $ad+3bc=0$, i.e.\ $bc=-\frac{1}{3}ad$.
   Then $P_{h'}=0$ and by substituting for $bc$, \eqref{eq:W-conditions-h'} becomes
   \[
    \left( \frac{1}{9} a^4 d^2 \neq -\frac{1}{3} a^4 d^2 \right) \vee \left(\frac{1}{9} a^2 d^4 \neq -\frac{1}{3} a^2 d^4 \right),
   \]
   which is true for all $a,d\neq 0$.
   Therefore $h'$ always has $W$ type.
 \end{itemize}
 By combining the three cases, we find that $h'$ is degenerate if and only if $a=d=0$, or equivalently if and only if $(C^TB)_{00}=(C^TB)_{11}=0$.
 
 As noted before, there are three different planar gadgets that can be constructed from the same non-symmetric ternary function, by `rotating' it.
 Assume that all three  gadget constructions yield a decomposable function.
 For the original gadget, this means that $(C^TB)_{00}=(C^TB)_{11}=0$.
 For the rotated versions of the gadget, the assumption furthermore implies that
 \[
  (B^TA)_{00}=0=(B^TA)_{11} \quad\text{and}\quad (A^TC)_{00}=0=(A^TC)_{11}.
 \]
 These equalities indicate that $C^TB$, $B^TA$, and $A^TC$ are purely off-diagonal matrices.
 Additionally, since $A,B,C$ are invertible, $C^TB$, $B^TA$, and $A^TC$ must also be invertible.
 Hence there exist invertible diagonal matrices $D_1,D_2,D_3\in\GL$ such that:
 \begin{equation}\label{eq:GHZ_fail_conditions}
  B^TA=XD_1, \qquad C^TB=XD_2, \quad\text{and}\quad A^TC = XD_3.
 \end{equation}
 By rearranging the second equation of \eqref{eq:GHZ_fail_conditions}, we find $B = (C^T)^{-1} X D_2$.
 Now, by transposing, rearranging, and then inverting, the third equation of \eqref{eq:GHZ_fail_conditions} is equivalent to
 \[
  C^T A = D_3 X \quad\Longleftrightarrow\quad C^T = D_3 X A^{-1} \quad\Longleftrightarrow\quad (C^T)^{-1} = A X D_3^{-1},
 \]
 so $B = A X D_3^{-1} X D_2$.
 Similarly, transposing and then rearranging the second equation of \eqref{eq:GHZ_fail_conditions} yields $C = (B^T)^{-1} D_2 X$.
 By rearranging and inverting, the first equation of \eqref{eq:GHZ_fail_conditions} is equivalent to
 \[
  B^T = X D_1 A^{-1} \quad\Longleftrightarrow\quad (B^T)^{-1} = A D_1^{-1} X,
 \]
 so $C = A D_1^{-1} X D_2 X$.
 
 Both $D_1^{-1} X D_2 X$ and $X D_3^{-1} X D_2$ are diagonal matrices; write them as
 \begin{equation}
  D_C = D_1^{-1} X D_2 X = \begin{pmatrix}\gamma_0&0\\0&\gamma_1\end{pmatrix} \quad\text{and}\quad D_B = X D_3^{-1} X D_2 = \begin{pmatrix}\beta_0&0\\0&\beta_1\end{pmatrix}
 \end{equation}
 for some $\beta_0,\beta_1,\gamma_0,\gamma_1\in\AA\setminus\{0\}$.
 Then $B=A D_B$ and $C=A D_C$, so:
 \begin{equation}
  \ket{f} = (A\otimes B\otimes C)\ket{\GHZ} = A\t{3} (I\otimes D_B\otimes D_C) \ket{\GHZ} = A\t{3}\left( \beta_0\gamma_0\ket{000}+\beta_1\gamma_1\ket{111} \right),
 \end{equation}
 where $I$ is the 2 by 2 identity matrix.
 Hence the assumption that all three gadgets are decomposable implies that $f=A\circ [\beta_0\gamma_0, 0, 0, \beta_1\gamma_1]$, i.e.\ the assumption implies that $f$ is already symmetric.
 
 We have shown that either one of the three possible planar triangle gadgets yields a symmetric non-degenerate function, or $f$ itself is already symmetric.
 Thus there always exists a non-degenerate symmetric ternary function in $S(\{f\})$ which is realisable by a planar gadget.
\end{proof}

\begin{lemma}\label{lem:W_symmetrise}
 Suppose $f\in\allf_3$ has $W$ type, i.e.\ $\ket{f}=(A\otimes B\otimes C)\ket{W}$ for some matrices $A,B,C\in\GL$.
 {If $f\notin (K\circ\cM)\cup(KX\circ\cM)$, there exists a non-degenerate symmetric ternary function $g\in S(\{f\})$.
 The function}
 $g$ has GHZ type and is realisable by a planar gadget.
\end{lemma}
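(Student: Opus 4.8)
The plan is to mirror the proof of Lemma~\ref{lem:GHZ_symmetrise}. Represent $f$ by the virtual gadget of Figure~\ref{fig:virtual_gadget} with the white vertex now assigned $\ONE_3$, and take for $g$ the effective function of the planar triangle gadget of Figure~\ref{fig:symmetrising_GHZ}a built from three copies of $f$. This $g$ lies in $S(\{f\})$, and being invariant under cyclic permutation of its three arguments it is automatically fully symmetric, since the cyclic group acts transitively on each Hamming-weight level of $\{0,1\}^3$. Exactly as in the GHZ case, collapsing the gadget gives the form of Figure~\ref{fig:symmetrising_GHZ}b with $M := C^TB = \smm{a&b\\c&d}$, and the effective function equals $A\circ h'$ where
\[
 h'(y_1,y_2,y_3) = \sum_{z_1,z_2,z_3,w_1,w_2,w_3\in\{0,1\}} \ONE_3(y_1,z_1,w_1)\,\ONE_3(y_2,z_2,w_2)\,\ONE_3(y_3,z_3,w_3)\, M_{z_1,w_2}M_{z_2,w_3}M_{z_3,w_1}.
\]
Since holographic transformations preserve the SLOCC class, $g$ and $h'$ have the same entanglement type, so it suffices to analyse $h'$.

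The first concrete step is the (routine) evaluation of $h'$. Writing $s := b+c$ and $e := \det M = ad-bc$, one obtains
\[
 h' = [\, s^3 + 3se,\ \ a(s^2+e),\ \ a^2 s,\ \ a^3 \,].
\]
Feeding this into the symmetric GHZ criterion of Lemma~\ref{lem:li_symmetric}, the polynomial $(h'_0h'_3-h'_1h'_2)^2 - 4\big((h'_1)^2 - h'_0h'_2\big)\big((h'_2)^2 - h'_1h'_3\big)$ collapses to $4a^6 e^3$. Because $M$ is invertible, $e\neq 0$, so $h'$ has GHZ type whenever $a = M_{00}\neq 0$, and then $g$ does too, hence $g$ is non-degenerate. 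When $a=0$ a direct computation gives $h' = [\,b^3+c^3,0,0,0\,]$, a scalar multiple of $\ket{000}$ (or the zero function), so in that case the triangle gadget yields only a degenerate function. Conclusion: the triangle gadget built from $M$ produces a non-degenerate, GHZ-type, symmetric, planar function in $S(\{f\})$ exactly when $(C^TB)_{00}\neq 0$.

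Cyclically rotating the arguments of $f$ — which, as noted in the proof of Lemma~\ref{lem:GHZ_symmetrise}, preserves planarity — yields two further triangle gadgets, corresponding to the matrices $B^TA$ and $A^TC$. Hence all three gadgets fail precisely when $(C^TB)_{00} = (B^TA)_{00} = (A^TC)_{00} = 0$; writing $\mathbf{a}_0,\mathbf{b}_0,\mathbf{c}_0\in\AA^2$ for the first columns of $A,B,C$, this says exactly that $\mathbf{a}_0,\mathbf{b}_0,\mathbf{c}_0$ are pairwise orthogonal under the (bilinear, not Hermitian) form $u^Tv = u_1v_1+u_2v_2$. The key step is the linear-algebra fact that three pairwise orthogonal nonzero vectors in the plane must all be proportional to a single isotropic vector: if one of them were non-isotropic its orthogonal complement would be one-dimensional, forcing the other two to be proportional and then self-orthogonal, contradicting their orthogonality to a non-isotropic vector; and the two isotropic lines $\spans\smm{1\\i}$ and $\spans\smm{1\\-i}$ are not mutually orthogonal (their $u^Tv$-product is $2$), so all three lie on one of them. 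These two lines are precisely the first columns of $K$ and of $KX$, so after scaling $A = KR_A$, $B = KR_B$, $C = KR_C$ with $R_A,R_B,R_C$ upper triangular and invertible (or the same with $KX$ throughout).

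To finish, note that an upper triangular matrix sends $\ket 0$ to a multiple of $\ket 0$, so each of the three weight-one terms of $\ket W$ is mapped by $R_A\otimes R_B\otimes R_C$ into a combination of that term and $\ket{000}$; thus $(R_A\otimes R_B\otimes R_C)\ket W$ is supported on bit strings of Hamming weight at most $1$ and hence is the vector of some $m\in\cM$. Therefore $\ket f = K\t{3}\ket m$, i.e.\ $f\in K\circ\cM$ (respectively $f\in KX\circ\cM$). Taking the contrapositive: if $f\notin(K\circ\cM)\cup(KX\circ\cM)$, at least one of the three rotated triangle gadgets is non-degenerate, and it provides the desired planar, GHZ-type, symmetric $g\in S(\{f\})$. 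The main obstacle is the bilinear-orthogonality argument in the third paragraph — in particular being careful that it is the symmetric bilinear form (not the Hermitian inner product) that governs the construction, and that its isotropic lines are exactly those spanned by the columns of $K$ and $KX$.
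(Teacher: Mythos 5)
Your proof is correct and takes a somewhat different route from the paper's. The paper applies an LDU decomposition $C^TB = LDU$ with unit-triangular $L,U$ and invertible diagonal $D$, pushes the triangular factors through the $\ONE_3$ vertices using the identity $(I\otimes U\otimes L^T)\ket W = \bigl(\smm{1&a+d\\0&1}\otimes I\otimes I\bigr)\ket W$, and then evaluates the resulting matchgate-style triangle to get $g'=[0,b^2c,0,b^3]$, which is GHZ type since $b,c\neq 0$. You instead evaluate the symmetrisation polynomial for arbitrary $M=C^TB$, getting the closed form $h'=[s^3+3se,\ a(s^2+e),\ a^2s,\ a^3]$ with $s=b+c$, $e=\det M$, and the GHZ criterion collapsing to $4a^6e^3$ (this algebra checks out). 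Both computations agree where they both apply ($a=M_{00}\neq 0$ corresponds to $b_{\text{paper}}\neq 0$, and $4a^6e^3 = 4b^6(bc)^3 = 4b^9c^3$).

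The substantive difference is that you notice something the paper glosses over: the LDU decomposition only exists when $(C^TB)_{00}\neq 0$, and as written the paper never justifies this (nor does its proof visibly invoke the hypothesis $f\notin(K\circ\cM)\cup(KX\circ\cM)$, which the lemma needs — e.g.\ $f=K\circ\ONE_3$ has $C^TB=K^TK\doteq X$ with vanishing $(0,0)$ entry and indeed the gadget then degenerates). You handle this correctly: you observe that the gadget degenerates precisely when $(C^TB)_{00}=0$, consider the three planar rotations (as the paper does for the GHZ case but not the $W$ case), and show via the bilinear-orthogonality argument that simultaneous failure of all three forces the first columns of $A,B,C$ onto a common isotropic line, whence $f\in K\circ\cM$ or $f\in KX\circ\cM$, contradicting the hypothesis. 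That final linear-algebra step is sound, and the use of the symmetric bilinear form rather than the Hermitian one is exactly the right distinction. Overall your version is the more complete argument; the paper's LDU shortcut is tidy but silently assumes the nondegenerate case.
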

\begin{proof}
 Since $f$ has $W$ type, we can write $\ket{f}=(A\otimes B\otimes C)\ket{W}$: i.e.\ the function $f$ can be thought of as the `virtual gadget' given in Figure~\ref{fig:virtual_gadget}, where the white dot is now assigned $\ONE_3$.
 We can therefore combine three copies of $f$ into the triangle gadget given in Figure~\ref{fig:symmetrising_GHZ}a, analogous to Lemma~\ref{lem:GHZ_symmetrise}.

 There are two cases, depending on whether $(C^T B)_{0,0}$ vanishes.

 \textit{Case~1}: {By basic linear algebra, if $(C^T B)_{0,0}\neq 0$, we can apply the PLDU decomposition\footnote{The PLDU decomposition (permutation, lower triangular, diagonal, upper triangular) is a variant on the more common PLU decomposition. In contrast to the latter, it requires the diagonal terms in the lower and upper triangular matrices to be 1 and includes an additional diagonal matrix in the product.} to $C^T B$ with a trivial permutation $P$, i.e.:
 \[
  C^T B = LDU = \pmm{1&0\\a&1}\pmm{b&0\\0&c}\pmm{1&d\\0&1},
 \]
 where $a,b,c,d\in\AA$ with $b,c\neq 0$ and $L=\smm{1&0\\a&1}$ is lower triangular, $D=\smm{b&0\\0&c}$ is diagonal, and $U=\smm{1&d\\0&1}$ is upper triangular.
 This is illustrated in Figure~\ref{fig:sym-W}a.

 Now, note that each of the white dots assigned $\ONE_3$ is connected to a copy of $U$ and a transposed copy of $L$.
 It is straightforward to show that
 \begin{equation}\label{eq:triangular-W}
  (I\otimes U\otimes L^T)\ket{W} = (a+d)\ket{000} + \ket{W} = \left( \pmm{1&a+d\\0&1}\otimes I \otimes I\right)\ket{W},
 \end{equation}
 so the diagram can be transformed to the one of Figure~\ref{fig:sym-W}b, where $A':=A\smm{1&a+d\\0&1}$.
 
 \begin{figure}
  \centering
  (a) \begin{tikzpicture}[scale=1.5]
	\begin{pgfonlayer}{nodelayer}
		\node [style=hollown] (0) at (0, 1.25) {};
		\node [style=hollown] (1) at (-2.5, -2.25) {};
		\node [style=hollown] (2) at (2.5, -2.25) {};
		\node [style=none] (3) at (0, 3) {};
		\node [style=none] (4) at (-3.75, -0.5) {};
		\node [style=none] (5) at (3.75, -0.5) {};
		\node [style=none] (6) at (-3.75, 3) {};
		\node [style=none] (7) at (3.75, 3) {};
		\node [style=map] (8) at (-3.75, 2.25) {$A$};
		\node [style=map] (13) at (3.75, 2.25) {$A$};
		\node [style=map] (14) at (0, 2.25) {$A$};
		\node [style=transposemap] (11) at (0.75, -2) {$L$};
		\node [style=unitary] (17) at (0, -1) {$D$};
		\node [style=map] (10) at (-0.75, -2) {$U$};
		\node [style=transposemap] (9) at (-1.75, -1.25) {$L$};
		\node [style=unitary] (18) at (-1.25, -0.375) {$D$};
		\node [style=transposemap] (15) at (-0.75, 0.5) {$U$};
		\node [style=map] (16) at (0.75, 0.5) {$L$};
		\node [style=unitary] (19) at (1.25, -0.375) {$D$};
		\node [style=map] (12) at (1.75, -1.25) {$U$};
	\end{pgfonlayer}
	\begin{pgfonlayer}{edgelayer}
		\draw [in=-90, out=152, looseness=1.00] (1) to (4.center);
		\draw [in=-90, out=28, looseness=1.00] (2) to (5.center);
		\draw (3.center) to (0);
		\draw (6.center) to (4.center);
		\draw (7.center) to (5.center);
		\draw (1) to (9);
		\draw (2) to (12);
		\draw (9) to (15);
		\draw [bend left=15, looseness=1.00] (15) to (0);
		\draw [bend left=15, looseness=1.00] (0) to (16);
		\draw (16) to (12);
		\draw [bend right=60, looseness=1.00] (1) to (10);
		\draw [bend left=60, looseness=1.00] (2) to (11);
		\draw [bend left=90, looseness=1.50] (10) to (11);
	\end{pgfonlayer}
\end{tikzpicture} \qquad (b) \begin{tikzpicture}[scale=1.5]
	\begin{pgfonlayer}{nodelayer}
		\node [style=hollown] (0) at (0, 1.25) {};
		\node [style=hollown] (1) at (-2.5, -2.25) {};
		\node [style=hollown] (2) at (2.5, -2.25) {};
		\node [style=none] (3) at (0, 3) {};
		\node [style=none] (4) at (-3.75, -0.5) {};
		\node [style=none] (5) at (3.75, -0.5) {};
		\node [style=none] (6) at (-3.75, 3) {};
		\node [style=none] (7) at (3.75, 3) {};
		\node [style=map] (8) at (-3.75, 2.25) {$A'$};
		\node [style=map] (13) at (3.75, 2.25) {$A'$};
		\node [style=map] (14) at (0, 2.25) {$A'$};
		\node [style=unitary] (17) at (0, -2.25) {$D$};
		\node [style=unitary] (18) at (-1.25, -0.375) {$D$};
		\node [style=unitary] (19) at (1.25, -0.375) {$D$};
	\end{pgfonlayer}
	\begin{pgfonlayer}{edgelayer}
		\draw [in=-90, out=152, looseness=1.00] (1) to (4.center);
		\draw [in=-90, out=28, looseness=1.00] (2) to (5.center);
		\draw (3.center) to (0);
		\draw (6.center) to (4.center);
		\draw (7.center) to (5.center);
		\draw (1) to (18);
		\draw (2) to (19);
		\draw [bend left=15, looseness=1.00] (18) to (0);
		\draw [bend left=15, looseness=1.00] (0) to (19);
		\draw (1) to (17);
		\draw (2) to (17);
	\end{pgfonlayer}
\end{tikzpicture}
  \caption{(a) The symmetrisation gadget after the matrix $C^T B$ has been converted into $LDU$ decomposition, where $L$ is lower triangular, $D$ is diagonal (and hence symmetric), and $U$ is upper triangular.
  (b) If the white vertices are assigned the function $\ONE_3$, the triangular matrices can be replaced by a different matrix on the outer leg, which is absorbed into $A$ to form $A'$.}
 \label{fig:sym-W}
 \end{figure}
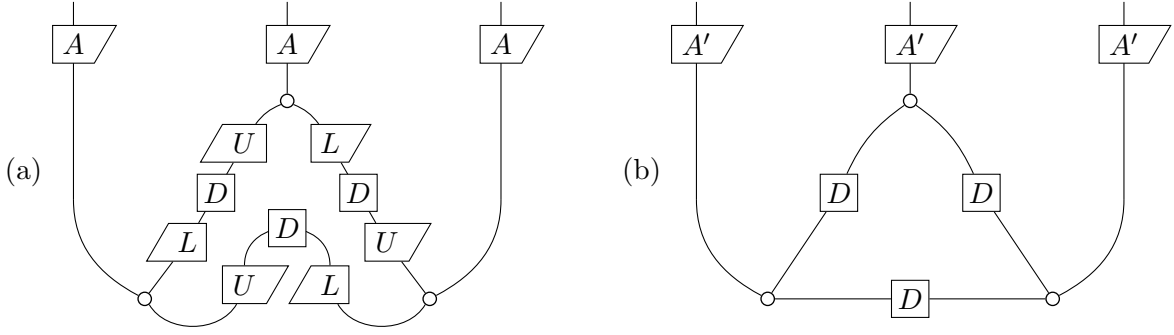
 
 The effective function of the gadget in Figure~\ref{fig:sym-W}b, which we will denote $g(x_1,x_2,x_3)$, equals
 \[
  \sum A_{x_1,y_1}' A_{x_2,y_2}' A_{x_3,y_3}' \ONE_3(y_1,z_1,w_1) \ONE_3(y_2,z_2,w_2)
  \ONE_3(y_3,z_3,w_3) D_{z_1,w_2} D_{z_2,w_3} D_{z_3,w_1},
 \]
 where the sum is over all $y_1,y_2,y_3,z_1,z_2,z_3,w_1,w_2,w_3\in\{0,1\}$.
 Let $g':=(A')^{-1}\circ g$, then
 \[
  g'(y_1,y_2,y_3)
  = \sum \ONE_3(y_1,z_1,z_3) \ONE_3(y_2,z_2,z_1) \ONE_3(y_3,z_3,z_2) D_{z_1,z_1} D_{z_2,z_2} D_{z_3,z_3},
 \]
 where the sum is now over $z_1,z_2,z_3\in\{0,1\}$ and we have used the property that $D$ is diagonal.

 Recall that $\ONE_3$ is the `perfect matchings' constraint.
 This means the gadget for $g'$ is a symmetric matchgate on three vertices, where every internal edge that is {in the matching} contributes a factor $c$ to the weight, and every internal edge that is not {in the matching} contributes a factor $b$ to the weight.
 Clearly, $g'(1,1,1)=b^3$ since if the three external edges are {in the matching}, there can be no internal edges {in the matching}.
 As a matchgate, $g'$ must satisfy a parity condition: in particular, given the odd number of vertices, it is 0 on inputs of even Hamming weight.
 The remaining value on inputs of Hamming weight 1, having one external edge and thus one internal edge to make a matching, is $b^2c$.
 Combining these, we have $g' = [0,\; b^2c,\; 0,\; b^3]$.
 Note that $\ket{g}$ and $\ket{g'}$ are by definition equivalent under SLOCC, i.e.\ $g$ and $g'$ are in the same entanglement class.
 We may therefore reason about $g'$ instead of $g$.

 By Lemma~4.5, $g'$ has GHZ type if and only if
 \[
  0\neq (0 - 0)^2 - 4\left((b^2c)^2 - 0\right)\left(0 - b^2c b^3\right) = 4 b^9 c^3.
 \]
 Yet $b,c\neq 0$, so $g'$ always has GHZ type.
 The gadget is planar.}

 \textit{Case~2}: If $(C^T B)_{0,0} = 0$, the PLDU decomposition would require $P=X$ and the argument of Case~1 does not work.
 Yet we will now show that we can avoid this situation: under the assumptions made in the lemma, there is always at least one way of placing a rotated copy of $f$ in the symmetrisation gadget that leads to Case~1.

 Assume for a contradiction that we have $(C^T B)_{0,0} = 0$ for all three rotations of~$f$.

 We will first consider just one of the resulting symmetrisation gadgets. Instead of decomposing as above, let $M := C^T B = \smm{0&a\\b&c}$, for some $a,b,c\in\AA$ with $a,b\neq 0$.
 The corresponding diagram is Figure~\ref{fig:symmetrising_GHZ}b.
 The function of the gadget can thus be written as
 \begin{multline*}
  g(x_1,x_2,x_3) := \sum A_{x_1,y_1} A_{x_2,y_2} A_{x_3,y_3} \ONE_3(y_1,z_1,w_1) \\ \ONE_3(y_2,z_2,w_2) \ONE_3(y_3,z_3,w_3) M_{z_1,w_2} M_{z_2,w_3} M_{z_3,w_1},
 \end{multline*}
 where the sum is over all $y_1,y_2,y_3,z_1,z_2,z_3,w_1,w_2,w_3\in\{0,1\}$.
 Define $g' = A^{-1}\circ g$ so that
 \[
  g'(y_1,y_2,y_3)
  = \sum \ONE_3(y_1,z_1,w_1) \ONE_3(y_2,z_2,w_2) \ONE_3(y_3,z_3,w_3) M_{z_1,w_2} M_{z_2,w_3} M_{z_3,w_1}
 \]
 where the sum is now over all $z_1,z_2,z_3,w_1,w_2,w_3\in\{0,1\}$.

 Let us consider the conditions under which this function takes non-zero values.
 Since $M_{0,0}=0$, all non-zero terms in the sum must satisfy
 \[
  (z_1=1 \vee w_2=1) \wedge (z_2=1 \vee w_3=1) \wedge (z_3=1 \vee w_1=1).
 \]
 At the same time, by the perfect matching constraints, all non-zero terms must satisfy
 \[
  (z_1\neq 1 \vee w_1\neq 1) \wedge (z_2\neq 1 \vee w_2\neq 1) \wedge (z_3\neq 1 \vee w_3\neq 1).
 \]
 Combining these, all non-zero terms in the sum must satisfy either $z_1=z_2=z_3=1$ and $w_1=w_2=w_3=0$, or $z_1=z_2=z_3=0$ and $w_1=w_2=w_3=1$.
 Then furthermore, $g'$ is non-zero only if $y_1=y_2=y_3=0$ since the perfect matching constraints in fact require that exactly one of $y_k,z_k,w_k$ is 1 for each $k$.
 Therefore $g'(y_1,y_2,y_3) = (a^3+b^3) \dl_0(y_1)\dl_0(y_2)\dl_0(y_3)$ and the gadget is degenerate.

 Now, by cyclic permutation, for each initial function $f$, there are three possible planar symmetric gadgets.
 The symmetrisation procedure fails only if all three of these gadgets are degenerate.
 We will now analyse under which conditions that may happen (and show that those conditions contradict the assumptions of the lemma).

 Recall the original function has the form $(A\otimes B\otimes C)\ket{W}$.
 By the PLDU decomposition, each matrix $N\in\{A,B,C\}$ satisfies $N = P_N L_N D_N U_N$, where $P_N \in\{I,X\}$ is a permutation, $L_N$ is lower triangular, $D_N$ is diagonal, and $U_N$ is upper triangular; with both $L_N$ and $U_N$ having 1s on the diagonal.
 As shown in \eqref{eq:triangular-W}, the upper triangular components can all be moved to the dangling leg and do not need to be considered.
 Moreover, the diagonal components will not affect whether $M_{0,0}$ is zero.

 Thus it suffices to consider just $P_N L_N$ for each $N$; let $d_N\in\AA$ and write $L_N := \smm{1&0\\d_N&1}$.
 To determine whether $M_{0,0}$ vanishes, we need only consider the top left element of
 \[
  \pmm{1&d_C\\0&1} P_C P_B \pmm{1&0\\d_B&1} =
  \begin{cases}
   \pmm{1 + d_B d_C & d_C \\ d_B & 1} &\text{if } P_C P_B = I \\
   \pmm{d_B + d_C & 1 \\ 1 & 0} &\text{if } P_C P_B = X
  \end{cases}
 \]
 where $P_C P_B = I \Leftrightarrow P_C = P_B$ and $P_C P_B = X \Leftrightarrow P_C \neq P_B$.

 All three gadgets being degenerate implies one of the following two subcases:
 \begin{itemize}
  \item $P_A = P_B = P_C$ and $0 = 1 + d_A d_B = 1 + d_B d_C = 1 + d_A d_C$, which implies $d_A = d_B = d_C = \pm i$, or
  \item $P_A = P_B \neq P_C$ and $0 = 1 + d_A d_B = d_B + d_C = d_A + d_C$ (up to permutations of $A,B,C$), which implies $d_A = d_B = - d_C = \pm i$.
 \end{itemize}
 Note that $K$ and $KX$ have the following PLDU decompositions with $P\in\{I,X\}$:
 \begin{equation}
  \begin{pmatrix} 1 & 1 \\ \pm i & \mp i \end{pmatrix} = \begin{pmatrix} 1 & 0 \\ \pm i & 1 \end{pmatrix} \begin{pmatrix} 1 & 0 \\ 0 & \mp 2i \end{pmatrix} \begin{pmatrix} 1 & 1 \\ 0 & 1 \end{pmatrix} = X \begin{pmatrix} 1 & 0 \\ \mp i & 1 \end{pmatrix} \begin{pmatrix} \pm i & 0 \\ 0 & 2 \end{pmatrix} \begin{pmatrix} 1 & -1 \\ 0 & 1 \end{pmatrix}.
 \end{equation}
 Thus, if for example $P_N=I$ and $d_N = i$, we have
 \[
  N = \pmm{1&0\\i&1} D_N U_N = K \pmm{1&-1\\0&1} \pmm{1&0\\0&i/2} D_N U_N = K U'_N
 \]
 where $U'_N$ is a product of upper triangular (and diagonal) matrices and thus is upper triangular itself.
 Similarly, if $P_N = X$ and $d_N = -i$, we have
 \[
  N = X\pmm{1&0\\-i&1} D_N U_N = K \pmm{1&1\\0&1} \pmm{-i&0\\0&1/2} D_N U_N = K U''_N
 \]
 where again $U''_N$ is upper triangular.
 Analogous arguments apply for the other combinations of permutations and values $\pm i$.

 Note that, if $P_A = P_B \neq P_C$ and $d_A = d_B = - d_C$ (or some permutation thereof), the difference in the permutation matrix and the difference in sign cancel out in the sense that, in the above rewriting process, we get either $K$ for all three matrices or $KX$ for all three matrices.
 Hence if all three gadgets are degenerate, then $\ket{f} = K\t{3} (U_A'\otimes U_B' \otimes U_C')\ket{W}$ or $\ket{f} = (KX)\t{3} (U_A'\otimes U_B' \otimes U_C')\ket{W}$ for some upper triangular matrices $U_A',U_B',U_C'$.
 Now,
 \begin{multline*}
  \left(\pmm{a_{00}&a_{01}\\0&a_{11}}\otimes I\otimes I\right) (b_{000}\ket{000} + b_{001}\ket{001} + b_{010}\ket{010} + b_{100}\ket{100}) \\
  = (a_{00}b_{000} + a_{01}b_{111})\ket{000} + a_{00}b_{001}\ket{001} + a_{00}b_{010}\ket{010} + a_{11}b_{100}\ket{100}
 \end{multline*}
 and similarly for applying upper triangular matrices in other places.
 Thus, for any upper triangular matrices $U_1,U_2,U_3$, we have
 $(U_1\otimes U_2\otimes U_3)\ket{W}\in\cM$.
 This means that if all three gadgets are degenerate, then $f\in (K\circ\cM)\cup(KX\circ\cM)$, contradicting the assumption of the lemma.

 In other words, if the assumption $f\notin (K\circ\cM)\cup(KX\circ\cM)$ holds, for at least one way to place a rotated copy of $f$ in the symmetrisation gadget, Case~1 applies and hence the symmetrisation procedure works for this orientation of $f$.
 This concludes Case~2.
\end{proof}

{The following lemma shows an analogous result for the case where the ternary function $f\in K\circ\cM$ or $f\in KX\circ\cM$.
In that case, some additional support from a binary function outside the respective set is needed; we will show in the next section that such a binary function can be realised unless all functions are in $\ang{K\circ\cM}$ or $\ang{KX\circ\cM}$.}

\begin{lemma}\label{lem:W_symmetrise-K}
 {Let $\tM$ be one of $K\circ\cM$ and $KX\circ\cM$.
 Suppose $f\in\allf_3\cap\tM$ and $h\in\allf_2\setminus\ang{\tM}$, then there exists a non-degenerate symmetric ternary function $g\in S(\{f,h\})$.
 In both} cases, $g$ has GHZ type and is realisable by a planar gadget.
\end{lemma}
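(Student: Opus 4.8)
The plan is to mimic the proof of Lemma~\ref{lem:W_symmetrise} and repair the single step that breaks when $f$ lies in $K\circ\cM$ or $KX\circ\cM$, namely the passage to an LDU decomposition of $C^TB$: I would insert a copy of $h$ on each internal edge of the symmetrising triangle gadget, so that the ``middle matrix'' becomes $C^TH_hB$ instead of $C^TB$, and $h\notin\ang{\tM}$ turns out to be exactly the hypothesis that makes this new matrix admit an LDU decomposition. (I tacitly assume $f$ is non-decomposable, which is the only case in which the conclusion can hold: a decomposable ternary function together with a binary one realises, via gadgets, only functions in $\ang{\cT}$, none of which is a non-decomposable ternary function.)

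First I would reduce to $\tM=K\circ\cM$. Bit-flipping every edge is a holographic transformation by the orthogonal matrix $X$, and one checks $XKX=K\smm{-i&0\\0&i}$ and $XK=KX\smm{i&0\\0&-i}$; since invertible diagonal transformations map $\cM$ onto itself, this operation carries $KX\circ\cM$ onto $K\circ\cM$ and $\ang{KX\circ\cM}$ onto $\ang{K\circ\cM}$, so the $KX$ case becomes the $K$ case with $X\circ f$ and $X\circ h$ in the roles of $f$ and $h$. By Corollary~\ref{cor:orthogonal-holographic} this does not affect realisability, and bit flips preserve symmetry and entanglement type, so from now on $\tM=K\circ\cM$. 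Next I would record two structural facts. Writing $f=K\circ m$ with $m\in\cM$: non-decomposability forces $m$ to be supported on $\{000,100,010,001\}$ with all three weight-one coefficients nonzero (if $m(\be_k)=0$ then $m$ splits off $\dl_0$ on coordinate $k$), whence $\ket{m}\doteq(N_1\otimes N_2\otimes N_3)\ket{\ONE_3}$ for invertible upper-triangular $N_k$ (diagonal when $m(000)=0$, otherwise with off-diagonal entry $m(000)/3$). Hence $\ket{f}\doteq(A\otimes B\otimes C)\ket{W}$ with $A=KN_1$, $B=KN_2$, $C=KN_3$; in particular $f$ has $W$ type and each of $B,C$ is $K$ times an invertible upper-triangular matrix. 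Secondly, since tensor closure commutes with holographic transformations, $h\notin\ang{K\circ\cM}$ if and only if $h':=K^{-1}\circ h\notin\ang{\cM}$; and a binary function lies in $\ang{\cM}$ precisely when it is degenerate or its matrix has vanishing bottom-right entry (the only non-decomposable binary functions in $\ang{\cM}$ are those in $\cM$, i.e.\ supported on strings of Hamming weight $\le 1$). So $h\notin\ang{\tM}$ forces $\det H_{h'}\neq 0$ and $(H_{h'})_{11}\neq 0$; using $K^TK=2X$ one computes $(K^TH_hK)_{00}=4(H_{h'})_{11}\neq 0$ and $\det H_h\neq 0$.

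Then comes the construction: take the planar triangle gadget of Figure~\ref{fig:symmetrising_GHZ}a on three copies of $f$, but subdivide each of the three internal edges with a copy of $h$. The result is still planar and still invariant under cyclic rotation of the three dangling edges, so its effective function $g$ is symmetric (for ternary functions cyclic symmetry coincides with full symmetry), and $g\in S(\{f,h\})$. Exactly as in Lemma~\ref{lem:W_symmetrise}, the middle matrix is now $M:=C^TH_hB=N_3^T(K^TH_hK)N_2$ (or with $H_h^T$, which has the same top-left entry); its $(0,0)$ entry equals $(N_3)_{00}(K^TH_hK)_{00}(N_2)_{00}\neq 0$ and its determinant is nonzero, so $M$ admits an LDU decomposition $M=LDU$ with $L,U$ unit triangular and $D=\diag(\beta,\gamma)$, $\beta,\gamma\neq 0$. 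From here the argument of Lemma~\ref{lem:W_symmetrise} applies verbatim: the $U$ and $L^T$ adjacent to each $\ONE_3$ vertex are pushed onto the outer leg and absorbed into $A$, and the effective function is SLOCC-equivalent to $[0,\beta^2\gamma,0,\beta^3]$, which by Lemma~\ref{lem:li_symmetric} has GHZ type since $4\beta^9\gamma^3\neq 0$. Thus $g$ is a non-degenerate, GHZ-type, symmetric ternary function realised by a planar gadget over $\{f,h\}$, as required.

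The main obstacle is the bookkeeping in the second paragraph: identifying $\ang{K\circ\cM}\cap\allf_2$ exactly, and verifying that $h\notin\ang{\tM}$ is precisely the condition guaranteeing $(C^TH_hB)_{00}\neq 0$ and $\det(C^TH_hB)\neq 0$ for the inserted $h$ (together with the fact that $B,C$ have the $K\cdot(\text{upper triangular})$ form). Once this is settled, the remainder is a direct reuse of the computation already carried out in the proof of Lemma~\ref{lem:W_symmetrise}.
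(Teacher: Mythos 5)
Your proof is correct, and it takes a genuinely different route from the paper's. The paper first composes $h$ with one input of $f$ to build a new ternary function $f''$ (the gadget of Figure~\ref{fig:not_KcM}), shows by a direct coefficient check that $f''\notin(K\circ\cM)\cup(KX\circ\cM)$, and then hands $f''$ to Lemma~\ref{lem:W_symmetrise} as a black box. You instead modify the symmetrising triangle of Lemma~\ref{lem:W_symmetrise} directly by subdividing each of the three internal edges with a copy of $h$, so the middle matrix becomes $C^T H_h B$ rather than $C^T B$, and you rerun the LDU computation with this new matrix. The supporting calculations all check out: your characterisation of $\ang{\cM}\cap\allf_2$ (a binary function lies outside $\ang{\cM}$ iff it is non-degenerate with $(H_{h'})_{11}\neq 0$), the identity $(K^T H_h K)_{00}=4(H_{h'})_{11}$, the factorisation $B=KN_2$, $C=KN_3$ with $N_i$ invertible upper-triangular giving $(C^T H_h B)_{00}=(N_3)_{00}(K^T H_h K)_{00}(N_2)_{00}$, and the reduction of the $KX$ case to the $K$ case via the orthogonal bit-flip $X$. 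A pleasant feature of your version is that it makes completely explicit the role of the hypothesis $h\notin\ang{\tM}$: it is exactly what guarantees the leading principal minor of the new middle matrix is nonzero so that LDU applies, which is the precise obstruction that arises when $f\in\tM$. You also flag the tacit assumption that $f$ is non-decomposable; the paper's proof needs the same assumption (it immediately invokes Lemma~\ref{lem:family-types}, which is stated for non-decomposable functions), and it holds everywhere the lemma is invoked.
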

\begin{proof}
 {Note that by Lemma~\ref{lem:family-types} and the property that holographic transformations do not affect the entanglement class, $f$ has $W$ type in both cases, i.e.\ $\ket{f}=(A\otimes B\otimes C)\ket{W}$ for some matrices $A,B,C\in\GL$.

 Now first suppose} $f\in K\circ\cM$ and $h\notin\ang{K\circ\cM}$.
 Note that any binary function $h$ with this property is non-degenerate (and thus non-decomposable), since $\allf_1\sse K\circ\cM$.
 Let $f':=K^{-1}\circ f\in\cM$ and $h':=K^{-1}\circ h\notin\cM$, then we can write
 \[
  f' = \begin{pmatrix}f_{000}&f_{001}&f_{010}&0\\f_{100}&0&0&0\end{pmatrix} \quad\text{and}\quad
  h' = \begin{pmatrix}h_{00}&h_{01}\\h_{10}&h_{11}\end{pmatrix},
 \]
 where $f_{001}f_{010}f_{100}\neq 0$ by non-decomposability of $f'$, $h_{00}h_{11}-h_{01}h_{10}\neq 0$ by non-de\-com\-posa\-bi\-li\-ty of $h'$, and $h_{11}\neq 0$ because $h'\notin \cM$.
 
 \begin{figure}
  \centering
  \begin{tikzpicture}
	\begin{pgfonlayer}{nodelayer}
		\node [style=solidn] (0) at (0, -0.25) {};
		\node [style=solidn] (1) at (-0.75, 0.25) {};
		\node [style=none] (2) at (-1, 0.75) {};
		\node [style=none] (3) at (0, 0.75) {};
		\node [style=none] (4) at (1, 0.75) {};
	\end{pgfonlayer}
	\begin{pgfonlayer}{edgelayer}
		\draw [bend right=15, looseness=1.00] (2.center) to (1);
		\draw [bend right=15, looseness=1.00] (1) to (0);
		\draw (0) to (3.center);
		\draw [bend right, looseness=1.00] (0) to (4.center);
	\end{pgfonlayer}
\end{tikzpicture}
  \caption{Gadget for constructing a ternary function that is not in $K\circ\cM$ (or $KX\circ\cM$). The degree-3 vertex is assigned the function $f$ and the degree-2 vertex is assigned the function $h$, with the second input of $h$ connected to the first input of $f$.}
  \label{fig:not_KcM}
 \end{figure}
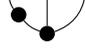
 
 Consider the gadget in Figure~\ref{fig:not_KcM}, where the second input of $h$ is connected to the first input of $f$.
 The effective function associated with this gadget is
 \begin{align*}
  f''(x_1,x_2,x_3) &= \sum_{y\in\{0,1\}} h(x_1,y) f(y,x_2,x_3) \\
  &= \sum_{y,\vc{z}{5}\in\{0,1\}} K_{x_1,z_1}K_{y,z_2} h'(z_1,z_2) K_{y,z_3} K_{x_2,z_4} K_{x_3,z_5} f'(z_3,z_4,z_5) \\
  &= \sum_{\vc{z}{5}\in\{0,1\}} K_{x_1,z_1} K_{x_2,z_4} K_{x_3,z_5} h'(z_1,z_2) f'(z_3,z_4,z_5) \sum_{y\in\{0,1\}} K_{y,z_2} K_{y,z_3} \\
  &= \sum_{z_1,z_4,z_5\in\{0,1\}} K_{x_1,z_1} K_{x_2,z_4} K_{x_3,z_5} \sum_{z_2,z_3\in\{0,1\}} 2\cdot h'(z_1,z_2) f'(z_3,z_4,z_5) \NEQ(z_2,z_3).
 \end{align*}
 Let $f''':=\frac{1}{2}(K^{-1}\circ f'')$ or equivalently $f''=2(K\circ f''')$, then
 \[
  f''' = \begin{pmatrix} f_{000}h_{01}+f_{100}h_{00} & f_{001}h_{01} & f_{010}h_{01} & 0 \\ f_{000}h_{11}+f_{100}h_{10} & f_{001}h_{11} & f_{010}h_{11} & 0 \end{pmatrix}.
 \]
 Note that $f''\in K\circ\cM$ if and only if $f'''\in\cM$, and $f''\in KX\circ\cM$ if and only if $f'''\in X\circ \cM$.
 We now show $f''\notin (K\circ\cM)\cup (KX\circ\cM)$.
 
 Assume for a contradiction that $f''\in K\circ\cM$, i.e.\ $f'''\in\cM$.
 This implies that $0=f'''(1,0,1)=f_{001}h_{11}$ and $0=f'''(1,1,0)=f_{010}h_{11}$.
 But, as stated above, $f$ being non-decomposable and $h$ not being in $K\circ\cM$ imply that $f_{001},f_{010}$ and $h_{11}$ must be non-zero.
 Therefore $f'''\notin\cM$ and $f''\notin K\circ\cM$.
 
 Similarly, assume for a contradiction that $f''\in KX\circ\cM$, i.e.\ $f'''\in X\circ\cM$.
 This implies
 \begin{align*}
  0 &= f'''(0,0,0) = f_{000}h_{01}+f_{100}h_{00} \\
  0 &= f'''(0,0,1) = f_{001}h_{01} \\
  0 &= f'''(0,1,0) = f_{010}h_{01} \\
  0 &= f'''(1,0,0) = f_{000}h_{11}+f_{100}h_{10}.
 \end{align*}
 Now, the condition $f_{001}h_{01}=0$ implies $h_{01}=0$ since $f_{001}\neq 0$ by non-decomposability of $f$.
 Then the first equality reduces to $0=f_{100}h_{00}$, which implies $h_{00}=0$ since $f_{100}\neq 0$ by non-decomposability of $f$.
 Yet $h_{01}=h_{00}=0$ would imply that $h$ is degenerate, contradicting the assumption of the lemma.
 Thus $f'''\notin X\circ\cM$ and $f''\notin KX\circ\cM$.
 
 Hence, we may replace $f$ by $f''$ and proceed as in Lemma~\ref{lem:W_symmetrise}.
 Since $f''\in S(\{f,h\})$ and $f''$ is realisable by a planar gadget, the function $g\in S(\{f''\})$ that results from applying this lemma to $f''$ is in $S(\{f,h\})$ and $g$ is again realisable by a planar gadget.
 
 {Now suppose $f\in KX\circ\cM$ and $h\notin\ang{KX\circ\cM}$. 
 As before, let $f':=K^{-1}\circ f$ and $h':=K^{-1}\circ  h$, then $f'\in X\circ\cM$ and $h'\notin\ang{X\circ\cM}$.
 The properties of these functions differ from the ones in the previous case by bit flips on all inputs.
 Otherwise the argument is the same.} 
\end{proof}

\subsection{Realising binary functions}
\label{s:binary}

We have shown in the previous section that it is possible to realise a non-decomposable ternary symmetric function from an arbitrary non-decomposable ternary function under some mild further assumptions.
Now, we show that if the full set of functions $\cF$ is not a subset of $\ang{K\circ\cM}$, there exists a planar gadget in $S(\cF\cup\{\dl_0,\dl_1,\dl_+,\dl_-\})$ whose effective function $g$ is binary, symmetric, non-decomposable and satisfies $g\notin\ang{K\circ\cM}$.
An analogous result holds with $KX$ instead of $K$.

\begin{lemma}\label{lem:not_in_cM}
 Suppose $f\in\allf_n$ satisfies $f(1,1,a_3\zd a_n)\neq 0$ for some $a_3\zd a_n\in\{0,1\}$.
 Suppose furthermore that there exist functions $u_3\zd u_n\in\cU$ such that setting 
 \[
  f'(x_1,x_2) := \sum_{x_3\zd x_n\in\{0,1\}} f(\vc{x}{n}) \prod_{j=3}^n u_j(x_j)
 \]
 implies $f'(0,0)f'(1,1)-f'(0,1)f'(1,0)\neq 0$.
 Then $f\notin\ang{\cM}$.
 Conversely, if $f\in\allf_n\setminus\ang{\cM}$, then there exists some permutation $\rho:[n]\to [n]$ such that $f_\rho$ satisfies the above properties.
\end{lemma}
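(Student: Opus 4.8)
The plan is to prove the two implications separately, in both cases exploiting that membership in $\ang{\cM}$ means $f$ factors as a tensor product of functions each supported only on inputs of Hamming weight at most one. For the forward implication I would argue by contradiction: assume $f\in\ang{\cM}$, so $f(\vc{x}{n})=\prod_{i}\phi_i(\bx|_{B_i})$ for a partition $\{B_i\}$ of $[n]$ with each $\phi_i\in\cM$. Split into two cases according to whether the first two arguments of $f$ lie in the same block or in different blocks. In the same-block case, any input with $x_1=x_2=1$ has Hamming weight $\geq 2$ when restricted to that block, so the corresponding factor vanishes and $f(1,1,a_3\zd a_n)=0$ for all $a_3\zd a_n$, contradicting the first hypothesis. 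In the different-blocks case, for any $u_3\zd u_n\in\cU$ the sum defining $f'$ separates as $f'(x_1,x_2)=c\cdot p(x_1)q(x_2)$ (arguments $1$ and $2$ sit in different factors, the remaining factors summing to a scalar $c$), whence $f'(0,0)f'(1,1)-f'(0,1)f'(1,0)=0$, contradicting the second hypothesis. Hence $f\notin\ang{\cM}$; the only subtlety here is the block bookkeeping.

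For the converse, the key observation is that $f\notin\ang{\cM}$ forces $f$ to have a \emph{non-decomposable tensor factor $\phi$ that is not in $\cM$}. Indeed, writing $f$ (after a permutation of its arguments) as a tensor product $\phi_1\otimes\cdots\otimes\phi_k$ with the maximal possible number of factors, each $\phi_i$ is non-decomposable, and if all of them were in $\cM$ we would get $f\in\ang{\cM}$. Because all unary functions lie in $\cM$, the factor $\phi$ has arity $\geq 2$; say it acts on the argument positions $S\subseteq[n]$, and write $f=\phi\otimes\psi$ where $\psi$ is the product of the remaining factors (possibly the empty product $1$). Since $f$ is not identically zero, neither is $\psi$, so $\psi(\bb)\neq 0$ for some $\bb$; and since $\phi\notin\cM$, there is $\be\in\supp(\phi)$ with $\abs{\be}\geq 2$, so we may fix two positions $p\neq q$ in $S$ with $\be_p=\be_q=1$. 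Let $\rho$ be any permutation of $[n]$ with $\rho(p)=1$ and $\rho(q)=2$.

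I would then verify the two required properties for $f_\rho$. The first holds by choosing $a_3\zd a_n$ so that the input $(1,1,a_3\zd a_n)$ makes the argument of $f$ in position $m$ equal $\be_m$ for $m\in S$ and equal $\bb_m$ for $m\notin S$ (possible since $\be_p=\be_q=1$), giving $f_\rho(1,1,a_3\zd a_n)=\phi(\be)\psi(\bb)\neq 0$. For the second, pinning each argument $m\in[n]\setminus S$ to the value $\bb_m$ contributes the non-zero scalar $\psi(\bb)$, and what remains is the binary function obtained from $\phi$ by attaching unary functions to all but the two arguments $p,q$; by Proposition~\ref{prop:popescu-rohrlich_gadget}, applied to $\phi$ with the two positions corresponding to $p$ and $q$, there is a choice of those unary functions, taken from $\{\dl_0,\dl_1,\dl_+,\dl_-\}\subseteq\cU$, making this binary function non-decomposable, equivalently non-degenerate, hence with non-zero $2\times 2$ determinant. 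As $f'_\rho$ equals this binary function times the non-zero scalar $\psi(\bb)$, its determinant is non-zero too.

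The main obstacle is the converse, specifically the need for a \emph{single} pair of arguments $p,q$ that witnesses both properties at once: a Hamming-weight-$\geq 2$ support point of $\phi$ supplies the first property for some pair, while Proposition~\ref{prop:popescu-rohrlich_gadget} supplies the second \emph{for every} pair, so the two requirements become compatible precisely because the Popescu--Rohrlich-type gadget can target an arbitrary choice of two arguments. The structural step — that $f\notin\ang{\cM}$ produces a non-decomposable factor outside $\cM$ — should be stated carefully but is elementary, using only that a tensor decomposition with the maximal number of factors exists and that $\ang{\cM}$ is closed under tensor products.
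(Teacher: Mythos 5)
Your proposal is correct and follows essentially the same approach as the paper's own proof: the forward direction is the same two-case analysis (same block vanishes by property 1, different blocks force $f'$ to factor), and the converse identifies a non-decomposable tensor factor outside $\cM$, uses its weight-$\geq 2$ support point for the first property, pins the remaining factor to a support point, and invokes Proposition~\ref{prop:popescu-rohrlich_gadget} for the second property — the paper does exactly this, merely packaging the pinning step via Lemma~\ref{lem:decomposable}.
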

\begin{proof}
 Consider the first part of the lemma and assume for a contradiction that $f\in\ang{\cM}$.
 Then there exists a decomposition of $f$ where each factor is in $\cM$.
 In particular, the first and second arguments of $f$ must belong to different factors: if they belonged to the same factor, that factor would be a function which is non-zero on an input of Hamming weight at least 2, so it could not be in $\cM$.
 Hence there exists some permutation $\rho:\{3,4\zd n\}\to\{3,4\zd n\}$ and $k\in\{3,4\zd n\}$ such that
 \[
  f_\rho(\vc{x}{n}) = f_1(x_1,x_{k+1}\zd x_n)f_2(x_2\zd x_k),
 \]
 where $f_1,f_2$ may be further decomposable.
 But then
 \[
  f'(x_1,x_2) = \sum_{x_3\zd x_n\in\{0,1\}} f_1(x_1,x_{k+1}\zd x_n)f_2(x_2\zd x_k) \prod_{j=3}^n u_j(x_j) = w_1(x_1) w_2(x_2),
 \]
 where
 \begin{align*}
  w_1(x_1) &:= \sum_{x_{k+1}\zd x_n\in\{0,1\}} f_1(x_1,x_{k+1}\zd x_n) \prod_{j=k+1}^n u_j(x_j) \\
  w_2(x_2) &:=\sum_{x_3\zd x_k\in\{0,1\}} f_2(x_2\zd x_k) \prod_{j=3}^k u_j(x_j).
 \end{align*}
 This implies that $f'(0,0)f'(1,1)-f'(0,1)f'(1,0)=0$, contradicting the assumption in the lemma.
 Thus $f$ cannot be in $\ang{\cM}$.
 
 For the second part of the lemma, assume $f\in\allf_n\setminus\ang{\cM}$.
 In particular, $f$ is not the all-zero function, so it has a decomposition.
 If all the factors in the decomposition of $f$ are in $\cM$, then $f\in\ang{\cM}$.
 Hence there exists a non-decomposable factor $g$ of $f$ which is not in $\cM$.
 Without loss of generality, assume that $g$ contains exactly the first $k$ arguments of $f$ for some $k\in [n]$, otherwise permute the arguments.
 As $\cM$ contains all unaries, $g$ must be a function of arity $k\geq 2$.
 Furthermore, there must exist some bit string $\ba\in\{0,1\}^k$ of Hamming weight at least 2 such that $g(\ba)\neq 0$.
 Without loss of generality, assume $a_1=a_2=1$, otherwise permute the arguments.
 Then $f$ satisfies the first property.
 If $f=g$, i.e.\ $f$ itself is non-decomposable, the second property is satisfied by Proposition~\ref{prop:popescu-rohrlich_gadget}.
 Otherwise, by Lemma~\ref{lem:decomposable}, there exist $u_{k+1}\zd u_n\in\{\dl_0,\dl_1\}$ such that
 \[
  g(\vc{x}{k}) = \sum_{x_{k+1}\zd x_n\in\{0,1\}} f(\vc{x}{n}) \prod_{j=k+1}^n u_j(x_j).
 \]
 We then find that the second property is satisfied by combining this equation with an application of Proposition~\ref{prop:popescu-rohrlich_gadget} to $g$.
\end{proof}

\begin{lemma}\label{lem:binary_notin_KcircM}
 {Let $\tM$ be one of $K\circ\cM$ and $KX\circ\cM$.}
 Suppose $f\in\allf\setminus\ang{\tM}$.
 Then there exists $g\in S(\{f,\dl_0,\dl_1,\dl_+,\dl_-\})$ which is binary, non-decomposable, and satisfies $g\notin\ang{\tM}$.
 Furthermore, this function is realised by a planar gadget.
\end{lemma}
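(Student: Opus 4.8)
We treat $\tM = K\circ\cM$; the case $\tM = KX\circ\cM$ is identical after replacing $K$ by $KX$ throughout, equivalently after conjugating every input by the bit flip $X$. Two preliminary remarks: since $K^{-1}\circ u$ is again unary for every unary $u$, we have $\allf_1\subseteq K\circ\cM$; and since $\ang{K\circ\cM}$ is closed under taking gadgets (see the discussion following Theorem~\ref{thm:Holant-star}), $S(\{f,\dl_0,\dl_1,\dl_+,\dl_-\})\subseteq\ang{K\circ\cM}$ whenever $f\in\ang{K\circ\cM}$. In particular every degenerate binary function lies in $\ang{K\circ\cM}$, so a binary function \emph{not} in $\ang{K\circ\cM}$ is automatically non-decomposable (and then, conversely, it fails to be in $K\circ\cM$ as well). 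We argue by induction on $\ari(f)$.

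If $f$ is decomposable, write it as a tensor product of non-decomposable functions; since $f\notin\ang{K\circ\cM}$, some factor $f_1$ is not in $K\circ\cM$, and as unary functions lie in $K\circ\cM$ we have $2\le\ari(f_1)<\ari(f)$. By Lemma~\ref{lem:decomposable} (pin the arguments of the other factors to Boolean constants) $f_1\in S(\{f,\dl_0,\dl_1\})$ by a planar gadget, and we apply the induction hypothesis to $f_1$. So assume $f$ non-decomposable. If $\ari(f)=2$, take $g:=f$. So assume $n:=\ari(f)\ge 3$ and consider the arity-$(n-1)$ functions
\[
 h_{j,u}:=\textstyle\sum_{x_j\in\{0,1\}} f(x_1,\ldots,x_n)\,u(x_j)\qquad(j\in[n],\ u\in\{\dl_0,\dl_1,\dl_+,\dl_-\}),
\]
each realised from $f$ by a planar single-vertex gadget with one pendant unary. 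After reordering $x_j$ to last, $f=\tfrac12\,h_{j,\dl_+}\!\otimes\dl_+ + \tfrac12\,h_{j,\dl_-}\!\otimes\dl_-$, so non-decomposability of $f$ forces each $h_{j,u}$ to be not identically zero. If some $h_{j,u}\notin\ang{K\circ\cM}$, then $\ari(h_{j,u})=n-1\ge 2$ (and when $n-1=2$ this $h_{j,u}$ is itself binary, non-decomposable, and not in $K\circ\cM$), so the induction hypothesis applied to $h_{j,u}$ yields the desired $g$, realised by a planar gadget obtained by composing the gadget for $h_{j,u}$ with the one from the induction hypothesis.

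It remains to exclude the case $h_{j,u}\in\ang{K\circ\cM}$ for all $j$ and all $u\in\{\dl_0,\dl_1,\dl_+,\dl_-\}$; the claim is that this forces $f\in K\circ\cM$, contradicting the hypothesis. Passing to $f'':=K^{-1}\circ f$, contracting argument $j$ of $f$ with $u$ corresponds to contracting argument $j$ of $f''$ with $K^{T}u$, and $K^{T}$ sends $\{\dl_0,\dl_1,\dl_+,\dl_-\}$, up to non-zero scalars, to $\{\dl_+,\dl_-,\dl_{-i},\dl_i\}$. Hence the assumption says: for every $j$ and every $c\in\{1,-1,i,-i\}$ the function $f''|_{x_j=0}+c\,f''|_{x_j=1}$ lies in $\ang{\cM}$, and one must conclude that $f''$ vanishes on every input of Hamming weight $\ge 2$, i.e.\ $f''\in\cM$ (so $f=K\circ f''\in K\circ\cM$). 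This is the main obstacle: $\ang{\cM}$ is \emph{not} a linear space, so one cannot simply subtract these four contractions. The argument runs parallel to the proof of Theorem~\ref{thm:three-qubit-gadget}: fixing $j$ and pinning all but two of the remaining arguments, a determinant-type invariant of the two slices $f''|_{x_j=0}$, $f''|_{x_j=1}$ is a polynomial of degree at most two in $c$ which, by the $\ang{\cM}$-constraints together with the simple description of $\ang{\cM}\cap\allf_2$, vanishes at at least three of the four values of $c$ and hence identically; this degeneracy of the slices and of all their combinations is incompatible with $f''$ being non-decomposable unless $f''\in\cM$. All gadgets used above are planar, so the construction applies equally to the planar setting.
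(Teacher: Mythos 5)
Your overall architecture matches the paper's: induction on arity, the decomposable case handled by splitting off a factor outside $\ang{K\circ\cM}$ via Lemma~\ref{lem:decomposable}, and for the non-decomposable case the strategy of contracting one argument with a unary and showing that at least one of the four choices still fails to lie in $\ang{K\circ\cM}$. Where your write-up differs is precisely the step you flag as ``the main obstacle,'' and there the argument as stated has a genuine gap.

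You claim that, under the assumption $h_{j,u}\in\ang{K\circ\cM}$ for all four $u$, the determinant-type invariant (a degree-$2$ polynomial in the contraction parameter $c$) ``vanishes at at least three of the four values of $c$ and hence identically.'' That inference is not sound: membership of a binary function $h$ in $\ang{\cM}$ does \emph{not} force its determinant $h(0,0)h(1,1)-h(0,1)h(1,0)$ to vanish. (For instance $\NEQ\in\cM$ is non-decomposable.) What membership in $\ang{\cM}$ actually rules out is the conjunction $h(1,1)\neq 0$ \emph{and} $\det h\neq 0$ --- this is exactly the content of the paper's Lemma~\ref{lem:not_in_cM}, which you do not invoke. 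So for each of the four values of $c$, either a \emph{linear} expression (the value on a weight-$\ge 2$ input, which is $h_{j,u}(1,1,\ba)$ after pinning) vanishes, \emph{or} a \emph{quadratic} one (the determinant) vanishes; you cannot conclude the quadratic vanishes three times. The paper's fix is precisely to split the obstruction into these two polynomials, observe that neither is identically zero (using the forward direction of Lemma~\ref{lem:not_in_cM}, applied to $K^{-1}\circ f\notin\ang{\cM}$, to supply a witnessing permutation, bit string and unaries), and count: the linear one kills at most one of the four $\alpha\in\{1,-1,i,-i\}$, the quadratic at most two, so at least one survives. Without Lemma~\ref{lem:not_in_cM} (or a rederivation of it) and the two-polynomial degree count, the final reduction step is not established, and the induction does not close.
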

\begin{proof}
 {First, suppose $\tM = K\circ\cM$.}
 Note that $\allf_1\sse K\circ\cM\sse\ang{K\circ\cM}$, so $f\notin\ang{K\circ\cM}$ implies $\ari(f)\geq 2$.
 Furthermore, $\ang{\allf_1}\sse\ang{K\circ\cM}$, thus any binary function which is not in $\ang{K\circ\cM}$ must be non-decomposable.

 We will show that we can realise a gadget like that in Figure~\ref{fig:pr}a, whose effective function $g$ is non-decomposable and satisfies $g\notin\ang{K\circ\cM}$.
 As the gadget consists of a number of unary functions connected to one larger-arity function, planarity is assured even if we arbitrarily permute the arguments of $f$ at intermediate stages of the proof.
 
 The result will be proved by induction on the arity of $f$.
 
 The base case is $\ari(f)=2$, in which case we may take $g=f$.
 This function is binary, non-decomposable, a gadget in $S(\{f,\dl_0,\dl_1,\dl_+,\dl_-\})$ and it is not in $\ang{K\circ\cM}$.
 Furthermore, the gadget is trivially planar, so $g$ has all the required properties.
 
 Now suppose the desired result has been proved for functions of arity at most $n$, and suppose $\ari(f)=n+1$.
 We distinguish cases according to the decomposability of $f$.
 
 \textbf{Case~1}: Suppose $f$ is decomposable and consider some decomposition of $f$.
 If all factors are in $\ang{K\circ\cM}$, then $f$ is in $\ang{K\circ\cM}$, so there must be a factor $f'\notin\ang{K\circ\cM}$.
 By Lemma~\ref{lem:decomposable}, $f'\in S(\{f,\dl_0,\dl_1,\dl_+,\dl_-\})$.
 Thus, we are done by the inductive hypothesis.
 
 \textbf{Case~2}: Suppose $f$ is non-decomposable.
 We will show that there exists $u\in\{\dl_0,\dl_1,\dl_+,\dl_-\}$ and $k\in [n+1]$ such that $\sum_{x_k\in\{0,1\}}f(\vc{x}{n+1})u(x_k)$ is not in $\ang{K\circ\cM}$.
 It is simpler to work with $\ang{\cM}$ than $\ang{K\circ\cM}$, so let $h=K^{-1}\circ f$; then $h\notin\ang{\cM}$.
 Thus, by Lemma~\ref{lem:not_in_cM}, there exists a permutation $\rho:[n+1]\to [n+1]$ such that $h_\rho(1,1,a_3\zd a_{n+1})\neq 0$ for some $a_3\zd a_{n+1}\in\{0,1\}$ and there exist functions $u_3\zd u_{n+1}\in\cU$ such that letting
 \begin{equation}\label{eq:h-prime}
  h'(x_1,x_2) := \sum_{x_3\zd x_{n+1}\in\{0,1\}} h_\rho(\vc{x}{n+1}) \prod_{j=3}^{n+1} u_j(x_j)
 \end{equation}
 implies $h'(0,0)h'(1,1)-h'(0,1)h'(1,0)\neq 0$.
 Note that
 \begin{align*}
  h'(x_1,x_2) &= \sum_{x_3\zd x_{n+1}\in\{0,1\}} (K^{-1}\circ f_\rho)(\vc{x}{n+1}) \prod_{j=3}^{n+1} u_j(x_j) \\
  &= K^{-1}\circ\left(\sum_{x_3\zd x_{n+1}\in\{0,1\}} f_\rho(\vc{x}{n+1}) \prod_{j=3}^{n+1} ((K^{-1})^T\circ u_j)(x_j)\right)
 \end{align*}
 and holographic transformations do not affect whether a function is decomposable.
 Thus, by applying Proposition~\ref{prop:popescu-rohrlich_gadget} to $f_\rho$, we find that it suffices to take
 \[
  (K^{-1})^T\circ u_j\in\{\dl_0,\dl_1,\dl_+,\dl_-\} \quad\Leftrightarrow\quad u_j\in\{\dl_+,\dl_-,\dl_i,\dl_{-i}\},
 \]
 where $\dl_i:=[1,i]$ and $\dl_{-i}:=[1,-i]$ and we have ignored some scalar factors which, by Lemma~\ref{lem:scaling}, do not affect the complexity.
 
 For each $v\in\{\dl_+,\dl_-,\dl_i,\dl_{-i}\}$ define
 \[
  h_v(\vc{x}{n}) := \sum_{x_{n+1}\in\{0,1\}} h_\rho(\vc{x}{n+1}) v(x_{n+1}).
 \]
 We now argue that for at least one $v\in\{\dl_+,\dl_-,\dl_i,\dl_{-i}\}$, the function $h_v$ is not in $\ang{\cM}$.
 Write $v=[1,\alpha]$, where $\alpha\in\{1,-1,i,-i\}$.
 
 First, consider the value $h_v(1,1,a_3\zd a_{n})$, where $a_3\zd a_{n}$ are the above values resulting from applying Lemma~\ref{lem:not_in_cM} to $h$.
 Then
 \begin{equation}\label{eq:linear_alpha}
  h_v(1,1,a_3\zd a_{n}) = h_\rho(1,1,a_3\zd a_{n},0)+\alpha h_\rho(1,1,a_3\zd a_{n},1)
 \end{equation}
 is a linear polynomial in the variable $\alpha$.
 Furthermore, this polynomial is not identically zero since $h_\rho(1,1,a_3\zd a_{n+1})\neq 0$.
 Hence, this expression vanishes for at most one value of $\alpha$, i.e.\ one of the $h_v$.
 
 Secondly, let
 \[
  h_u'(x_1,x_2) := \sum_{x_3\zd x_{n+1}\in\{0,1\}} h_\rho(\vc{x}{n+1}) v(x_{n+1}) \prod_{j=3}^{n} u_j(x_j),
 \]
 where $u_j\in\{\dl_+,\dl_-,\dl_i,\dl_{-i}\}$ are the unary functions determined by applying Lemma~\ref{lem:not_in_cM} to $h$ as in \eqref{eq:h-prime}.
 Then
 \begin{equation}\label{eq:quadratic_alpha}
  h_v'(0,0)h_v'(1,1)-h_v'(0,1)h_v'(1,0)
 \end{equation}
 is a quadratic polynomial in $\alpha$ which is not identically zero.
 Thus this polynomial vanishes for at most two distinct values of $\alpha$, i.e.\ two of the $h_v$.
 
 Therefore, there must be at least one $h_v$ such that both \eqref{eq:linear_alpha} and \eqref{eq:quadratic_alpha} are non-zero.
 By Lemma~\ref{lem:not_in_cM}, this $h_v$ is not in $\ang{\cM}$.
 Furthermore,
 \[
  (K\circ h_v)(\vc{x}{n}) = \sum_{x_{n+1}\in\{0,1\}} f_\rho(\vc{x}{n+1}) v'(x_{n+1}),
 \]
 where $v'=(K^{-1})^T\circ v\in\{\dl_0,\dl_1,\dl_+,\dl_-\}$.
 Thus, $K\circ h_v\in S(\{f,\dl_0,\dl_1,\dl_+,\dl_-\})\setminus\ang{K\circ\cM}$ and we are done by the inductive hypothesis.
 
 {Now suppose $\tM = KX\circ\cM$.
 The argument of Case~1 is the same as before, with $KX\circ\cM$ instead of $K\circ\cM$.
 In Case~2, again define $h:=K^{-1}\circ f$, then the properties of this function differ from the ones written out above by a bit flip on all inputs.
 Note that, up to scalar factor, the set $\{\dl_+,\dl_-,\dl_i,\dl_{-i}\}$ is invariant under bit flip.
 Thus, the argument is analogous to before.}
\end{proof}

\subsection{Interreducing planar holant problems and planar counting CSPs}
\label{s:interreducing_planar}

The interreducibility of certain bipartite holant problems and counting CSPs, as in Theorem~\ref{thm:GHZ-state}, will be a crucial ingredient in our hardness proof.
We therefore need to ensure this result holds in the planar case.

Recall that $\#\mathsf{R_3\text{-}CSP}\left(\cF\right) \equiv_T \holp{\cF\mid\{\EQ_1,\EQ_2,\EQ_3\}}$.
The following lemma will be useful.

\begin{lemma}[{\cite[Lemma~6.1]{cai_complexity_2014}}]
 Let $g\in\allf_2$ be a non-degenerate binary function.
 Then, for any finite $\cF\sse\allf$ containing $g$, we have $\#\mathsf{R_3\text{-}CSP}\left(\cF\cup\{\EQ_2\}\right) \leq_T \#\mathsf{R_3\text{-}CSP}\left(\cF\right)$.
\end{lemma}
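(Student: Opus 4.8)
The plan is to work in the bipartite reformulation $\#\mathsf{R_3\text{-}CSP}(\cF) \equiv_T \holp{\cF\mid\{\EQ_1,\EQ_2,\EQ_3\}}$ and to realise a non-zero scalar multiple of $\EQ_2$ as a left-hand side gadget over $\cF$, using chains built from copies of $g$ together with polynomial interpolation. Write $A$ for the $2\times 2$ matrix of $g$; non-degeneracy of $g$ gives $\det A\neq 0$, so the eigenvalues $\lambda_0,\lambda_1$ of $A$ are non-zero, and note that $\EQ_2$ corresponds to the identity matrix $I$. The basic building block is the chain of $k$ copies of $g$: connect the second argument of the $j$-th copy to the first argument of the $(j+1)$-th through a fresh variable (which thereby has degree exactly two), leaving the first argument of the first copy and the second argument of the last copy as the two ports. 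Its effective binary function corresponds to $A^k$, and since the chain uses only $g\in\cF$ and degree-two variables, replacing an $\EQ_2$-constraint of a $\#\mathsf{R_3\text{-}CSP}(\cF\cup\{\EQ_2\})$-instance by such a chain produces a legal $\#\mathsf{R_3\text{-}CSP}(\cF)$-instance in which all variable degrees are unchanged.

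First I would deal with the degenerate-spectrum cases directly, with no interpolation. If $A=\lambda I$, then $g=\lambda\cdot\EQ_2$, so a single copy of $g$ already realises a non-zero multiple of $\EQ_2$ and Lemma~\ref{lem:scaling} (in its left-hand side form, cf.\ Lemma~\ref{lem:realisable_planar}) finishes the reduction. If $A$ has distinct eigenvalues whose ratio $\lambda_1/\lambda_0$ is a primitive $d$-th root of unity, then $A^d=\lambda_0^d I$, so a chain of $d$ copies of $g$ realises $\lambda_0^d\cdot\EQ_2$, and the same lemmas apply: replacing each of the $m$ $\EQ_2$-constraints of the given instance by this chain multiplies the holant by the known non-zero constant $\lambda_0^{dm}$, so a single oracle call recovers the desired value.

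For the two remaining cases — $A$ diagonalisable with distinct eigenvalues and $\lambda_1/\lambda_0$ not a root of unity, or $A$ non-diagonalisable (a single Jordan block) — I would use interpolation. Given an instance $\Omega$ with $m$ occurrences of the $\EQ_2$-constraint, let $\Omega_k$ be obtained by replacing each of them with the chain of $k$ copies of $g$. In the first case write $A^k=\lambda_0^k Q_0+\lambda_1^k Q_1$ for the spectral projections $Q_0,Q_1$ (with $Q_0+Q_1=I$); expanding the holant multilinearly over the $m$ chains and grouping terms gives
\[
 \Holant_{\Omega_k}=\sum_{t=0}^{m}\big(\lambda_0^{m-t}\lambda_1^{t}\big)^{k}\,H_t,
\]
where $H_t$ collects the holant contributions of the ways of choosing $Q_1$ at exactly $t$ of the positions, and the value we want equals $\sum_{t}H_t$ (replacing every chain by $I=\EQ_2$). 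In the second case $A^k=\lambda^k(I+kN')$ with $(N')^2=0\neq N'$, and the analogous expansion gives $\Holant_{\Omega_k}=\lambda^{km}\sum_{t=0}^m k^t H_t$, where now $\Holant_\Omega=H_0$. In the first case the bases $\lambda_0^{m-t}\lambda_1^{t}$ are pairwise distinct precisely because $\lambda_1/\lambda_0$ is not a root of unity, so querying the $\#\mathsf{R_3\text{-}CSP}(\cF)$ oracle on $\Omega_1,\dots,\Omega_{m+1}$ yields a non-singular Vandermonde system for the $H_t$; in the second case the same queries determine the degree-$m$ polynomial $\sum_t H_t k^t = \Holant_{\Omega_k}/\lambda^{km}$, whose constant term is $\Holant_\Omega$. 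Since $m$ is bounded by the size of $\Omega$, this is a polynomial-time Turing reduction.

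I expect the main obstacle to be organising the case analysis on $A$ cleanly and verifying that the interpolation is genuinely well-founded in each interpolatable case — in particular that $\Holant_{\Omega_k}$ has exactly the claimed dependence on $\lambda_i^k$ (respectively on $k$) and that the resulting linear system is invertible over the chosen range of $k$. The gadget construction and the bookkeeping of variable degrees, by contrast, are routine once one observes that the connecting variables all have degree two.
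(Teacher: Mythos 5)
Your proposal is correct and matches the standard approach: the paper itself only cites \cite[Lemma~6.1]{cai_complexity_2014} without reproving it, and the cited proof uses exactly this chain-of-$g$ construction together with a case analysis on the Jordan form of $g$'s matrix (scalar, distinct eigenvalues with root-of-unity ratio, distinct eigenvalues with non--root-of-unity ratio via Vandermonde interpolation, single Jordan block via polynomial interpolation in $k$). Your observation that the internal chain variables have degree exactly two, so that the constructed instances remain valid $\#\mathsf{R_3}$-CSP instances, is precisely the point that makes the reduction go through.
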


In fact, this result can straightforwardly be extended to the following.

\begin{lemma}\label{lem:interpolate_equality2}
 Let $g\in\allf_2$ be a non-degenerate binary function.
 Suppose $\cG_1,\cG_2\sse\allf$ are finite, then
 \begin{multline*}
  \plhol\left(\cG_1\cup\{g,\EQ_2\}\mid\cG_2\cup\{\EQ_1,\EQ_2,\EQ_3\}\right) \\
  \leq_T \plhol\left(\cG_1\cup\{g\}\mid\cG_2\cup\{\EQ_1,\EQ_2,\EQ_3\}\right).
 \end{multline*}
\end{lemma}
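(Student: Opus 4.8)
The plan is to mimic the proof of \cite[Lemma~6.1]{cai_complexity_2014} (stated just above as the unnumbered lemma), but carried out inside a planar bipartite holant problem with the extra constraint functions $\cG_1$ on the left and $\cG_2\cup\{\EQ_1,\EQ_2,\EQ_3\}$ on the right. The key point is that the known lemma already gives a reduction $\#\mathsf{R_3\text{-}CSP}(\cF\cup\{\EQ_2\}) \leq_T \#\mathsf{R_3\text{-}CSP}(\cF)$ for any finite $\cF$ containing a non-degenerate binary $g$; since $\#\mathsf{R_3\text{-}CSP}(\cF)\equiv_T\plhol(\cF\mid\{\EQ_1,\EQ_2,\EQ_3\})$ (after checking the equivalence is planarity-preserving, which it is, being built from local gadget replacements), the non-planar analogue of our statement is essentially already in the literature. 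So the real work is to check that the interpolation argument behind \cite[Lemma~6.1]{cai_complexity_2014} respects planarity and the bipartite structure, and then to add the passive spectators $\cG_1,\cG_2$.

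Concretely, I would proceed as follows. First, recall the standard polynomial-interpolation construction for realising $\EQ_2$ from a non-degenerate binary $g$: writing $g = Q D Q^{-1}$ (or a Jordan form if $g$ is not diagonalisable) with eigenvalues $\lambda_1,\lambda_2$, one notes that a chain of $s$ copies of $g$ connected in series is a gadget whose effective binary function is $g^s$ (in the sense of the associated matrix power), and that the family $\{g^s\}_{s\ge 0}$ spans enough of the space of binary functions that $\EQ_2$ can be recovered by solving a Vandermonde-type linear system in the ratios $(\lambda_1/\lambda_2)^s$ — this is exactly where non-degeneracy of $g$ is used (it guarantees $\lambda_1\lambda_2\neq 0$ and, after a holographic normalisation, $\lambda_1\neq\lambda_2$; the equal-eigenvalue/Jordan case is handled by a separate standard argument using derivatives of the interpolating polynomial). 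Second, observe that a series chain of copies of $g$ is a planar gadget with its two dangling edges on the outer face, and — crucially — since $g$ lives on the left-hand partition, inserting such a chain between two right-hand vertices requires interposing $\EQ_2$ (from $\cG_2\cup\{\EQ_2\}$, or freely available as $\EQ_2$ in the right set) between consecutive copies of $g$ to respect the bipartition; this keeps the whole construction a valid LHS gadget over $\cG_1\cup\{g\}\mid\cG_2\cup\{\EQ_1,\EQ_2,\EQ_3\}$. Third, given an instance of $\plhol(\cG_1\cup\{g,\EQ_2\}\mid\cG_2\cup\{\EQ_1,\EQ_2,\EQ_3\})$, replace every occurrence of the ``new'' copy of $\EQ_2$ on the left by the gadget chain with a formal parameter $s$; by Lemma~\ref{lem:realisable_planar} the vertices assigned functions from $\cG_1,\cG_2,\EQ_1,\EQ_2,\EQ_3,g$ are all still legal, and expanding the holant as a polynomial in the interpolation variable lets us read off the holant of the original instance by solving a polynomial-size linear system, each of whose coefficients is the holant of a planar bipartite signature grid over $\cG_1\cup\{g\}\mid\cG_2\cup\{\EQ_1,\EQ_2,\EQ_3\}$.

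The main obstacle I anticipate is bookkeeping rather than mathematics: one must be careful that (i) the interpolation gadget and its insertion keep the signature grid planar with the dangling edges correctly ordered — this is why the gadget is just a path, so it can replace an edge in any plane embedding without crossings — and (ii) the bipartite colouring is preserved, which forces the alternation with $\EQ_2$ noted above and hence needs $\EQ_2$ on the right (it is there by hypothesis). The degenerate-eigenvalue subcase of the interpolation also needs the usual care, but it is entirely standard and adds no planarity issues. Apart from that, the argument is a routine transcription of the known $\#\mathsf{R_3\text{-}CSP}$ result into the planar bipartite setting, so I would keep the written proof short: cite \cite[Lemma~6.1]{cai_complexity_2014} for the interpolation mechanism, point out that the gadget used there is a path and hence planar, note the bipartite alternation with $\EQ_2$, and invoke Lemma~\ref{lem:realisable_planar}.
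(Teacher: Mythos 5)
Your proposal is correct and takes essentially the same route as the paper: the paper's proof likewise just reduces to \cite[Lemma~6.1]{cai_complexity_2014} and observes that the gadget and polynomial-interpolation constructions used there are planar and unaffected by adding the spectator sets $\cG_1,\cG_2$. The extra mechanism you sketch (series chains of $g$ alternating with $\EQ_2$ to respect the bipartition, Vandermonde interpolation, Jordan-block case) is a reasonable account of what the cited lemma does; the paper itself does not spell this out.
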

\begin{proof}
 The proof is analogous to that of \cite[Lemma~6.1]{cai_complexity_2014}, noting that the constructions used in gadgets and in polynomial interpolation are planar, and that adding more functions on the RHS does not affect the argument.
\end{proof}

The following result about a polynomial-time reduction from any counting CSP to some problem in \numP{} is known, see e.g.\ \cite[p.~212]{cai_complexity_2017}. Nevertheless, we have not been able to find an explicit proof, so we give one here for completeness.
This proof is based on a similar reduction for graph homomorphism problems \cite[Lemma~7.1]{cai_graph_2010}.

First, we define the field within which we will be working, and the computational problem to which the counting CSP will be reduced.
For any finite $\cF\sse\allf$, let $A_\cF$ be the set of algebraic complex numbers that appear as a value of some function in $\cF$:
\begin{equation}\label{eq:function_values}
 A_\cF := \left\{ z\in\AA \,\middle|\, \exists f\in\cF \text{ and } \bx\in\{0,1\}^{\ari(f)} \text{ such that } f(\bx)=z \right\}.
\end{equation}
Since $\cF$ is fixed and finite, the set $A_\cF$ is also fixed and finite.
Let $\QQ(A_\cF)$ be the algebraic extension of the field of rational numbers by the numbers in $A_\cF$.
Note that, given an instance $(V,C)$ of $\csp(\cF)$, the weight $\wt_{(V,C)}(\xi)$ associated with any assignment $\xi:V\to\{0,1\}$ is in $\QQ(A_\cF)$, as is the total weight $Z_{(V,C)}$.
We may thus define the following counting problem:
 
 \begin{description}[noitemsep]
  \item[Name] $\mathsf{COUNT}(\cF)$
  \item[Instance] A tuple $((V,C),x)$, where $V$ is a finite set of variables, $C$ is a finite set of constraints over $\cF$, and $x\in\QQ(A_\cF)$.
  \item[Output] $\#_\cF((V,C), x) = \abs{ \{\text{assignments } \xi: V\to\{0,1\} \mid \wt_{(V,C)}(\xi) = x\} }$.
 \end{description}
 
$\mathsf{COUNT}(\cF)$ is the problem of counting the number of accepting paths of a Turing machine that accepts an input $((V,C),x,\xi)$ if and only if $\wt_{(V,C)}(\xi) = x$.
Here, $\xi:V\to\{0,1\}$ has to be an assignment of values to the variables $V$, otherwise the input is rejected.
Given an assignment $\xi$, computing the associated weight in the fixed algebraic extension field $\QQ(A_\cF)$ can be done in time polynomial in the size of $(V,C)$.
Furthermore, numbers within $\QQ(A_\cF)$ can be compared efficiently.
Therefore the Turing machine runs in time polynomial in the size of its input, and $\mathsf{COUNT}(\cF)$ is in \numP.

\begin{lemma}\label{lem:NCSP_to_numP}
 Let $\cF\sse\allf$ be finite.
 Then $\NCSP(\cF) \leq_T \mathsf{COUNT}(\cF)$.
\end{lemma}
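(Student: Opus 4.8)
The plan is to reduce computing the total weight $Z_{(V,C)}$ of a $\csp(\cF)$ instance to making a polynomial number of queries to $\mathsf{COUNT}(\cF)$. The key observation is that $Z_{(V,C)} = \sum_{\xi} \wt_{(V,C)}(\xi)$ is a sum over $2^{\abs{V}}$ assignments, but each summand takes a value in the fixed finite set $P := \{\text{products of values of functions in } \cF \text{ over all assignments}\}$ — or more carefully, the set of possible weights. The trouble is that this set of possible weights is not a priori of polynomial size: a weight is a product of $\abs{C}$ values from the finite set $A_\cF$, and there can be exponentially many distinct such products. So the first step is to handle this: I would argue that although the number of \emph{distinct} weight-values arising in a given instance may be large, one can still recover $Z_{(V,C)}$ by a slightly more clever argument.

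Here is the approach I would actually take, following the cited graph-homomorphism reduction \cite[Lemma~7.1]{cai_graph_2010}. Replace each function value by a formal indeterminate, or rather: enumerate the distinct values $a_1,\ldots,a_r \in A_\cF$ (with $r$ fixed since $\cF$ is fixed). For a constraint set $C$ with $m := \abs{C}$ constraints, the weight of any assignment $\xi$ has the form $\prod_{k=1}^r a_k^{n_k(\xi)}$ where $n_k(\xi)$ counts how many constraints evaluate to $a_k$ under $\xi$, and $\sum_k n_k(\xi) = m$ (counting value-$0$ constraints too; if any constraint evaluates to $0$ the weight is $0$ and contributes nothing). The vector $(n_1(\xi),\ldots,n_r(\xi))$ ranges over tuples of non-negative integers summing to $m$, of which there are only $\binom{m+r-1}{r-1}$, a polynomial in $m$ for fixed $r$. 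So $Z_{(V,C)} = \sum_{\mathbf n} N_{\mathbf n}\cdot \prod_k a_k^{n_k}$, where $N_{\mathbf n}$ is the number of assignments with count-vector $\mathbf n$. Thus it suffices to compute all the $N_{\mathbf n}$.

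Now I would compute $N_{\mathbf n}$ using $\mathsf{COUNT}(\cF)$. For a fixed target tuple $\mathbf{n}$, the value $x := \prod_k a_k^{n_k}$ lies in $\QQ(A_\cF)$, so $\#_\cF((V,C), x)$ counts assignments of weight exactly $x$. The subtlety is that distinct count-vectors $\mathbf n \neq \mathbf n'$ might give the same product $x$ — e.g.\ if $a_1 a_2 = a_3^2$. To disentangle this, I would use the standard trick of \emph{padding / modifying the instance}: introduce auxiliary gadgets that multiply the weight by controlled factors. Concretely, take several fresh copies of the instance, or attach disjoint extra components using a single constant-valued constraint, to perturb the weights so that the map $\mathbf n \mapsto x$ becomes injective; alternatively, argue directly that one can sum $\#_\cF((V,C),x)$ over all $x$ that are products $\prod a_k^{n_k}$ with $\sum n_k = m$, and recover the partition $\{N_{\mathbf n}\}$ from the grouped counts by a linear-algebraic argument. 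Actually the cleanest route, which I would adopt, is: for each of the polynomially-many candidate values $x$ (the distinct products $\prod a_k^{n_k}$, $\sum n_k = m$), query $\#_\cF((V,C),x)$; this gives $\sum_{\mathbf n: \prod a_k^{n_k} = x} N_{\mathbf n}$; summing $x\cdot \#_\cF((V,C),x)$ over all candidate $x$ recovers $Z_{(V,C)} = \sum_x x \cdot \#_\cF((V,C),x)$ directly — no disentangling needed at all, since we only want the weighted sum, not the individual $N_{\mathbf n}$.

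So the final structure is short: given $(V,C)$ with $m = \abs{C}$ constraints, compute the finite set $A_\cF$ from $\cF$, enumerate the polynomially-many products $x = \prod_{a\in A_\cF} a^{n_a}$ over non-negative integer tuples $(n_a)$ with $\sum_a n_a = m$, query the oracle for $\#_\cF((V,C), x)$ for each such $x$, and output $\sum_x x\cdot \#_\cF((V,C),x)$, which equals $Z_{(V,C)} = \sum_\xi \wt_{(V,C)}(\xi)$ since every assignment's weight is one of the enumerated $x$ (assignments with a zero-valued constraint have weight $0 = x$ for the all-but-$n_0$ tuple, or are simply absorbed). The main obstacle — really the only thing needing care — is bounding the number of oracle queries: one must confirm that the number of distinct monomials $\prod_a a^{n_a}$ with $\sum n_a = m$ is $\binom{m + \abs{A_\cF} - 1}{\abs{A_\cF}-1} = \poly(m)$ because $\abs{A_\cF}$ is a fixed constant depending only on $\cF$, and that each such $x$, together with each oracle answer, has polynomial bit-size in the field $\QQ(A_\cF)$ so that the reduction is genuinely polynomial-time. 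I would also note that comparison and arithmetic in the fixed number field $\QQ(A_\cF)$ is polynomial-time, which was already observed in the paragraph preceding the lemma, so the bookkeeping is unproblematic.
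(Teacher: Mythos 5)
Your proposal is correct and takes essentially the same route as the paper: both enumerate the polynomially many candidate weight values (your set indexed by exponent vectors over $A_\cF$, the paper's indexed over $(f_j,\bx_j)$ pairs — the same set of products), bound the count by a binomial coefficient with a constant parameter, and recover $Z_{(V,C)} = \sum_x x\cdot\#_\cF((V,C),x)$ by summing over those candidates. Your interlude about disentangling the map $\mathbf{n}\mapsto x$ is, as you yourself note, unnecessary since only the weighted sum is needed; apart from that the proof matches the paper's.
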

\begin{proof}
 Consider an instance $(V,C)$ of $\csp(\cF)$, where $V$ is a finite set of variables and $C$ is a finite set of constraints over $\cF$.
 Since $\cF$ is a fixed finite set, its elements can be enumerated in some order $\vc{f}{m}$, where $m:=\abs{\cF}$ is a constant.
 For each $j\in [m]$, define $a_j:=\ari(f_j)$ as a short-hand. 
 Let $n:=\abs{C}$ be the number of constraints, and define the following set of algebraic complex numbers:
 \[
  {\mathcal{X}} = \left\{ \prod_{j\in[m]} \prod_{\bx_j\in\{0,1\}^{a_j}} (f_j(\bx_j))^{k_{j,\bx_j}} \;\middle|\; k_{j,\bx_j}\in\ZZ_{\geq 0} \text{ and } \sum_{j\in[m]} \sum_{\bx_j\in\{0,1\}^{a_j}} k_{j,\bx_j} = n \right\}.
 \]
 Each element of $\mathcal{X}$ is uniquely determined by the integers $k_{j,\bx_j}$, and there are $M:=\sum_{j\in[m]}2^{a_j}$ such integers.
 Thus, the elements of $\mathcal{X}$ are in bijection with the $M$-tuples of non-negative integers satisfying the property that the sum of all elements of the tuple is $n$.
 The set of all such $M$-tuples is exactly the support set of a multinomial distribution with $n$ trials and $M$ possible outcomes for each trial; therefore
 \[
  \abs{\mathcal{X}} = \binom{n+M-1}{M-1} \leq \frac{(n+M-1)^{M-1}}{(M-1)!},
 \]
 where the inequality uses the straightforward-to-derive bound $\binom{n}{k} \leq \frac{n^k}{k!}$ on binomial coefficients.
 Now the parameter $(M-1)$ is constant (it depends only on the properties of the fixed finite set $\cF$), so $\abs{\mathcal{X}}$ is polynomial in $n$ and thus polynomial in the instance size.
 
 Note that $\mathcal{X}\sse \QQ(A_\cF)$.
 Consider an element of $\mathcal{X}$ of the form
 \[
  \prod_{j\in[m]} \prod_{\bx_j\in\{0,1\}^{a_j}} (f_j(\bx_j))^{k_{j,\bx_j}}.
 \]
 The condition on the sum of the integers $k_{j,\bx_j}$, together with non-negativity, implies that at most $n$ of these integers are non-zero.
 Thus, each element of $\mathcal{X}$ can be computed in time polynomial in $n$.
 Since the size of $\mathcal{X}$ is also polynomial in $n$, this means the elements of $\mathcal{X}$ can be enumerated in time polynomial in $n$.
 
 Recall from Section~\ref{s:csp} that, for any assignment $\xi:V\to\{0,1\}$, we have $\wt_{(V,C)}(\xi) = \prod_{c\in C}f_c(\xi|_c)$, where $f_c$ is the function associated with the constraint $c$ and $\xi|_c$ is the restriction of the assignment $\xi$ to the scope of $c$.
 Now, for any $j\in [m]$ and $\bx_j\in\{0,1\}^{a_j}$, define $\kappa_{j,\bx_j} := \abs{\{c\in C \text{ such that } f_c=f_j \text{ and } \xi|_c=\bx_j\}}$, then
 \[
  \wt_{(V,C)}(\xi) = \prod_{c\in C}f_c(\xi|_c) = \prod_{j\in[m]} \prod_{\bx_j\in\{0,1\}^{a_j}} (f_j(\bx_j))^{\kappa_{j,\bx_j}}
 \]
 and
 \[
  \sum_{j\in[m]} \sum_{\bx_j\in\{0,1\}^{a_j}} \kappa_{j,\bx_j} = \abs{C} = n.
 \]
 Hence $\wt_{(V,C)}(\xi)\in\mathcal{X}$.
 This in turn implies that
 \begin{equation}\label{eq:CSP-to-COUNT}
  Z_\cF(V,C) = \sum_{x\in\mathcal{X}} x\cdot \#_\cF((V,C), x).
 \end{equation}
 {Recall that $\abs{\mathcal{X}}$ is polynomial in the instance size, and that the elements of $\mathcal{X}$ are generated by a straightforward procedure.
 Therefore, the elements of $\mathcal{X}$ can be enumerated in polynomial time.}
 Multiplication and addition within $\QQ(A)$ are also efficient, {hence} \eqref{eq:CSP-to-COUNT} gives the desired reduction $\NCSP(\cF) \leq_T \mathsf{COUNT}(\cF)$.
\end{proof}

We are now ready to prove the planar version of Theorem~\ref{thm:GHZ-state}.

\begin{theorem}\label{thm:GHZ-state-planar}
 Let $\mathcal{G}_1,\mathcal{G}_2\sse\allf$ be finite.
 Let $[y_0,y_1,y_2]\in\allf_2$ be an $\omega$-normalised and non-degenerate function.
 In the case $y_0=y_2=0$, further assume that $\mathcal{G}_1$ contains a unary function $[a,b]$ which is $\omega$-normalised and satisfies $ab\neq 0$.
 Then:
 \[
  \plholp{\{[y_0,y_1,y_2]\}\cup\mathcal{G}_1 \mid \{\EQ_3\}\cup\mathcal{G}_2} \equiv_T \plcsp(\{[y_0,y_1,y_2]\}\cup\mathcal{G}_1\cup\mathcal{G}_2).
 \]
\end{theorem}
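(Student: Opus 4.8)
The strategy is to reuse the proof of the non-planar statement, Theorem~\ref{thm:GHZ-state}, checking that every gadget and interpolation it invokes is planar; the only genuinely new ingredient needed is a planar analogue of Proposition~\ref{prop:bipartite}. Abbreviate $\cF_L:=\{[y_0,y_1,y_2]\}\cup\mathcal{G}_1$ and $\cF_R:=\{\EQ_3\}\cup\mathcal{G}_2$, so that $\cF_L\cup\cF_R=\{[y_0,y_1,y_2]\}\cup\mathcal{G}_1\cup\mathcal{G}_2\cup\{\EQ_3\}$ and, by Definition~\ref{dfn:planar_CSP}, $\plcsp(\{[y_0,y_1,y_2]\}\cup\mathcal{G}_1\cup\mathcal{G}_2)=\plhol(\cF_L\cup\cF_R)$. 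The direction $\plholp{\cF_L\mid\cF_R}\le_T\plcsp(\{[y_0,y_1,y_2]\}\cup\mathcal{G}_1\cup\mathcal{G}_2)$ is immediate: a planar bipartite signature grid over $\cF_L\mid\cF_R$, with its bipartition forgotten, is a planar signature grid over $\cF_L\cup\cF_R$ with the same holant value, so the identity on instances is already a valid reduction, and this uses neither $\omega$-normalisation nor non-degeneracy.

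For the converse I would first prove the planar version of Proposition~\ref{prop:bipartite}: for finite $\cG_1,\cG_2\sse\allf$, $\plhol(\cG_1\cup\{\EQ_2\}\mid\cG_2\cup\{\EQ_2\})\equiv_T\plhol(\cG_1\cup\cG_2)$. Given a planar signature grid over $\cG_1\cup\cG_2$, colour each vertex ``left'' if the function it carries lies in $\cG_1$ and ``right'' otherwise, and subdivide every edge joining two vertices of the same colour by inserting a new degree-two vertex of the opposite colour assigned $\EQ_2$. Subdividing edges of a plane graph — self-loops and parallel edges included — again yields a plane graph; it leaves the cyclic edge-order, and hence the argument assignment, at each original vertex unchanged; and it does not change the holant, since a degree-two $\EQ_2$-vertex merely passes its value through. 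The reverse reduction contracts such vertices. Taking $\cG_1=\cF_L$ and $\cG_2=\cF_R$ gives $\plhol(\cF_L\cup\cF_R)\equiv_T\plhol(\cF_L\cup\{\EQ_2\}\mid\cF_R\cup\{\EQ_2\})$.

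It remains to eliminate the two stray copies of $\EQ_2$. On the right, $\EQ_1=[1,1]$ is the effective function of a self-loop on an $\EQ_3$-vertex, and $\EQ_2$ is the effective function of two $\EQ_3$-vertices joined by a double edge, each a planar RHS gadget over $\{\EQ_3\}$; so by the planar form of Lemma~\ref{lem:realisable_planar} we may freely add $\EQ_1,\EQ_2$ to, and delete the spurious $\EQ_2$ from, the right, reducing the problem to $\plhol(\mathcal{G}_1\cup\{[y_0,y_1,y_2],\EQ_2\}\mid\mathcal{G}_2\cup\{\EQ_1,\EQ_2,\EQ_3\})$. Since $[y_0,y_1,y_2]$ is non-degenerate, Lemma~\ref{lem:interpolate_equality2} applied with $g=[y_0,y_1,y_2]$, $\cG_1=\mathcal{G}_1$, $\cG_2=\mathcal{G}_2$ gives a reduction to $\plhol(\mathcal{G}_1\cup\{[y_0,y_1,y_2]\}\mid\mathcal{G}_2\cup\{\EQ_1,\EQ_2,\EQ_3\})$, and deleting $\EQ_1,\EQ_2$ from the right once more leaves $\plholp{\cF_L\mid\cF_R}$. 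Chaining these steps yields $\plcsp(\{[y_0,y_1,y_2]\}\cup\mathcal{G}_1\cup\mathcal{G}_2)\le_T\plholp{\cF_L\mid\cF_R}$, which together with the easy direction gives the stated interreducibility. In the special case $y_0=y_2=0$, where $[y_0,y_1,y_2]$ is a non-zero multiple of $\NEQ$, the argument is the same; the $\omega$-normalised unary $[a,b]\in\mathcal{G}_1$ with $ab\ne0$ supplies the extra constraint needed to realise $\EQ_2$ on the left exactly as in the proof of Theorem~\ref{thm:GHZ-state}, and every construction used there is planar.

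The main obstacle I expect is the planar analogue of Proposition~\ref{prop:bipartite}: one must check not only that edge subdivision preserves planarity but that it is compatible with the counterclockwise argument-ordering convention at every vertex and behaves correctly on self-loops and multiple edges, and that the left/right colouring is well defined even when $\mathcal{G}_1$ and $\mathcal{G}_2$ overlap or contain $[y_0,y_1,y_2]$ or $\EQ_3$. A secondary point requiring attention is confirming that each realisability and interpolation step invoked — in particular the interpolation underpinning Lemma~\ref{lem:interpolate_equality2} and the special-case construction — uses only planar gadgets, so that the whole chain stays inside $\plhol$.
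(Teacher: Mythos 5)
Your reduction chain is correct up to the point where you must remove $\EQ_1$ and $\EQ_2$ from the right-hand side, i.e.\ the final step $\plhol(\mathcal{G}_1\cup\{[y_0,y_1,y_2]\}\mid\mathcal{G}_2\cup\{\EQ_1,\EQ_2,\EQ_3\}) \leq_T \plhol(\mathcal{G}_1\cup\{[y_0,y_1,y_2]\}\mid\mathcal{G}_2\cup\{\EQ_3\})$. This is exactly where the proof breaks. You claim $\EQ_1$ and $\EQ_2$ are ``planar RHS gadgets over $\{\EQ_3\}$'' via a self-loop on an $\EQ_3$-vertex and a double edge between two $\EQ_3$-vertices. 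But in a bipartite holant problem, a gadget must itself be a \emph{bipartite} graph whose edges go between the two partitions (see the definition preceding Lemma~\ref{lem:realisable_planar}). A self-loop on an RHS vertex and an edge joining two RHS vertices both violate bipartiteness, so neither construction is a legal gadget. Any RHS gadget with more than one $\EQ_3$-vertex must pass through an LHS vertex assigned some function from $\{[y_0,y_1,y_2]\}\cup\mathcal{G}_1$, which is exactly why the effective function ends up depending on $g$.

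This is not a cosmetic issue; it is the entire content of the theorem. The paper's proof gets to the identical intermediate point $\plhol\left(\{g\}\cup \mathcal{G}_1 \mid \{\EQ_1,\EQ_2,\EQ_3\} \cup \mathcal{G}_2\right)$ and then explicitly states that it is \emph{only this last step} which ``uses the specific symmetry properties of $g$''. The legal RHS gadgets realising $\EQ_1$ and $\EQ_2$ (Figure~\ref{fig:equality_gadgets}) interleave $\EQ_3$-vertices with vertices carrying $g$, yielding effective functions $y_0\cdot\EQ_1$ and $y_0(y_0+y_1)\cdot\EQ_2$. These are only non-zero when $y_0\neq 0$ and $y_0+y_1\neq 0$, which is guaranteed precisely in the sub-case $y_0=y_2\neq 0$ of the case analysis (Property~3 of Corollary~\ref{cor:pl-holant_binary}, after using $\omega$-normalisation to reduce $y_0^3=y_2^3$ to $y_0=y_2$). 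For the other values of $g$ the paper does not use this chain at all: in Case~1, where $\plhol(\{[y_0,y_1,y_2]\}\mid\{\EQ_3\})$ is already $\sP$-hard, it reduces $\plcsp$ through the auxiliary problem $\mathsf{COUNT}$ (Lemma~\ref{lem:NCSP_to_numP}), and in the remaining tractable subcases of Case~2 (Properties~1 and~2) it cites separate planar gadget constructions from Lemmas~2--4 of \cite{cai_holant_2012}. Your proposal silently discards all of this, which is why you never ``use'' the $\omega$-normalisation hypothesis or the special unary function; had the self-loop/double-edge gadgets been legal, those hypotheses would indeed have been superfluous, and that absence should have been a warning sign.
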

\begin{proof}
 First, consider the reduction from the holant problem to the counting CSP.
 We have
 \begin{align*}
  \plholp{\{[y_0,y_1,y_2]\}\cup\mathcal{G}_1 \mid \{\EQ_3\}\cup\mathcal{G}_2}
  &\leq_T \plholp{\{[y_0,y_1,y_2]\}\cup\mathcal{G}_1 \cup \{\EQ_3\}\cup\mathcal{G}_2} \\
  &\leq_T \plcsp(\{[y_0,y_1,y_2]\}\cup\mathcal{G}_1\cup\mathcal{G}_2),
 \end{align*}
 where the first step is by forgetting the bipartition and the second step is by Definition~\ref{dfn:planar_CSP}.
 
 The reduction from the counting CSP to the holant problem is more complicated and separates into multiple cases.
 
 \textbf{Case~1}: Assume $\plholp{\{[y_0,y_1,y_2]\}\mid\{\EQ_3\}}$ is \numP-hard.
  
 Then the more general counting problem $\plholp{\{[y_0,y_1,y_2]\}\cup\mathcal{G}_1 \mid \{\EQ_3\}\cup\mathcal{G}_2}$ is also \numP-hard.
 The set $\{[y_0,y_1,y_2]\}\cup\mathcal{G}_1\cup\mathcal{G}_2$ is finite, hence $\mathsf{COUNT}(\{[y_0,y_1,y_2]\}\cup\mathcal{G}_1\cup\mathcal{G}_2)$ is in \numP.
 Therefore,
   \[
    \mathsf{COUNT}(\{[y_0,y_1,y_2]\}\cup\mathcal{G}_1\cup\mathcal{G}_2) \leq_T \plholp{\{[y_0,y_1,y_2]\}\cup\mathcal{G}_1 \mid \{\EQ_3\}\cup\mathcal{G}_2}.
   \]
 Furthermore, by Lemma~\ref{lem:NCSP_to_numP} with $\cF=\{[y_0,y_1,y_2]\}\cup\mathcal{G}_1\cup\mathcal{G}_2$,
   \[
    \plcsp(\{[y_0,y_1,y_2]\}\cup\mathcal{G}_1\cup\mathcal{G}_2) \leq_T \mathsf{COUNT}(\{[y_0,y_1,y_2]\}\cup\mathcal{G}_1\cup\mathcal{G}_2).
   \]
 Combining the two reductions yields the desired result.
   
 \textbf{Case~2}: Assume $\plholp{\{[y_0,y_1,y_2]\}\mid\{\EQ_3\}}$ is not \numP-hard.
 By Corollary~\ref{cor:pl-holant_binary}, this implies at least one of the following properties holds:
 \begin{enumerate}
  \item $[y_0,y_1,y_2]\in\ang{\cE}$, or
  \item $[y_0,y_1,y_2]\in\cA$, or
  \item $y_0,y_1,y_2\neq 0$ and $y_0^3=y_2^3$.
 \end{enumerate}
 We have dropped the holographic transformation from Subcase~2 because $[y_0,y_1,y_2]$ is required to be $\om$-normalised, which forces the holographic transformation to be trivial.
 
 For Properties~1 and 2, the desired reduction follows from Lemmas~2--4 of \cite{cai_holant_2012} since all the gadget constructions in those proofs are planar.
 
 For Property~3, note that the equation $y_0^3=y_2^3$ implies that $y_2 = e^{2ik\pi/3}y_0$ for some $k\in\{0,1,2\}$.
 Since $[y_0,y_1,y_2]$ is $\om$-normalised, we must have $k=0$ and thus $y_2=y_0$.
 We will now prove the following chain of reductions, where $g:=[y_0,y_1,y_0]$:
 \begin{align*}
  \plcsp\left(\{g\}\cup \mathcal{G}_1 \cup \mathcal{G}_2\right)
  &\leq_T \plhol\left(\{g, \EQ_3\}\cup \mathcal{G}_1 \cup \mathcal{G}_2\right) \\
  &\leq_T \plhol\left(\{g, \EQ_2\}\cup \mathcal{G}_1 \mid \{\EQ_2,\EQ_3\} \cup \mathcal{G}_2\right) \\
  &\leq_T \plhol\left(\{g,\EQ_2\}\cup \mathcal{G}_1 \mid \{\EQ_1,\EQ_2,\EQ_3\} \cup \mathcal{G}_2\right) \\
  &\leq_T \plhol\left(\{g\}\cup \mathcal{G}_1 \mid \{\EQ_1,\EQ_2,\EQ_3\} \cup \mathcal{G}_2\right) \\
  &\leq_T \plhol\left(\{g\}\cup \mathcal{G}_1 \mid \{\EQ_3\} \cup \mathcal{G}_2\right).
 \end{align*}
 The first reduction is the definition of $\plcsp$, the second step is by Proposition~\ref{prop:bipartite}, and the third step is because adding an additional function on the RHS cannot make the problem easier.
 The fourth reduction step is by Lemma~\ref{lem:interpolate_equality2}.
 
 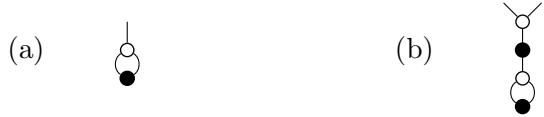
\begin{figure}
  \centering
  (a) \qquad \begin{tikzpicture}
	\begin{pgfonlayer}{nodelayer}
		\node [style=hollown] (0) at (0, -0) {};
		\node [style=solidn] (1) at (0, -0.75) {};
		\node [style=none] (2) at (0, 0.75) {};
	\end{pgfonlayer}
	\begin{pgfonlayer}{edgelayer}
		\draw (2.center) to (0);
		\draw [bend left=60, looseness=1.00] (0) to (1);
		\draw [bend right=60, looseness=1.00] (0) to (1);
	\end{pgfonlayer}
\end{tikzpicture} \qquad\qquad\qquad\qquad (b) \qquad \begin{tikzpicture}
	\begin{pgfonlayer}{nodelayer}
		\node [style=hollown] (0) at (0, -0.75) {};
		\node [style=solidn] (1) at (0, -1.5) {};
		\node [style=solidn] (2) at (0, -0) {};
		\node [style=hollown] (3) at (0, 0.75) {};
		\node [style=none] (4) at (-0.5, 1.25) {};
		\node [style=none] (5) at (0.5, 1.25) {};
	\end{pgfonlayer}
	\begin{pgfonlayer}{edgelayer}
		\draw [bend left=60, looseness=1.00] (0) to (1);
		\draw [bend right=60, looseness=1.00] (0) to (1);
		\draw (4.center) to (3);
		\draw (3) to (0);
		\draw (3) to (5.center);
	\end{pgfonlayer}
\end{tikzpicture}
  \caption{(a) A gadget for $y_0\cdot\EQ_1$ and (b) a gadget for $y_0(y_0+y_1)\cdot\EQ_2$, where each black degree-2 vertex is assigned $[y_0,y_1,y_0]$ and each white degree-3 vertex is assigned $\EQ_3$.}
  \label{fig:equality_gadgets}
 \end{figure}
 
 The first three reduction steps do not use any of the specific properties of $g$, and the fourth step only uses its property of being non-degenerate.
 It is only the fifth (and last) reduction step -- which we will now prove -- that uses the specific symmetry properties of $g$.
 
 Consider the gadgets in Figure~\ref{fig:equality_gadgets}, which can both be used on the RHS of the problem $\plhol\left(\{g\}\cup \mathcal{G}_1 \mid \{\EQ_3\} \cup \mathcal{G}_2\right)$.
 The first gadget has effective function $y_0\cdot\EQ_1$ and the second gadget has effective function $y_0(y_0+y_1)\cdot\EQ_2$.
 Recall that, $y_0\neq 0$ by the assumption of the subcase and $y_0^2=y_0y_2\neq y_1^2$ by non-degeneracy of $g$.
 The latter implies that $y_0+y_1\neq 0$.
 
 We thus have non-zero scalings of $\EQ_1$ and $\EQ_2$ on the RHS.
 Therefore, by Lemmas~\ref{lem:scaling} and~\ref{lem:realisable},
 \[
  \plhol\left(\{g\}\cup \mathcal{G}_1 \mid \{\EQ_1,\EQ_2,\EQ_3\} \cup \mathcal{G}_2\right)
  \leq_T \plhol\left(\{g\}\cup \mathcal{G}_1 \mid \{\EQ_3\} \cup \mathcal{G}_2\right).
 \]
 This establishes the desired result.
\end{proof}

\subsection{Proof of the \textsf{Holant}\texorpdfstring{\textsuperscript{+}}{\textasciicircum +} dichotomy theorem}
\label{s:hardness}

\newcommand{\stateholantplusdichotomy}{
 Let $\cF\sse\allf$ be finite.
 $\Holp[+]{\cF}$ can be computed in polynomial time if $\cF$ satisfies one of the following conditions:
 \begin{itemize}
  \item $\cF\subseteq\avg{\cT}$, or
  \item there exists $O\in\cO$ such that $\cF\subseteq\avg{O\circ\mathcal{E}}$, or
  \item $\cF\subseteq\avg{K\circ\mathcal{E}}=\avg{KX\circ\mathcal{E}}$, or
  \item $\cF\subseteq\avg{K\circ\mathcal{M}}$ or $\cF\subseteq\avg{KX\circ\mathcal{M}}$, or
  \item $\cF\subseteq\cA$.
 \end{itemize}
 In all other cases, the problem is \sP-hard.
 The same dichotomy holds for $\plhol^+(\cF)$.
}

\begin{theorem}\label{thm:holant_plus}
 \stateholantplusdichotomy
\end{theorem}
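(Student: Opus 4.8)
I would treat the two directions of the dichotomy separately; throughout write $\cF^+:=\cF\cup\{\dl_0,\dl_1,\dl_+,\dl_-\}$, so that $\Holp[+]{\cF}=\Holp{\cF^+}$.

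\textbf{Tractability.} For the first four cases the four freely available unaries already lie in the family in question: every invertible holographic image of $\allf_1$ is again $\allf_1$, and $\allf_1\sse\cE\cap\cM$, so $\dl_0,\dl_1,\dl_+,\dl_-$ lie in $\ang{O\circ\cE}$, $\ang{K\circ\cE}$, $\ang{K\circ\cM}$ and $\ang{KX\circ\cM}$ alike; hence $\cF^+$ still satisfies the relevant hypothesis of Theorem~\ref{thm:Holant-star}, and $\Holp{\cF^+}$ is polynomial-time computable, planar case included. In the fifth case $\cF\sse\cA$, all four unaries are affine, so $\cF^+\sse\cA$, and since $\EQ_3\in\cA$,
\[
 \plhol^+(\cF)=\plhol(\cF^+)\leq_T\plhol(\cF^+\cup\{\EQ_3\})=\plcsp(\cF^+),
\]
which is polynomial-time computable by Theorem~\ref{thm:planar_csp}; the same chain with $\hol$, $\csp$ and Theorem~\ref{thm:csp} in place of $\plhol$, $\plcsp$ and Theorem~\ref{thm:planar_csp} settles the non-planar statement.

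\textbf{Realising a GHZ-type symmetric ternary function.} Now suppose $\cF$ lies in none of the five families. Since $\cF\nsubseteq\ang{\cT}$, some $f\in\cF$ has a non-decomposable tensor factor of arity at least $3$; by Lemma~\ref{lem:decomposable} that factor is realisable over $\cF^+$, and Theorem~\ref{thm:three-qubit-gadget} then yields a non-decomposable ternary function $g$ realisable over $\cF^+$ by a planar gadget. I drive $g$ to a \emph{symmetric} non-decomposable ternary function of \emph{GHZ} type: if $g$ has GHZ type, Lemma~\ref{lem:GHZ_symmetrise} produces a non-decomposable symmetric ternary function, and should that function (or $g$ itself) have $W$ type, I apply Lemma~\ref{lem:W_symmetrise} when it is not in $(K\circ\cM)\cup(KX\circ\cM)$ and Lemma~\ref{lem:W_symmetrise-K} otherwise — the hypotheses $\cF\nsubseteq\ang{K\circ\cM}$ and $\cF\nsubseteq\ang{KX\circ\cM}$ supply, via Lemma~\ref{lem:binary_notin_KcircM}, the non-decomposable binary function outside $\ang{K\circ\cM}$ (resp.\ $\ang{KX\circ\cM}$) that Lemma~\ref{lem:W_symmetrise-K} needs. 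In every branch I obtain a non-degenerate symmetric ternary $\hat g\in S(\cF^+)$ of GHZ type, realised by a planar gadget, so by the normalisation discussion of Section~\ref{s:results_ternary_symmetric} we may write $\hat g=M\circ\EQ_3$ for some $M\in\GL$ (GHZ type forcing the image $[1,0,0,1]=\EQ_3$ rather than the $W$-type $[1,1,0,0]$).

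\textbf{Reducing to a hard counting CSP.} For any non-degenerate symmetric binary function $b$ realisable over $\cF^+$ by a planar gadget, composing $b$ with $\hat g$, forgetting the bipartition, and using Lemma~\ref{lem:realisable} gives $\plhol(\{b\}\mid\{\hat g\})\leq_T\plhol^+(\cF)$, and then Theorem~\ref{thm:Valiant_Holant} applied with the matrix $M^T$ (which sends $\hat g$ to $\EQ_3$ and $b$ to $M^T\!\circ b$) gives $\plhol(\{M^T\!\circ b\}\mid\{\EQ_3\})\leq_T\plhol^+(\cF)$. By Corollary~\ref{cor:pl-holant_binary} the left-hand side is $\sP$-hard unless $M^T\!\circ b$ lies in one of three small exceptional sets (functions in $\ang{\cE}$; functions that become affine after a cube-root-of-unity diagonal twist; matchgate-like functions $c\cdot[a,1,b']$ with $a^3=b'^3$). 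So the plan is to exhibit a single realisable $b$ for which $M^T\!\circ b$ avoids all three sets, whence $\Holp[+]{\cF}$ — indeed already $\plhol^+(\cF)$ — is $\sP$-hard. The natural candidates are the binary signatures $M\,\diag(M^T u)\,M^T$ obtained by pinning a leg of $\hat g$ with $u\in\{\dl_0,\dl_1,\dl_+,\dl_-\}$, together with binary gadgets extracted from witness functions of $\cF$ lying outside the remaining families; equivalently one may take $b=\EQ_2$, in which case Theorem~\ref{thm:GHZ-state-planar} reduces $\plhol^+(\cF)$ to $\plcsp(\{M^TM\}\cup M^{-1}\!\circ\cF^+)$, which is $\sP$-hard by Theorem~\ref{thm:planar_csp} provided the transformed set avoids $\cA$ and $\ang{\cE}$ (the matchgate family being automatically avoided, since the four freely available unaries cannot all become matchgate unaries under any fixed holographic transformation).

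\textbf{Main obstacle.} The crux is proving that a suitable $b$ always exists, i.e.\ that $\cF$'s exclusion from all five families cannot be hidden after conjugating with $M^T$. The delicate regime is $M=KD$ or $M=KXD$, i.e.\ $\hat g\in(K\circ\cE)\cup(KX\circ\cE)$, detected through Lemma~\ref{lem:ATA-X} from $M^TM\doteq X$: there the binaries built from $\hat g$ and $\dl_0,\dl_1,\dl_+,\dl_-$ are themselves of this special shape, so one must instead extract $b$ from some $f'\in\cF\setminus\ang{K\circ\cE}$ (resp.\ $f'\in\cF\setminus\ang{KX\circ\cE}$) and control how $M^T$, essentially $K$ or $KX$, acts on it; likewise the cases $M=QD$ with $Q\in\cO$ must be matched against $\cF\nsubseteq\ang{O\circ\cE}$ for all $O$, and the affine case against $\cF\nsubseteq\cA$ (here Lemma~\ref{lem:cS-cA} is the key point, ruling out the $B\circ\cA$ phenomenon that occurs for $\hol^c$ but not for $\hol^+$). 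Carrying out this finite but intricate case analysis, using the structure of $\cS$ together with Lemmas~\ref{lem:ATA-D}, \ref{lem:ATA-X} and~\ref{lem:cS-cA} and the classification of the unary elements of $\cA$, is where essentially all of the work lies.
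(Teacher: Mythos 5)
Your overall strategy matches the paper's: symmetrise via Theorem~\ref{thm:three-qubit-gadget} and Lemmas~\ref{lem:GHZ_symmetrise}, \ref{lem:W_symmetrise}, \ref{lem:W_symmetrise-K} to obtain a symmetric GHZ-type ternary $\hat g=M\circ\EQ_3$, then reduce to a counting CSP via Theorem~\ref{thm:GHZ-state}/\ref{thm:GHZ-state-planar} and argue hardness of that CSP. The tractability direction and the observation that the planar matchgate case is ruled out (because the four freely available unaries cannot all land in $\smm{1&1\\1&-1}\circ\cH$ under a single invertible $M$) are both correct and essentially as in the paper.

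However, there is a genuine gap: you explicitly defer the decisive step. You write that ``carrying out this finite but intricate case analysis \ldots is where essentially all of the work lies,'' but you never carry it out. The content of the theorem is precisely that if $\cF$ avoids all five tractable families then the reduced CSP is hard, and that implication is what remains to be proved. In the paper this is not an intricate brute-force enumeration but a clean two-case argument: apply Theorem~\ref{thm:csp} to $\cF'':=M^T\circ(\cF'\cup\{\EQ_2\})\cup\{M^{-1}\circ\EQ_2\}$, and show the contrapositive. If $\cF''\sse\cA$, then $M^T\circ\{\EQ_2,\dl_0,\dl_1\}\sse\cA$ forces $M\in\cS$, and the additional constraints $M^T\circ\dl_\pm\in\cA$ force $M\in\cS_\cA$ via Lemma~\ref{lem:cS-cA}; since $\cS_\cA$ is a group (Lemma~\ref{lem:cS_cA-group}) and $\cA$ is a gadget-closed clone (Lemma~\ref{lem:affine_closed}), this gives $\cF\sse\cA$ — this is where $\hol^+$ genuinely differs from $\hol^c$, where only $M\in\cS$ can be concluded. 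If instead $\cF''\sse\ang{\cE}$, then $M^TM$ is either diagonal (so $M=QD$ by Lemma~\ref{lem:ATA-D}, giving $\cF\sse\ang{Q\circ\cE}$) or a scaling of $X$ (so $M=KD$ or $M=KXD$ by Lemma~\ref{lem:ATA-X}, giving $\cF\sse\ang{K\circ\cE}$). You name the right lemmas and the right obstruction, but without this argument the proof is incomplete.

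A secondary caveat: your ``route (a)'' — seeking a single realisable binary $b$ with $\plhol(\{M^T\circ b\}\mid\{\EQ_3\})$ hard by Corollary~\ref{cor:pl-holant_binary} — is not what the paper does and would be harder to make work, since it only uses one binary from $\cF^+$ at a time rather than the full transformed set fed into the CSP dichotomy. The paper's route (your ``route (b)'' with $b=\EQ_2$) is cleaner: it passes the whole of $\cF'$ through Theorem~\ref{thm:GHZ-state-planar} and applies the two-family $\csp$ dichotomy, avoiding any need to locate a single special $b$.
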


\begin{proof}
 Define $\cF':=\cF\cup\{\dl_0,\dl_1,\dl_+,\dl_-\}$.
 The tractability part of the theorem follows by reduction to a conservative holant problem or to $\csp$, respectively: if $\cF$ is a subset of one of the tractable sets of Theorem~\ref{thm:Holant-star}, then $\cF'$ is also a subset of one of the tractable sets of Theorem~\ref{thm:Holant-star}, and thus $\Holp[+]{\cF}$ can be solved in polynomial time.
 Similarly, if $\cF\subseteq\cA$, then $\cF'\subseteq\cA$.
 Furthermore,
 \[
  \Holp[+]{\cF} = \Holp{\cF'} \leq_T \holp{\cF'\cup\{\EQ_3\}} \leq_T \csp(\cF'),
 \]
 where the first reduction holds because adding a function cannot make the problem easier, and the second reduction is Proposition~\ref{prop:CSP_holant}.
 Now, by Theorem~\ref{thm:csp}, $\cF'\sse\cA$ implies that \new{$\csp(\cF')$ can be solved in polynomial time.
 Thus, by the above reduction, $\Holp[+]{\cF}$ can be solved in polynomial time.}
 
 Hence from now on we may assume that we are not in one of the known tractable cases.
 We will then prove the hardness of $\Holp[+]{\cF}$ via Theorem~\ref{thm:GHZ-state} (or Theorem~\ref{thm:GHZ-state-planar} in the planar case), which requires ternary and binary symmetric non-decomposable functions.
 
 Not being in one of the known tractable cases implies in particular that $\cF\not\sse\ang{\cT}$, i.e.\ there is some function $f\in\cF$ having at least one factor which is a non-decomposable function of arity $\geq 3$.
 Thus, we can apply Theorem~\ref{thm:three-qubit-gadget} to realise a non-decomposable ternary function $f'\in S(\{f,\dl_0,\dl_1,\dl_+,\dl_-\})$ via a planar gadget.
 This function has either $W$ or GHZ type, we distinguish cases accordingly.
 \begin{enumerate}
  \item\label{c:W} Suppose $f'$ has $W$ type.
   There are several subcases.
   \begin{itemize}
    \item If $f'\notin (K\circ\cM)\cup (KX\circ\cM)$, then there exists a non-decomposable symmetric ternary function $h\in S(\{f'\})$ by Lemma~\ref{lem:W_symmetrise}.
    \item If $f'\in K\circ\cM$, since $\cF\nsubseteq\ang{K\circ\cM}$, there exists $g\in\cF\setminus\ang{K\circ\cM}$.
    We can realise a non-decomposable binary function $g'\in S(\{g,\dl_0,\dl_1,\dl_+,\dl_-\})\setminus\ang{K\circ\cM}$ via a planar gadget by Lemma~\ref{lem:binary_notin_KcircM}.
    Then Lemma~\ref{lem:W_symmetrise-K} can be applied, yielding a non-decomposable symmetric ternary function $h\in S(\{f',g'\})$.
    \item If $f'\in KX\circ\cM$, the process is analogous to the subcase $f'\in K\circ\cM$.
   \end{itemize}
   In each subcase, by Lemma~\ref{lem:W_symmetrise} or Lemma~\ref{lem:W_symmetrise-K}, the gadget for $h$ is planar, the non-decomposable symmetric ternary function $h$ is in $S(\cF')$, and $h$ has GHZ type.
  \item Suppose $f'$ has GHZ type.
   Again, there are several subcases.
   \begin{itemize}
    \item If $f'$ is already symmetric, let $h:=f'$.
    \item If $f'$ is not symmetric, we can realise a non-decomposable symmetric ternary function $f''\in S(\{f',\dl_0,\dl_1,\dl_+,\dl_-\})$ by Lemma~\ref{lem:GHZ_symmetrise}.
    The gadget for $f''$ is planar.
     \begin{itemize}
      \item If $f''$ has GHZ type, let $h:=f''$.
      \item If $f''$ has $W$ type, go back to Case~\ref{c:W} with $f''$ in place of $f'$ and apply the symmetrisation procedure for $W$-type functions to get a {symmetric} GHZ-type function.
     \end{itemize}
   \end{itemize}
 \end{enumerate}
 
 To summarise, if $\cF$ is not one of the tractable sets, then there exists a non-decomposable symmetric ternary function $h\in S(\cF')$ which can be realised via a planar gadget and which has GHZ type.
 
 \new{Recall from Section~\ref{s:results_ternary_symmetric} that this means} there exists $M\in\GL$ such that $h=M\circ\EQ_3$ and either $M^T\circ\EQ_2$ is $\om$-normalised, or $M^T\circ\EQ_2=c\cdot\NEQ$ for some $c\in\AAnz$.
 In the latter case, since $\smm{1&0\\0&\ld}\circ\NEQ=\ld\cdot\NEQ$ for any $\ld$, \new{recall that} we can choose $M$ such that $M^T\circ\dl_0$ is $\om$-normalised.
 We may thus apply the following chain of interreductions:
 \begin{align*}
  \holp[+]{\cF} &= \holp{\cF'} \\
  &\equiv_T \holp{\cF'\cup\{h\}} \\
  &\equiv_T \holp{\cF'\cup\{\EQ_2\}\mid\{h,\EQ_2\}} \\
  &\equiv_T \holp{M^T\circ(\cF'\cup\{\EQ_2\})\mid\{\EQ_3,M^{-1}\circ\EQ_2\}} \\
  &\equiv_T \csp\left( M^T\circ(\cF'\cup\{\EQ_2\})\cup\{M^{-1}\circ\EQ_2\} \right)
 \end{align*}
 where the first step is the definition of $\hol^+$, the second step is by Lemma~\ref{lem:realisable}, the third step is by Proposition~\ref{prop:bipartite} with $\cG_1=\cF'$ and $\cG_2=\{h\}$, the fourth step is by Theorem~\ref{thm:Valiant_Holant}, and the last step is by Theorem~\ref{thm:GHZ-state}.
 To prove $\holp[+]{\cF}$ is hard, it therefore suffices to show that the counting CSP is hard whenever $\cF$ is not one of the tractable families of Theorem~\ref{thm:holant_plus}.
 
 We show the contrapositive: if the counting CSP is polynomial-time computable according to Theorem~\ref{thm:csp}, then $\cF$ is one of the tractable families of Theorem~\ref{thm:holant_plus}.
 The argument is split into cases according to the tractable cases of Theorem~\ref{thm:csp}.
 \begin{itemize}
  \item Suppose $M^T\circ(\cF\cup\{\EQ_2,\dl_0,\dl_1,\dl_+,\dl_-\})\cup\{M^{-1}\circ\EQ_2\}\sse\cA$.
  
   The condition $M^T\circ\{\EQ_2,\dl_0,\dl_1\}\sse\cA$ is equivalent to $M\in\cS$ by \eqref{eq:cS_definition}.
   The remaining conditions of the case are $M^T\circ\dl_+,M^T\circ\dl_-,M^{-1}\circ\EQ_2\in\cA$.
   Denote by $f_L$ the binary function corresponding to a matrix $L\in\GL$.
   By Lemma~\ref{lem:cS-cA}, $M\in\cS$ and $M^T\circ\dl_+,M^T\circ\dl_-\in\cA$ together imply that $M\in\cS_\cA:=\{L\in\GL\mid f_L\in\cA\}$, i.e.\ $M$ is a matrix corresponding to a binary function in $\cA$.
   Furthermore, by Lemma~\ref{lem:cS_cA-group}, $\cS_\cA$ is a group, so $M^{-1}\in\cS_\cA$.
   Now, by Lemma~\ref{lem:affine_closed}, $\cA$ is closed under taking gadgets, so by Lemma~\ref{lem:hc_gadget}, $M^{-1}\circ\EQ_2\in\cA$.
   Transposition of a matrix permutes the inputs of the corresponding function, so $M\in\cS_\cA$ also implies $M^T, (M^T)^{-1}\in\cS_\cA$.
   Thus, $M^T\circ\cF\sse\cA$ implies that $\cF\sse(M^T)^{-1}\circ\cA\sse\cA$, one of the known tractable cases.
  \item Suppose $M^T\circ(\cF\cup\{\EQ_2,\dl_0,\dl_1,\dl_+,\dl_-\})\cup\{M^{-1}\circ\EQ_2\}\sse\ang{\cE}$.
  
   Now, $M^T\circ\EQ_2$ and $M^{-1}\circ\EQ_2$ are non-decomposable symmetric binary functions with matrices $M^TM$ and $M^{-1}(M^{-1})^T$.
   All non-decomposable symmetric binary functions in $\ang{\cE}$ take the form $[\ld,0,\mu]$ or $[0,\ld,0]$ for some $\ld,\mu\in\AA\setminus\{0\}$.
   \begin{itemize}
    \item If $M^TM=\smm{\ld&0\\0&\mu}$, then $M=QD$ for some $Q\in\cO$ and some invertible diagonal matrix $D$ by Lemma~\ref{lem:ATA-D}.
     Now $(QD)^T\circ\cF\sse\ang{\cE}$ implies $\cF\sse\ang{(QD^{-1})\circ\cE}=\ang{Q\circ\cE}$, which is one of the known tractable families.
    \item Similarly, if $M^TM=\ld X$, then $M=KD$ or $M=KXD$ for some invertible diagonal matrix $D$ by Lemma~\ref{lem:ATA-X}.
     Now $K^T\doteq XK^{-1}$, so $(KD)^T\circ\cF\sse\ang{\cE}$ implies $\cF\sse\ang{(KXD^{-1})\circ\cE}=\ang{K\circ\cE}$, which is another of the known tractable families.
     
     An analogous argument holds for $KX$ instead of $K$.
   \end{itemize}
 \end{itemize}

 We have shown that if $\csp\left( M^T\circ(\cF'\cup\{\EQ_2\})\cup\{M^{-1}\circ\EQ_2\} \right)$ can be solved in polynomial time, this implies that $\cF$ is one of the tractable families listed in the Theorem~\ref{thm:holant_plus}.
 By Theorem~\ref{thm:csp}, the counting CSP is \sP-hard in all other cases.
 Thus, if $\cF$ is not one of the tractable families listed in the theorem, then $\holp[+]{\cF}$ is \sP-hard.
 This completes the proof of the theorem for the non-planar case.
 
 As all gadgets used in this proof are planar, the above constructions also work in the planar case.
 The only difference is that, when considering planar holant problems, we need to use Theorem~\ref{thm:GHZ-state-planar} instead of Theorem~\ref{thm:GHZ-state} and apply the planar $\csp$ dichotomy from Theorem~\ref{thm:planar_csp} instead of Theorem~\ref{thm:csp}.
 In addition to the tractable cases from the general $\csp$ dichotomy, which we have already excluded, Theorem~\ref{thm:planar_csp} contains one additional tractable family: $\plcsp\left( M^T\circ(\cF'\cup\{\EQ_2\})\cup\{M^{-1}\circ\EQ_2\} \right)$ can be solved in polynomial time if $M^T\circ(\cF'\cup\{\EQ_2\})\cup\{M^{-1}\circ\EQ_2\}\sse \smm{1&1\\1&-1}\circ\cH$, where $\cH$ is the set of matchgate functions.
 All other cases remain \sP-hard.
 By Lemma~\ref{lem:unary_matchgate}, the only unary matchgate functions are scalings of the pinning functions $\dl_0$ and $\dl_1$, so the only unary functions in $\smm{1&1\\1&-1}\circ\cH$ are $\smm{1&1\\1&-1}\circ\dl_0\doteq\dl_+$ and $\smm{1&1\\1&-1}\circ\dl_1\doteq\dl_-$ (up to scaling).
 Yet $M^T\circ\cF'$ contains at least the four unary functions $M^T\circ\{\dl_0,\dl_1,\dl_+,\dl_-\}$ and, by invertibility of $M$, these four functions have to be mapped to four pairwise linearly-independent functions.
 Thus, $M^T\circ(\cF'\cup\{\EQ_2\})\cup\{M^{-1}\circ\EQ_2\}$ cannot be a subset of $\smm{1&1\\1&-1}\circ\cH$.
 Therefore, the $\hol^+$ dichotomy remains unchanged when restricted to planar instances.
\end{proof}

\section{The full \textsf{Holant}\texorpdfstring{\textsuperscript{c}}{\textasciicircum c} dichotomy}
\label{s:dichotomy}

We now combine the techniques developed in deriving the $\hol^+$ dichotomy with techniques from the real-valued $\hol^c$ dichotomy \cite{cai_dichotomy_2017} to get a full complexity classification for complex-valued $\hol^c$.

As in the $\hol^+$ case, the general proof strategy is to realise a non-de\-com\-posable ternary function and then a non-decomposable symmetric ternary function.
Without $\dl_+$ and $\dl_-$, we can no longer use Theorem~\ref{thm:three-qubit-gadget}.
Instead, we employ a technique using $\dl_0$, $\dl_1$ and self-loops from the proof of Theorem~5.1 in \cite{cai_dichotomy_2017} with some slight modifications.
Self-loops reduce the arity of a function in steps of 2, so sometimes this technique fails to yield a non-decomposable ternary function.
When not yielding a ternary function, the process instead yields a non-decomposable arity-4 function with specific properties.
The complexity classification for the latter case was resolved in \cite{cai_dichotomy_2017} even for complex values.

The symmetrisation constructions for binary and ternary functions, as well as the subsequent hardness proofs, occasionally require a little extra work in the $\hol^c$ setting as compared to the $\hol^+$ setting; we deal with those issues before proving the main theorem.

\subsection{Hardness proofs involving a non-decomposable ternary function}
\label{s:ternary}

First, we prove several lemmas that give a complexity classification for $\hol^c$ problems in the presence of a non-decomposable ternary function.
These results adapt techniques used in the $\hol^+$ complexity classification to the $\hol^c$ setting.
They also replace Lemmas 5.1, 5.3, and 5.5--5.7 of \cite{cai_dichotomy_2017}.
Whereas the last three of those only apply to real-valued functions, our new results work for complex values.

\begin{lemma}\label{lem:case_KM}
 Suppose $\cF\sse\allf$ is finite and $f\in\cF$ is a non-decomposable ternary function.
 If $f\in K\circ\cM$ and {$\cF\nsubseteq\ang{K\circ\cM}$}, then $\holp[c]{\cF}$ is \sP-hard.
 Similarly, if $f\in KX\circ\cM$ and {$\cF\nsubseteq\ang{KX\circ\cM}$}, then $\holp[c]{\cF}$ is \sP-hard.
\end{lemma}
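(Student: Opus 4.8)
Here is the plan I would follow.

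The two cases are interchanged by a bit flip: since $XK=KX\cdot\diag(i,-i)$ and an invertible diagonal transformation fixes $\cM$, we have $X\circ(K\circ\cM)=KX\circ\cM$, hence $X\circ\ang{K\circ\cM}=\ang{KX\circ\cM}$; as $X\circ\{\dl_0,\dl_1\}=\{\dl_0,\dl_1\}$, Corollary~\ref{cor:orthogonal-holographic} gives $\holp[c]{\cF}\equiv_T\holp[c]{X\circ\cF}$, which turns the second case into the first. So I would assume $f\in K\circ\cM$ and $\cF\nsubseteq\ang{K\circ\cM}$, and fix some $g\in\cF\setminus\ang{K\circ\cM}$. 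By Lemma~\ref{lem:family-types} the ternary function $f$ has $W$ type, so $f':=K^{-1}\circ f$ is a non-decomposable ternary function in $\cM$ and therefore (as in Section~\ref{s:existing_quantum}) is non-zero on each of the three weight-$1$ inputs.

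The crux --- and the step I expect to be the main obstacle --- is to realise a \emph{binary} non-decomposable function outside $\ang{K\circ\cM}$ over $\cF\cup\{\dl_0,\dl_1\}$; in the $\hol^+$ setting this is Lemma~\ref{lem:binary_notin_KcircM}, whose proof relies on having all four unaries $\dl_0,\dl_1,\dl_+,\dl_-$ available, and in $\hol^c$ we lack $\dl_+$ and $\dl_-$. My plan to get around this is to pass to a bipartite ``$K$-transformed'' picture: applying Proposition~\ref{prop:make_bipartite} and Theorem~\ref{thm:Valiant_Holant} with the matrix $K^{-1}$, together with $K^{-1}\circ\dl_0\doteq\dl_+$, $K^{-1}\circ\dl_1\doteq\dl_-$ and $K^T\circ\EQ_2\doteq\NEQ$, yields
\[
 \holp[c]{\cF}\ \equiv_T\ \holp{K^{-1}\circ\cF\cup\{\dl_+,\dl_-\}\mid\{\NEQ\}}.
\]
On the left-hand side of this problem we have $f'=K^{-1}\circ f\in\cM$, the function $h_0:=K^{-1}\circ g\notin\ang{\cM}$, and the free unaries $\dl_+,\dl_-$; on the right-hand side we have $\NEQ$. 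Connecting two of the three inputs of $f'$ through a $\NEQ$ vertex realises, for at least one choice of the two inputs, a non-zero scaling of $\dl_0$ as a left-hand side gadget --- the constant of proportionality is a sum of two of the three weight-$1$ values of $f'$, which are all non-zero; equivalently, untransformed, a self-loop on $f$ realises a scaling of $\dl_i$ because $K^TK\doteq X$. Connecting that $\dl_0$ through $\NEQ$ then realises a scaling of $\dl_1$ as a right-hand side gadget, so the inputs of any left-hand side function can be pinned to any of $0,1,+,-$. This is exactly what the proof of Lemma~\ref{lem:binary_notin_KcircM} needs: with $\cM$ in place of $K\circ\cM$, Lemma~\ref{lem:not_in_cM}, Proposition~\ref{prop:popescu-rohrlich_gadget} and Lemma~\ref{lem:decomposable} all go through (the four values $\alpha\in\{0,1,-1,\infty\}$ play the role of $\{1,-1,i,-i\}$ in the polynomial-counting argument), so one obtains a binary non-decomposable left-hand side gadget $\tilde h\notin\ang{\cM}$. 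Transforming back by $K$ via Theorem~\ref{thm:Valiant_Holant}, Proposition~\ref{prop:make_bipartite} and Lemma~\ref{lem:scaling} yields a binary non-decomposable $h:=K\circ\tilde h\notin\ang{K\circ\cM}$ with $\holp[c]{\cF}\equiv_T\holp[c]{\cF\cup\{h\}}$, i.e.\ $h\in S(\cF\cup\{\dl_0,\dl_1\})$ in the scaled sense.

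With $f\in\allf_3\cap(K\circ\cM)$ and $h\in\allf_2\setminus\ang{K\circ\cM}$ in hand, Lemma~\ref{lem:W_symmetrise-K} produces a non-degenerate symmetric ternary function $g'\in S(\{f,h\})\sse S(\cF\cup\{\dl_0,\dl_1\})$ of GHZ type, so $g'=M\circ\EQ_3$ for some $M\in\GL$; after replacing $M$ by $M\smm{1&0\\0&\ld}$ for a suitable $\ld$ I may assume $M^T\circ\EQ_2$ is $\om$-normalised, or if $M^T\circ\EQ_2\doteq\NEQ$ that $M^T\circ\dl_0$ is $\om$-normalised. Then, exactly as in the proof of Theorem~\ref{thm:holant_plus},
\begin{align*}
 \holp[c]{\cF}
 &\equiv_T \holp{\cF\cup\{\dl_0,\dl_1,\EQ_2\}\mid\{g',\EQ_2\}} \\
 &\equiv_T \holp{M^T\circ(\cF\cup\{\dl_0,\dl_1,\EQ_2\})\mid\{\EQ_3,\,M^{-1}\circ\EQ_2\}} \\
 &\equiv_T \csp\!\left(M^T\circ(\cF\cup\{\dl_0,\dl_1,\EQ_2\})\cup\{M^{-1}\circ\EQ_2\}\right),
\end{align*}
using Lemma~\ref{lem:realisable}, Proposition~\ref{prop:bipartite}, Theorem~\ref{thm:Valiant_Holant} and Theorem~\ref{thm:GHZ-state}. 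Finally I would invoke Theorem~\ref{thm:csp}: this counting CSP is \sP-hard unless its function set lies in $\cA$ or in $\ang{\cE}$; in either case $M^T\circ f$ would be a non-decomposable ternary function in $\cA$ or $\ang{\cE}$, hence of GHZ type by Lemma~\ref{lem:family-types}, whereas $M^T\circ f$ has $W$ type because $f$ does and holographic transformations preserve the entanglement class. This contradiction forces the CSP, and therefore $\holp[c]{\cF}$, to be \sP-hard. The one genuinely delicate point is the second paragraph: everything else is bookkeeping with established reductions, but realising the binary escape function from $\ang{K\circ\cM}$ without $\dl_+,\dl_-$ is what requires the $K^{-1}$-transformation trick and the observation that the ternary $f$ restores the missing pinnings.
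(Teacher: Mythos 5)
Your overall skeleton matches the paper's proof: reduce the $KX$ case to the $K$ case by a bit flip, realise a binary non-decomposable escape function $h\notin\ang{K\circ\cM}$, feed $f$ and $h$ into Lemma~\ref{lem:W_symmetrise-K} to get a symmetric GHZ-type ternary gadget, pass to a counting CSP via Theorem~\ref{thm:GHZ-state}, and derive a contradiction from the entanglement classification. The paper's case in which the witness $g\in\cF\setminus\ang{K\circ\cM}$ is already binary, and your final two paragraphs, are fine and close to what the paper does.

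The gap is in the middle step, and you have in fact put your finger on the right spot (``the one genuinely delicate point'') but then argued past it incorrectly. After the $K^{-1}$-holographic transform you work in $\holp{K^{-1}\circ\cF\cup\{\dl_+,\dl_-\}\mid\{\NEQ\}}$, and you correctly realise $\dl_0$ as a \emph{left-hand side} gadget from a $\NEQ$-self-loop on $f'$, and then $\dl_1$ as a \emph{right-hand side} gadget by passing $\dl_0$ through $\NEQ$. But to pin an input of a left-hand side function such as $h_0$ you need unary \emph{right-hand side} gadgets. Passing the LHS unaries $\dl_+,\dl_-,\dl_0$ through a single $\NEQ$ gives the RHS unaries $\dl_+,\dl_-,\dl_1$; there is no way to produce $\dl_0$ on the RHS (that would require $\dl_1$ as a LHS gadget, which is not generally available from $f'\in\cM$, $\dl_+$, $\dl_-$, and $\NEQ$). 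So the claim that ``the inputs of any left-hand side function can be pinned to any of $0,1,+,-$'' is false: you have only the three projective values $\alpha\in\{1,-1,\infty\}$, not the four you assert. The polynomial-counting step in Lemma~\ref{lem:binary_notin_KcircM} must avoid up to one root of a linear form and two roots of a quadratic form simultaneously, i.e.\ up to three bad values, so three available values do not suffice; four are needed, and this is precisely why the $\hol^+$ argument does not port to $\hol^c$ verbatim.

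The paper's Case 2 of Lemma~\ref{lem:case_KM} does not invoke Lemma~\ref{lem:binary_notin_KcircM} at all; instead it realises a whole one-parameter family of RHS unaries $[1,\ell z\pm 1]$ via a chain of LHS gadgets $k'$ (Figure~\ref{fig:unaries}), built from $f'$, $\dl_\pm$ on the RHS and $\NEQ$, with $z\neq 0$. This gives polynomially many distinct RHS pins, and the product polynomial $p(\ell_3,\zd,\ell_n)$ is then handled by choosing the $\ell_m$ one at a time from $\{0,1,2,3\}$. That chain construction is the missing idea in your proposal: without it (or some equivalent supply of a fourth distinct unary pin), the reduction to a binary escape function outside $\ang{K\circ\cM}$ is not established, and the rest of your argument does not apply.
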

\begin{proof}
 We consider the case $f\in K\circ\cM$ and {$\cF\nsubseteq\ang{K\circ\cM}$}, the proof for the second case is analogous {since $X\circ\cM$ differs from $\cM$ only by a bit flip on all function inputs}.
 
 As {$\cF\nsubseteq\ang{K\circ\cM}$}, we can find $h\in\cF\setminus\ang{K\circ\cM}$.
 Then $h$ has arity at least 2 and is non-degenerate because all unary functions are in $K\circ\cM$.
 We distinguish cases according to the arity and decomposability properties of $h$.
 
 \textbf{Case~1}: Suppose $h$ has arity 2, then non-degeneracy implies $h$ is non-decomposable.
 Thus, by Lemma~\ref{lem:W_symmetrise-K}, there exists a non-decomposable symmetric ternary function $g\in S(\{f,h\})$.
 This function $g$ is guaranteed to have GHZ type by the same lemma.
 We thus have
 \begin{align*}
  \holp[c]{\cF} = \holp{\cF\cup\{\dl_0,\dl_1\}} 
  &\equiv_T \holp{\cF\cup\{\dl_0,\dl_1,g\}} \\
  &\equiv_T \holp{\cF\cup\{\dl_0,\dl_1,\EQ_2\} \mid \{g,\EQ_2\}}
 \end{align*}
 where the equality is the definition of $\hol^c$, the first reduction is by Lemma~\ref{lem:realisable}, and the second reduction is by {Proposition~\ref{prop:bipartite}}.
 Furthermore, there exists $M\in\GL$ such that $g=M\circ\EQ_3$ and either $M^T\circ\EQ_2$ is $\om$-normalised or $M^T\circ\EQ_2=c\cdot\NEQ$ for some $c\in\AAnz$.
 In the latter case, we can choose $M$ such that $M^T\circ\dl_0$ is $\om$-normalised.
 Therefore,
 \begin{align*}
  \holp[c]{\cF}
  &\equiv_T \holp{\cF\cup\{\dl_0,\dl_1,\EQ_2\} \mid \{g,\EQ_2\}} \\
  &\equiv_T \holp{M^T\circ(\cF\cup\{\dl_0,\dl_1,\EQ_2\})\mid\{\EQ_3,M^{-1}\circ\EQ_2\}} \\
  &\equiv_T \csp\left( M^T\circ(\cF\cup\{\dl_0,\dl_1,\EQ_2\})\cup\{M^{-1}\circ\EQ_2\} \right)
 \end{align*}
 where the first step is from above, the second step is Theorem~\ref{thm:Valiant_Holant}, and the third step is by Theorem~\ref{thm:GHZ-state}.
 Now, $f\in\cF\cap K\circ\cM$ is a non-decomposable ternary function.
 By Lemma~\ref{lem:family-types}, any non-decomposable ternary function in $\cM$ has $W$ type, and holographic transformations by definition do not affect the entanglement class.
 Thus, $M^T\circ f$ has $W$ type.
 But Lemma~\ref{lem:family-types} also shows that any non-decomposable ternary function in $\ang{\cE}$ or in $\cA$ has GHZ type.
 Therefore $M^T\circ f\notin\ang{\cE}$ and $M^T\circ f\notin\cA$, hence the counting CSP is \sP-hard.
 
 \textbf{Case~2}: Suppose $h$ is an $n$-ary function with $n>2$, and $h$ is non-decomposable.
 
 Write the ternary function $f$ as $K\circ f'$, where $f'\in\cM$ means that it takes the form $f'=(a,b,c,0,d,0,0,0)$ for some $a,b,c,d\in\AA$.
 Non-decomposability of $f$ implies $bcd\neq 0$.
 Consider the three different gadgets that consist of a vertex assigned function $f$ with a self-loop (where the three gadgets differ in which argument of $f$ corresponds to the dangling edge).
 The gadget where the first edge is dangling has effective function
 \begin{align*}
  \sum_{y\in\{0,1\}} f(x,y,y) &= \sum_{y,z_1,z_2,z_3\in\{0,1\}} K_{x z_1}K_{y z_2} K_{y z_3} f'(z_1,z_2,z_3) \\
  &= 2\sum_{z_1,z_2,z_3\in\{0,1\}} K_{x z_1} \NEQ(z_2,z_3) f'(z_1,z_2,z_3) \\
  &= 2\sum_{z_1\in\{0,1\}} K_{x z_1} \left( f'(z_1,0,1) + f'(z_1,1,0) \right) \\
  &= 2(b+c) (K\circ\dl_0)(x)
 \end{align*}
 Using a self loop on a vertex assigned function $f$, we can therefore realise the unary function $2(b+c) (K\circ\dl_0) = 2(b+c)[1,i]$, {where $i$ is the imaginary unit}.
 The other two gadgets similarly yield $2(b+d)[1,i]$ and $2(c+d)[1,i]$.
 Since $bcd\neq 0$, at least one of those gadgets is non-zero.
 Thus we can realise $\dl_i=[1,i]$ up to irrelevant scaling.
 
 We can now prove the following chain of interreductions:
 \begin{align*}
  \holp[c]{\cF} &= \holp{\cF\cup\{\dl_0,\dl_1\}} \\
  &\equiv_T \holp{\cF\cup\{\dl_0,\dl_1,\dl_i,f,h\}} \\
  &\equiv_T \holp{\cF\cup\{f,h,\EQ_2\}\mid\{\dl_0,\dl_1,\dl_i,\EQ_2\}} \\
  &\equiv_T \holp{ K^{-1}\circ(\cF\cup\{f,h,\EQ_2\}) \mid K^T\circ\{\dl_0,\dl_1,\dl_i,\EQ_2\} }
 \end{align*}
 Here, the equality is the definition of $\hol^c$.
 The first reduction is by Lemma~\ref{lem:realisable} and the above gadget constructions.
 The second reduction is by {Proposition~\ref{prop:bipartite}} and the third reduction is by Theorem~\ref{thm:Valiant_Holant}.
 
 Recall that $K=\smm{1&1\\i&-i}$, so $K^T\circ\dl_0\doteq \dl_+$, $K^T\circ\dl_1\doteq \dl_-$, $K^T\circ\dl_i\doteq \dl_1$ and $K^T\circ\EQ_2\doteq\NEQ$.
 Let $h':=K^{-1}\circ h$ and recall from above that $f'=K^{-1}\circ f = (a,b,c,0,d,0,0,0)$.
 Thus, the effective function of the gadget in Figure~\ref{fig:unaries}a is $k(x,y)=\sum_{z\in\{0,1\}} f'(x,y,z)\dl_\pm(z)$.
 This is a LHS gadget and $k=(a\pm b, c, d, 0)$.
 Since $b\neq 0$, there is a choice of sign such that $k(0,0)\neq 0$.
 
 Then the gadget in Figure~\ref{fig:unaries}b has effective function
 \[
  k'(x,y) = \sum_{z_1,z_2\in\{0,1\}} k(x,z_1)\NEQ(z_1,z_2)k(y,z_2).
 \]
 It is a symmetric LHS gadget, and $k' = (2d(a\pm b),cd,cd,0) \doteq [\frac{2}{c}(a\pm b),1,0]$.
 Note that with the above choice of sign, $z:=k'(0,0)\neq 0$.
 
 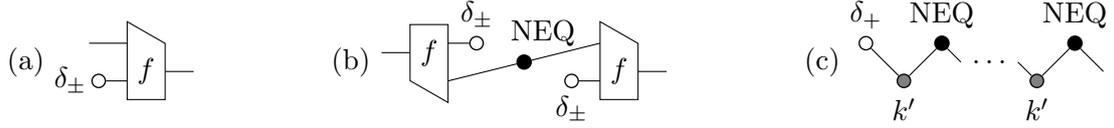
\begin{figure}
  \centering
  (a) \begin{tikzpicture}[rotate=90]
	\begin{pgfonlayer}{nodelayer}
		\node [style=map, shape border rotate=90, minimum height=0.5cm] (0) at (-0.25, -0.75) {$f$};
		\node [style=hollown] (1) at (-0.5, 0.5) {};
		\node [style=none] (2) at (0.5, 0.75) {};
		\node [style=none] (3) at (-0.25, -1.25) {};
		\node [style=none] (4) at (-0.25, -2) {};
		\node [style=none] (5) at (0.5, -0.25) {};
		\node [style=none] (6) at (-0.5, -0.25) {};
		\node [style=none] (7) at (-0.5, 1.25) {$\dl_\pm$};
	\end{pgfonlayer}
	\begin{pgfonlayer}{edgelayer}
		\draw (3.center) to (4.center);
		\draw (2.center) to (5.center);
		\draw (1) to (6.center);
	\end{pgfonlayer}
\end{tikzpicture} \qquad\qquad
  (b) \begin{tikzpicture}[rotate=90]
	\begin{pgfonlayer}{nodelayer}
		\node [style=map, shape border rotate=90, minimum height=0.5cm] (0) at (-0.25, -2.5) {$f$};
		\node [style=hollown] (1) at (-0.5, -1.25) {};
		\node [style=none] (2) at (-0.5, 2) {};
		\node [style=none] (3) at (-0.25, -3) {};
		\node [style=none] (4) at (-0.25, -3.75) {};
		\node [style=none] (5) at (0.5, -2) {};
		\node [style=none] (6) at (-0.5, -2) {};
		\node [style=none] (7) at (-1.25, -1.25) {$\dl_\pm$};
		\node [style=map, shape border rotate=270, minimum height=0.5cm] (8) at (0.25, 2.5) {$f$};
		\node [style=hollown] (9) at (0.5, 1.25) {};
		\node [style=none] (11) at (0.25, 3.75) {};
		\node [style=none] (12) at (0.25, 3) {};
		\node [style=none] (14) at (0.5, 2) {};
		\node [style=none] (15) at (1.25, 1.25) {$\dl_\pm$};
		\node [style=solidn] (16) at (0, 0) {};
		\node [style=none] (17) at (0.75, -0.5) {$\NEQ$};
	\end{pgfonlayer}
	\begin{pgfonlayer}{edgelayer}
		\draw (3.center) to (4.center);
		\draw (1) to (6.center);
		\draw (11.center) to (12.center);
		\draw (9) to (14.center);
		\draw (2.center) to (16);
		\draw (16) to (5.center);
	\end{pgfonlayer}
\end{tikzpicture} \qquad\qquad
  (c) \begin{tikzpicture}
	\begin{pgfonlayer}{nodelayer}
		\node [style=hollown] (0) at (-3.25, 0.5) {};
		\node [style=greyn] (1) at (-2.25, -0.5) {};
		\node [style=greyn] (2) at (1.25, -0.5) {};
		\node [style=solidn] (3) at (-1.25, 0.5) {};
		\node [style=solidn] (4) at (2.25, 0.5) {};
		\node [style=none] (5) at (3, -0.25) {};
		\node [style=none] (6) at (-0.75, 0) {};
		\node [style=none] (7) at (0.75, 0) {};
		\node [style=none] (8) at (0, 0) {$\ldots$};
		\node [style=none] (9) at (-3.25, 1.25) {$\dl_+$};
		\node [style=none] (10) at (-1.25, 1.25) {$\NEQ$};
		\node [style=none] (11) at (2.25, 1.25) {$\NEQ$};
		\node [style=none] (12) at (-2.25, -1.25) {$k'$};
		\node [style=none] (13) at (1.25, -1.25) {$k'$};
	\end{pgfonlayer}
	\begin{pgfonlayer}{edgelayer}
		\draw (0) to (1);
		\draw (1) to (3);
		\draw (2) to (4);
		\draw (4) to (5.center);
		\draw (3) to (6.center);
		\draw (7.center) to (2);
	\end{pgfonlayer}
\end{tikzpicture}
  \caption{(a) The gadget for $k$, where we denote $f$ by a non-symmetric box to indicate that it is not generally a symmetric function. (b) The gadget for $k'$, which is symmetric. (c) A family of gadgets for producing unary functions.}
  \label{fig:unaries}
 \end{figure}

 A chain of $\ell$ of these symmetric gadgets, connected to $\dl_{\pm}$ at one end and connected by copies of $\NEQ$, as shown in Figure~\ref{fig:unaries}c, gives a RHS gadget with function $[1,\ell z\pm 1]$.
 Thus, since $z\neq 0$, we can realise polynomially many different unary functions on the RHS.
 
 Since $h'=K^{-1}\circ h\notin\cM$, there exists a bit string $\ba\in\{0,1\}^n$ of Hamming weight at least 2 such that $h(\ba)\neq 0$.
 Without loss of generality, assume $a_1=a_2=1$.
 Otherwise permute the argument of $h$, the resulting function is in $S(\{h\})$ so it can be added to the LHS of our holant problem without affecting the complexity.
 Let $v_m=\dl_{a_m}$ for $m\in\{3,4\zd n\}$ and define
 \[
  g_1(x_1,x_2) := \sum_{x_3\zd x_n\in\{0,1\}} h'(\vc{x}{n}) \prod_{m=3}^n v_m(x_m).
 \]
 Then
 \begin{equation}\label{eq:not_in_cM}
  g_1(1,1)\neq 0.
 \end{equation}
 Furthermore, by Proposition~\ref{prop:popescu-rohrlich_gadget}, we know that there exist $u_m\in\{\dl_0,\dl_1,\dl_+,\dl_-\}$ for all $m\in\new{\{3,4\zd n\}}$ such that the following function is non-decomposable:
 \[
  g_2(x_1,x_2) := \sum_{x_3\zd x_n\in\{0,1\}} h'(\vc{x}{n}) \prod_{m=3}^n u_m(x_m).
 \]
 The non-decomposability condition for binary functions is
 \begin{equation}\label{eq:entangled}
  g_2(0,0)g_2(1,1)-g_2(0,1)g_2(1,0) \neq 0.
 \end{equation}
 \new{Now consider a third function
 \[
  g_3(x_1,x_2) := \sum_{x_3\zd x_n\in\{0,1\}} h'(\vc{x}{n}) \prod_{m=3}^n w_m(x_m)
 \]
 where $w_m = [1,1+\ell_m z]$ for each $m\in\{3,4\zd n\}$, with the $\ell_m$ being integer variables whose values are yet to be determined.
 Define
 \[
  p(\ell_3,\zd\ell_m) := g_3(1,1) \big( g_3(0,0)g_3(1,1)-g_3(0,1)g_3(1,0) \big).
 \]
 Then $p$ is a multivariate polynomial where the maximum exponent of any variable is 3.
 By \eqref{eq:not_in_cM} and \eqref{eq:entangled}, this polynomial is not identically zero.
 
 Now since the variable $\ell_3$ has degree at most 3 in $p$, there exists a value $\ld_3\in\{0,1,2,3\}$ such that $p(\ld_3,\ell_4\zd\ell_m)$ is not identically zero.
 We may repeat this argument for $\ell_4\zd\ell_m$ until we have found values $\ld_4\zd\ld_m$ for all the variables such that $p(\ld_3\zd\ld_m)\neq 0$.
 Each resulting $w_m$ is realisable by a RHS gadget.
 Thus, $g_3$ is realisable by a LHS gadget.
 The function $g_3$} is binary, non-decomposable, and not in $K\circ\cM$.
 Therefore we can proceed as in the case where $h$ is binary.
 
 \textbf{Case~3}: Suppose $h$ is an $n$-ary function with $n>2$, and $h$ is decomposable.
 Since $h\notin\ang{K\circ\cM}$, in any decomposition of $h$ there must be one factor $h'$ which is not in $\ang{K\circ\cM}$.
 This factor has arity at least 2 since all unary functions are in $\ang{K\circ\cM}$, and its arity is strictly smaller than that of $h$.
 Furthermore, by Lemma~\ref{lem:decomposable}, $h'\in S(\{h,\dl_0,\dl_1\})$.
 We may thus apply the argument to $h'$ instead of $h$.
 
 If $h$ satisfies the conditions of Case~1, we are immediately done.
 Case~2 straightforwardly reduces to Case~1.
 Finally, Case~3 yields a function of smaller arity than the original one, to which the case distinction can then be applied.
 Because the arity decreases every time we hit Case~3, the argument terminates.
 Thus the proof is complete.
\end{proof}

\begin{lemma}\label{lem:arity3_hardness}
 Suppose $\cF\sse\allf$ is finite and contains a non-decomposable ternary function $f$.
 Then $\holp[c]{\cF}$ is \sP-hard unless:
 \begin{enumerate}
  \item\label{c:orthogonal} There exists $O\in\cO$ such that $\cF\subseteq\avg{O\circ\mathcal{E}}$, or
  \item\label{c:KcE} $\cF\subseteq\avg{K\circ\mathcal{E}}=\avg{KX\circ\mathcal{E}}$, or
  \item\label{c:KcM} $\cF\subseteq\avg{K\circ\mathcal{M}}$ or $\cF\subseteq\avg{KX\circ\mathcal{M}}$, or
  \item\label{c:McA} $\cF\subseteq M\circ\cA$ for some $M\in\cS$, as defined in \eqref{eq:cS_definition}.
 \end{enumerate}
 In all the exceptional cases, the problem $\holp[c]{\cF}$ can be solved in polynomial time.
\end{lemma}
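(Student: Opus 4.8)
The plan is to argue by cases on the entanglement type of the non-decomposable ternary function $f$, mirroring the structure of the $\hol^+$ hardness proof but replacing uses of $\dl_+,\dl_-$ (which are not freely available) with constructions that only use $\dl_0,\dl_1$ and self-loops, exactly as in the lemmas leading up to this one. First I would observe that the tractability of $\holp[c]{\cF}$ in each of the four exceptional cases is essentially immediate: Cases \ref{c:orthogonal}--\ref{c:KcM} are tractable by Theorem~\ref{thm:Holant-star} since $\dl_0,\dl_1\in\cU\sse\cM\cap\cE$, so adding the pinning functions keeps $\cF\cup\{\dl_0,\dl_1\}$ inside the same tractable set; and Case~\ref{c:McA}, $\cF\sse M\circ\cA$ with $M\in\cS$, is tractable because $M^T\circ(\cF\cup\{\dl_0,\dl_1\})\cup\{\EQ_3\}\sse\cA$ after a holographic transformation (using $M\in\cS$ so that $M^T\circ\{\EQ_2,\dl_0,\dl_1\}\sse\cA$), whence $\csp(M^T\circ(\cF\cup\{\dl_0,\dl_1\}))$ is tractable by Theorem~\ref{thm:csp}, and $\holp[c]{\cF}\leq_T\csp(\cdots)$ by Proposition~\ref{prop:CSP_holant} together with Theorem~\ref{thm:Valiant_Holant}.

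For the hardness direction, I would assume $\cF$ is in none of the four exceptional cases and split on the entanglement type of $f$ (via Lemma~\ref{lem:li}, or Lemma~\ref{lem:family-types} once an ambient set is identified). If $f$ has $W$ type: either $f\notin(K\circ\cM)\cup(KX\circ\cM)$, in which case Lemma~\ref{lem:W_symmetrise} realises a non-decomposable symmetric ternary function $h\in S(\{f\})$ of GHZ type; or $f\in K\circ\cM$ (resp.\ $KX\circ\cM$), in which case, since $\cF\nsubseteq\ang{K\circ\cM}$ (resp.\ $\nsubseteq\ang{KX\circ\cM}$) by exclusion of Case~\ref{c:KcM}, Lemma~\ref{lem:case_KM} directly gives $\sP$-hardness. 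In the first $W$-subcase we are reduced to the situation of a GHZ-type symmetric ternary function, handled below. If $f$ has GHZ type: if $f$ is not already symmetric, Lemma~\ref{lem:GHZ_symmetrise} realises a non-decomposable symmetric ternary $f''\in S(\{f\})$ (note this lemma uses only the bare function $f$, so no extra unaries are needed); if $f''$ has GHZ type set $h:=f''$, and if $f''$ has $W$ type loop back to the $W$-type argument with $f''$ in place of $f$ — but now, crucially, if $f''\in K\circ\cM$ we can still invoke Lemma~\ref{lem:case_KM} since $\cF\nsubseteq\ang{K\circ\cM}$, and otherwise Lemma~\ref{lem:W_symmetrise} applies. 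In all surviving branches we obtain a non-decomposable symmetric ternary GHZ-type function $h\in S(\cF\cup\{\dl_0,\dl_1\})$.

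Given such an $h$, I would finish exactly as in the $\hol^+$ proof (Theorem~\ref{thm:holant_plus}): write $h=M\circ\EQ_3$ with $M^T\circ\EQ_2$ either $\om$-normalised or equal to $c\cdot\NEQ$ (in which case reselect $M$ so $M^T\circ\dl_0$ is $\om$-normalised), then chain
\[
 \holp[c]{\cF}\equiv_T\holp{\cF\cup\{\dl_0,\dl_1,h\}}\equiv_T\holp{\cF\cup\{\dl_0,\dl_1,\EQ_2\}\mid\{h,\EQ_2\}}\equiv_T\csp\!\left(M^T\circ(\cF\cup\{\dl_0,\dl_1,\EQ_2\})\cup\{M^{-1}\circ\EQ_2\}\right),
\]
using Lemma~\ref{lem:realisable}, Proposition~\ref{prop:bipartite}, Theorem~\ref{thm:Valiant_Holant}, and Theorem~\ref{thm:GHZ-state}. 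Then invoke the $\csp$ dichotomy (Theorem~\ref{thm:csp}): if the resulting $\csp$ is tractable, then either the transformed set lies in $\cA$ — which, chasing through Lemmas~\ref{lem:cS-cA}, \ref{lem:cS_cA-group}, \ref{lem:affine_closed}, \ref{lem:hc_gadget} as in the $\hol^+$ proof, forces $M\in\cS$ and $\cF\sse M\circ\cA$, i.e.\ Case~\ref{c:McA} — or the transformed set lies in $\ang{\cE}$, which, analysing $M^TM$ and $M^{-1}(M^{-1})^T$ via Lemmas~\ref{lem:ATA-D} and \ref{lem:ATA-X}, forces $\cF\sse\ang{O\circ\cE}$ or $\cF\sse\ang{K\circ\cE}$ (Cases~\ref{c:orthogonal}/\ref{c:KcE}); a contradiction in every case. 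Hence the $\csp$ is $\sP$-hard and so is $\holp[c]{\cF}$. The main obstacle I anticipate is the bookkeeping in the $W$-type/GHZ-type interplay — specifically ensuring the loop "$\text{GHZ, not symmetric}\to f''\to W$-type" terminates and that every $K\circ\cM$ / $KX\circ\cM$ branch is genuinely caught by Lemma~\ref{lem:case_KM} using the hypothesis $\cF\nsubseteq\ang{K\circ\cM}$ (resp.\ $KX$); once that case analysis is airtight, the rest reduces to the already-established $\hol^+$-style linear-algebra argument.
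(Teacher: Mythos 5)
Your proof is correct and takes essentially the same route as the paper's: reduce to a non-decomposable symmetric GHZ-type ternary function, realise a CSP via Theorem~\ref{thm:GHZ-state}, and chase the two tractable CSP families ($\ang{\cE}$ and $\cA$) through Lemmas~\ref{lem:ATA-D}, \ref{lem:ATA-X}, \ref{lem:cS-cA}, \ref{lem:cS_cA-group} to match the exceptional cases; the $K\circ\cM$ and $KX\circ\cM$ branches are handed off to Lemma~\ref{lem:case_KM} exactly as in the paper. The one genuine variation is the symmetric $W$-type sub-case with $f\notin K\circ\cM\cup KX\circ\cM$: the paper invokes Theorem~\ref{thm:W-state} directly (taking the binary side to be $\EQ_2$, so that $\hol(\{f\})$ is already $\sP$-hard without pinning), whereas you route this sub-case through Lemma~\ref{lem:W_symmetrise} to obtain a GHZ-type symmetric function and then run the single CSP pipeline; both work, and yours has the small advantage of avoiding a separate hardness source, at the cost of re-verifying the tractability exclusions for a gadget function rather than for $f$ itself.

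One slip in the tractability sketch for Case~\ref{c:McA}: $\cF\sse M\circ\cA$ gives $M^{-1}\circ\cF\sse\cA$, not $M^T\circ\cF\sse\cA$. After Proposition~\ref{prop:make_bipartite} and Theorem~\ref{thm:Valiant_Holant}, the set $\cF\cup\{\dl_0,\dl_1\}$ on the left picks up $M^{-1}$ and $\EQ_2$ on the right picks up $M^T$; you then also need, as the paper observes inside Subcase~a, that $M^T\circ\{\dl_0,\dl_1\}\sse\cA$ implies $M^{-1}\circ\{\dl_0,\dl_1\}\sse\cA$ via the affine gadget $[d,-c]=\smm{0&1\\-1&0}\circ[c,d]$ together with Lemmas~\ref{lem:hc_gadget} and~\ref{lem:affine_closed}. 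This does not touch the hardness direction, which is the content the paper's proof of this lemma actually establishes (it defers tractability to Theorem~\ref{thm:Holant-star} and the main theorem).
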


\begin{rem}
 The case $\cF\sse\ang{\cT}$ does not appear here because $f$ is a non-decomposable ternary function and $f\in\cF$, which implies $\cF\nsubseteq\ang{\cT}$.
\end{rem}

\begin{proof}
 We distinguish two cases according to whether the ternary function $f$ is symmetric.
 
 \textbf{Case~1}: Suppose $f$ is symmetric.
 By the entanglement classification (cf.\ Section~\ref{s:entanglement}), $f$ has either GHZ type or $W$ type.
 We treat these two subcases separately.
 
 \textbf{Subcase~a}: Suppose $f$ is of GHZ type.
 Given an instance of $\holp[c]{\cF}$, by Proposition~\ref{prop:make_bipartite},
 \[
  \holp[c]{\cF} = \holp{\cF\cup\{\dl_0,\dl_1\}} \equiv_T \holp{\cF\cup\{\dl_0,\dl_1\}\mid\{\EQ_2\}}.
 \]
 Furthermore, we have
 \[
  \holp{\cF\cup\{\dl_0,\dl_1\}\mid\{\EQ_2\}} \equiv_T \holp{\cF\cup\{\dl_0,\dl_1\}\mid\{\EQ_2,\dl_0,\dl_1\}}.
 \]
 The $\leq_T$ direction is immediate; for the other direction, note that any occurrence of $\dl_0$ or $\dl_1$ on the RHS can be replaced by a gadget consisting of a LHS copy of the unary function connected to $\EQ_2$.
 
 By Theorem~\ref{thm:Valiant_Holant}, for any $M\in\GL$,
 \begin{multline}\label{eq:Holant-c_holographic}
  \holp{\cF\cup\{\dl_0,\dl_1\}\mid\{\EQ_2,\dl_0,\dl_1\}} \\
  \equiv_T \holp{M^{-1}\circ(\cF\cup\{\dl_0,\dl_1\}) \,\middle|\, M^T\circ\{\EQ_2,\dl_0,\dl_1\}}.
 \end{multline}
 In the following, we aim to identify a matrix $M$ such that Theorem~\ref{thm:GHZ-state} can be applied to {show that the RHS of \eqref{eq:Holant-c_holographic} is equivalent to a counting CSP.
 To apply the theorem, the following three properties must hold for $M$ and $\cF$:
 \begin{itemize}
  \item $\EQ_3\in M^{-1}\circ\cF$,
  \item $M^T\circ\EQ_2$ is $\om$-normalised, and
  \item if $M^T\circ\EQ_2=[0,\ld,0]$ for some $\ld\in\AAnz$, then there must exist an $\om$-normalised function in $M^T\circ\{\dl_0,\dl_1\}$ which is not a pinning function
 \end{itemize}}
 We now show that it is always possible to choose $M$ so these conditions are satisfied.
 
 As $f$ has GHZ type, there exists $A\in\GL$ such that $f=A\circ\EQ_3$.
 The matrix $A$ is not unique {(cf.\ Section~\ref{s:results_ternary_symmetric})}; for now we pick an arbitrary one among all matrices that satisfy $f=A\circ\EQ_3$.
 \begin{itemize}
  \item Suppose $A^T\circ\EQ_2\neq \ld\cdot\NEQ$ for any $\ld\in\AA$.
   If $A^T\circ\EQ_2$ is $\om$-normalised, let $M:=A$.
   Otherwise, by the argument in Section~\ref{s:ternary}, there exists $D_\om:=\smm{1&0\\0&\om}$ with $\om\in\{e^{2i\pi/3}, e^{4i\pi/3}\}$ such that $(D_\om A^T)\circ\EQ_2$ is $\om$-normalised.
   Let $M:=A D_\om$, then $f=M\circ\EQ_3$.
   Now, since $\EQ_3\in M^{-1}\circ\cF$ and $M^T\circ\EQ_2$ is an $\om$-normalised symmetric binary function, Theorem \ref{thm:GHZ-state} can be applied.
  \item Suppose $A^T\circ\EQ_2=\ld\cdot\NEQ$ for some $\ld\in\AA$, then $A^T A\doteq X$.
   Thus, by Lemma~\ref{lem:ATA-X}, $A=KD$ or $A=KXD$ for some invertible diagonal matrix $D$.
   In either of these cases, all {entries} of $A$ are non-zero, so $A^T\circ\dl_0=[a,b]$ for some $a,b,\in\AA\setminus\{0\}$.
   Thus there exists $M:=A D_\om$ for some $\om^3=1$ such that $f=M\circ\EQ_3$ and $M^T\circ\dl_0$ is $\om$-normalised, i.e.\ the conditions of Theorem \ref{thm:GHZ-state} are satisfied.
 \end{itemize}
 In either case, Theorem~\ref{thm:GHZ-state} yields
 \begin{equation}\label{eq:Holant-c_csp}
  \holp[c]{\cF} \equiv_T \csp\left( M^{-1}\circ\left( \cF \cup \left\{ \dl_0, \dl_1 \right\} \right) \cup M^T \circ\{ \EQ_2, \dl_0, \dl_1 \} \right),
 \end{equation}
 where $\EQ_3\in M^{-1}\circ\cF$.
 The matrix $M$ may still not be uniquely defined, but the remaining ambiguity does not affect any of the subsequent arguments.
   
 By \eqref{eq:Holant-c_csp} and Theorem~\ref{thm:csp}, $\holp[c]{\cF}$ is \sP-hard unless
 \[
  \cF':= M^{-1}\circ\left( \cF \cup \left\{ \dl_0, \dl_1 \right\} \right) \cup M^T \circ\{\EQ_2, \dl_0, \dl_1\}
 \]
 is a subset of either $\avg{\cE}$ or $\cA$.
 Again, we treat these two cases separately.
 \begin{itemize}
  \item  Suppose $\cF'\sse\ang{\cE}$.
 All non-decomposable binary functions in $\avg{\cE}$ are of the form
 \[
  \begin{pmatrix}\alpha&0\\0&\beta\end{pmatrix} \qquad\text{or}\qquad \begin{pmatrix}0&\alpha\\\beta&0\end{pmatrix}
 \]
 for some $\alpha,\beta\in\AA\setminus\{0\}$.
 Note that $M^T\circ\EQ_2$ corresponds to the matrix $M^T M$.
 As $M^T\circ\EQ_2\in\cF'$, there are two subcases.
 \begin{itemize}
  \item Suppose $M^T M = \smm{\alpha&0\\0&\beta}$.
   Then by Lemma~\ref{lem:ATA-D}, $M=QD$ for some orthogonal 2 by 2 matrix $Q$ and some invertible diagonal matrix $D$.
   Now, $\cF'\sse\ang{\cE}$ implies $M^{-1}\circ\cF\sse\ang{\cE}$, which is equivalent to $\cF\sse\ang{M\circ\cE}$ since holographic transformations and closure under tensor products commute.
   But $D\circ\cE=\cE$ for any invertible diagonal matrix $D$.
   Hence $\cF\sse\ang{Q\circ\cE}$, where $Q$ is orthogonal, i.e.\ $\cF$ satisfies Item~\ref{c:orthogonal} of this lemma.
  \item Suppose $M^T M = \smm{0&\alpha\\\beta&0}$
   Then by Lemma~\ref{lem:ATA-X}, $M=KD$ or $M=KXD$ for some invertible diagonal matrix $D$.
   As in the previous case, $\cF'\sse\ang{\cE}$ implies $\cF\sse\ang{M\circ\cE}$.
   Since $\cE$ is invariant under holographic transformations by diagonal matrices and under bit flips, this is the same as $\cF\sse\ang{K\circ\cE}$.
   Hence $\cF$ satisfies Item~\ref{c:KcE} of this lemma.
 \end{itemize}
 This completes the analysis {of the subcase $\cF'\sse\ang{\cE}$.}
 \item Suppose $\cF'\sse\cA$.
 This implies in particular
   \begin{equation}\label{eq:condition_cA}
    M^T\circ\left\{\EQ_2,\dl_0,\dl_1\right\}\sse\cA,
   \end{equation}
 which is the definition of $M\in\cS$, cf.\ \eqref{eq:cS_definition}.
 
 The assumption $\cF'\in\cA$ also implies $M^{-1}\circ\{\dl_0,\dl_1\}\sse\cA$.
 This does not yield any further restrictions on $M=\smm{a&b\\c&d}$.
 To see this, note we already know that $M^T\circ\{\dl_0,\dl_1\}\sse\cA$, i.e.\ $[a,b],[c,d]\in\cA$.
 Now, up to an irrelevant scalar factor, $M^{-1}\circ\{\dl_0,\dl_1\}$ is $\{[d,-c],[-b,a]\}$.
 Note that $[-b,a]=-\smm{0&1\\-1&0}\circ [a,b]$ and $[d,-c]=\smm{0&1\\-1&0}\circ [c,d]$.
 The function corresponding to $\smm{0&1\\-1&0}$ is affine and $\cA$ is closed under taking gadgets (cf.\ Lemma~\ref{lem:affine_closed}) and under scalings.
 Hence, by Lemma~\ref{lem:hc_gadget}, $[a,b],[c,d]\in\cA$ implies $[d,-c],[-b,a]\in\cA$.
 
 Thus $\cF'\sse\cA$ implies $M\in\cS$.
 Furthermore, $\cF'\sse\cA$ implies $M^{-1}\circ\cF\sse\cA$, or equivalently $\cF\sse M\circ\cA$, i.e.\ $\cF$ satisfies Item~\ref{c:McA} of this lemma.
 \end{itemize}
   
 To summarise, if $f$ has GHZ type, the problem is tractable if $\cF\subseteq\avg{Q\circ\cE}$ for some orthogonal 2 by 2 matrix $Q$, if $\cF\subseteq\avg{K\circ\cE}$, or if there exists $M\in\cS$ such that $\cF\subseteq M\circ\cA$.
 In all other cases, the problem is \sP-hard by reduction from $\csp$.

 \textbf{Subcase~b}: Suppose $f$ is of $W$ type, then:
   \begin{itemize}
    \item If $f\notin (K\circ\cM)\cup (KX\circ\cM)$, $\holp{f}$ is \sP-hard by Theorem \ref{thm:W-state}.
    \item If {$\cF\subseteq\ang{K\circ\cM}$ or $\cF\subseteq\ang{KX\circ\cM}$}, the problem is polynomial-time computable by Theorem~\ref{thm:Holant-star}; this is Item~\ref{c:KcM} of this lemma.
    \item If $f\in K\circ\cM$ but {$\cF\nsubseteq\ang{K\circ\cM}$}, the problem is \sP-hard by Lemma~\ref{lem:case_KM}, and analogously with $KX$ instead of $K$.
   \end{itemize}
 
 \textbf{Case~2}: Suppose $f$ is not symmetric.
 \begin{itemize}
  \item If $f\notin (K\circ\cM)\cup (KX\circ\cM)$, we can realise a non-decomposable symmetric ternary function by Lemmas \ref{lem:GHZ_symmetrise} and \ref{lem:W_symmetrise} and then proceed as in Case~1.
  \item If {$\cF\subseteq\ang{K\circ\cM}$ or $\cF\subseteq \ang{KX\circ\cM}$}, the problem is polynomial-time computable by Theorem~\ref{thm:Holant-star}; this is Item~\ref{c:KcM} of the lemma.
  \item Finally, if $f\in K\circ\cM$ but {$\cF\nsubseteq\ang{K\circ\cM}$}, or $f\in KX\circ\cM$ but {$\cF\nsubseteq\ang{KX\circ\cM}$}, the problem is \sP-hard by Lemma~\ref{lem:case_KM}.
 \end{itemize}
 This covers all cases.
\end{proof}

\subsection{Main theorem}
\label{s:main_theorem}

We now have all the components required to prove the main dichotomy for $\hol^c$.
The following theorem generalises Theorem 5.1 of \cite{cai_dichotomy_2017}, which applies only to real-valued functions.
The proof follows the one in \cite[Theorem~5.1]{cai_dichotomy_2017} fairly closely.

\begin{theorem}\label{thm:holant-c}
 Let $\cF\sse\allf$ be finite.
 Then $\holp[c]{\cF}$ is \sP-hard unless:
 \begin{itemize}
  \item $\cF\subseteq\avg{\cT}$, or
  \item there exists $O\in\cO$ such that $\cF\subseteq\avg{O\circ\mathcal{E}}$, or
  \item $\cF\subseteq\avg{K\circ\mathcal{E}}=\avg{KX\circ\mathcal{E}}$, or
  \item $\cF\subseteq\avg{K\circ\mathcal{M}}$ or $\cF\subseteq\avg{KX\circ\mathcal{M}}$, or
  \item there exists $B\in\cS$ such that $\cF\subseteq B\circ\cA$, or
  \item $\cF\subseteq\cL$.
 \end{itemize}
 In all of the exceptional cases, $\holp[c]{\cF}$ is polynomial-time computable.
\end{theorem}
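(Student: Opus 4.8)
The plan is to follow the structure of the proof of \cite[Theorem~5.1]{cai_dichotomy_2017}, replacing its real-valued ternary lemmas with Lemma~\ref{lem:arity3_hardness}, which holds over $\AA$. For the tractable directions: if $\cF$ is contained in one of the first four families, then $\holp[c]{\cF}=\holp{\cF\cup\{\dl_0,\dl_1\}}$ is polynomial-time computable directly by Theorem~\ref{thm:Holant-star}, taking the freely-available unary set there to be $\{\dl_0,\dl_1\}$. If $\cF\sse B\circ\cA$ with $B\in\cS$, I would first make the problem bipartite via Proposition~\ref{prop:make_bipartite} and apply the holographic transformation by $B$ (Theorem~\ref{thm:Valiant_Holant}); using that $B\in\cS$ forces $B^T\circ\EQ_2\in\cA$ and $B^{-1}\circ\{\dl_0,\dl_1\}\sse\cA$ (the latter is shown inside the proof of Lemma~\ref{lem:arity3_hardness}), all constraint functions then lie in $\cA$, so by Proposition~\ref{prop:CSP_holant} the problem reduces to a counting CSP over an affine set, which is tractable by Theorem~\ref{thm:csp}. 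Finally, if $\cF\sse\cL$, then $\holp[c]{\cF}\leq_T\holp{\cF\cup\{\dl_0,\dl_1,\EQ_4\}}\equiv_T\csp_2(\cF\cup\{\dl_0,\dl_1,\EQ_4\})$ by Lemma~\ref{lem:generalised_equality4}, and since $\dl_0,\dl_1,\EQ_4\in\cL$ this problem is tractable by Theorem~\ref{thm:csp_2^c}.

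For the hardness direction, suppose $\cF$ lies in none of the six families. Since $\cF\nsubseteq\ang{\cT}$, some function in $\cF$ has a non-decomposable tensor factor $f$ of arity $\geq 3$; by Lemma~\ref{lem:decomposable}, $f\in S(\cF\cup\{\dl_0,\dl_1\})$, so by Lemma~\ref{lem:realisable} I may assume $f\in\cF$ — adjoining a realisable function keeps $\cF$ outside all six families, since each is closed under passing to subsets. The crux is an \emph{arity-reduction} step carried out with $\dl_0$, $\dl_1$ and self-loops only — crucially \emph{without} holographic transformations, because $\dl_0,\dl_1$ are not preserved by these — adapting the case analysis in the proof of \cite[Theorem~5.1]{cai_dichotomy_2017} to complex values. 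Its conclusion is that $S(\cF\cup\{\dl_0,\dl_1\})$ contains either (i) a non-decomposable ternary function, or (ii) a non-decomposable arity-$4$ function that is a generalised equality function (i.e.\ lies in $\cE$) or has the two-block, full-rank matrix form required by Lemma~\ref{lem:interpolate_equality4}. I expect this step to be the main obstacle: self-loops decrease arity in steps of two and pinning can destroy non-decomposability, so one must prove that if no sequence of such operations reaches a non-decomposable ternary function, then $f$ is rigid enough to land in the arity-$4$ case of (ii), and this argument has to work over $\CC$ rather than $\RR$.

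In case (i), I would adjoin the realised non-decomposable ternary function to $\cF$ (again staying outside all six families) and apply Lemma~\ref{lem:arity3_hardness}: $\holp[c]{\cF}$ is \sP-hard unless $\cF\sse\ang{O\circ\cE}$ for some $O\in\cO$, $\cF\sse\ang{K\circ\cE}$, $\cF\sse\ang{K\circ\cM}$ or $\cF\sse\ang{KX\circ\cM}$, or $\cF\sse M\circ\cA$ for some $M\in\cS$ — that is, unless $\cF$ lies in one of the families~2--5 of the theorem, contradicting our assumption. Hence $\holp[c]{\cF}$ is \sP-hard in case (i).

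In case (ii), I would adjoin the realised arity-$4$ function; if it lies in $\cE$ it is already a generalised equality of arity $4$, and otherwise Lemma~\ref{lem:interpolate_equality4} lets us bring in $\EQ_4$, which is one. Either way, Lemma~\ref{lem:generalised_equality4} gives $\holp[c]{\cF}\equiv_T\csp_2(\cF')$ for some finite $\cF'\supseteq\cF\cup\{\dl_0,\dl_1\}$ that contains a generalised equality of arity $4$; as $\cF'$ also contains the pinning functions, Theorem~\ref{thm:csp_2^c} shows this is \sP-hard unless $\cF'$ — and hence $\cF$ — lies in one of $\ang{\cE}$, $\cA$, $T\circ\cA$, $\cL$. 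But $\ang{\cE}=\ang{I\circ\cE}$ with $I\in\cO$, both $\cA=I\circ\cA$ and $T\circ\cA$ with $I,T\in\cS$, and $\cL$ are respectively the second, fifth, fifth and sixth families of the theorem, so this again contradicts our assumption. Therefore $\holp[c]{\cF}$ is \sP-hard whenever $\cF$ avoids all six families, completing the proof; the remaining checks — that $I,T\in\cS$ and $\dl_0,\dl_1,\EQ_4\in\cL$, and that every $\csp$- or $\csp_2^c$-tractable set arising above is one of the six families — are routine.
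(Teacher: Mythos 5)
Your proposal follows the same overall structure as the paper's proof: reduce tractability to Theorem~\ref{thm:Holant-star}, Theorem~\ref{thm:csp} and Theorem~\ref{thm:csp_2^c}; for hardness, realise a non-decomposable function of arity $\geq 3$ and then carry out a pinning/self-loop arity reduction to reach either a non-decomposable ternary function (Lemma~\ref{lem:arity3_hardness}) or an arity-4 generalised-equality-type function (Lemma~\ref{lem:interpolate_equality4} and Lemma~\ref{lem:generalised_equality4}, then Theorem~\ref{thm:csp_2^c}), and close by checking the resulting exceptional families match the theorem's list. The tractability direction, the routine claims you leave to the reader (e.g.\ $T\in\cS$, $\dl_0,\dl_1,\EQ_4\in\cL$), and the cross-check at the end against the CSP and $\csp_2^c$ tractable families are all correct.

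The genuine gap is exactly where you say you expect it: the arity-reduction argument. You describe its expected conclusion accurately, but you leave it entirely unverified, and it is not a mechanical carry-over from \cite[Theorem~5.1]{cai_dichotomy_2017}. Several subcases of that real-valued proof break over $\AA$ and require replacement arguments. Concretely, the paper's proof sets $D_0 = \min\{d(\bx,\by)\mid \bx\neq\by,\, f(\bx)f(\by)\neq 0\}$ and works through $D_0\geq 4$, $D_0\geq 3$, $D_0=2$ (with secondary parameters $D_1$) and $D_0=1$ (with $D_2$ and then $D_3$). Among these, $D_0\geq 3$ and $D_1=1$ land cleanly on Lemma~\ref{lem:arity3_hardness}, and $D_0\geq 4$ (even) lands on Lemma~\ref{lem:generalised_equality4}; but the subcases $D_1=2$ need Lemma~\ref{lem:interpolate_equality4} followed by Lemma~\ref{lem:generalised_equality4} and Theorem~\ref{thm:csp_2^c} (the real-valued proof used a different device here), and within $D_1=2$ the variant with $\ba\in\{01,10\}$ requires an extra gadget (gluing the arity-4 function to its transpose) to restore the matrix form needed by Lemma~\ref{lem:interpolate_equality4}. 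The subcase $D_2=2$ yields a non-decomposable ternary function directly (again replacing a real-valued argument), and the subcases $D_3=1$ and $D_3=2$ require new arguments: a non-decomposability check via Lemma~\ref{lem:li} for $D_3=1$, and reducing $D_3=2$ to $D_3=1$ by composing with the unary $[\alpha,\beta]$ and observing SLOCC-equivalence. None of these is hard, but each is a specific complex-valued replacement that has to be written out; without them the hardness half of your proof has a hole of roughly the same size as the hardness proof itself.

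One minor point: your closure argument --- that adjoining a realisable function $f$ to $\cF$ keeps the enlarged set outside all six families because each family is downward closed under subsets --- is correct but stated slightly backwards; the direct reason is that if $\cF\cup\{f\}\sse X$ then $\cF\sse X$, so $\cF\nsubseteq X$ forces $\cF\cup\{f\}\nsubseteq X$.
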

\begin{proof}
 If $\cF$ is in one of the first four exceptional cases, then polynomial-time computability follows from Theorem~\ref{thm:Holant-star}.
 In the \new{penultimate exceptional case, let $\cF':= B^{-1}\circ\cF\sse\cA$, then
 \begin{align*}
  \holp[c]{\cF} &\leq_T \holp[c]{\cF\mid\{\EQ_2\}} \\
  &\leq_T \holp[c]{\cF'\mid\{B^T\circ\EQ_2\}} \\
  &\leq_T \holp[c]{\cF'\cup\{B^T\circ\EQ_2\}} \\
  &\leq_T \holp[c]{\cF'\cup\{B^T\circ\EQ_2,\EQ_3\}} \\
  &\leq_T \csp(\cF'\cup\{B^T\circ\EQ_2\})
 \end{align*}
 where the first reduction is Proposition~\ref{prop:make_bipartite} and the second reduction is Theorem~\ref{thm:Valiant_Holant} with $M=B^{-1}$.
 The third and fourth reductions hold because dropping the bipartite restriction and adding a function cannot make the problem easier.
 The final reduction combines Proposition~\ref{prop:CSP_holant} and Lemma~\ref{lem:csp^c}.
 Now, $B^T\circ\EQ_2\in\cA$ by \eqref{eq:cS_definition}.
 A counting CSP with affine constraint functions is polynomial-time computable by Theorem~\ref{thm:csp}.
 
 In the final exceptional case, we have
 \[
  \holp[c]{\cF} \leq_T \holp[c]{\cF\cup\{\EQ_4\}} \leq_T \csp_2^c(\cF)
 \]
 where the final reduction can be proved analogously to Proposition~\ref{prop:CSP_holant}.
 Polynomial-time computability thus follows from Theorem~\ref{thm:csp_2^c}.}
 
 From now on, assume $\cF$ does not satisfy any of the tractability conditions.
 In particular, this implies that $\cF\not\subseteq\avg{\cT}$, i.e.\ $\cF$ has multipartite entanglement.
 
 By Lemma~\ref{lem:decomposable}, without loss of generality, we may focus on non-decomposable functions.
 So assume that there is some non-decomposable function $f\in S(\cF\cup\{\dl_0,\dl_1\})$ of arity $n\geq 3$.
 If the function has arity 3, we are done by Lemma~\ref{lem:arity3_hardness}.
 Hence assume $n\geq 4$.
 
 Now if there was a bit string $\ba\in\{0,1\}^n$ such that $f(\bx)=0$ for all $\bx\in\{0,1\}^n\setminus\{\ba\}$, then $f$ would be a scaling of some function in $\ang{\{\dl_0,\dl_1\}}$, so it would be degenerate and thus decomposable.
 Hence, $f$ being non-decomposable implies that there exist two distinct $n$-bit strings $\ba,\bb\in\{0,1\}^n$ such that $f(\ba)f(\bb)\neq 0$.
 
 As in the proof of \cite[Theorem~5.1]{cai_dichotomy_2017}, let:
 \begin{equation}\label{eq:D0}
  D_0 = \min \left\{ d(\bx,\by) \mid \bx\neq\by, f(\bx)\neq 0, f(\by)\neq 0 \right\},
 \end{equation}
 where $d(\cdot,\cdot)$ is the Hamming distance.
 By the above argument, $D_0$ exists.
 We now distinguish cases according to the values of $D_0$.
 
 \textbf{Case $D_0\geq 4$ and $D_0$ is even}: Pick a pair of bit strings $\ba,\bb$ such that $f(\ba)f(\bb)\neq 0$ with {minimum} Hamming distance $d(\ba,\bb)$. 
 Pin all inputs where the two bit strings agree (without loss of generality, we always assume bit strings agree on the last $n-D_0$ bits; otherwise permute the arguments).
 This realises a function $f'\in\allf_{D_0}$ of the form
 \begin{equation}\label{eq:generalised_eq}
  f'(\bx) = \begin{cases} \alpha & \text{if } x_j=a_j \text{ for {all} } j\in [D_0] \\ \beta & \text{if } x_j=\bar{a}_j \text{ for {all} } j\in [D_0] \\ 0 & \text{otherwise}, \end{cases}
 \end{equation}
 where $\alpha\beta\neq 0$.
 {If $D_0=4$, this is a generalised equality of arity 4.
 If $D_0>4$, there must exist $j,k\in[D_0]$ with $j\neq k$ but $a_j=a_k$.
 Via a self-loop, we can realise the function $\sum_{x_j,x_k\in\{0,1\}}\EQ_2(x_j,x_k)f(\bx)$, which has arity $(D_0-2)$ and still satisfies a property like \eqref{eq:generalised_eq} for the remaining arguments.
 This process can be repeated, reducing the arity in steps of 2, to finally realise an arity-4 generalised equality function.}
 Then \sP-hardness follows by Lemma~\ref{lem:generalised_equality4}.
 
 \textbf{Case $D_0\geq 3$ and $D_0$ is odd}: Pin and use self-loops analogously to the previous case to realise a function of the form \eqref{eq:generalised_eq} with arity 3.
 Then \sP-hardness follows by Lemma~\ref{lem:arity3_hardness}.
 
 \textbf{Case $D_0=2$}: We can realise by pinning either a function $g=[\alpha,0,\beta]$ or a function $\tilde{g} = \smm{0&\alpha\\\beta&0}$ (up to permutation of the arguments), i.e.\ a generalised equality or a generalised disequality.
 In the second subcase, suppose the indices of $f$ that were not pinned in order to realise $\tilde{g}$ are $j$ and $k$.
 Replace $f$ by the gadget $\sum_{z\in\{0,1\}}\tilde{g}(x_j,z)f(\vc{x}{j-1},z,x_{j+1}\zd x_n)$.
 The new function has the same entanglement and the same $D_0$ as the old one, but pinning all inputs except $j$ and $k$ now gives rise to a generalised equality function as in the first subcase.
 \new{If necessary, redefine $\alpha,\beta$ according to this new function.
 In the following, we assume $j=1$, $k=2$; this is without loss of generality as permuting arguments is a gadget operation.}
 
 Following the proof of \cite[Theorem~5.1]{cai_dichotomy_2017}, define $f_\bx (z_1,z_2):=f(z_1,z_2,\bx)$ for any $\bx\in\{0,1\}^{n-2}$ and let
 \begin{align*}
  A_1 &:= \{ \bx\in\{0,1\}^{n-2} \mid f_\bx(z_1,z_2) = c[\alpha,0,\beta] \text{ for some } c\in\AAnz \}, \\
  B_1 &:= \{ \bx\in\{0,1\}^{n-2} \mid f_\bx(z_1,z_2) \neq c[\alpha,0,\beta] \text{ for any } c\in\AA \}.
 \end{align*}
 \new{Note that $A_1$ is non-empty since we can realise $g$ by pinning.
 The set $B_1$ must then be non-empty since otherwise $f$ would be decomposable as $f(z_1,z_2,\bx) = g(z_1,z_2)f'(\bx)$ for some $f'\in\allf_{n-2}$.
 Furthermore,} note that $A_1\cap B_1=\emptyset$ \new{and that} if $f_\by$ is identically 0, then $\by$ is not in $A_1\cup B_1$.
 Thus we can define:
 \[
  D_1 = \min\{ d(\bx,\by) \mid \bx \in A_1, \by\in B_1 \}.
 \]
 Pick a pair $\ba\in A_1,\bb\in B_1$ with {minimum} Hamming distance and pin wherever they are equal, as in the cases where $D_0\geq 3$.
 This realises a function
 \[
  h(\vc{x}{D_1+2}) = g(x_1,x_2)\prod_{k=1}^{D_1}\dl_{a_k}(x_{k+2}) + g'(x_1,x_2)\prod_{k=1}^{D_1}\dl_{\bar{a}_k}(x_{k+2}),
 \]
 where $g'(x_1,x_2):=f_\bb(x_1,x_2)$.
 Note that the assumption $D_0=2$ implies that either $f_\by(0,1)=f_\by(1,0)=0$ or $f_\by(0,0)=f_\by(1,1)=0$ for all $\by\in B_1$.
 {(Assume otherwise, i.e.\ each of the sets $\{f_\by(0,1),f_\by(1,0)\}$ and $\{f_\by(0,0),f_\by(1,1)\}$ contains at least one non-zero value. Then it is straightforward to see that there are two inputs for which $f$ is non-zero and whose Hamming distance is 1, so by \eqref{eq:D0} we should have $D_0=1$, a contradiction.)}
 Thus, $g'$ is either $\smm{\ld&0\\0&\mu}$ or $\smm{0&\ld\\\mu&0}$ for some $\ld,\mu\in\AA$ such that $\ld,\mu$ are not both zero.
 If $g'=\smm{\ld&0\\0&\mu}$, then the definition of $B_1$ furthermore implies $\alpha\mu-\beta\ld\neq 0$.
 
 We now distinguish cases according to the values of $D_1$.
 \begin{itemize}
  \item If $D_1\geq 3$, \new{then $\ari(h)=D_1+2\geq 5$}.
   Distinguish cases according to $g'$.
   \begin{itemize}
    \item Suppose $g'=\smm{\ld&0\\0&\mu}$.
     If $\ld\neq 0$, we can pin the first two inputs of $h$ to 00 to get a function ${\alpha}(\prod_{k=1}^{D_1}\dl_{a_k}(x_{k+2}))+{\ld}(\prod_{k=1}^{D_1}\dl_{\bar{a}_k}(x_{k+2}))$.
     \new{The resulting function still has arity at least 3, so we can} proceed as in the cases $D_0\geq 4 $ or $D_0\geq 3$.
     If $\ld=0$ then $\mu\neq 0$ and we can pin to 11 instead {to get the function $\beta(\prod_{k=1}^{D_1}\dl_{a_k}(x_{k+2}))+\mu\prod_{k=1}^{D_1}\dl_{\bar{a}_k}(x_{k+2}))$, and proceed analogously}.
    \item Suppose $g'=\smm{0&\ld\\\mu&0}$.
     If $\ld\neq 0$, we can pin the first input of $h$ to 0 to realise:
     \[
      {\alpha}\delta_0(x_2)\prod_{k=1}^{D_1}\dl_{a_k}(x_{k+2}) + \ld\delta_1(x_2)\prod_{k=1}^{D_1}\dl_{\bar{a}_k}(x_{k+2})
     \]
     at which point we again proceed as in the cases $D_0\geq 4 $ or $D_0\geq 3$.
     If $\ld=0$ then $\mu\neq 0$ and we can pin the first input to 1 instead {to get the function
     \[
      \beta\delta_1(x_2)\prod_{k=1}^{D_1}\dl_{a_k}(x_{k+2}) + \mu\delta_0(x_2)\prod_{k=1}^{D_1}\dl_{\bar{a}_k}(x_{k+2}),
     \]
     and proceed analogously.}
   \end{itemize}
  \item If $D_1=2$, there are eight subcases, depending on the form of $g'$ and the value of $\ba$.
  They can be considered pairwise, grouping $\ba$ with $\bar{\ba}$.
   \begin{itemize}
    \item If $g'=\smm{\ld&0\\0&\mu}$ {and $\ba=00$}, the function after pinning is
     \[
      \begin{pmatrix} \alpha&0&0&\ld \\ 0&0&0&0 \\ 0&0&0&0 \\ \beta&0&0&\mu \end{pmatrix}
     \]
     with $\alpha\beta\neq 0$ and $\alpha\mu-\beta\ld\neq 0$.
     In this case, apply Lemma~\ref{lem:interpolate_equality4} {to show
     \[
      \hol^c(\cF\cup\{\EQ_4\})\leq_T\hol^c(\cF).
     \]
     Now by Lemma~\ref{lem:generalised_equality4}, $\csp_2^c(\cF\cup\{\EQ_4\})\leq_T\hol^c(\cF\cup\{\EQ_4\})$.
     Therefore hardness follows by Theorem~\ref{thm:csp_2^c}.}
     Here, the original proof in \cite{cai_dichotomy_2017} used a different technique requiring real values.
     {If $\ba=11$, the argument is analogous with the first and last columns swapped.}
    \item {If $g'=\smm{\ld&0\\0&\mu}$ and $\ba=01$, the function after pinning is
     \[
      \begin{pmatrix} 0&\alpha&\ld&0 \\ 0&0&0&0 \\ 0&0&0&0 \\ 0&\beta&\mu&0 \end{pmatrix}
     \]
     where again $\alpha\beta\neq 0$ and $\alpha\mu-\beta\ld\neq 0$.
     Call this function $h'$.
     Then the gadget $\sum_{y,z\in\{0,1\}} h'(x_1,x_2,y,z)h'(x_3,x_4,y,z)$ corresponds to taking the product of the above matrix with its transpose; it takes values
     \[
      \begin{pmatrix} \alpha^2+\lambda^2&0&0&\alpha\beta+\ld\mu \\ 0&0&0&0 \\ 0&0&0&0 \\ \alpha\beta+\ld\mu&0&0&\beta^2+\mu^2 \end{pmatrix}
     \]
     Now
     \[
      \det \pmm{\alpha^2+\lambda^2&\alpha\beta+\ld\mu \\ \alpha\beta+\ld\mu&\beta^2+\mu^2} = (\alpha\mu-\beta\ld)^2 \neq 0,
     \]
     which means Lemma~\ref{lem:interpolate_equality4} can be applied again.
     Thus hardness follows as in the first subcase.
     
     If $\ba=10$, the middle two columns of the first matrix are swapped, but the argument is analogous.}
    \item If $g'=\smm{0&\ld\\\mu&0}$ {and $\ba=00$}, the function after pinning is
     \[
      \begin{pmatrix} \alpha&0&0&0 \\ 0&0&0&\ld \\ 0&0&0&\mu \\ \beta&0&0&0 \end{pmatrix}.
     \]
     If $\ld\neq 0$, pin the first input of $h$ to 0 to get the function $[\alpha,0,0,\ld]$.
     If $\ld=0$ then $\mu\neq 0$, so pin the first input of $h$ to 1 to get the function {$\smm{0&0&0&\mu\\\beta&0&0&0}$.
     
     If $\ba=11$, then the first and last columns are swapped, the argument is otherwise analogous.
     In each case, the resulting function has arity~3 and is non-decomposable (in fact, it is a generalised equality), so we can show hardness via Lemma~\ref{lem:arity3_hardness}.}
    \item {If $g'=\smm{0&\ld\\\mu&0}$ and $\ba=01$, the function after pinning is
     \[
      \begin{pmatrix} 0&\alpha&0&0 \\ 0&0&\ld&0 \\ 0&0&\mu&0 \\ 0&\beta&0&0 \end{pmatrix}
     \]
     and we can pin as in the previous subcase.     
     If $\ba=10$, the middle columns are swapped and the process is analogous.
     All resulting functions have arity~3 and are non-decomposable (in fact, they are generalised equalities), so we can show hardness via Lemma~\ref{lem:arity3_hardness}.}
   \end{itemize}
  \item If $D_1=1$, then $h(x_1,x_2,x_3)=g(x_1,x_2)\dl_{a_1}(x_3) + g'(x_1,x_2)\dl_{\bar{a}_1}(x_3)$ is a non-decomposable ternary function since $g$ and $g'$ are linearly independent.
   Thus, we can apply Lemma~\ref{lem:arity3_hardness} to show hardness.
 \end{itemize}
   
 \textbf{Case $D_0=1$}: By pinning, we can realise $g=[\alpha,\beta]$ for some $\alpha,\beta\neq 0$.
 \new{Without loss of generality, assume this function arises by pinning the last $(n-1)$ inputs of $f$; otherwise permute the arguments.}
 Define $f_\bx (z):=f(z,\bx)$ for any $\bx\in\{0,1\}^{n-1}$ and let
 \begin{align*}
  A_2 &:= \{ \bx\in\{0,1\}^{n-1} \mid f_\bx(z) = c[\alpha,\beta] \text{ for some } c\in\AA\setminus\{0\} \}, \\
  B_2 &:= \{ \bx\in\{0,1\}^{n-1} \mid f_\bx(z) \neq c[\alpha,\beta] \text{ for any } c\in\AA \}.
 \end{align*}
 \new{Analogously to $A_1$ and $B_1$, these two sets are non-empty, do not intersect, and do not contain any $\bx$ such that $f_\bx(z)$ is identically zero.}
 Then let:
 \[
  D_2 = \min\{ d(\bx,\by) \mid \bx\in A_2, \by\in B_2 \},
 \]
 \new{which is well-defined.}
 By pinning we can realise a function
 \[
  h(\vc{x}{D_2+1}) = g(x_1)\prod_{k=1}^{D_2}\dl_{a_k}(x_{k+1}) + g'(x_1)\prod_{k=1}^{D_2}\dl_{\bar{a}_k}(x_{k+1}),
 \]
 where $g'=[\ld,\mu]$ with $\alpha\mu-\beta\ld\neq 0$.
 \begin{itemize}
  \item If $D_2\geq 3$, then $h$ has arity greater than 3.
    If $\ld\neq 0$, pin the first input of $h$ to 0, which yields a generalised equality function of arity at least 3, {and then proceed as in the cases $D_0\geq 4$ or $D_0\geq 3$}.
    If $\ld=0$ then $\mu\neq 0$, so we can pin to 1 instead for an analogous result.
  \item If $D_2=2$, $h$ is a non-decomposable ternary function so we are done by Lemma~\ref{lem:arity3_hardness}.
   This is another change compared to the proof in \cite{cai_dichotomy_2017}, where hardness was only shown for a real-valued function of the given form.
  \item If $D_2=1$, $h$ is a non-decomposable binary function $\smm{\alpha&\beta\\\ld&\mu}$ (possibly up to a bit flip of the second argument, which does not affect the proof).
  Unlike in \cite{cai_dichotomy_2017}, we do not attempt to use this binary function for interpolation.
  Instead we immediately proceed to defining $A_3$, $B_3$, and $D_3$ analogous to before.
  \new{Without loss of generality, assume that the two variables of $h$ correspond to the first two variables of $f$ (otherwise permute arguments).
  Let}
  \begin{align*}
   A_3 &:= \{ \bx\in\{0,1\}^{n-2} \mid f_\bx(z_1,z_2) = c\cdot h(z_1,z_2) \text{ for some } c\in\AA\setminus\{0\} \}, \\
   B_3 &:= \{ \bx\in\{0,1\}^{n-2} \mid f_\bx(z_1,z_2) \neq c\cdot h(z_1,z_2) \text{ for any } c\in\AA \},
  \end{align*}
  \new{then $A_3$ and $B_3$ are non-empty, they do not intersect, and they do not contain any $f_\bx$ which is identically zero.
  Thus we may define}
  $D_3 = \min\{ d(\bx,\by) \mid \bx\in A_3, \by\in B_3 \}$.
  Let $\ba\in A_3$ be such that there exists $\bb\in B_3$ with $d(\ba,\bb)=D_3$.
  By pinning in all places that $\ba$ and $\bb$ agree, we realise a function
  \begin{equation}\label{eq:function3}
   h(x_1,x_2)\prod_{k=1}^{D_3}\dl_{a_k}(x_{k+2}) + h'(x_1,x_2)\prod_{k=1}^{D_3}\dl_{\bar{a}_k}(x_{k+2})
  \end{equation}
  where $h'(x_1,x_2):=f_\bb(x_1,x_2)$ is not a scaling of $h$.
  Suppose $h'=\smm{\alpha'&\beta'\\\ld'&\mu'}$.
  Distinguish cases according to $D_3$.
  {It will be useful to consider these in ascending order since the case $D_3=2$ follows straightforwardly from $D_3=1$.}
   \begin{itemize}
    \item {If $D_3=1$, the function in \eqref{eq:function3} is ternary.
     Note that the case $a_1=1$ differs from the case $a_1=0$ only by a bit flip on the third input, which is a SLOCC transformation by $(I\otimes I\otimes X)$ and thus does not affect entanglement.
     So it suffices to consider $a_1=0$; the function then takes values
     \[
      \pmm{\alpha&\alpha'&\beta&\beta'\\\ld&\ld'&\mu&\mu'}.
     \]
     Recall that $\alpha,\beta\neq 0$, the unprimed variables satisfy $\alpha\mu-\beta\ld\neq 0$, and $(\alpha',\beta',\ld',\mu')$ is not a scaling of $(\alpha,\beta,\ld,\mu)$.
     
     By Lemma~\ref{lem:li}, the ternary function is non-decomposable unless at least two of the following three expressions are false:
     \begin{align*}
      (\alpha\beta'\neq\beta\alpha') &\vee (\mu\ld'\neq\ld\mu') \\
      (\ld\alpha'\neq\alpha\ld') &\vee (\mu\beta'\neq\beta\mu') \\
      (\beta'\ld'\neq\alpha'\mu') &\vee (\beta\ld\neq\alpha\mu)
     \end{align*}
     Now, the third expression is always true since $\alpha\mu-\beta\ld\neq 0$ implies $\beta\ld\neq\alpha\mu$.
     So the function being decomposable would imply $\alpha\beta'=\beta\alpha'$, $\mu\ld'=\ld\mu'$, $\ld\alpha'=\alpha\ld'$ and $\mu\beta'=\beta\mu'$.
     But the first of these equations implies $\beta'=\beta\alpha'/\alpha$ since $\alpha\neq 0$.
     Similarly, the third equation implies $\ld'=\ld\alpha'/\alpha$.
     Also, since $\beta\neq 0$, the fourth equation implies $\mu'=\mu\beta'/\beta = \mu\alpha'/\alpha$.
     Thus,
     \[
      h' = \pmm{\alpha'&\beta'\\\ld'&\mu'} = \frac{\alpha'}{\alpha}\pmm{\alpha&\beta\\\ld&\mu} = \frac{\alpha'}{\alpha}\cdot h,
     \]
     which is a contradiction since $h'$ is not a scaling of $h$.
     Thus, the ternary function must be non-decomposable, and hardness follows by Lemma~\ref{lem:arity3_hardness}.
     
     This is a change compared to the proof in \cite{cai_dichotomy_2017}, where multiple cases were distinguished and the hardness lemmas only applied to real-valued functions.}
   
    \item {If $D_3=2$, the function in \eqref{eq:function3} has arity 4.
     By connecting $[\alpha,\beta]$ to the last input, we get one of the functions
     \begin{align*}
      \alpha\cdot h(x_1,x_2)\dl_{a_1}(x_3) + \beta\cdot h'(x_1,x_2)\dl_{\bar{a}_1}(x_3) \\
      \beta\cdot h(x_1,x_2)\dl_{a_1}(x_3) + \alpha\cdot h'(x_1,x_2)\dl_{\bar{a}_1}(x_3)
     \end{align*}
     depending on whether $a_2$ is 0 or 1.
     Note that these functions differ from the function considered in the previous subcase --
     $h(x_1,x_2)\dl_{a_1}(x_3) + h'(x_1,x_2)\dl_{\bar{a}_1}(x_3)$ -- only by a SLOCC with $\smm{\alpha&0\\0&\beta}$ or $\smm{\beta&0\\0&\alpha}$ on the third input (depending on whether $a_1=a_2$).
     Hence they must be in the same entanglement class and hardness follows as in the previous subcase.  

     This is another change compared to \cite{cai_dichotomy_2017}, where the hardness lemma only applied to real values and the construction did not employ the function $[\alpha,\beta]$.}
   
    \item If $D_3\geq 3$, we consider different cases according to the relationships between the values of $h$ and $h'$.
    
     {If $\alpha\alpha'=\beta\beta'=\ld\ld'=\mu\mu'=0$, then $\alpha'=\beta'=0$ since $\alpha,\beta$ are non-zero by assumption.
     Thus, at least one of $\ld'$ and $\mu'$ must be non-zero because $h'$ is not identically zero.
     Also, at least one of $\ld$ and $\mu$ must be non-zero because $h$ is non-decomposable.
     Therefore we have either $\ld=\mu'=0$ and $\ld',\mu\neq 0$, or $\ld'=\mu=0$ and $\ld,\mu'\neq 0$.}
     In the first case, $\ld=\mu'=0$ and $\ld',\mu\neq 0$, pin the first input to 1 to get:
     \[
      \mu\dl_1(x_2)\prod_{k=1}^{D_3}\dl_{a_k}(x_{k+2})+\ld'\dl_0(x_2)\prod_{k=1}^{D_3}\dl_{\bar{a}_k}(x_{k+2}).
     \]
     This is a generalised equality function of arity at least 4, so we can proceed as before.
     In the second case, $\ld'=\mu=0$ and $\ld\mu'\neq 0$, pinning the first input to 1 {works again, albeit resulting in a different generalised equality function}.
       
     Otherwise, there exists a pair of primed and unprimed coefficients of the same label that are both non-zero.
     If these are $\alpha$ and $\alpha'$, pin the first two inputs to 00 to get a generalised equality of arity at least 3.
     If the non-zero pair are $\beta$ and $\beta'$, pin to 01, and so on.
     Given the generalised equality function, we may proceed as in the cases $D_0\geq4$ or $D_0\geq 3$, depending on whether the arity of this function is even or odd.
   \end{itemize}
 \end{itemize}

 We have covered all cases, hence the proof is complete.
\end{proof}

\section*{Acknowledgements}
I would like to thank Pinyan Lu for pointing out a flaw in the original statement of the main theorem, Jin-Yi Cai for interesting discussions about extending the $\hol^+$ result to the planar case, and Mariami Gachechiladze and Otfried G\"uhne for pointing out a significantly shorter and more elegant proof of Theorem~\ref{thm:three-qubit-entanglement}, and for letting me use it here.
Thanks also go to Leslie Ann Goldberg, Ashley Montanaro, and William Whistler for for helpful comments on earlier versions of this paper, as well as to the anonymous referees for their insightful feedback and suggestions.

\bibliographystyle{plain}
\bibliography{refs}

\begin{thebibliography}{10}

\bibitem{backens_new_2017}
Miriam Backens.
\newblock A {New} {Holant} {Dichotomy} {Inspired} by {Quantum} {Computation}.
\newblock In Ioannis Chatzigiannakis, Piotr Indyk, Fabian Kuhn, and Anca
  Muscholl, editors, {\em 44th {International} {Colloquium} on {Automata},
  {Languages}, and {Programming} ({ICALP} 2017)}, volume~80 of {\em Leibniz
  {International} {Proceedings} in {Informatics} ({LIPIcs})}, pages
  16:1--16:14, Dagstuhl, Germany, 2017. Schloss Dagstuhl--Leibniz-Zentrum fuer
  Informatik.

\bibitem{backens_inductive_2016}
Miriam Backens.
\newblock Number of superclasses of four-qubit entangled states under the
  inductive entanglement classification.
\newblock {\em Phys. Rev. A}, 95(2):022329, February 2017.

\bibitem{backens_complete_2018}
Miriam Backens.
\newblock A complete dichotomy for complex-valued {Holant}\textsuperscript{c}.
\newblock In Ioannis Chatzigiannakis, Christos Kaklamanis, D\'{a}niel Marx, and
  Donald Sannella, editors, {\em 45th {International} {Colloquium} on
  {Automata}, {Languages}, and {Programming} ({ICALP} 2018)}, volume 107 of
  {\em Leibniz {International} {Proceedings} in {Informatics} ({LIPIcs})},
  pages 12:1--12:14, Dagstuhl, Germany, 2018. Schloss Dagstuhl--Leibniz-Zentrum
  fuer Informatik.

\bibitem{backens_holant_2018}
Miriam Backens and Leslie~Ann Goldberg.
\newblock Holant {Clones} and the {Approximability} of {Conservative} {Holant}
  {Problems}.
\newblock {\em ACM Trans. Algorithms}, 16(2):23:1--23:55, March 2020.

\bibitem{bennett_teleporting_1993}
Charles~H. Bennett, Gilles Brassard, Claude Cr\'{e}peau, Richard Jozsa, Asher
  Peres, and William~K. Wootters.
\newblock Teleporting an unknown quantum state via dual classical and
  {Einstein}-{Podolsky}-{Rosen} channels.
\newblock {\em Phys. Rev. Lett.}, 70(13):1895--1899, March 1993.

\bibitem{bennett_exact_2000}
Charles~H. Bennett, Sandu Popescu, Daniel Rohrlich, John~A. Smolin, and
  Ashish~V. Thapliyal.
\newblock Exact and asymptotic measures of multipartite pure-state
  entanglement.
\newblock {\em Phys. Rev. A}, 63(1):012307, December 2000.

\bibitem{boykin_universal_1999}
P.~Oscar Boykin, Tal Mor, Matthew Pulver, Vwani Roychowdhury, and Farrokh
  Vatan.
\newblock On universal and fault-tolerant quantum computing: {A} novel basis
  and a new constructive proof of universality for {Shor}'s basis.
\newblock In {\em 40th {Annual} {Symposium} on {Foundations} of {Computer}
  {Science} ({Cat}. {No}.99CB37039)}, pages 486--494, New York, October 1999.
  IEEE.

\bibitem{bulatov_functional_2017}
Andrei Bulatov, Leslie~Ann Goldberg, Mark Jerrum, David Richerby, and Stanislav
  \v{Z}ivn\'{y}.
\newblock Functional clones and expressibility of partition functions.
\newblock {\em Theoret. Comput. Sci.}, 687:11--39, July 2017.

\bibitem{bulatov_expressibility_2013}
Andrei~A. Bulatov, Martin Dyer, Leslie~Ann Goldberg, Mark Jerrum, and Colin
  McQuillan.
\newblock The {Expressibility} of {Functions} on the {Boolean} {Domain}, with
  {Applications} to {Counting} {CSPs}.
\newblock {\em J. ACM}, 60(5):32:1--32:36, October 2013.

\bibitem{cai_graph_2010}
Jin-Yi Cai, X.~Chen, and P.~Lu.
\newblock Graph {Homomorphisms} with {Complex} {Values}: {A} {Dichotomy}
  {Theorem}.
\newblock {\em SIAM J. Comput.}, 42(3):924--1029, January 2013.

\bibitem{cai_complexity_2017}
Jin-Yi Cai and Xi~Chen.
\newblock {\em Complexity {Dichotomies} for {Counting} {Problems}}, volume 1:
  \textit{Boolean Domain}.
\newblock Cambridge University Press, November 2017.

\bibitem{cai_valiants_2006}
Jin-Yi Cai and Vinay Choudhary.
\newblock Valiant’s {Holant} {Theorem} and matchgate tensors.
\newblock {\em Theoret. Comput. Sci.}, 384(1):22--32, September 2007.

\bibitem{cai_holographic_2017}
Jin-Yi Cai and Zhiguo Fu.
\newblock Holographic {Algorithm} with {Matchgates} {Is} {Universal} for
  {Planar} \#{CSP} over {Boolean} {Domain}.
\newblock {\em SIAM J. Comput.}, pages STOC17\--50--STOC17\--151, November
  2019.

\bibitem{cai_complete_2016}
Jin-Yi Cai, H.~Guo, and T.~Williams.
\newblock A {Complete} {Dichotomy} {Rises} from the {Capture} of {Vanishing}
  {Signatures}.
\newblock {\em SIAM J. Comput.}, 45(5):1671--1728, January 2016.

\bibitem{cai_clifford_2018}
Jin-Yi Cai, Heng Guo, and Tyson Williams.
\newblock Clifford gates in the {Holant} framework.
\newblock {\em Theoret. Comput. Sci.}, 745:163--171, 2018.

\bibitem{cai_holant_2012}
Jin-Yi Cai, Sangxia Huang, and Pinyan Lu.
\newblock From {Holant} to \#{CSP} and {Back}: {Dichotomy} for {Holant}$^c$
  {Problems}.
\newblock {\em Algorithmica}, 64(3):511--533, March 2012.

\bibitem{cai_dichotomy_2011}
Jin-Yi Cai, P.~Lu, and M.~Xia.
\newblock Dichotomy for {Holant}* {Problems} of {Boolean} {Domain}.
\newblock In {\em Proceedings of the {Twenty}-{Second} {Annual} {ACM}-{SIAM}
  {Symposium} on {Discrete} {Algorithms}}, Proceedings, pages 1714--1728.
  Society for Industrial and Applied Mathematics, January 2011.

\bibitem{cai_complexity_2014}
Jin-Yi Cai, Pinyan Lu, and Mingji Xia.
\newblock The complexity of complex weighted {Boolean} \#{CSP}.
\newblock {\em J. Comput. System Sci.}, 80(1):217--236, February 2014.
\newblock First appeared as ``Holant Problems and Counting CSP'' in STOC '09.

\bibitem{cai_dichotomy_2017}
Jin-Yi Cai, Pinyan Lu, and Mingji Xia.
\newblock Dichotomy for {Real} {Holant}\textsuperscript{c} {Problems}.
\newblock In {\em Proceedings of the {Twenty}-{Ninth} {Annual} {ACM}-{SIAM}
  {Symposium} on {Discrete} {Algorithms}}, pages 1802--1821. Society for
  Industrial and Applied Mathematics, 2018.

\bibitem{dehaene_clifford_2003}
Jeroen Dehaene and Bart De~Moor.
\newblock Clifford group, stabilizer states, and linear and quadratic
  operations over {GF}(2).
\newblock {\em Phys. Rev. A}, 68(4):042318, October 2003.

\bibitem{devitt_quantum_2013}
Simon~J Devitt, William~J Munro, and Kae Nemoto.
\newblock Quantum error correction for beginners.
\newblock {\em Rep. Progr. Phys.}, 76(7):076001, July 2013.

\bibitem{dur_three_2000}
W.~D\"{u}r, G.~Vidal, and J.~I. Cirac.
\newblock Three qubits can be entangled in two inequivalent ways.
\newblock {\em Phys. Rev. A}, 62(6):062314, November 2000.

\bibitem{dyer_complexity_2009}
M.~Dyer, L.~A. Goldberg, and M.~Jerrum.
\newblock The {Complexity} of {Weighted} {Boolean} {CSP}.
\newblock {\em SIAM J. Comput.}, 38(5):1970--1986, January 2009.

\bibitem{ekert_quantum_1991}
Artur~K. Ekert.
\newblock Quantum cryptography based on {Bell}'s theorem.
\newblock {\em Phys. Rev. Lett.}, 67(6):661--663, August 1991.

\bibitem{gachechiladze_personal_2017}
Mariami Gachechiladze and Otfried G\"{u}hne, February 2017.
\newblock Personal communication.

\bibitem{gachechiladze_addendum_2016}
Mariami Gachechiladze and Otfried G\"{u}hne.
\newblock Completing the proof of “{Generic} quantum nonlocality”.
\newblock {\em Phys. Lett. A}, 381(15):1281--1285, April 2017.

\bibitem{giles_exact_2013}
Brett Giles and Peter Selinger.
\newblock Exact synthesis of multiqubit {Clifford}+{T} circuits.
\newblock {\em Phys. Rev. A}, 87(3):032332, March 2013.

\bibitem{gottesman_heisenberg_1998}
Daniel Gottesman.
\newblock The {Heisenberg} {Representation} of {Quantum} {Computers}.
\newblock {\em arXiv:quant-ph/9807006}, July 1998.
\newblock Group22: Proceedings of the XXII International Colloquium on Group
  Theoretical Methods in Physics, eds. S. P. Corney, R. Delbourgo, and P. D.
  Jarvis, pp. 32-43 (Cambridge, MA, International Press, 1999).

\bibitem{jozsa_role_2003}
Richard Jozsa and Noah Linden.
\newblock On the role of entanglement in quantum-computational speed-up.
\newblock {\em Proceedings of the Royal Society of London. Series A:
  Mathematical, Physical and Engineering Sciences}, 459(2036):2011--2032, 2003.

\bibitem{kowalczyk_holant_2016}
Michael Kowalczyk and Jin-Yi Cai.
\newblock Holant {Problems} for 3-{Regular} {Graphs} with {Complex} {Edge}
  {Functions}.
\newblock {\em Theory Comput. Syst.}, 59(1):133--158, July 2016.

\bibitem{lamata_inductive_2007}
L.~Lamata, J.~León, D.~Salgado, and E.~Solano.
\newblock Inductive entanglement classification of four qubits under stochastic
  local operations and classical communication.
\newblock {\em Phys. Rev. A}, 75(2):022318, February 2007.

\bibitem{li_simple_2006}
Dafa Li, Xiangrong Li, Hongtao Huang, and Xinxin Li.
\newblock Simple criteria for the {SLOCC} classification.
\newblock {\em Phys. Lett. A}, 359(5):428--437, December 2006.

\bibitem{lin_complexity_2016}
Jiabao Lin and Hanpin Wang.
\newblock The {Complexity} of {Holant} {Problems} over {Boolean} {Domain} with
  {Non}-{Negative} {Weights}.
\newblock In Ioannis Chatzigiannakis, Piotr Indyk, Fabian Kuhn, and Anca
  Muscholl, editors, {\em 44th {International} {Colloquium} on {Automata},
  {Languages}, and {Programming} ({ICALP} 2017)}, volume~80 of {\em Leibniz
  {International} {Proceedings} in {Informatics} ({LIPIcs})}, pages
  29:1--29:14, Dagstuhl, Germany, 2017. Schloss Dagstuhl--Leibniz-Zentrum fuer
  Informatik.

\bibitem{montanaro_hadamard_2006}
Ashley Montanaro and Dan~J. Shepherd.
\newblock Hadamard gates and amplitudes of computational basis states,
  September 2006.
\newblock Available at
  \url{http://www.maths.bris.ac.uk/~csxam/papers/hadamard.pdf}.

\bibitem{nielsen_quantum_2010}
Michael~A. Nielsen and Isaac~L. Chuang.
\newblock {\em {Quantum Computation and Quantum Information}}.
\newblock Cambridge University Press, Cambridge, 2010.

\bibitem{popescu_generic_1992}
Sandu Popescu and Daniel Rohrlich.
\newblock Generic quantum nonlocality.
\newblock {\em Phys. Lett. A}, 166(5–6):293--297, June 1992.

\bibitem{valiant_holographic_2008}
L.~Valiant.
\newblock Holographic {Algorithms}.
\newblock {\em SIAM J. Comput.}, 37(5):1565--1594, January 2008.

\bibitem{verstraete_four_2002}
F.~Verstraete, J.~Dehaene, B.~De~Moor, and H.~Verschelde.
\newblock Four qubits can be entangled in nine different ways.
\newblock {\em Phys. Rev. A}, 65(5):052112, April 2002.

\end{thebibliography}

\end{document}